%
%
\documentclass{LMCS}

\def\dOi{9(3:28)2013}
\lmcsheading%
{\dOi}
{1--52}
{}
{}
{Dec.~\phantom08, 2010}
{Sep.~27, 2013}
{}

%
%
\usepackage{ifthen}
\usepackage{hyperref}
\newboolean{full}
\setboolean{full}{true}

\newboolean{appendix}

%
%

%
%

%
%
\usepackage{style,enumerate,rotating}



%
%
\newcommand{\set}[1]{\{ #1 \}}
\newcommand{\pair}[1]{\langle #1 \rangle}
\newcommand{\pow}{\Pow}
\newcommand{\extK}{M}

\begin{document}
%
%

%
%
\title[Abstract GSOS Rules and a Modular Treatment of Recursive Definitions]{Abstract GSOS Rules and a Modular Treatment of Recursive Definitions}
\author[S.~Milius]{Stefan Milius\rsuper a}
\address{{\lsuper a}Lehrstuhl f\"ur Theoretische Informatik, Friedrich-Alexander
  Universit\"at Erlangen-N\"urnberg, Germany}
\email{mail@stefan-milius.eu}

\author[L.~S.~Moss]{Lawrence S.~Moss\rsuper b}
\address{{\lsuper b}Department of Mathematics, Indiana University, Bloomington,
  IN, USA}
\email{lsm@cs.indiana.edu}
\thanks{{\lsuper b}This work was partially supported by a grant from the Simons Foundation (\#245591 to Lawrence Moss).}

\author[D.~Schwencke]{Daniel Schwencke\rsuper c}
\address{{\lsuper c}Institute of Transportation Systems, German Aerospace Center (DLR), Braunschweig, Germany}
\email{daniel.schwencke@dlr.de}

\keywords{recursion, semantics, completely iterative algebra, coalgebra, distributive law}

\ACMCCS{[{\bf Theory of computiation}]: Semantics and
  reasoning---Program semantics---Categorical semantics; Semantics and
  reasoning---Program reasoning---Program specifications}

\begin{abstract}
  Terminal coalgebras for a functor serve as semantic domains for
  state-based systems of various types. For example, behaviors of CCS
  processes, streams, infinite trees, formal languages and
  non-well-founded sets form terminal coalgebras. We present a uniform
  account of the semantics of recursive definitions in terminal
  coalgebras by combining two ideas: (1) \emph{abstract GSOS rules}
  $\ell$ specify additional algebraic operations on a terminal
  coalgebra; (2) terminal coalgebras are also initial \emph{completely
    iterative algebras} (cias).  We also show that an abstract GSOS
  rule leads to new extended cia structures on the terminal
  coalgebra. Then we formalize recursive function definitions
  involving given operations specified by $\ell$ as recursive program
  schemes for $\ell$, and we prove that unique solutions exist in the
  extended cias.  From our results it follows that the solutions of
  recursive (function) definitions in terminal coalgebras may be used
  in subsequent recursive definitions which still have unique
  solutions.  We call this principle \emph{modularity}.  We illustrate
  our results by the five concrete terminal coalgebras mentioned
  above, e.\,g., a finite stream circuit defines a unique stream
  function. %
\end{abstract}

\maketitle

%
%

\section{Introduction}

Recursive definitions are a useful tool to specify infinite system
behavior. For example, Milner~\cite{milner} proved that in his
calculus CCS, one may specify a process uniquely by the equation
\begin{equation}
  \label{eq:CCS}
  P = a.(P|c) + b.0 \,\text{.}
\end{equation}
In other words, the process $P$ is the unique solution of the recursive 
equation $x = a.(x|c) + b.0$. 
Another example is the shuffle product on streams of
real numbers  defined uniquely by a \emph{behavioral differential equation}
\cite{rutten_stream}:
\begin{equation} \label{eq:shuffle_product}
(\sigma \otimes \tau)_0 = \sigma_0\cdot\tau_0 \qquad\qquad
  (\sigma \otimes \tau)' = \sigma \otimes \tau' + \sigma'\otimes \tau
  \,\text{.}
\end{equation}
Here the real number $\sigma_0$ is the head of $\sigma$, and $\sigma'$ is 
the tail; the operation $+$ is the componentwise addition of infinite streams.
Besides these and other examples in the theory of computation,  we
shall mention below recursive specifications which are also important in
other realms; we shall consider non-well-founded sets~\cite{aczel,bm}, a
framework originating as a semantic basis for circular definitions. 
Operations on non-well-founded sets can be  specified  uniquely by
recursive function definitions. For example, we prove that the equation
\begin{equation} \label{eq:function_non-well-founded}
x \| y = \{\, x \| y' \mid y' \in y\,\} \cup \{\,x' \| y \mid x' \in
x\,\} \cup \{\,x' \| y' \mid x' \in x, y' \in y\,\}
\end{equation}
has a unique solution viz.~a binary operation $\|$ on the class of all
non-well-founded sets, which is reminiscent of parallel composition in
process calculi. It is the aim of this paper to develop abstract tools
and results that explain why there exist unique solutions to all the
aforementioned equations.

\paragraph{\bf Terminal coalgebras.}
The key observation is that process behaviors, streams and
non-well-founded sets constitute \emph{terminal coalgebras}
$(C, c: C\to HC)$ for certain endofunctors $H$ on appropriate categories. The functor $H$
describes the type of behavior of a class of state-based systems---the
$H$-coalgebras---and the terminal coalgebra serves as the semantic
domain for the behavior of states of systems of type $H$. 
All of the theory and examples in this paper pertain to terminal coalgebras.

\paragraph{\bf Abstract GSOS rules.}
Let us return to the case of CCS operations to situate the work in a
historical context.  
In the case of CCS one obtains algebraic operations on processes
using structural operational semantics (sos)~\cite{afv01}. Operations
are specified by operational rules such as
\[
\frac{x \stackrel{a}{\to} x' \qquad y \stackrel{a}{\to} y'}{x | y
  \stackrel{a}{\to} x' | y'}
\]
exhibiting the interplay between the operation and the behavior
type. Syntactic restrictions on the rule format then ensure nice
algebraic properties of the specified operations, e.\,g., GSOS
rules~\cite{bim} ensure that bisimilarity is a congruence. Turi and
Plotkin gave in their seminal paper~\cite{tp} a categorical
formulation of GSOS rules, and they show how an abstract GSOS rule
gives rise to a \emph{distributive law} of a monad over a comonad,
where the monad describes the signature of the desired algebraic
operations and the comonad arises from the ``behavior'' functor
$H$. Later Lenisa et al.~\cite{lpw04} proved that abstract GSOS rules
correspond precisely to distributive laws of a free monad $M$ over the
cofree copointed functor on the behavior functor $H$.  In this paper
we shall not need the original formulation of GSOS rules, and so the
reader who is unfamiliar with this formulation will not be at a loss. What will be
important are the categorical generalizations which we call abstract
GSOS rules, and we begin with that topic in the next section.  The
theme of this paper is that abstract GSOS rules, either in a form
close to the original one or in a more general one that we introduce,
account for many interesting recursive definitions on terminal
coalgebras in a uniform way.  These include all of the examples we
mentioned above, and many more.

\paragraph{\bf Modularity.}
An important methodological goal in this paper is that our results
should be \emph{modular}.  What we mean is that we want results which,
given a class of algebras for endofunctors, enable us to expand the
algebraic structure of an algebra from that class by a recursively
defined operation, and stay inside the class.  Thus the results will
iterate.


We mentioned above that our results all had to do with 
terminal coalgebras, and so the reader might wonder
how this idea of modularity could possibly apply. The answer is that a classical result due to 
Lambek~\cite{lambek} implies that the structure $c: C \to HC$ of a terminal
coalgebra is an isomorphism. The corresponding inverse $c^{-1}: HC \to
C$ turns $C$ into an $H$-algebra. Moreover, we are typically interested to start
with an algebraic structure on $C$ that is already extended by additional operations.
For example, in the case of the shuffle product $\sigma\otimes\tau$
of streams, the definition of $\otimes$ uses the stream addition
operation $+$ on the right-hand side. So the definition is made on
the algebra
$$
(C, \quad ( \sigma_0, \sigma') \mapsto \sigma,   \quad \sigma,\tau \mapsto \sigma + \tau)
$$
consisting of the set of streams of reals, its structure $c^{-1}$ as an $H$-algebra,
and its structure as an algebra for $X\times X$ given by componentwise addition.
Returning to modularity, we aim to isolate an appropriateness
condition on algebra structures expanding the inverse $c^{-1}$ of the
terminal coalgebra structure with the property that 
given an appropriate structure and a definition in a certain format,
the definition specifies a new operation in a unique way,
and if we add this operation to the given algebra structure, the
resulting algebra structure on $C$ is again appropriate. This is what
we will call \emph{modularity}, and precise formulations of
appropriateness may be found in the Summaries~\ref{sum:1} and~\ref{sum:2}. 

%

\paragraph{\bf Completely iterative algebras.}
The desired class of algebras mentioned in the previous paragraph is
formed by completely iterative algebras (cias). 
Complete iterativity means that recursive equations involving algebraic
operations corresponding to the $H$-algebra structure $c^{-1}$ can be
solved uniquely (see Definition~\ref{def:cia}). For example, let 
$H_\Sigma$ be a polynomial endofunctor on $\Set$ arising from a signature
of operation symbols with prescribed arity. In this case a cia is   a
$\Sigma$-algebra $A$ in which  every system of recursive equations
\begin{equation}\label{eq:eq}
  x_i = t_i\qquad (i \in I),
\end{equation}
where $t_i$ is either a single operation symbol applied to recursion
variables (i.\,e., $t_i = \sigma(x_{i_1}, \ldots, x_{i_n})$ for an
$n$-ary $\sigma$ from $\Sigma$) or $t_i = a \in A$, has a unique
solution. It can then be proved that more general equation systems
involving (almost) arbitrary $\Sigma$-terms on the right-hand side and
even recursive function definitions have unique solutions in a
cia~\cite{m_cia,mm}. 
Continuing, it was shown in~\cite{m_cia} that the inverse $c^{-1}: HC \to
C$ of the structure of the terminal coalgebra turns $C$ into the
initial cia for $H$. 
However, as we mentioned previously, cia structures for the
``behavior'' functor $H$ are not sufficient to yield the 
existence and uniqueness of solutions in our motivating examples;
these involve additional algebraic operations not captured by
$H$. These operations are $|$ and $+$ in the  CCS process definition above, the stream addition $+$ in \refeq{eq:shuffle_product},
and union $\cup$ in the example~\refeq{eq:function_non-well-founded} from non-well-founded set
theory.

\paragraph{\bf Extended cia structures}
We will show how the abstract GSOS rules of Turi and Plotkin extend the
structure of an initial cia for $H$ (alias terminal
$H$-coalgebra). This then allows us to equip the initial cia with the
desired additional operations such that circular definitions of
elements of the carrier admit a unique solution, e.\,g., the 
equation $\sigma = 1. (\sigma + \sigma)$ (for streams) defining the
stream of powers of $2$ or our first example~\refeq{eq:CCS} for CCS
processes above. 

The first steps in this direction were taken
in Bartels' thesis~\cite{bartels_thesis} (see also~\cite{bartels}); he systematically studies
definition formats giving rise to distributive
laws and shows how to solve parameter-free first order recursive equations involving
operations presented by a distributive law; Uustalu et al.~\cite{uvp} present
the dual of this result. 

We review some basic material on abstract GSOS rules and the
solution theorem of Bartels in Section~\ref{sec:bialg}. 
In Section~\ref{sec:cias} we extend these solution theorems to
equations with parameters, thereby combining them with our previous work
on cias in~\cite{aamv,m_cia}. More concretely, given an abstract GSOS rule
(equivalently, a distributive law of the free monad $M$ over the
cofree copointed functor $H \times \Id$) we prove in Theorems~\ref{thm:distcia}
and~\ref{thm:sandwich} that the terminal $H$-coalgebra carries the structure of
a cia for $HM$ and for $MHM$, respectively. These results show how to construct new
structures of cias on $C$ using an abstract GSOS rule.
This improves Bartels' result in the sense that recursive equations
may employ constant parameters in the terminal coalgebra. 

In Section~\ref{sec:solthms} we obtain new ways to provide the
semantics of recursive definitions by applying the existing solution
theorems from~\cite{aamv,m_cia,mm} to the new cia structures. For
example, we consider solutions of \emph{recursive program schemes}. Classical
recursive program schemes~\cite{guessarian} are function definitions
such as
\begin{equation}
  \label{eq:rps}
  f(x) = F(x, G(f(x)))
\end{equation}
defining a new function $f$ recursively in terms of given ones $F,
G$. In~\cite{mm} is was shown how recursive program schemes can be
formalized categorically. It was proved that every
\emph{guarded}\footnote{Guardedness is a mild syntactic restriction
  stating that terms on right-hand sides of equations do not have a
  newly defined function at their head.} recursive program scheme has
a unique (interpreted) solution in every cia $HA \to A$, where the
functor $H$ captures the signature of given operations. This solution
is an algebra structure $VA \to A$, where $V$ captures the signature
of recursively defined operations. As a new result we now prove in 
Theorem~\ref{thm:extcia} that the unique solution extends the structure of
the given cia for $H$ to a cia for the sum $H + V$; this yields
modularity of unique solutions of recursive program schemes. 

\medskip\noindent{\bf Extended formats of function definition.}  The
recursive program schemes from~\cite{mm} do not capture recursive
function definitions like the one of the shuffle product
in~\refeq{eq:shuffle_product} above or the one for the operation
$\|$ 
from~\refeq{eq:function_non-well-founded} above on non-well-founded
sets.  So in Section~\ref{sec:lambdarps} we provide results that do
capture those examples. We introduce for an abstract GSOS rule $\ell$
two formats of recursive program schemes w.\,r.\,t.~$\ell$: $\ell$-rps
and a variant called sandwiched $\ell$-rps ($\ell$-srps, for
short). They give a categorical formulation of recursive function
definitions such as~\refeq{eq:shuffle_product}
and~\refeq{eq:function_non-well-founded} that refer not only to the
operations provided by the behavior functor $H$ but also to additional
given operations specified by the abstract GSOS rule $\ell$. We will also see
that $\ell$-srps's allow specifications which go beyond the format of
abstract GSOS rules. In Theorems~\ref{thm:ellrps}
and~\ref{thm:sandwich_rps} we prove that every $\ell$-(s)rps has a
unique solution in the terminal coalgebra $C$. Moreover, we show that
this solution extends the cia structure on $C$. This again gives rise
to a modularity principle: operations defined uniquely by
$\ell$-(s)rps's can be used as givens in subsequent (sandwiched) rps's
(we make this precise in the Summary~\ref{sum:2}). 
This modularity of taking solutions of recursive specifications of
operations does not appear in any previous work in this generality.

\medskip\noindent{\bf A set of examples.}
Finally, in Section~\ref{sec:app} we demonstrate the value of our results by
instantiating them in five different concrete applications: (1)
CCS-processes---we explain how Milner's solution theorem from~\cite{milner}
arises as a special case of Theorem~\ref{thm:sandwich}, and we also show how
to define new process combinators recursively from given ones; (2) streams of
real numbers---here we prove that every finite stream circuit defines a
unique stream function, we obtain the result from~\cite{rutten_stream} that
behavioral differential equations  specify operations on streams in a unique way
as a
special instance of our Theorem~\ref{thm:ellrps}, and we show how to solve 
recursive equations uniquely that cannot be captured by behavioral differential
equations by applying Theorem~\ref{thm:sandwich_rps}; (3) infinite trees---we
obtain the result from \cite{rs10} that behavioral differential
equations have unique solutions as a special case of Theorem~\ref{thm:ellrps};
(4) formal languages---here we show how
operations on formal languages like union, concatenation, complement,
etc.~arise step-by-step using the modularity of unique solutions
of $\ell$-rps's, and how languages generated by grammars arise as the
unique solutions of flat equation morphisms in cias; (5) non-well-founded sets---we prove
that operations on non-well-founded sets are uniquely determined as
solutions of $\ell$-(s)rps's.

\smallskip
\noindent
{\bf Related Work.} As already mentioned, Turi's and Plotkin's work~\cite{tp} was taken further by Lenisa,
Power and Watanabe in~\cite{lpw00,lpw04}. Fiore and Turi~\cite{ft01} 
applied the mathematical operational semantics to provide semantics of message-passing process calculi such as
Milner's $\pi$-calculus. But these papers do not consider the
semantics of recursive definitions. 

Turi~\cite{turi97} gives a treatment of guarded recursion. He does not
isolate the notion of a solution of a recursive specification and
whence does not prove that a solution exists uniquely. In addition
that paper does not deal with recursive function definitions as we do
here. 

Jacobs~\cite{jacobs} shows
how to apply Bartels' result to obtain the (first order) solution
theorems from~\cite{aamv,m_cia}. Capretta et al.~\cite{cuv} work in a
dual setting and generalize the results of~\cite{bartels} beyond
terminal coalgebras and they also obtain the (dual of) the solution
theorem from~\cite{aamv,m_cia} by an application of their general
results. Our Theorems~\ref{thm:distcia} and~\ref{thm:sandwich} are similar to
results in~\cite{cuv}, but in the later sections we extend the work in~\cite{bartels} in a different
direction by considering parameters in recursive definitions. So our
results in the present paper go beyond what can be accomplished with
previous work. %

Modularity in mathematical operational semantics has been studied
before in~\cite{pow03,lpw04}. These papers show how to combine two
different specifications of operations over the same behavior by
performing constructions at the level of distributive laws. This gives an abstract
explanation of adding operations to a process calculus. At the heart
of our proof of Theorem~\ref{thm:ellrps} lies a construction similar
to the combination of two distributive laws that arises by taking the coproduct of
the corresponding monads. However, while in the coproduct construction of~\cite{pow03,lpw04} the two 
distributive laws are independent of each other, our case is different
because the operations specified by the second distributive law interact with
the operations specified by the first one. 

Much less related to our work is the work of Kick and
Power~\cite{kp04} who show how to combine distributive laws of one
monad over two sorts of behavior possibly interacting with one another.

To the best of our knowledge the results on modularity of solutions of
recursive specifications we present are new. 

The present paper is a completely revised version containing full proofs of
the conference paper~\cite{mms}.

\section{Abstract GSOS Rules and Distributive Laws}
\label{sec:bialg}

We shall assume some familiarity with basic notions from category
theory such as functors, (initial) algebras and (terminal) coalgebras,
and monads, see e.\,g.~\cite{maclane,rutten,adamek_survey}.

Suppose we are given an endofunctor $H$ on some category $\A$
describing the behavior type of a class of systems. In our work we
shall be interested in additional algebraic operations on the
terminal coalgebra $C$ for $H$. The type of these algebraic operations is
given by an endofunctor $K$ on $\A$, and the algebraic operations are given
by an \emph{abstract GSOS rule} (cf.~\cite{tp,lpw00,lpw04}). Our goal
is to provide a setting in which recursive equations with operations specified by abstract
GSOS rules have unique solutions. We now review the necessary preliminaries.

\begin{ass}
  \label{ass:1}
  Throughout the rest of this paper we assume that $\A$ is a category with
  binary products and coproducts, $H: \A \to \A$ is a functor, and that $c: C \to HC$ is
  its terminal coalgebra. We also assume that $K:\A\to\A$ is a functor
  such that for every object $X$ of $\A$ there exists a free
  $K$-algebra $\extK  X$ on $X$. 
\end{ass}

\begin{rem} \label{rem:free_algebras_monad}
  (1)~We denote by $\varphi_X:K\extK X\to \extK X$ and $\eta_X:X\to \extK X$ the structure and universal morphism of the free $K$-algebra
   $\extK X$. Recall that the corresponding universal property states that for every $K$-algebra $a:KA\to A$ and every morphism $f:X\to A$ there exists a unique $K$-algebra homomorphism $h:(\extK X,\phi_X)\to (A,a)$ such that $h \cdot \eta_X = f$.

  \medskip\noindent
  (2)~Free algebras for the functor $K$ exist under mild assumptions on $K$.
  For example, whenever $K$ is an accessible endofunctor on $\Set$ it has all
  free algebras $\extK X$ (see e.\,g.~\cite{at}).

  \medskip\noindent
  (3)~As proved by Barr~\cite{barr} (see also Kelly~\cite{kelly80}), the existence of free $K$-algebras as
  stated in Assumption~\ref{ass:1} implies that there is a free monad on $K$.
  Indeed, $\extK $ is the object assignment of a
  monad with the unit given by $\eta_X$ from item~(1) and the multiplication
  $\mu_X:\extK \extK X\to\extK X$ given as the unique homomorphic extension of
  $\id_{\extK X}$. Also $\varphi:K\extK \to\extK $ is a natural transformation
  and
  $$
  \kappa = (\xymatrix@1{
      K
      \ar[r]^-{K\eta}
      &
      K\extK 
      \ar[r]^-{\varphi}
      &
      \extK 
      })
  $$
  is the universal natural transformation of the free monad $\extK$. 
  Notice that for complete categories all free monads arise
  from free algebras in this way. 
\end{rem}

\begin{notation} \label{not:extalg}
  For any $K$-algebra $a: KA \to A$, let $\extalg a: \ext{K} A \to A$ be the unique
  $K$-algebra homomorphism with $\extalg a \cdot \eta_A = \id_A$. Then $\extalg a$ is the structure of an Eilenberg-Moore algebra for $\extK$. Notice also that 
  \begin{equation} \label{diag:kappa}
    a = (
    \xymatrix@1{
    KA
    \ar[r]^{\kappa_A}
    &
    \extK A
    \ar[r]^{\extalg a}
    &
    A
    }
    )\,\text{.}
  \end{equation}
\end{notation}

\begin{exa} \label{ex:sigma_algebras}
  In our applications the functor $K$ will be a polynomial functor on $\Set$ most of the time. 
  In more detail, let $\A=\Set$ and let $\Sigma=(\Sigma_n)_{n\in\Nat}$ be a signature of operation symbols with prescribed arity. The polynomial functor $K_{\Sigma}$ associated with $\Sigma$ is given by
$$K_{\Sigma}X = \coprod_{n\in\Nat}\Sigma_n\times X^n\,\text{.}$$
Algebras for the functor $K_{\Sigma}$ are precisely the usual $\Sigma$-algebras and the free monad $\extK _{\Sigma}$ assigns to a set $X$ the $\Sigma$-algebra $\extK _{\Sigma}X$ of $\Sigma$-terms (or finite $\Sigma$-trees) on $X$, where a $\Sigma$-tree on $X$ is a rooted and ordered tree with leaves labeled by constant symbols from $\Sigma$ or elements of $X$ and with inner nodes with $n$ children labeled in $\Sigma_n$.
\end{exa}

\begin{defi} \cite{tp} \label{def:gsos_rule}
  An \emph{abstract GSOS rule} is a natural transformation
  $$\ell:K(H\times\Id)\to H\extK \,\text{.}$$
\end{defi}

\begin{rem} \label{rem:ell_lambda}
  (1)~The name ``abstract GSOS'' is motivated by the fact that for $H$ the
  set functor $\Pfin(A\times -)$ (whose coalgebras are labeled
  transition systems) and $K$ a polynomial set functor, a natural
  transformation $\ell$ as in Definition \ref{def:gsos_rule}
  corresponds precisely to a transition system specification with
  operational rules in the GSOS format of Bloom, Istrail and
  Meyer~\cite{bim}; this was proved by Turi and Plotkin~\cite{tp}.
  
  \medskip\noindent
  (2)~As proved by Lenisa et al.~\cite{lpw00,lpw04}, abstract GSOS
  rules are in one-to-one correspondence with distributive laws of the
  free monad $\extK $ over the copointed functor $H \times \Id$.
More precisely, recall from loc.\,cit.~that a \emph{copointed functor} $(D,\varepsilon)$ is a functor $D:\A\to\A$ equipped with a natural transformation $\varepsilon:D\to\Id$.
Also, notice that $H\times\Id$ together with the projection $\pi_1:H\times\Id\to\Id$ is the cofree copointed functor on $H$. Further recall that a distributive law of a monad $(M,\eta,\mu)$ over a copointed functor $(D,\varepsilon)$ is a natural transformation $\lambda:MD\to DM$ such that the following diagrams commute:
  \begin{equation} \label{diags:distributive_law}
    \vcenter{
    \xymatrix@!C=0.6cm{
      	& D \ar[dl]_{\eta D} \ar[dr]^{D\eta}	&	\\
      MD \ar[rr]^{\lambda}	&	& DM
    }
    }
    \vcenter{
    \xymatrix@!C=0.6cm{
      MD \ar[rr]^{\lambda} \ar[dr]_{M\varepsilon}	&	& DM \ar[dl]^{\varepsilon M}	\\
      	& M	&
    }
    }
    \quad
    \vcenter{
    \xymatrix{
      MMD \ar[r]^{M\lambda} \ar[d]_{\mu D}	& MDM \ar[r]^{\lambda M}	& DMM \ar[d]^{D\mu}	\\
      MD \ar[rr]^{\lambda}	&	& DM
    }
    }
  \end{equation}
  Then to give an abstract GSOS rule $\ell:K(H\times\Id)\to H\extK $ is equivalent to giving a distributive law $\lambda$ of the monad $\extK $ over the copointed functor $H\times\Id$.
\takeout{ 
  \medskip\noindent
  (3)~For some of our results we shall need to assume a distributive law $\lambda$ of $\extK $ over the functor $H$, i.\,e., $\lambda:\extK H\to H\extK $ is a natural transformation such that the first and last diagram in \refeq{diags:distributive_law} commute for $D$ replaced by $H$; in symbols: $\lambda\cdot \eta H = H\eta$ and $\lambda\cdot\mu H = H\mu\cdot \lambda M\cdot M\lambda$. \\
  To give such a distributive law is equivalent to giving a natural transformation $\ell:KH\to H\extK $, see Bartels~\cite{bartels_thesis}, Lemma 3.4.24. We call such a natural transformation $\ell$ a \emph{restricted abstract GSOS rule}. Every restricted abstract GSOS rule yields the abstract GSOS rule
  \begin{equation} \label{eq:restricted_rule}
    \xymatrix@1{
      K(H\times \Id)
      \ar[r]^-{K\pi_0}
      &
      KH
      \ar[r]^-{\ell}
      &
      H\extK 
      }\,\text{.}
  \end{equation}
  Conversely, every abstract GSOS rule $\ell:K(H\times\Id)\to H\extK $ gives, of course, a restricted one for the functor $H\times\Id$. However, there is no obvious way to produce a natural transformation $KH\to H\extK $ from $\ell$.
}
\end{rem}

\begin{thm} {\rm\cite{bartels_thesis,tp}} \label{thm:interp_gsos_rule}
  Let $\ell$ be an abstract GSOS rule. There is a unique structure $b:KC\to C$ of a $K$-algebra on the terminal $H$-coalgebra $C$ such that the square below commutes:
  \begin{equation} \label{diag:interp_gsos_rule}
  \vcenter{
  \xymatrix@C+1pc{
    KC \ar[r]^-{K\langle c,\id_C\rangle} \ar[d]_{b}	& K(HC\times C) \ar[r]^-{\ell_C}	& H\extK C \ar[d]^{H\extalg b}	\\
    C \ar[rr]^{c}	&	& HC
  }
  }
  \end{equation}
\end{thm}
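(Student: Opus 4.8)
The plan is to deduce the statement from the correspondence between abstract GSOS rules and distributive laws (Remark~\ref{rem:ell_lambda}(2)) together with the fact that a distributive law of the monad $\extK$ over a copointed functor turns the terminal coalgebra into a bialgebra. Write $D := H\times\Id$ for the cofree copointed functor on $H$, with counit $\pi_1: D\to\Id$ and first projection $\pi_0: D\to H$, and put $\bar c := \langle c,\id_C\rangle: C\to DC$. Since coalgebras for the copointed functor $(D,\pi_1)$ are exactly $H$-coalgebras, $(C,\bar c)$ is the terminal $D$-coalgebra. By Remark~\ref{rem:ell_lambda}(2) I may replace $\ell$ by the corresponding distributive law $\lambda: \extK D\to D\extK$ of $\extK$ over $D$; the two are linked by $\ell_X = \pi_0\cdot\lambda_X\cdot\kappa_{DX}$.

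For \emph{existence}, I would equip $\extK C$ with the $D$-coalgebra structure $\hat c := \lambda_C\cdot\extK\bar c: \extK C\to D\extK C$; the counit diagram in~\refeq{diags:distributive_law} gives $\pi_1\cdot\lambda_C = \extK\pi_1$, whence $\pi_1\cdot\hat c = \extK(\pi_1\cdot\bar c) = \id_{\extK C}$, so this is indeed a coalgebra for the copointed functor $D$. Let $\beta: \extK C\to C$ be the unique $D$-coalgebra morphism from $(\extK C,\hat c)$ into the terminal $(C,\bar c)$. Using the unit and multiplication diagrams of~\refeq{diags:distributive_law}, naturality of $\lambda$, and uniqueness of morphisms into $(C,\bar c)$, I would check in the by-now standard way (cf.~\cite{bartels_thesis,tp}) that $\beta$ is an Eilenberg-Moore algebra for $\extK$: the unit law because $\eta_C$ is a $D$-coalgebra morphism $(C,\bar c)\to(\extK C,\hat c)$ and thus $\beta\cdot\eta_C$ is a $D$-coalgebra endomorphism of the terminal object; the multiplication law because both $\extK\beta$ and $\mu_C$ are $D$-coalgebra morphisms from $(\extK\extK C,\lambda_{\extK C}\cdot\extK\hat c)$ to $(\extK C,\hat c)$, so $\beta\cdot\extK\beta$ and $\beta\cdot\mu_C$ are two $D$-coalgebra morphisms into $(C,\bar c)$ and hence coincide. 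Then I would set $b := \beta\cdot\kappa_C: KC\to C$; this is a $K$-algebra with $\extalg b = \beta$ (Notation~\ref{not:extalg} and~\refeq{diag:kappa}), and~\refeq{diag:interp_gsos_rule} follows from the chase
\[
H\extalg b\cdot\ell_C\cdot K\bar c
\;=\;\pi_0\cdot D\beta\cdot\lambda_C\cdot\extK\bar c\cdot\kappa_C
\;=\;\pi_0\cdot\bar c\cdot\beta\cdot\kappa_C
\;=\;c\cdot b,
\]
using $\ell_C = \pi_0\cdot\lambda_C\cdot\kappa_{DC}$, naturality of $\kappa$, the identity $D\beta = H\beta\times\beta$, and that $\beta$ is a $D$-coalgebra morphism.

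For \emph{uniqueness}, suppose $b': KC\to C$ also makes~\refeq{diag:interp_gsos_rule} commute, and let $\beta': \extK C\to C$ be the Eilenberg-Moore algebra for $\extK$ with $\beta'\cdot\kappa_C = b'$ (the extension of $b'$ as in Notation~\ref{not:extalg}). It suffices to prove $\beta' = \beta$, for then $b' = \beta'\cdot\kappa_C = \beta\cdot\kappa_C = b$. I claim that the one-step square for $b'$ forces $\beta'$ to be a $D$-coalgebra morphism $(\extK C,\hat c)\to(C,\bar c)$ --- equivalently, that $(C,\beta',\bar c)$ is a bialgebra for $\lambda$; granting the claim, $\beta' = \beta$ because $\beta$ is \emph{the} unique such morphism. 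The $\pi_1$-component of the claim is automatic, so what remains is the equation $c\cdot\beta' = H\beta'\cdot\pi_0\cdot\lambda_C\cdot\extK\bar c$ of morphisms $\extK C\to HC$, of which we already know --- by the square and $\ell_C = \pi_0\cdot\lambda_C\cdot\kappa_{DC}$ --- that the two sides agree after precomposition with $\kappa_C$. Extending this from the generators to all of $\extK C$ is where I would invoke the freeness of $\extK C$ as a $K$-algebra, the Eilenberg-Moore laws for $\beta'$, and the coherence diagrams~\refeq{diags:distributive_law} of $\lambda$; concretely, this is the terminal-coalgebra instance of the correspondence between ``$\ell$-models'' and $\lambda$-bialgebras from~\cite{lpw04}.

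I expect this last propagation step to be the only genuine obstacle: the square~\refeq{diag:interp_gsos_rule} constrains $b'$ only on ``one-step'' terms, and one must argue that after passing to the Eilenberg-Moore extension $\beta'$ of $b'$ the associated coalgebra square over all of $\extK C$ holds automatically. Everything else is naturality, the coherence of a distributive law, and uniqueness of maps into a terminal coalgebra. An alternative to citing~\cite{lpw04} here would be to carry out the propagation explicitly, by induction along the construction of the free monad on $K$, as in~\cite{bartels_thesis}.
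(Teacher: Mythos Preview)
The paper does not prove this theorem; it is stated with attribution to \cite{bartels_thesis,tp} and used as background. So there is no in-paper proof to compare against. What the paper does supply is Remark~\ref{rem:interp_gsos_lambda}(2), citing \cite[Lemma~3.5.2]{bartels_thesis}, which records exactly the extended bialgebra square~\refeq{diag:interp_gsos_lambda} for $\extalg b$ that your argument hinges on.

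Your outline is the standard one from those references and is correct in structure: pass to the distributive law $\lambda$, put the coalgebra structure $\lambda_C\cdot\extK\bar c$ on $\extK C$, take $\beta$ as the unique map into the terminal $D$-coalgebra, check it is Eilenberg--Moore, and set $b=\beta\cdot\kappa_C$. For uniqueness you correctly isolate the real work: showing that if $b'$ satisfies~\refeq{diag:interp_gsos_rule} then its Eilenberg--Moore extension $\beta'=\extalg{b'}$ satisfies the extended square~\refeq{diag:interp_gsos_lambda}. You are right that this is the $\ell$-model/$\lambda$-bialgebra correspondence of \cite{lpw04} (and \cite[Lemma~3.5.2]{bartels_thesis}); once you invoke that, terminality finishes the job. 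The one place to be careful in a self-contained write-up is that ``agreement after $\kappa_C$'' is not literally ``agreement on generators $\eta_C$''; the actual induction proceeds by showing that both sides of the $\pi_0$-component are $K$-algebra homomorphisms from $(\extK C,\varphi_C)$ into a suitable $K$-algebra structure on $HC$ manufactured from $\lambda$ and $\beta'$, and then comparing them on $\eta_C$. That is precisely what the cited lemma does, so your citation is appropriate.
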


\begin{rem} \label{rem:interp_gsos_lambda}
  (1)~In the terminology of \cite{bartels_thesis,tp} the triple $(C,b,c)$ is a \emph{model} of the abstract GSOS rule $\ell$; in fact, it is the terminal one.

  \medskip\noindent
  (2)~For the distributive law $\lambda:\extK (H\times\Id)\to(H\times\Id)\extK $ from Remark~\ref{rem:ell_lambda}(2) corresponding to the abstract GSOS rule $\ell$ we have the following commutative diagram (see e.\,g.~\cite{bartels_thesis}, Lemma 3.5.2, where this is formulated for $\A = \Set$):
  \begin{equation} \label{diag:interp_gsos_lambda}
  \vcenter{
  \xymatrix@C+1pc{
    \extK C \ar[r]^-{\extK \langle c,\id_C\rangle} \ar[d]_{\extalg b}	& \extK (HC\times C) \ar[r]^{\lambda_C}	& (H\times\Id)\extK C \ar[d]^{(H\times \Id)\extalg b}	\\
    C \ar[rr]^{\langle c,\id_C\rangle}	&	& HC\times C
  }
  }
  \end{equation}

  \medskip\noindent
  (3)~In some of the examples below an abstract GSOS rule $\ell$ will arise from a natural transformation $\ell':K(H\times \Id)\to HK$ as
  $$
  \ell = (
  \xymatrix{
    K(H\times\Id)
    \ar[r]^-{\ell'}
    &
    HK
    \ar[r]^{H\kappa}
    &
    H\extK 
  }
  )\,\text{.}
  $$
  In this case Diagram~\refeq{diag:interp_gsos_rule} can be simplified; indeed, $b:KC\to C$ is then the unique morphism such that the diagram below commutes:
  $$
  \xymatrix@C+1pc{
    KC \ar[r]^-{K\langle c,\id_C\rangle} \ar[d]_{b}	& K(HC\times C) \ar[r]^-{\ell_C'}	& HKC \ar[d]^{Hb}	\\
    C \ar[rr]^{c}	&	& HC
  }
  $$
\end{rem}

\takeout{ 
\begin{rem} \label{rem:liftdist}
  (1)~Let $K$ and $H$ be endofunctors on $\A$, and let $\ell: KH \to HK$ be a
  distributive law of $K$ over $H$---this is just a natural transformation. 
  The free pointed endofunctor on $K$ is $M = K + \Id$. Then
  $\ell$ gives rise to a distributive law of the pointed endofunctor $M$ over
  $H$ as follows:
  $$
  \lambda =
  (\xymatrix@1{
    MH = KH + H \ar[r]^-{\ell + H}
    &
    HK + H \ar[r]^-{\can}
    &
    H(K + \Id) = HM
  })\, .
  $$
  
  \medskip\noindent
  (2) As in~(1) above, any natural transformation 
  $\ell: KH \to H(K + \Id)$ yields a distributive law $\lambda$ of the free
  pointed endofunctor $M = K + \Id$ over $H$:
  $$
  \lambda =
  (\xymatrix@1@C+1pc{
    MH = KH + H \ar[r]^-{[\ell, H\inr]} & H(K+\Id) = HM
  })\, .
  $$
  
  \medskip\noindent
  (3)~In most concrete examples, $M$ in
  Definition~\ref{def:dist} is part of a monad $(M, \eta, \mu)$.
  A distributive law of a monad $M$ over an endofunctor $M$ is 
  $\lambda: MH\to HM$ such that (as above) $H\eta = \lambda \cdot \eta H$,
  and also $\lambda \cdot \mu H = H\mu \cdot \lambda M \cdot M \lambda$.

  \medskip\noindent
  (4)~Suppose
  that $M$ is a free monad $\ext{K}$ on an endofunctor $K: \A \to \A$
  given objectwise by free $K$-algebras $\extK  X$.  
  Bartels~\cite{bartels_thesis} (Lemma~3.4.24)  shows  that
  distributive laws $\lambda: \ext{K}H \to H\ext{K}$ (of the monad $\ext{K}$
  over $H$) correspond to  natural transformations $\ell: KH \to
  H\ext{K}$. Indeed, given $\lambda$ we get $\ell = \lambda \cdot \kappa
  H$, where $\kappa: K \to \extK $ is the universal natural transformation of
  the free monad. Conversely, let $\eta$ and $\mu$ be the unit and
  multiplication of the free monad $\extK $. Given
  $\ell$, we see that $H\mu_X \cdot \ell_{\extK  X}: KH\extK  X \to
  H\extK  \extK  X \to H\extK  X$ is an algebra for $K$. Thus, by the
  freeness of the algebra $\extK  HX$, there exists a unique
  $K$-algebra homomorphism $\lambda_X: \extK  H X \to H \extK  X$
  extending $H\eta_X$, i.\,e., such that $\lambda_X \cdot \eta_{HX} =
  H\eta_X$. One readily shows that $\lambda$ is a distributive law of
  $\extK $ over $H$.

  \takeout{
  \medskip\noindent
  (5)~In our examples we will need to combine several distributive laws into one. In
  our applications we are only interested in distributive laws of pointed
  endofunctors over an endofunctor $H$ arising either from natural
  transformations as in~(1) or~(2) above. In these two cases
  the combination of appropriate natural transformations is easy:

  Let $k: KH \to HK$ and $\ell: LH \to HL$ be distributive laws of the
  endofunctors $K$ and $L$, respectively, over $H$. Then we obtain another
  distributive law of $K+L$ over $H$ as follows:
  \begin{equation}\label{eq:combone}
    \xymatrix@1{
      (K+L)H = KH + LH 
      \ar[r]^-{k + \ell}
      &
      HK + HL
      \ar[r]^-{\can}
      &
      H(K+L) \,\text{.}
    }
  \end{equation}

  Similarly, one can combine two natural transformations $k: KH \to
  H(K+\Id)$ and $\ell: LH \to H(L+\Id)$ into one of the form
  $(K+L)H \to H(K+L+\Id)$.
}
\takeout{
  \medskip\noindent
  (6)~Another result of this type concerns two natural transformations
  $\ell_i: K_iH \to H\ext{K_i}$ as in~(4). Notice that the coproduct
  injections $\inl: K_1 \to K_1 + K_2 \leftarrow K_2: \inr$ lift to
  monad morphisms $\ext{\inl}: \ext{K_1} \to \ext{K_1 + K_2}
  \leftarrow \ext{K_2}: \ext{\inr}$. So we can form
  \begin{equation}\label{eq:combtwo}
    \ell = (
    \xymatrix@1{
      (K_1 + K_2)H \ar[r]^-{\ell_1 + \ell_2} &
      H\ext{K_1} + H\ext{K_2} \ar[r]^\can &
      H(\ext{K_1 + K_2})
    })
  \end{equation}
  where $\can = [H\ext{\inl},H\ext{\inr}]$ is the canonical
  morphism. And this then gives a combined distributive law of
  $\ext{K_1 + K_2}$ over $H$.
}
\end{rem}
}

\takeout{ 
\begin{construction} \cite{bartels, bartels_thesis}
  \label{constr:interp}
  Let $\lambda: MH \to HM$ be a distributive law of the pointed
  endofunctor $M$ over $H$. By using coinduction
  (i.\,e., the universal property of the terminal coalgebra) we define an
  $M$-algebra structure on $C$ as the unique coalgebra homomorphism
  $b$ from the coalgebra $\lambda_C \cdot Mc: MC \to HMC$ to the terminal
  coalgebra, i.\,e., $b: MC \to C$ is the unique morphism
  making the following square commutative:
  \begin{equation}
    \vcenter{
      \xymatrix{
        MC 
        \ar[r]^-{Mc}
        \ar[d]_b
        & 
        MHC
        \ar[r]^-{\lambda_C}
        &
        HMC
        \ar[d]^{Hb}
        \\
        C \ar[rr]_-c & & HC
      }}
    \label{eq:interp:M:c}
  \end{equation}
  Bartels~\cite{bartels} showed that $(C,b)$ is indeed an algebra for $M$.
  Moreover, if $M$ is a monad and $\lambda$ a distributive law of $M$
  over $H$, then $(C,b)$ is an Eilenberg-Moore algebra for the monad
  $M$. 
\end{construction}
}

\begin{defi}
  \label{dfn:interp}
  For every abstract GSOS rule $\ell$ we will call the algebra structure $b: KC \to C$ from
  Theorem~\ref{thm:interp_gsos_rule} the \emph{$\ell$-interpretation in $C$}.
\end{defi}

\takeout{
\begin{prop}
  \label{prop:combi}
  Let $\ell_i:K_iH \to H\ext{K_i}$ be two
  natural transformations, and $b_i: \ext{K_i} C \to C$ be the
  corresponding interpretations. Then the interpretation $b: \ext{K_1
    + K_2} C \cdot C$ induced by $\lambda$ in~\refeq{eq:combtwo} 
  extends $b_1$ and $b_2$ in the sense that $b \cdot \ext{\inl} = b_1$
  and $b\cdot \ext{\inr} = b_2$. 
\end{prop}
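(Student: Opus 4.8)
The plan is to prove both equations by coinduction, i.\,e.,~by the universal property of the terminal coalgebra $(C,c)$. Write $\lambda_i:\ext{K_i}H\to H\ext{K_i}$ for the distributive law of the monad $\ext{K_i}$ over $H$ induced by $\ell_i$ as in Remark~\ref{rem:liftdist}(4), and $\lambda:\ext{K_1+K_2}H\to H\ext{K_1+K_2}$ for the one induced by the combined natural transformation $\ell=[H\ext{\inl},H\ext{\inr}]\cdot(\ell_1+\ell_2)$ of~\refeq{eq:combtwo}. By Construction~\ref{constr:interp}, the maps $b_1$, $b_2$ and $b$ are the \emph{unique} $H$-coalgebra morphisms from $(\ext{K_1}C,(\lambda_1)_C\cdot\ext{K_1}c)$, from $(\ext{K_2}C,(\lambda_2)_C\cdot\ext{K_2}c)$, and from $(\ext{K_1+K_2}C,\lambda_C\cdot\ext{K_1+K_2}c)$, respectively, to $(C,c)$. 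Hence it suffices to show that $b\cdot\ext{\inl}_C$ is an $H$-coalgebra morphism from $(\ext{K_1}C,(\lambda_1)_C\cdot\ext{K_1}c)$ to $(C,c)$; then $b\cdot\ext{\inl}_C=b_1$ follows by terminality, and $b\cdot\ext{\inr}_C=b_2$ is proved symmetrically.

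The one nontrivial ingredient is the compatibility of the combined distributive law with the monad morphism $\ext{\inl}$,
\[
  \lambda\cdot\ext{\inl}H \;=\; H\ext{\inl}\cdot\lambda_1
\]
(and the analogue for $\ext{\inr}$). Granting this, the verification is a short diagram chase: starting from the defining equation $c\cdot b=Hb\cdot\lambda_C\cdot\ext{K_1+K_2}c$ of $b$, precompose with $\ext{\inl}_C$, use naturality of $\ext{\inl}:\ext{K_1}\to\ext{K_1+K_2}$ at $c$ to rewrite $\ext{K_1+K_2}c\cdot\ext{\inl}_C=\ext{\inl}_{HC}\cdot\ext{K_1}c$, and then apply the displayed identity to turn $\lambda_C\cdot\ext{\inl}_{HC}$ into $H\ext{\inl}_C\cdot(\lambda_1)_C$. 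The result is exactly $c\cdot(b\cdot\ext{\inl}_C)=H(b\cdot\ext{\inl}_C)\cdot(\lambda_1)_C\cdot\ext{K_1}c$, which is the coalgebra-morphism condition required above.

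The real work, and the step I expect to be the main obstacle, is the displayed compatibility. On generators it is immediate from the shape of $\ell$: since $\ell=[H\ext{\inl},H\ext{\inr}]\cdot(\ell_1+\ell_2)$, one reads off $\ell\cdot\inl H=H\ext{\inl}\cdot\ell_1$ (and dually for $\inr$). To propagate this from generators to the free monads I would invoke the universal property of the free $K_1$-algebra $\ext{K_1}(HX)$ on $HX$ through which $\lambda_1$ is defined (Remark~\ref{rem:liftdist}(4)): both $\lambda_X\cdot\ext{\inl}_{HX}$ and $H\ext{\inl}_X\cdot(\lambda_1)_X$ are $K_1$-algebra homomorphisms from the free $K_1$-algebra $\ext{K_1}(HX)$ into the $K_1$-algebra on $H\ext{K_1+K_2}X$ obtained by restricting, along $\inl$, the $(K_1+K_2)$-algebra $H\mu_X\cdot\ell_{\ext{K_1+K_2}X}$ that defines $\lambda$ --- for the first composite this uses that $\ext{\inl}$ is a monad morphism, for the second it uses $\ell\cdot\inl H=H\ext{\inl}\cdot\ell_1$ together with the monad-morphism laws of $\ext{\inl}$ --- and the two composites agree after precomposition with the universal morphism $HX\to\ext{K_1}(HX)$, both equalling $H\eta_X$, where $\eta$ denotes the unit of $\ext{K_1+K_2}$. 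By uniqueness in the universal property the two composites coincide. Conceptually this is the (folklore) fact that the passage from a natural transformation $KH\to H\ext{K}$ to a distributive law of $\ext{K}$ over $H$ (Remark~\ref{rem:liftdist}(4)) is natural in $K$, so the monad morphisms $\ext{\inl}$, $\ext{\inr}$ automatically intertwine $\lambda_1,\lambda_2$ with $\lambda$; it is worked out in Bartels' thesis~\cite{bartels_thesis}, and all that is really needed here is careful bookkeeping of the $K_1$-algebra structures on $H\ext{K_1+K_2}X$.
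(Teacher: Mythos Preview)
Your argument is correct. The proposition itself sits inside a \verb|\takeout| block in the source and carries no proof in the paper; the nearest surviving analogue is Proposition~\ref{substant}, which proves the one-sided compatibility $a\cdot\inl_C=b$ for the combined abstract GSOS rule $n$ of Construction~\ref{constr:s}.

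The route taken there differs from yours. Rather than lifting to the distributive laws and establishing your key identity $\lambda\cdot\ext{\inl}H=H\ext{\inl}\cdot\lambda_1$ in order to argue with coalgebra morphisms into $(C,c)$, the paper stays at the level of the defining diagram~\refeq{diag:interp_gsos_rule} for interpretations: it checks directly that $a\cdot\inl_C$ makes that diagram commute for $\ell$, using only the definition of the combined rule and the equation $\extalg{a}\cdot\ext{\inl}_C=\extalg{a\cdot\inl_C}$ of Remark~\ref{rem:extinl}(3), and then invokes uniqueness of the $\ell$-interpretation. This is a little shorter, since it bypasses the free-algebra bookkeeping in your third paragraph. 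Your approach, on the other hand, yields the stronger intermediate fact that the monad morphisms $\ext{\inl},\ext{\inr}$ intertwine the lifted distributive laws, which is of independent interest and is indeed the conceptually ``right'' explanation for why the interpretations are compatible.
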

}

\begin{exas} \label{ex:dist}
  We review a couple of examples of interest in this
  paper where $\A = \Set$. We shall elaborate on these examples in
  Section~\ref{sec:app} and present two more examples. 
  
  \medskip\noindent
    (1)~Processes. We shall be interested in Milner's CCS~\cite{milner}.
  Let $\kappa$ be an infinite cardinal and 
  $\Pow_\kappa$ be the functor assigning to the set $X$ the set of all
  $Y \subseteq X$ with $|Y| < \kappa$. Here we consider the functor $HX = \Pow_\kappa(A
  \times X)$ where $A$ is some fixed alphabet of actions. Following
  Milner~\cite{milner}, we assume that for every $a \in A$ we also have
  a complement $\bar a \in A$ (so that $\bar{\bar a} = a$) and a
  special silent action $\tau \in A$.

  To describe the terminal coalgebra for $\Pow_\kappa(A \times
  X)$ we first recall the description of the terminal coalgebra for the finite power set functor
  $\Pfin$ by Worrell~\cite{worrell}: it is carried by
  the set of all strongly extensional finitely branching trees, where
  an unordered tree $t$ is called \emph{strongly extensional} if two
  subtrees rooted at distinct children of some node of $t$ are never
  bisimilar as trees.  Similarly, the terminal coalgebra for the countable power
  set functor $\Pcount$ is carried by the set of all strongly
  extensional countably branching trees, see
  \cite{schwencke_logacc}. The technique by which this result is
  obtained in loc.\,cit. generalizes to the functor $\Pow_\kappa(A
  \times X)$ from above: its terminal
  coalgebra $C$ turns out to consist of all strongly extensional
  $\kappa$-branching trees with edges labeled in $A$; \emph{strong
  extensionality} has the analogous meaning as above: two
  subtrees rooted at distinct children of some node are never
  bisimilar as trees if both edges to the children carry the same label.  The
  elements of $C$ can be considered as (denotations of) CCS-agents
  modulo strong bisimilarity.

  Notice that the inverse $c^{-1}: \Pow_\kappa(A \times C) \to C$
  assigns to a set $\{(a_i, E_i) \mid i < \kappa\}$ of pairs of actions and agents
  the agent $ \sum_{i < \kappa} a_i.E_i$. The usual process
  combinators ``$a.-$'' (prefixing), ``$|$'' (parallel composition), ``$\sum_{i<\kappa}$'' (summation), ``$-[f]$''
  (relabeling) and ``$-\backslash L$" (restriction) are given by sos rules. 
  Let $E$, $E'$, $F$, $F'$ be agents and $a\in A$ some action, then
  these rules are:
      \[
      \frac{E\stackrel{a}{\to}E'}{E|F\stackrel{a}{\to}E'|F} \qquad\quad
      \frac{F\stackrel{a}{\to}F'}{E|F\stackrel{a}{\to}E|F'} \qquad\quad
      \frac{E\stackrel{a}{\to}E'\quad F\stackrel{\bar
          a}{\to}F'}{E|F\stackrel{\tau}{\to}E'|F'} \thickspace
      {(a\neq\tau)}
      \]
      \[
      \frac{}{a.E \stackrel{a}{\to} E}
      \qquad
      \frac{E_j \stackrel{a}{\to} E_j'}
      {(\sum_{i < \kappa} E_i) \stackrel{a}{\to} E_j'}\thickspace {(j < \kappa)}
      \qquad
      \frac{E \stackrel{a}{\to}E'}{E[f] \stackrel{f(a)}{\to}
        E'[f]}
      \qquad
      \frac{E \stackrel{a}{\to}E'}{E\backslash L
        \stackrel{a}{\to}E'\backslash L}
      \thickspace {(a, \bar a \not\in L)}
      \]
  Now let $K$ be the polynomial functor for the signature given by
  taking these combinators as operation symbols. It easily follows from the work in \cite{bartels_thesis} and \cite{lpw04} that the  rules
  above give
  an abstract GSOS rule $\ell: K(H\times\Id) \to H\extK $, and the
  $\ell$-interpretation $b:KC \to C$ in $C$ provides the desired operations on CCS-agents (modulo strong bisimilarity). Further details are
  presented in Section~6.1.
  
  \medskip\noindent
  (2)~Streams. Streams have been studied in a coalgebraic setting by
  Rutten~\cite{rutten_stream}. Here we take the functor $HX = \Real \times X$
  whose terminal coalgebra $(C,c)$ is carried by the set $\Real^\omega$ of all
  streams over $\Real$ and $c = \<\hd,\tl\>: \Real^\omega \to \Real \times
  \Real^\omega$ is given by the usual head and tail functions on
  streams.

  Operations on streams can be defined by so-called \emph{behavioral differential equations} \cite{rutten_stream}. Here one uses for every stream $\sigma$ the notation $\sigma_0=\hd(\sigma)$ and $\sigma'=\tl(\sigma)$. Then, for example, the function $\zip$ merging two streams is specified by
  $$\zip(x,y)_0 = x_0 \qquad\qquad (\zip(x,y))' = \zip(y,x') \,\text{.}$$
  This gives rise to an abstract GSOS rule as follows. Let $KX = X \times X$ (representing the binary operation $\zip$),
   and let $\ell:K(H\times\Id)\to H\extK $ be
  $$
    \ell = (
    \xymatrix@1{
      K(H\times\Id) \ar[r]^-{\ell'} &
      HK \ar[r]^-{H\kappa} &
      H\extK
    })
  $$
  where $\ell'$ is given by
  $$\ell_X'((r,x',x),(s,y',y)) = (r,(y,x'))\,\text{.}$$
  Notice that in a triple $(r,x',x)\in HX\times X$, $x$ is a variable for a stream with head $r$ and tail referred to by the variable $x'$. It is now straightforward to work out that the $\ell$-interpretation $b:KC\to C$ is the operation $\zip$.

For another example, the componentwise addition of two streams $\sigma$ and $\tau$ is specified by
\begin{equation} \label{eq:stream_addition}
(\sigma + \tau)_0 = \sigma_0 + \tau_0 \qquad\qquad
(\sigma + \tau)' = (\sigma' + \tau') \,\text{.}
\end{equation}
In this case, we let $KX = X\times X$, and 
$$
    \ell = (
    \xymatrix@1{
      K(H\times\Id) \ar[r]^-{\ell'} &
      HK \ar[r]^-{H\kappa} &
      H\extK
    })\,\text{,}
  $$
  where $\ell'$ is given by
  $$\ell_X'((r,x',x),(s,y',y)) = (r+s,(x',y'))\,\text{.}$$
  Again, it is not hard to show that the $\ell$-interpretation $b: C\times C \to C$ is componentwise addition.
  (For related details, see Example~\ref{expl} below.)


  \medskip\noindent
  (3)~Formal languages. Consider the
  endofunctor $HX =X^A \times 2$ on $\Set$, where $2 = \{\,0,1\,\}$. Coalgebras
  for $H$ are precisely the (possibly infinite) deterministic automata
  over the set $A$ (as an alphabet). 
  The terminal coalgebra $c: C \to HC$ consists of all formal languages with
  $c(L) = (\lambda a.L^a, i)$ with $i = 1$ iff the empty word $\eps$ is
  in $L$ and where $L^a = \{\, w \mid aw \in L\,\}$.

  To specify e.\,g.\ the intersection of formal languages by an abstract GSOS rule,
  let $KX = X \times X$ and let $\ell: K(H\times\Id) \to H\extK $ be
  $$
    \ell = (
    \xymatrix@1{
      K(H\times\Id) \ar[r]^-{K\pi_0} &
      KH \ar[r]^-{\ell'} &
      HK \ar[r]^-{H\kappa} &
      H\extK
    })
  $$
  where $\ell':KH\to HK$ is given by $\ell_X'((f,i), (g,j)) = (\langle f,g\rangle, i \wedge j)$
  where $\wedge$ denotes the ``and''-operation on $\{\,0,1\,\}$. 
  Then the $\ell$-interpretation is easily verified to be the intersection of formal languages.
\end{exas}

Our first result (Theorem~\ref{thm:sandwich_ellsolution}) improves a result
from~\cite{bartels,bartels_thesis} that we now recall. For the rest of
this section we assume that an abstract GSOS rule
$\ell:K(H\times\Id)\to H\extK $ is given. Recall that the structure of
the terminal coalgebra $c: C \to HC$ is an isomorphism by the Lambek
Lemma~\cite{lambek}. From now on we will regard $C$ as an $H$-algebra
with the structure $c^{-1}: HC \cdot C$ most of the time.

\begin{defi}
  \label{dfn:elleq}
  An \emph{$\ell$-equation} is an $H\extK $-coalgebra; 
  that is, a morphism of the form 
  \[
  e: X \to H\extK X.
  \]
  A \emph{solution} of $e$ in the terminal coalgebra 
  $C$ is a morphism $e^\dag: X\to C$ such that the diagram below commutes:
  \begin{equation} \label{eq:sol_ell-equation}
  \vcenter{
  \xymatrix@+1pc{
    X \ar[dd]_{e} \ar[r]^{e^\dag}	& C	\\
    	& HC \ar[u]_{c^{-1}}	\\
    H\extK X \ar[r]^{H\extK e^\dag}	& H\extK C \ar[u]_{H\extalg b}
  }
  }
  \end{equation}
\end{defi}

\begin{thm} {\rm\cite{bartels,bartels_thesis}} \label{thm:ellsolution} 
  For every $\ell$-equation there exists a unique solution in $C$.
\end{thm}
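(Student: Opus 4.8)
The plan is to use the correspondence between abstract GSOS rules and distributive laws (Remark~\ref{rem:ell_lambda}(2)) together with the universal property of the terminal coalgebra $C$. Given an $\ell$-equation $e: X \to H\extK X$, I want to manufacture an $H$-coalgebra on $X$ (or on some object built from $X$) whose unique coalgebra morphism into $C$ will be the desired solution $e^\dag$. The natural candidate is obtained by postcomposing $e$ with the distributive law: from $e: X \to H\extK X$ we do not yet have an $H$-coalgebra, so the first move is to transport the ``$\extK$'' on the outside. Concretely, I would consider the $H\extK$-coalgebra structure and lift it along the free monad to an $H$-coalgebra on the free algebra, i.e.\ form $\extK X$ and use the distributive law $\lambda: \extK(H\times\Id) \to (H\times\Id)\extK$ (or its $H$-only shadow $\extK H \to H\extK$) to turn $\extK e: \extK X \to \extK H\extK X$ into a coalgebra $\extK X \to H\extK \extK X \to H\extK X$, using the monad multiplication $\mu$ at the end. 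This yields a genuine $H$-coalgebra $\hat e : \extK X \to H\extK X$.

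The second step is to invoke finality: there is a unique $H$-coalgebra morphism $s: (\extK X, \hat e) \to (C, c)$. I then set $e^\dag := s \cdot \eta_X : X \to C$ and check that this satisfies the solution diagram~\refeq{eq:sol_ell-equation}. The key point in that verification is that $s$ is not merely a coalgebra morphism but is in fact a $K$-algebra (indeed Eilenberg--Moore) homomorphism from $(\extK X, \extalg{(\text{something})})$ to $(C, b)$ — this is exactly where the compatibility axioms of the distributive law~\refeq{diags:distributive_law} and the characterization of the $\ell$-interpretation $b$ via Diagram~\refeq{diag:interp_gsos_rule} (equivalently~\refeq{diag:interp_gsos_lambda}) get used. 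Granting that $s = \extalg b \cdot \extK s$ in the appropriate sense, one chases~\refeq{eq:sol_ell-equation} by expanding $\hat e$ back into $\lambda_C$ and $\mu$, and the triangle/square laws of the monad and distributive law collapse everything to the statement that $s$ is a coalgebra morphism, which holds by construction.

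For uniqueness, suppose $e^\dag$ and $\tilde e^{\,\dag}$ both solve $e$. The strategy is to show that each of them extends to a coalgebra morphism out of $(\extK X, \hat e)$ — namely to $\extalg b \cdot \extK e^\dag$ and $\extalg b \cdot \extK \tilde e^{\,\dag}$ respectively — and then appeal to the uniqueness half of finality to conclude these extensions are equal, hence (precomposing with $\eta_X$) that $e^\dag = \tilde e^{\,\dag}$. That each extension is a coalgebra morphism is again a diagram chase powered by the naturality of $\lambda$ and the defining property of $b$; the solution equation~\refeq{eq:sol_ell-equation} is precisely the ``seed'' that makes this chase go through.

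The main obstacle I anticipate is not any single step but the bookkeeping needed to show that the finality argument and the solution diagram~\refeq{eq:sol_ell-equation} really talk about the same morphism — i.e.\ reconciling the ``outside $\extK$'' presentation of an $\ell$-equation with the ``inside, lifted to $\extK X$'' presentation of an $H$-coalgebra. Making the passage between $\lambda$ as a distributive law over the copointed functor $H\times\Id$ and the plain natural transformation $\extK H \to H\extK$ precise, and keeping track of where $\mu$, $\eta$, $\kappa$ and $\extalg{(-)}$ enter, is the delicate part; once the correct $H$-coalgebra $\hat e$ on $\extK X$ is identified, existence and uniqueness both fall out of the universal property of $(C,c)$ essentially formally. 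Since this theorem is quoted from~\cite{bartels,bartels_thesis}, I would in the paper simply cite it, but the sketch above is how I would reconstruct it.
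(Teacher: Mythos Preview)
Your proposal is correct in outline and, as you note, the paper itself does not prove this theorem: it simply records that the result follows from Corollaries~4.3.6 and~4.3.8 and Lemma~4.3.9 of \cite{bartels_thesis}. Your reconstruction is essentially Bartels' own argument: lift the $\ell$-equation to an $H$-coalgebra on $\extK X$, invoke finality, and read off the solution as the restriction along $\eta_X$; for uniqueness, show that any solution $g$ extends to the coalgebra morphism $\extalg b \cdot \extK g$ and appeal again to finality.

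One point of precision is worth flagging. You speak of an ``$H$-only shadow'' $\extK H \to H\extK$ of the distributive law, but in general no such thing exists: the abstract GSOS rule gives a distributive law over the \emph{copointed} functor $H\times\Id$, not over $H$. Concretely, you cannot apply $\lambda$ to $\extK e: \extK X \to \extK H\extK X$ as written. The fix is exactly what Bartels does (Lemma~4.3.9): first pair $e$ with $\eta_X$ to obtain $\langle e,\eta_X\rangle : X \to (H\times\Id)\extK X$, then apply $\extK$, then $\lambda_{\extK X}$, then $(H\times\Id)\mu_X$, obtaining a coalgebra for the copointed functor on $\extK X$; project to $H$ at the end. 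Your equation ``$s = \extalg b \cdot \extK s$'' also does not type-check; the statement you want is that $s$ is an Eilenberg--Moore algebra morphism, i.e.\ $s\cdot\mu_X = \extalg b \cdot \extK s$, which one proves by showing both sides are coalgebra morphisms into $C$.

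It may interest you that the paper's appendix gives a genuinely different proof of the more general version (Theorem~\ref{thm:lambdasolution}) for a \emph{pointed} endofunctor $M$ rather than a monad. There one has no multiplication $\mu$, so the lift to $MX$ does not work; instead one forms the colimit $S = \colim_n M^n$, builds an $H$-coalgebra on $SX$ using an iterated distributive law $\lambda^*$, and takes the unique coalgebra morphism $SX \to C$ restricted along $\inj^0_X$. Your approach is simpler and suffices in the monad case; the colimit approach buys the extra generality.
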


This result follows from Corollaries 4.3.6 and 4.3.8 and Lemma 4.3.9 in \cite{bartels_thesis}. The first of these results is the dual of a
result obtained independently and at the same time by Uustalu, Vene
and Pardo (see~\cite{uvp}, Theorem 1). Notice that
both~\cite{bartels_thesis} and~\cite{uvp} work at the level of
generality provided by distributive laws rather than with (the dual
of) abstract GSOS rules. And Capretta, Uustalu and Vene~\cite{cuv}
generalize this further. In their Theorems 19 and 28 they replace the
inverse $c^{-1}$ of the terminal coalgebra by an algebra $a: HA \to
A$ having the property that for every coalgebra $e: X \to HX$ there
exists a unique coalgebra-to-algebra homomorphism $h: X \to A$,
i.\,e., $h$ exists uniquely such that $a \cdot Hh \cdot e = h$.  
Our generalization in this paper goes in a different direction. We
keep the algebra $c^{-1}: HC \to C$ but generalize the format of
equations having a unique solution in this algebra. But before we do
this let us give an example of an $\ell$-equation and its solutions
for streams.

\begin{exas} \label{ex:moredist}
As in our discussion of  streams in Example~\ref{ex:dist}(2),
let $HX = \Real\times  X$ so that that $C$ is the set of streams over $\Real$,
let $KX = X\times X$, 
and 
let $\ell$ be the abstract GSOS rule for zipping streams so that $b: KC\to C$
is the operation of zipping.
As an example of an $\ell$-equation, let $X = \set{x,y}$,
and let $e: X \to \Real \times \extK X$ be given by
\[
e(x) = (0, \zip(y,x)) \qquad\text{and}\qquad e(y) = (1, \zip(x,y)).
\]
Then the streams $t = e^\dag(x)$ and $u = e^\dag(y)$ satisfy:
\begin{equation}\label{TM}
\begin{array}{lcl}
t  & = & 1. \zip(u,t) \\
u & = & 0.\zip(t,u)
\end{array}
\end{equation}
\end{exas}
To continue the discussion from streams just
above, we shall solve equations such as (\ref{TMmore}) below which are
more complicated than (\ref{TM}). To start with, for the proof of
Theorem~\ref{thm:sandwich} we shall need a variant of Theorem
\ref{thm:ellsolution} for equations of the form $e:X \to \extK H\extK
X$:

\begin{defi}
  A \emph{sandwiched $\ell$-equation} is a $\extK H\extK $-coalgebra; 
  that is, a morphism of the form $e: X \to \extK H\extK X$.
  A \emph{solution} of $e$ in the terminal coalgebra 
  $C$ is a morphism $e^\dag: X\to C$ such that the diagram below commutes:
  \begin{equation} \label{eq:sol_ext_ell-equation}
  \vcenter{
  \xymatrix@C+1pc{
    X \ar[ddd]_{e} \ar[r]^{e^\dag}	& C	\\
    	& \extK C \ar[u]_{\extalg b}	\\
    	& \extK HC \ar[u]_{\extK c^{-1}}	\\
    \extK H\extK X \ar[r]^{\ \extK H\extK e^\dag\ }	& \extK H\extK C \ar[u]_{\extK H\extalg b}
  }
  }
  \end{equation}
\end{defi}

\begin{exas} \label{ex:moreTM}
This is a variation on  Example~\ref{ex:moredist}.
Using a sandwiched $\ell$-equation, we 
can solve
\begin{equation}\label{TMmore}
\begin{array}{lcl}
t  & = & \zip(1.u,0.t) \\
u & = & \zip(0.t,1.u) \\
\end{array}
\end{equation}
Note the difference between \refeq{TM} and \refeq{TMmore}.
The key point about a sandwiched system is that the ``guards'' by prefix
operations $r.-$ need not occur at the head of the term on the
right-hand sides. (In this example, they are applied to the variables
inside $\zip$.) 
Incidentally, the solution assigns to $u$ the famous Thue-Morse
stream and to $t$ its dual; the solutions to \refeq{TM} are different.
A more complicated sandwiched system would be 
$$
\begin{array}{lcl}
t  & = &\zip(0.u,\zip(1.u,0.\zip(1.u,0.u))) \\
u & = & \zip(\zip(0.t,1.u),  \zip(0.t,1.u)).
\end{array}
$$
\end{exas}

\begin{thm} \label{thm:sandwich_ellsolution}
  For every sandwiched $\ell$-equation there exists a
  unique solution in $C$.
\end{thm}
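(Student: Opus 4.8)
The plan is to reduce the statement to Theorem~\ref{thm:ellsolution} by absorbing the outer copy of $\extK$ in $\extK H\extK X$ into the object of recursion variables. Given a sandwiched $\ell$-equation $e\colon X\to\extK H\extK X$, I would pass to the new variable object $\extK X$ and build from $e$ an ordinary $\ell$-equation (in the sense of Definition~\ref{dfn:elleq}) $\hat e\colon \extK X\to H\extK\extK X$, arranged so that a solution of $e$ in $C$ and a solution of $\hat e$ in $C$ determine one another via $e^\dag\mapsto\extalg b\cdot\extK e^\dag$ and $\hat e^\dag\mapsto\hat e^\dag\cdot\eta_X$. Since $\extalg b\colon\extK C\to C$ is the structure of an Eilenberg--Moore algebra for $\extK$, the morphisms $\extK X\to C$ that are $\extK$-algebra homomorphisms $(\extK X,\mu_X)\to(C,\extalg b)$ correspond bijectively to arbitrary morphisms $X\to C$; and a short computation, using naturality of $\varphi$ and that $\extalg b\cdot\varphi_C = b\cdot K\extalg b$, shows that the morphism $\extalg b\cdot\extK e^\dag$ obtained from a solution $e^\dag$ of $e$ is always such a homomorphism. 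So the real task is to produce $\hat e$ whose unique solution in $C$ (which exists by Theorem~\ref{thm:ellsolution}) is exactly the homomorphism $\extalg b\cdot\extK e^\dag$ corresponding to the solution we want for $e$, and whose unique solution is a $\extK$-algebra homomorphism so that the correspondence above becomes a bijection between the two solution sets.

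The main obstacle is the construction of $\hat e$ itself. One cannot simply slide the outer $\extK$ past $H$: a general abstract GSOS rule does not yield a distributive law of the monad $\extK$ over $H$ (there is no natural transformation $\extK H\to H\extK$ available), so the reduction is not a one-line rewriting of $e$. Instead I would assemble $\hat e$ from $e$, the free-algebra data $\varphi$, $\eta$, $\mu$ of $\extK$, and the distributive law $\lambda\colon\extK(H\times\Id)\to(H\times\Id)\extK$ corresponding to $\ell$ (equivalently, the $\ell$-interpretation $b$ of Theorem~\ref{thm:interp_gsos_rule}), using that on the free $\extK$-algebra $\extK X$ the multiplication $\mu_X$ supplies precisely the algebraic structure that allows $\lambda$ to be applied. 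Verifying that the $\hat e$ so constructed has the desired solution is then a diagram chase that repeatedly invokes the compatibility of $b$ (equivalently $\lambda$) with the terminal-coalgebra structure $c$ recorded in Theorem~\ref{thm:interp_gsos_rule} and Remark~\ref{rem:interp_gsos_lambda}(2), together with the Eilenberg--Moore identities $\extalg b\cdot\eta_C=\id_C$ and $\extalg b\cdot\mu_C=\extalg b\cdot\extK\extalg b$ for $\extalg b$.

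Uniqueness then transfers cleanly: any solution $e^\dag$ of $e$ produces, as above, a solution $\extalg b\cdot\extK e^\dag$ of $\hat e$, which by Theorem~\ref{thm:ellsolution} is unique, and precomposing with $\eta_X$ recovers $e^\dag$, so $e$ has at most one solution; existence follows by taking $\hat e^\dag\cdot\eta_X$ for the unique solution $\hat e^\dag$ of $\hat e$ and checking, by a diagram chase of the same flavour, that it makes the diagram~\refeq{eq:sol_ext_ell-equation} commute. Should the construction of $\hat e$ turn out to be awkward, the fallback is to prove the theorem directly by mimicking the argument behind Theorem~\ref{thm:ellsolution}: form the solution operator on morphisms $X\to C$ induced by the diagram~\refeq{eq:sol_ext_ell-equation} and show it has a unique fixed point by the same terminal-coalgebra technique used by Bartels, now carrying the extra $\extK$-layer through the bookkeeping. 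I expect constructing $\hat e$ (or, in the fallback, tracking the extra $\extK$-layer in the fixed-point argument) to be the only genuinely delicate step; the rest is naturality and Eilenberg--Moore bookkeeping.
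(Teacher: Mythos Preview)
Your reduction strategy---pass to a larger variable object and invoke Theorem~\ref{thm:ellsolution}---is exactly what the paper does. But your choice of that object, $\extK X$, is where the proposal stalls. You correctly identify that building $\hat e\colon \extK X \to H\extK\extK X$ requires moving the outer $\extK$ past $H$, and that an abstract GSOS rule gives you only $\lambda\colon \extK(H\times\Id)\to(H\times\Id)\extK$, not $\extK H\to H\extK$. You then say you will ``assemble $\hat e$'' from $e$, $\varphi$, $\eta$, $\mu$, and $\lambda$, but you never write down the construction, and I do not see a clean one: to feed $\lambda$ you need a map into $\extK(H Y\times Y)$, and $e$ lands in $\extK H\extK X$ with no natural second component available. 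The universal property of $\extK X$ does not help either, since there is no evident $K$-algebra structure on $H\extK\extK X$ compatible with what you want. So as written the proposal has a gap precisely at the step you flag as delicate.

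The paper avoids this entirely by enlarging to $H\extK X$ rather than $\extK X$. With that choice the reduced equation is immediate:
\[
\ol e \;=\; \bigl(H\extK X \xrightarrow{\,H\extK e\,} H\extK\extK H\extK X \xrightarrow{\,H\mu_{H\extK X}\,} H\extK H\extK X\bigr),
\]
no commutation past $H$ needed. The solution correspondence is then
\[
e^\dag \;\longmapsto\; c^{-1}\cdot H\extalg b\cdot H\extK e^\dag \colon H\extK X\to C,
\qquad
\ol e^{\,\dag} \;\longmapsto\; \extalg b\cdot \extK\ol e^{\,\dag}\cdot e \colon X\to C,
\]
and both verifications are short diagram chases using only naturality of $\mu$ and the Eilenberg--Moore laws for $\extalg b$---exactly the bookkeeping you anticipated, but with nothing delicate left. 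The moral: once you decide to reduce to Theorem~\ref{thm:ellsolution}, the right enlarged variable object is $H\extK X$, not $\extK X$; this single change turns your ``main obstacle'' into a one-line definition.
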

\begin{proof}
  Given a sandwiched $\ell$-equation $e:X \to \extK H\extK X$, we form the following (ordinary) $\ell$-equation:
  $$
  \ol e = (
  \xymatrix@C+1pc{
    H\extK X
    \ar[r]^-{H\extK e}
    &
    H\extK \extK H\extK X
    \ar[r]^-{H\mu_{H\extK X}}
    &
    H\extK H\extK X
  }
  )\,\text{.}
  $$
  From Theorem \ref{thm:ellsolution} we know that $\ol e$ has a unique solution $\ol e^\dag:H\extK X\to C$. Thus, we are finished if we can show that solutions of $e$ and $\ol e$ are in one-to-one correspondence.

  Firstly, from the solution $\ol e^\dag$ of $\ol e$ we obtain
  $$
  e^\dag = (
  \xymatrix{
    X
    \ar[r]^-{e}
    &
    \extK H\extK X
    \ar[r]^-{\extK \ol e^\dag}
    &
    \extK C
    \ar[r]^{\extalg b}
    &
    C
  }
  )\,\text{,}
  $$
  and we will now verify that $e^\dag$ is a solution of $e$. To this end, consider the diagram below:
  $$
  \xymatrix@C+1.5pc{
    	& X \ar[d]_{e} \ar[rrr]^{e^\dag} \ar@{}[drrr]|{\text{(i)}}	&	&	& C	\\
    	& \extK H\extK X \ar[rrr]^{\extK \ol e^\dag} \ar[dr]^{\extK \ol e} \ar[d]_{\extK H\extK e} \ar `l[ld] `d[ddd] `r[ddrrr]_{\extK H\extK e^\dag} [ddrrr] \ar@{}[drrr]|{\text{(ii)}}	&	&	& \extK C \ar[u]_{\extalg b}	\\
    	& \extK H\extK \extK H\extK X \ar[r]^{\extK H\mu_{H\extK X}} \ar[d]_{\extK H\extK \extK \ol e^{\dag}}	& \extK H\extK H\extK X \ar[r]^-{\extK H\extK \ol e^\dag}	& \extK H\extK C \ar[r]^{\extK H\extalg b}	& \extK HC \ar[u]_{\extK c^{-1}}	\\
    	& \extK H\extK \extK C \ar[rru]^(0.35){\extK H\mu_C} \ar[rrr]^{\extK H\extK \extalg b} \ar@{}[drrr]|{\text{(iii)}}	&	&	& \extK H\extK C \ar[u]_{\extK H\extalg b}	\\
    	&	&	&	& 
  }
  $$
  All its inner parts commute: parts (i) and (iii) commute by the definition of $e^\dag$, for part (ii) use that $\ol e^{\dag}$ is a solution of $\ol e$ (i.\,e.~apply $\extK $ to Diagram~\refeq{eq:sol_ell-equation} with $\ol e$ in lieu of $e$), and the remaining parts commute by the definition of $\ol e$, naturality of $\mu$ and the multiplication law for the Eilenberg-Moore algebra $\extalg b$. Thus, the outside commutes,
   proving $e^\dag$ to be a solution of $e$.

  Secondly, suppose we are given any solution $e^\dag$ of $e$. Then we form
  $$
  \xymatrix{
    H\extK X
    \ar[r]^{H\extK e^{\dag}}
    &
    H\extK C
    \ar[r]^{H\extalg b}
    &
    HC
    \ar[r]^{c^{-1}}
    &
    C \,\text{,}
  }
  $$
  and we now prove that this is a solution of $\ol e$. Indeed, in the diagram
  $$
  \xymatrix@C+1.2pc{
    H\extK X \ar[dd]_{H\extK e} \ar[rr]^{H\extK e^\dag}	
    &	
    &
    H\extK C \ar[r]^{H\extalg b} \ar[drr]^(0.65){H\extalg b}	
    & 
    HC \ar[r]^{c^{-1}}	
    & 
    C	
    \\
    &	
    & 
    H\extK \extK C \ar[u]^{H\extK \extalg b} \ar[ddrr]^{H\mu_{C}}
    &
    & 
    HC \ar[u]_{c^{-1}}	
    \\
    H\extK \extK H\extK X 
    \ar[r]^(.4){\begin{turn}{-45}$\labelstyle H\extK \extK H\extK e^\dag$\end{turn}}
    \ar[d]_{H\mu_{H\extK X}}	
    & 
    H\extK \extK H\extK C \ar[r]^(.4){\begin{turn}{-45}$\labelstyle H\extK \extK H\extalg b$\end{turn}}
    \ar[d]_{H\mu_{H\extK C}}
    & H\extK \extK HC \ar[u]^{H\extK \extK c^{-1}} \ar[d]_{H\mu_{HC}}
    &
    &
    \\
    H\extK H\extK X \ar[r]_{H\extK H\extK e^\dag} \ar@{<-} `l[u]
    `[uuu]^{\ol e} [uuu]	
    & 
    H\extK H\extK C \ar[r]_{H\extK H\extalg b}	
    & H\extK HC \ar[rr]_{H\extK c^{-1}}
    &
    &
    H\extK C \ar[uu]_{H\extalg b}
  }
  $$
  all inner parts commute: for the big left-hand square apply $H\extK $ to Diagram~\refeq{eq:sol_ext_ell-equation}, the left-hand part is the definition of $\ol e$, the two lower squares and the lower right-hand triangle commute due to naturality of $\mu$, the upper right-hand triangle is trivial, and the remaining middle right-hand part commutes by the multiplication law for the Eilenberg-Moore algebra $\extalg b$. Thus, the outside commutes proving $c^{-1} \cdot H\extalg b \cdot H\extK e^{\dag}$ to be a solution of $\ol e$. Since $\ol e$ has a unique solution, we have
  $$c^{-1} \cdot H\extalg b \cdot H\extK e^{\dag} = \ol e^\dag \,\text{.}$$

  Lastly, the two constructions are inverse to each other: starting
  with the unique solution $\ol e^\dag$ of $\ol e$, it is clear that
  by applying the two constructions we obtain $\ol e^\dag$
  again. Starting with any solution $e^\dag$ of $e$, the application
  of the second construction results in the solution $c^{-1} \cdot
  H\extalg b \cdot H\extK e^{\dag} = \ol e^\dag$ of $\ol e$. The
  application of the first construction to that solution gives back the solution $e^\dag$ of $e$:
$$\extalg b \cdot \extK \ol e^\dag \cdot e = \extalg b \cdot \extK (c^{-1} \cdot H\extalg b \cdot H\extK e^{\dag})\cdot e = \extalg b \cdot \extK c^{-1} \cdot \extK H\extalg b \cdot \extK H\extK e^{\dag}\cdot e = e^\dag$$
where the last equality uses Diagram~\refeq{eq:sol_ext_ell-equation}. We conclude that $e$ has a unique solution $e^\dag$.
\end{proof}

\takeout{ 
\begin{exas} \label{ex:moreTM2}
We continue  our discussion of streams in Example~\ref{ex:moredist}.
With Theorem~\ref{thm:ellsolution}, one can solve systems like (\ref{TM})
uniquely.   But one cannot solve systems like
\begin{equation} \label{TMmore2}
\begin{array}{lcl}
x & = & \zip(1.\zip(x,y),0.\zip(y,y)) \\
y & = & \zip(1.y,1.x) \\
\end{array}
\end{equation}
The equation is sandwiched, since the right hand sides involve terms
where the prefixing operations $1.-$ and $0.-$ are within the terms
but not directly applied to variables. 
Theorem~\ref{thm:sandwich_ellsolution} tells us that there is a unique solution.
\end{exas}
}
 
\section{Completely Iterative Algebras}
\label{sec:cias}


We already mentioned in the introduction that the $H$-algebra $c^{-1}:
HC \to C$ is the initial completely iterative algebra for $H$. After
recalling this results below, it is our aim in this section to extend
Theorems~\ref{thm:ellsolution} and \ref{thm:sandwich_ellsolution} so as to obtain several new
structures of completely iterative algebras (for functors other than
$H$) on $C$.   Our first main results in this paper, Theorems~\ref{thm:distcia}
and~\ref{thm:sandwich},  go in this direction.  We
shall see in the next section that having these new cia structures allows
us to apply the existing theorems on completely iterative algebras to
uniquely solve much more general recursive equations than what we have seen
up to now and including this section. 

We now briefly recall the basic definitions and some examples; more details and examples can
be found in~\cite{m_cia,amv_elgot,mm}.

\begin{defi} \cite{m_cia}
\label{def:cia}
A \emph{flat equation morphism} in an object $A$ (of parameters) is a morphism $e:X\to HX+A$. An $H$-algebra
$a:HA\to A$ is called \emph{completely iterative} (or a
\emph{cia}, for short) if every flat equation morphism in $A$ has a
unique solution, i.\,e., for every $e:X\to HX+A$ there exists a unique
morphism $e^{\dagger}:X\to A$ such that the square below commutes:
\[
\xymatrix@C+1pc{
  X
  \ar[d]_e
  \ar[r]^{\sol e}
  &
  A
  \\
  HX + A
  \ar[r]_-{H\sol e + A}
  &
  HA + A
  \ar[u]_{[a, A]}
}
\]
\takeout{ 
$$
e^{\dagger}
=
(\xymatrix@1{
  X \ar[r]^(0.35)e & 
  HX + A \ar[rr]^{H\sol e + A} &&
  HA + A \ar[r]^-{[a, A]} &
  A
  }).
$$}
\end{defi}

\begin{exas}
\label{ex:cias}
We recall some examples from previous work.

\medskip\noindent 
(1)~Let $TX$ denote a terminal coalgebra for
$H(-)+X$. (We assume that $TX$ exists for all $X$.) Its structure is an isomorphism by Lambek's Lemma%
\iffull~\cite{lambek}\fi, and so its inverse yields (by composing with the
coproduct injections) an $H$-algebra $\tau_X:HTX\to TX$ and a morphism
$\eta_X:X\to TX$. Then $(TX,\tau_X)$ is a free cia on $X$ with the
universal arrow $\eta_X$, see \cite{m_cia}. Conversely, for any free
cia $(TX, \tau_X)$ with the universal morphism $\eta_X$, $[\tau_X,
\eta_X]$ is an isomorphism and $[\tau_X, \eta_X]^{-1}$ is the
structure of a terminal coalgebra for $H(-) + X$. So in particular,
the inverse of the structure $c: C \to HC$ of the terminal
coalgebra for $H$ is, equivalently, an initial cia for $H$.

\medskip\noindent 
(2)~Let $H_{\Sigma}$ be a polynomial functor (cf.~Example~\ref{ex:sigma_algebras}). The
terminal coalgebra for $H_{\Sigma}(-)+X$ is carried by the set
$T_{\Sigma}X$ of all (finite and infinite) $\Sigma$-trees on $X$. According to the previous
item, this is a free cia for $H_{\Sigma}$ on $X$. As already mentioned
in the introduction, cias for $H_\Sigma$ are $\Sigma$-algebras in
which systems of recursive equations~\refeq{eq:eq} have unique
solutions. For example, let $\Sigma$ be the signature with one binary
operation symbol $*$ and one constant symbol $c$. Consider the free
cia $A = T_\Sigma Y$ and let $t \in T_\Sigma Y$. Then the system
\[
\begin{array}{rcl@{\qquad\qquad}rcl}
  x_0 & = & x_1 * x_2 & x_2 & = & c \\
  x_1 & = & x_0 * x_3 & x_3 & = & t
\end{array}
\]
has the unique solution $\sol e: X \to T_\Sigma Y$ given by
\[
\sol e(x_0) = 
\vcenter{
\xy
\POS (0,0) *+{*} = "w"
,    (-5,-10) *+{*} = "l"
,    (5,-10)  *+{c} = "r"
,    (-10,-20) *+{\vdots} = "ll"
,    (0,-20) *+{t} = "lr"
\ar @{-} "w";"l"
\ar @{-} "w";"r"
\ar @{-} "l";"ll"
\ar @{-} "l";"lr"
\endxy
}
\qquad
\sol e(x_1) =
\vcenter{
\xy
\POS (0,0) *+{*} = "w"
,    (-5,-10) *+{*} = "l"
,    (5,-10)  *+{t} = "r"
,    (-10,-20) *+{\vdots} = "ll"
,    (0,-20) *+{c} = "lr"
\ar @{-} "w";"l"
\ar @{-} "w";"r"
\ar @{-} "l";"ll"
\ar @{-} "l";"lr"
\endxy
}
\qquad
\sol e(x_2) = c
\qquad
\sol e(x_3) = t\,.
\]
\medskip\noindent (3)~The algebra of addition on
$\bar\Nat=\{1,2,3,\dots\}\cup\{\infty\}$ is a cia for $HX=X\times X$,
see \cite{amv_atwork}.

\medskip\noindent (4)~Let $\mathcal{A}=\CMS$ be the category of
complete metric spaces with distances in $[0,1]$ and with non-expanding maps as
morphisms, and let $H$ be a \emph{contracting} endofunctor of
$\CMS$ (see e.\,g.~\cite{america+rutten}). Then any non-empty algebra for $H$ is
a cia, see \cite{m_cia} for details. For example, let $A$ be the
set of non-empty compact subsets of the unit interval $[0,1]$ equipped
with the Hausdorff metric \cite{hausdorff}. This complete metric space can be turned
into a cia such that the Cantor set arises as the unique solution of a
flat equation morphism (see \cite[Example~3.3(v)]{mm}).

\medskip\noindent (5)~Unary algebras over $\Set$. Here we take
$\mathcal{A}=\Set$ and $H=\Id$. An algebra $\alpha:A\to A$ is a cia
iff $\alpha$ has a fixed point $a_0$ and there is no infinite sequence
$a_1,a_2,a_3,\dots$ with $a_i=\alpha(a_{i+1})$, $i=1,2,3,\dots$,
except for the one all of whose members are $a_0$. The second part of
this condition can be put more vividly as follows: the graph with node
set $A\setminus \{a_0\}$ and with an edge from $\alpha(a)\neq a_0$ to
$a$ for all $a$ is well-founded. 

\medskip\noindent (6)~Classical algebras are seldom cias. For example
a group or a semilattice is a cia (for $HX=X\times X$) iff they
contain one element only (consider the unique solutions of $x =
x\cdot 1$ or $x = x\vee x$, respectively).
\end{exas}

\begin{rem}
  \label{rem:comp}
  In~\cite{amv_elgot} the following property---called
  \emph{compositionality}---of taking unique
  solutions of flat equations in a cia $a: HA \to A$ was proved. Suppose
  we have two flat equation morphisms $e: X \to HX + Y$ and $f: Y \to
  HY + A$. We form  
  $$
  \sol f \after e = (
  \xymatrix@C+1pc{
    X
    \ar[r]^-{e}
    &
    HX + Y
    \ar[r]^-{HX + \sol f}
    &
    HX + A
  }
  )
  $$
  and
  $$
  f \plus e = (
  \xymatrix@C+1pc{
    X + Y
    \ar[r]^-{[e, \inr]}
    &
    HX + Y
    \ar[r]^-{HX + f}
    &
    HX + HY + A
    \ar[r]^-{\can + A}
    &
    H(X+Y) + A
  }
  )\,\text{.}
  $$
  Then
  $$
  \sol{(f \plus e)} = (
  \xymatrix@C=2cm{
    X+Y
    \ar[r]^-{[\sol{(\sol f \after e)},\sol f]}
    &
    A
  }
  ) \,\text{.}
  $$
  This gives a first precise formulation of the modularity principle
  we mentioned in the introduction, albeit restricted to solutions of
  flat equation morphisms. Indeed, the above equation states that in
  order to obtain the simultaneous unique solution of $e$ and $f$
  (i.\,e.~$\sol{(f \plus e)}$) one may first solve $f$ in the cia $A$,
  then plug its solution $\sol f$ as constant parameters into $e$ and
  finally solve the resulting equation (i.\,e., one takes $\sol{(\sol
    f \after e)})$. For $\A = \Set$ one can view this as using the
  elements $\sol f (y)$, $y \in Y$, as new constants in $A$ in the
  subsequent recursive equation given by $e$. We shall see modularity
  principles for more general formats of recursive definitions in
  Sections~\ref{sec:solthms} and~\ref{sec:lambdarps}.

  \takeout{ 
  One observation is that this implies \emph{modularity} of unique solutions of flat equations: one may first solve $f$ in the
  cia $A$, then plug its solution $\sol f$ as constant parameters
  into $e$ and finally the resulting equation can be solved uniquely again (one
  takes $\sol{(\sol f \after e)}$ in the cia $A$). Moreover, we have \emph{compositionality} of unique solutions of flat equations in the sense that we may compose flat equations (one forms $f \plus e$) and that we can solve the composite equation by solving its constituents $f$ and $e$ one after the other (i.\,e. in a modular way as described before)---this is provided by the equation $\sol{(f \plus e)} = [\sol{(\sol f \after e)},\sol f]$.}
  \takeout{
  {\bf Functoriality:} Let $e: X \to HX + A$ and $f:
  Y \to HY + A$ be two flat equation morphisms, and let $h: X \to Y$
  be a homomorphism of equations, i.\,e., $f \cdot h = (Hh + \id_A) \cdot
  e$. Then we have $\sol e = \sol f \cdot h$.}
\end{rem}

The following two theorems show that abstract GSOS rules induce further
structures of completely iterative algebras on the (carrier of the) terminal
$H$-coalgebra $C$ besides the structure of an initial cia for $H$.

\begin{ass}
As in the previous section,
 we shall write $(M,\eta,\mu)$ for the free monad on $K$ to simplify notation, and we assume that $\ell:K(H\times\Id)\to HM$ is an abstract GSOS rule.
\end{ass}

\begin{thm}\label{thm:distcia}
  Consider the algebra
  \iffull
  $$k =(\xymatrix@1{HMC \ar[r]^-{H\extalg b} & HC \ar[r]^-{c^{-1}} & C}) \,\text{,}$$
  \else
  $k = (HMC \xrightarrow{H\extalg b} HC \xrightarrow{c^{-1}} C)$,
  \fi
  where $b: KC \to C$ is the
  $\ell$-interpretation in $C$. Then $(C, k)$ is a cia for the functor $HM$.
\end{thm}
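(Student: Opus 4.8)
The plan is to reduce a general flat equation morphism for $HM$ with parameters in $C$ to a parameter-free $\ell$-equation (i.\,e.\ an $HM$-coalgebra $X' \to HMX'$) and then to invoke Theorem~\ref{thm:ellsolution}. The only idea needed is the familiar one of \emph{absorbing the constant parameters into the equation}: since $(C,c)$ is terminal, every element of $C$ already carries its own behaviour via $c$, so a parameter occurring on a right-hand side can be replaced by that behaviour, transported into the monad $M$ along the unit $\eta$.

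Concretely, given $e: X \to HMX + C$ I would form the $\ell$-equation $\bar e: X+C \to HM(X+C)$ given by
\[
\bar e \;=\; \bigl[\,[\,HM\inl,\; HM\inr\cdot H\eta_C\cdot c\,]\cdot e,\;\; HM\inr\cdot H\eta_C\cdot c\,\bigr],
\]
whose left component keeps the genuine equations of $e$ (reindexed along $\inl$) and sends every parameter to its behaviour under the composite $HM\inr\cdot H\eta_C\cdot c: C \to HM(X+C)$, and whose right component treats all parameters by this same rule. By Theorem~\ref{thm:ellsolution}, $\bar e$ has a unique solution $\bar e^\dag: X+C \to C$, which by Diagram~\refeq{eq:sol_ell-equation} and the definition of $k$ means $\bar e^\dag = k\cdot HM\bar e^\dag\cdot\bar e$. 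The next step is to show $\bar e^\dag\cdot\inr = \id_C$: precomposing this equation with $\inr$ and using $\bar e\cdot\inr = HM\inr\cdot H\eta_C\cdot c$, naturality of $\eta$, and the Eilenberg--Moore law $\extalg b\cdot\eta_C = \id_C$ from Notation~\ref{not:extalg}, one obtains $\bar e^\dag\cdot\inr = c^{-1}\cdot H(\bar e^\dag\cdot\inr)\cdot c$, so $\bar e^\dag\cdot\inr$ is an endomorphism of the terminal coalgebra and hence the identity. Therefore $\bar e^\dag = [\,\bar e^\dag\cdot\inl,\ \id_C\,]$, and precomposing $\bar e^\dag = k\cdot HM\bar e^\dag\cdot\bar e$ with $\inl$ now collapses (using the same ingredients together with the coproduct identities for $[-,-]$ and $M[-,-]$) to exactly the equation $e^\dag = [k,\id_C]\cdot(HMe^\dag + \id_C)\cdot e$ defining a solution of $e$ in the cia $(C,k)$. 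Thus $e^\dag := \bar e^\dag\cdot\inl$ is a solution.

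For uniqueness I would run the correspondence in the other direction: given any solution $e^\dag: X \to C$ of $e$ with respect to $k$, essentially the same short computation shows that $[\,e^\dag,\ \id_C\,]: X+C \to C$ is a solution of the $\ell$-equation $\bar e$; by the uniqueness part of Theorem~\ref{thm:ellsolution} it must coincide with $\bar e^\dag$, whence $e^\dag = \bar e^\dag\cdot\inl$ is uniquely determined. No conceptual obstacle remains here; the only real work is the bookkeeping in these two verifications---matching the coproduct $HM(X+C)$ against $HMX + C$ and threading $\eta$ through the composites---which is routine but produces sizeable diagrams. Everything else is immediate from Theorem~\ref{thm:ellsolution}, the terminality of $(C,c)$, and the definition of $k$.
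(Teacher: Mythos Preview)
Your proposal is correct and follows essentially the same approach as the paper: the $\ell$-equation $\bar e$ you construct coincides with the paper's $\ol e = \can\cdot(HMX + H\eta_C\cdot c)\cdot[e,\inr]$ once you unfold $\can = [HM\inl, HM\inr]$, and the remaining steps (proving $\bar e^\dag\cdot\inr = \id_C$ via terminality, then establishing the bijection between solutions of $e$ and of $\bar e$) match the paper's argument line by line.
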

\begin{proof}
  Let $e: X \to HMX + C$ be a flat equation morphism. We must prove that there
  exists a unique morphism $\sol{e}: X \to C$ such that the following square
  commutes:
  $$
  \xymatrix@C+2pc{
    X \ar[r]^-{\sol{e}} 
    \ar[d]_e
    & C \\
    HMX + C 
    \ar[r]_-{HM\sol{e} + C}
    &
    HMC + C
    \ar[u]_{[k, C]}
    }
  $$
  We start by forming the $\ell$-equation 
  $$
  \ol{e} = (
  \xymatrix{
    X + C \ar[r]^-{[e, \inr]} 
    & HMX + C \ar[rr]^-{\ HMX + H\eta_C \cdot c\ }
    && HMX + HMC \ar[r]^-{\can}
    & HM(X+C)
  }
  )\,\text{.}
  $$
  By Theorem~\ref{thm:ellsolution}, there exists a unique
  morphism $s: X + C \to C$ such that the square below commutes:
  \begin{equation}\label{diag:distsol}
    \vcenter{
  \xymatrix@C+1pc{
    X + C \ar[r]^-{s} \ar[dd]_{\ol e} 
    & 
    C \ar[d]^c 
    \\
    & 
    HC 
    \\
    HM(X+C) \ar[r]_-{HMs} 
    & 
    HMC \ar[u]_-{H\extalg b}
    }}
  \end{equation}
  We will now prove that the morphism $\sol{e} = s \cdot \inl: X \to C$ is the
  desired unique solution of $e$. We begin by proving that the equation $s \cdot
  \inr =  \id_C$ holds. Indeed, consider the diagram below:
  \begin{equation}\label{diag:id}
    \vcenter{
  \xymatrix@C+1pc{
    C 
    \ar[rr]^-{\inr} 
    \ar[dd]_c
    &
    &
    X + C
    \ar[d]_{\ol e}
    \ar[r]^-s
    &
    C 
    \\
    &
    HMC
    \ar[r]_-{HM\inr}
    &
    HM(X+C)
    \ar[r]_-{HMs}
    &
    HMC 
    \ar[u]_{k}
    \\
    HC 
    \ar[ru]_-{H\eta_C}
    \ar[rrr]_-{H(s \cdot \inr)}
    & 
    &
    &
    HC 
    \ar[u]_{H\eta_C}
    \ar `r[u] `[uu]_{c^{-1}} [uu]
    }}
  \end{equation}
  This diagram commutes: the upper right-hand square is
  Diagram~\refeq{diag:distsol} above, the upper left-hand part commutes by
  the definition of $\ol{e}$, the lower part commutes by the naturality of
  $\eta$, and the right-hand part follows from the definition of $k = c^{-1} \cdot
  H\extalg b$ and the unit law $\extalg b \cdot \eta_C = \id_C$ of the Eilenberg-Moore algebra
  $(C,\extalg b)$, cf.~Notation \ref{not:extalg}. Hence, we see that $s \cdot \inr$ is a coalgebra homomorphism from the
  terminal coalgebra $(C,c)$ to itself. Thus, $s \cdot \inr$ must be the identity as
  desired. 

  Next we prove that $\sol{e}$ is a solution of $e$. To this end we verify that
  the following diagram commutes:
  \begin{equation}\label{diag:sol}
    \vcenter{
  \xymatrix{
    X
    \ar[r]^-{\inl} 
    \ar[ddd]_-e
    &
    X + C 
    \ar[rr]^-{s}
    \ar[d]_{\ol e}
    &
    &
    C
    \ar@{<-} `u[l] `[lll]_{\sol{e}} [lll]
    \\
    & 
    HM(X+C) \ar[r]_{HMs}
    &
    HMC 
    \ar[ru]_{k}
    \\
    &
    HMX + HMC 
    \ar[u]_{\can}
    \ar[ru]_(.7)*++{\labelstyle [HM\sol{e}, HMC]}
    \\
    HMX + C 
    \ar[ru]_-*+{\labelstyle HMX + H\eta_C \cdot c}
    \ar[rrr]_-{HM\sol{e}+ C}
    &&&
    HMC + C 
    \ar[uuu]_{[k, C]}
    }
  }
  \end{equation}
  The upper part is the
  definition of $\sol{e}$, the left-hand part commutes by the definition of
  $\ol{e}$, the upper right-hand part is Diagram~\refeq{diag:distsol}, and that
  the inner triangle commutes follows from the definition of $\sol{e}$ and the
  fact that $s \cdot \inr = \id_C$. Finally, we consider the two coproduct
  components of the lower right-hand triangle separately; the left-hand component trivially
  commutes, and for the right-hand one we compute as follows:
  $$
  \begin{array}{rcl@{\qquad} p{3.5cm}}
    k \cdot H\eta_C \cdot c & = & c^{-1} \cdot H\extalg b \cdot H\eta_C \cdot c & (by the definition of
    $k$)  \\
    & = & c^{-1} \cdot c & (since $\extalg b \cdot \eta_C = \id_C$) \\
    & = & \id_C \,\text{.}
  \end{array}
  $$
  
  To complete our proof we show that $\sol{e}=s \cdot \inl:X\to C$ is the unique solution of
  $e$. So suppose that we are given any solution $\sol{e}$ of $e$. Now form the
  morphism $s = [\sol{e}, \id_C]$. We are finished if we show that  Diagram~\refeq{diag:distsol} commutes
  for this
  morphism $s$.   We verify the two
  coproduct components separately: the right-hand component is checked using
  Diagram~\refeq{diag:id}: since $s \cdot \inr = \id_C$ the
  outside of \refeq{diag:id} commutes, and commutativity of the desired upper right-hand square composed with $\inr$ follows since all other inner parts commute as described below~\refeq{diag:id}. The
  commutativity of the left-hand component is established using
  Diagram~\refeq{diag:sol}; indeed, since the outside and all other parts of that
  diagram commute for our morphism $s$ so does the desired upper right-hand
  part composed with $\inl$. 
\end{proof}

\begin{thm} {\rm (Sandwich Theorem)}
  \label{thm:sandwich}
  Consider the algebra
  \iffull
  $$k' =(\xymatrix@1{MHMC \ar[r]^-{Mk} & MC \ar[r]^-{\extalg b} & C}) \,\text{,}$$
  \else
  $k' = MHMC \xrightarrow{Mk} MC \xrightarrow{\extalg b} C$,
  \fi
  where $b: KC \to C$ is the
  $\ell$-interpretation in $C$ and $k = c^{-1} \cdot H\extalg b$ as
  in Theorem~\ref{thm:distcia}. 
  Then $(C, k')$ is a cia for the functor $MHM$.
\end{thm}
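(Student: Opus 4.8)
The plan is to reduce the statement to Theorem~\ref{thm:sandwich_ellsolution} in exactly the way that Theorem~\ref{thm:distcia} reduces to Theorem~\ref{thm:ellsolution}. Given a flat equation morphism $e: X \to MHMX + C$ for the functor $MHM$, I would first turn it into a sandwiched $\ell$-equation on the object $X + C$. The natural candidate is
$$
\ol e = (
\xymatrix@1{
  X + C \ar[r]^-{[e,\inr]} & MHMX + C \ar[rr]^-{MHMX + \eta_C\cdot c} && MHMX + MHMC \ar[r]^-{\can} & MHM(X+C)
}
)\,\text{,}
$$
where the middle map injects $C$ via $c: C \to HC$, then $H\eta$ into $HMC$, then $\eta$ into $MHMC$ (so it is literally $\eta_{HMC}\cdot H\eta_C\cdot c$ in the right coproduct component). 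By Theorem~\ref{thm:sandwich_ellsolution} there is a unique $s: X+C \to C$ solving $\ol e$, and I would claim $\sol e := s\cdot\inl$ is the unique solution of $e$ with respect to $(C,k')$.

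The key steps, in order, mirror the previous proof. First, show $s\cdot\inr = \id_C$: precompose the solution diagram \refeq{eq:sol_ext_ell-equation} (for $\ol e$) with $\inr$, use the definition of $\ol e$ on the right component, the naturality of $\eta$, and the unit laws $\extalg b\cdot\eta_C = \id_C$ together with the fact that $k' \cdot \eta_{HMC}\cdot H\eta_C\cdot c$ collapses: indeed $k' \cdot \eta_{HMC} = \extalg b\cdot Mk\cdot\eta_{HMC} = \extalg b\cdot\eta_C\cdot k = k$ by naturality of $\eta$ and the unit law, and then $k\cdot H\eta_C\cdot c = c^{-1}\cdot H\extalg b\cdot H\eta_C\cdot c = c^{-1}\cdot c = \id_C$ as in Theorem~\ref{thm:distcia}. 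This exhibits $s\cdot\inr$ as a coalgebra endomorphism of $(C,c)$, hence the identity. Second, with $s\cdot\inr=\id_C$ in hand, chase the diagram showing $\sol e = s\cdot\inl$ satisfies the defining square for a solution of $e$ in the cia $(C,k')$; here the two coproduct components of the final triangle are handled separately exactly as before, the right-hand one using precisely the computation $k'\cdot\eta_{HMC}\cdot H\eta_C\cdot c = \id_C$ just recorded. Third, for uniqueness, given any solution $\sol e$ of $e$ set $s := [\sol e,\id_C]$ and verify that this $s$ solves $\ol e$ by checking the two coproduct components of \refeq{eq:sol_ext_ell-equation}; uniqueness of the solution of $\ol e$ then forces $s$, hence $\sol e = s\cdot\inl$, to be unique.

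The main obstacle I anticipate is purely bookkeeping: the functor layer $MHM$ is thicker than $HM$, so the diagram chases for steps two and three involve more naturality squares for $\mu$ and $\eta$ and more uses of the Eilenberg--Moore laws for $(C,\extalg b)$ — in particular one must repeatedly commute $Mk$ past the various $M$-applications of $s$ and past $\mu$. Nothing conceptually new is needed beyond what appears in the proofs of Theorems~\ref{thm:distcia} and~\ref{thm:sandwich_ellsolution}; the only genuinely load-bearing identity is $k'\cdot\eta_{HMC} = k$, which unpacks the definition $k' = \extalg b\cdot Mk$ via the naturality of $\eta$ and the monad unit law, and everything else is routine once the correct $\ol e$ is written down. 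An alternative, possibly slicker, route would be to factor $k'$ as $\extalg b \cdot Mk$ and observe that a flat equation morphism for $MHM$ into $C$ can be pushed through the Eilenberg--Moore structure to become one for $HM$ into $C$, invoking Theorem~\ref{thm:distcia} directly; but the direct reduction to Theorem~\ref{thm:sandwich_ellsolution} is the most transparent and is the one I would carry out.
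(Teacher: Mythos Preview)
Your proposal is correct and matches the paper's approach essentially verbatim: the paper forms the same sandwiched $\ell$-equation $\ol e$ on $X+C$ (with the right component $MH\eta_C\cdot\eta_{HC}\cdot c$, which equals your $\eta_{HMC}\cdot H\eta_C\cdot c$ by naturality of $\eta$), invokes Theorem~\ref{thm:sandwich_ellsolution}, and then explicitly leaves the remaining steps to the reader as a straightforward adaptation of the proof of Theorem~\ref{thm:distcia}. Your identification of the load-bearing identity $k'\cdot\eta_{HMC}=k$ is exactly the extra ingredient needed beyond that earlier proof.
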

\begin{proof}
  The proof is a slight modification of the proof of Theorem~\ref{thm:distcia}: now we are given a flat equation morphism $e: X \to MHMX + C$ and form the morphism 
  $$
  \ol{e} = (
  \xymatrix{
    X + C \ar[r]^-{[e, \inr]} 
    & MHMX + C \ar[d]^{\ MHMX + MH\eta_C \cdot \eta_{HC} \cdot c\ }
    &
    \\
    & MHMX + MHMC \ar[r]^-{\can}
    & MHM(X+C) \, )\text{.}
    }
  $$
  This morphism $\ol{e}$ is a sandwiched $\ell$-equation and we invoke Theorem~\ref{thm:sandwich_ellsolution} to see that it has a unique solution $s$. The rest of this proof is left to the reader since it is very close to the one of Theorem~\ref{thm:distcia}.
\end{proof}

\begin{rem} \label{rem:arbitrary_monad}
We discuss a few generalizations of the results in this section obtained by weakening
the assumptions on $M$ or on $(C,c^{-1})$.

  \medskip\noindent
  (1) In our statements of Theorems \ref{thm:distcia} and \ref{thm:sandwich}, 
  $M$ was the free monad on $K$.   However, it is possible to abstract away from this, by considering
  an  
 \emph{arbitrary} monad $M$.  
   In this setting,  we take 
  \[
  \lambda: M (H \times \Id) \to (H \times \Id)M
  \] 
 to be a distributive law of the monad $M$ over the cofree copointed
  functor $H\times\Id$.   This is all we need in order to define  the $\lambda$-interpretation $\extalg b:MC\to C$ such that 
  Diagram~\refeq{diag:interp_gsos_lambda} commutes. 
  cf.~Theorem~\ref{thm:interp_gsos_rule} and
  Remark~\ref{rem:interp_gsos_lambda}(2). The version of
  Theorem~\ref{thm:ellsolution} presented in
  \cite{bartels,bartels_thesis} and dually in \cite{uvp} states that
  for every $e:X\to HMX$ there is a unique solution $e^\dag:X\to C$,
  i.\,e., $e^\dag$ is such that the diagram in \refeq{eq:sol_ell-equation} commutes.  Inspection of the proofs reveals that
  all our results so far hold in this generality . So Theorem~\ref{thm:distcia} shows
  that $(C,k)$ is a cia for $HM$, and Theorem~\ref{thm:sandwich} shows that $(C,k')$ is a cia for $MHM$.
  
  \medskip\noindent
  (2) Going even further, not all of the monad structure has been used in our work.
  Observe that our proof of Theorem~\ref{thm:distcia} only makes
  use of the unit $\eta:\Id\to M$ of the monad $M$, not the multiplication. In fact, there are
  versions of Theorem \ref{thm:distcia} and \ref{thm:sandwich} that
  hold for a pointed functor $M$ in lieu of a monad and for a given
  distributive law of $M$ over the cofree copointed functor
  $H\times\Id$ or the functor $H$, respectively. However, in this case
  we need to assume that the category $\A$ is cocomplete. The
  technical details are somewhat different than what we have seen and
  we discuss them in detail in 
  \ifappendix the appendix.\else an appendix that is provided as a
  supplementary file with our paper.

  \medskip\noindent
  (3)~Capretta, Uustalu and Vene~\cite{cuv} extend (the dual) of
  Theorem~\ref{thm:ellsolution} by replacing the algebra $(C, c^{-1})$
  arising from the terminal coalgebra by an algebra $a: HA \to A$
  having the property that for every coalgebra $e: X \to HX$ there
  exists a unique coalgebra-to-algebra homomorphism from $(X,e)$ to
  $(A, a)$. One may ask whether the  two theorems above can be extended
  from the initial cia to an arbitrary cia for $H$. However, note that our
  proof makes use of the fact that $c: C \to HC$ is an isomorphism. So the
  desired extension of our results is not obvious, and we leave this
  as an open problem for further work. 
\end{rem}

Theorems~\ref{thm:distcia} and \ref{thm:sandwich} extend Theorems~\ref{thm:ellsolution} and \ref{thm:sandwich_ellsolution} in two
important ways. Firstly, the structure of a cia allows one to reuse
solutions of a given flat equation morphism by using constants in $C$
on the right-hand sides of recursive equations (cf.~Remark~\ref{rem:comp}). 
This gives a
clear explanation of why it is possible to use recursively defined
objects (processes, streams, etc.) in subsequent recursive
definitions. This kind of modularity of the unique solutions is
a useful and desired property often employed in specifications. We
shall discuss a concrete instance of this in Example~\ref{ex:comp}. 

Secondly, Theorem~\ref{thm:sandwich} permits the right-hand sides of
recursive specifications to be from a wider
class. 
For example, Milner's solution theorem for CCS (see~\cite{milner},
Chapter~4, Proposition~14) allows recursion over process terms $E$ in
which the recursion variables occur within the scope of some prefixing
combinator $a.-$. This combinator can occur \emph{anywhere within
  $E$}, not necessarily at the head of that term (cf.~Example~\ref{ex:moreTM}). Hence,
Theorem~\ref{thm:sandwich} allows us to obtain Milner's result as a
special case, directly. This will be explained in detail in
Section~\ref{sec:app}.1.

\section{Solution Theorems for Free}
\label{sec:solthms}

Using the new cia structures obtained from Theorems~\ref{thm:distcia}
and~\ref{thm:sandwich}, the existing body of results on the semantics of
recursion in cias~\cite{aamv,m_cia,mm}  now gives us further theorems. 
\takeout{We shall state and explain those
results in the current section. First we briefly review the necessary
background material from~\cite{m_cia,mm}. For a well-motivated and
more detailed exposition we have to refer the reader to loc.~cit.
}

\takeout{
\begin{ass}
  Throughout this section we assume that the functor $H$ is
  iteratable, i.\,e., for every of $X$ of $\A$ there exists a terminal
  coalgebra $\T H X$ for $H(-) + X$.
\end{ass}
}
We begin with a terse review of some terminology from the
area. We assume that in addition to the terminal $H$-coalgebra $C$, for
every object $X$ the terminal coalgebra $\T H X$ for $H(-) + X$ exists,
i.\,e., in the terminology of~loc.~cit., $H$ is
\emph{iteratable}. Our examples in~\ref{ex:dist} are all iteratable
endofunctors of $\Set$. 

As explained in Example~\ref{ex:cias}(1), the structure of the
terminal coalgebra $\T H X$ yields the free cia on $X$ with its structure
and universal arrow as displayed below:
\[
\xymatrix@1{
H \T H X \ar[r]^{\tau^H_X} & \T H X
}
\qquad
\xymatrix@1{
X \ar[r]^-{\eta^H_X} & \T H X \,\text{.}
}
\]
From this it easily follows that $\T H$ is the object assignment
of a monad and that $\eta^H$ and $\tau^H$ are natural
transformations. Denote by $\kappa^H$ the natural transformation
\[
\kappa^H = (\xymatrix@1{
H \ar[r]^-{H\eta^H} & H\T H \ar[r]^-{\tau^H} & \T H
}) \,\text{.}
\]
It was proved in~\cite{aamv,m_cia} that the monad $\T H$ is
characterized as the free completely iterative monad on $H$ (with the
universal natural transformation $\kappa^H$). We shall not recall the
concept of a completely iterative monad as it is not needed in the
present paper. However, we shall need that the assignment $H \mapsto
\T H$ readily extends to natural transformations. Let $h: H \to H'$ be
a natural transformation between iteratable endofunctors. Then by the
universal property of $\T H$ we have a unique monad morphism
\begin{equation}
  \label{eq:Tmor}
\T h: \T H \to \T{H'}
\qquad
\text{such that}
\qquad
\vcenter{
  \xymatrix{
    H \ar[r]^-{\kappa^H} \ar[d]_h & 
    \T H \ar[d]^{\T h}\\
    H' \ar[r]^-{\kappa^{H'}} & 
    \T{H'}
  }
}
\end{equation}
commutes.

Finally, let $(A, a)$ be a cia for $H$. Then there is a unique
$H$-algebra homomorphism
\[
\aext: \T H\! A \to A
\qquad
\text{such that}
\qquad
\aext \cdot \eta^H_A = \id_A \,\text{.} 
\]
We call $\aext$ the \emph{evaluation morphism} associated with $A$. It
is easy to prove that $\aext \cdot \kappa^H_A = a$. 

\begin{rem} 
%
  In the case of a polynomial functor $H_\Sigma$ on $\Set$, the
  evaluation morphism $\aext$ can be thought of as a map that takes a
  (\emph{not} necessarily finite)
  $\Sigma$-tree $t$ with variables in the cia $A$ and computes the value
  of $t$ in $A$ using the algebraic operations on $A$ given by the
  structure $a: H_\Sigma A \to A$. 
\end{rem}

In previous work it was shown how to obtain unique solutions of
more general (first order) recursive equations than the flat ones
appearing in the definition of a cia: 
  
\begin{defi} {\rm\cite{aamv,m_cia}}
  An \emph{equation morphism} is a morphism of the form $e: X \to
  \T H (X+A)$. It is called \emph{guarded} if there exists a factorization
  $f: X \to H\T H (X+A) + A$ such that
  $$
  \xymatrix{
    X
    \ar[r]^-e 
    \ar[rd]_-{f}
    &
    \T H (X + A)
    \\
    &
    H\T H (X+A) + A
    \ar[u]_{[\tau^H_{X+A}, \eta^H_{X+A} \cdot \inr]}
  }
  $$

  A \emph{solution} of an equation morphism $e$ in a cia $(A,a)$ is a
  morphism $\sol e: X \to A$ such that the following square commutes:
  $$
  \xymatrix@C+2pc{
    X
    \ar[r]^-{\sol e}
    \ar[d]_e
    &
    A
    \\
    \T H (X+A)
    \ar[r]_-{\T H [\sol e, \id_A]}
    &
    \T H A
    \ar[u]_{\aext}
  }
  $$
\end{defi}

\begin{exa}
  For a polynomial functor $H_\Sigma$ on $\Set$ an equation morphism
  $e: X \to \T{H_\Sigma} (X + A)$
  corresponds to a system of equations~\refeq{eq:eq}, where each
  right-hand side $t_i$ is a (finite or infinite) $\Sigma$-tree with leaves labeled by a
  variable $x_j$ or elements $a \in A$. Guardedness is the syntactic
  restriction that no right-hand side tree $t_i$ is simply a single
  node tree with a variable as a label. A solution assigns to every
  variable $x_i$ an element $a_i \in A$ such that 
  $a_i = \aext(t_i[\vec{a_j} / \vec{x_j}])$ for every $i \in I$,
  i.\,e., if we substitute all the solutions $a_j$ for the corresponding
  variables $x_j$ in $t_i$ and evaluate the resulting tree in $A$
  using $\aext$, then we obtain $a_i$. 
\end{exa}

\begin{thm} {\rm\cite{m_cia}}
  \label{thm:sol}
  Let $(A,a)$ be a cia for $H$. Then every guarded equation morphism
  has a unique solution in $A$.
\end{thm}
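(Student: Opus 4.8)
The statement we want is that every guarded equation morphism $e: X \to \T H(X+A)$ has a unique solution in any cia $(A,a)$ for $H$. The plan is to reduce a general guarded equation morphism to a flat equation morphism, to which the defining property of a cia applies directly, and then transfer uniqueness back.

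First I would use the guardedness factorization $f: X \to H\T H(X+A) + A$ to ``unfold'' the equation one layer. The idea is to introduce a new object of recursion variables that includes not just $X$ but the whole free-cia term structure over $X+A$ sitting underneath the guard --- concretely, something like $Z = X + H\T H(X+A)$ or, more in the spirit of \cite{m_cia}, $Z = \T H(X+A)$ itself, flattened appropriately. Over this enlarged variable object one builds a \emph{flat} equation morphism $\bar e: Z \to HZ + A$: on the $X$-part it is given by $f$ followed by the coproduct injections (using that the head symbol is an honest $H$-operation, by guardedness), and on the rest of $\T H(X+A)$ it is given by the coalgebra structure $[\tau^H_{X+A},\eta^H_{X+A}\cdot\inr]^{-1}$ of the terminal $H(-)+(X+A)$-coalgebra, with the $A$-summand of $X+A$ routed into the parameter object $A$ via the identity and the $X$-summand routed back into $Z$. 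Since $(A,a)$ is a cia, $\bar e$ has a unique solution $\sol{\bar e}: Z \to A$.

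Next I would extract from $\sol{\bar e}$ a solution $\sol e: X \to A$ of the original equation by precomposing with the inclusion of $X$ into $Z$ (or with $e$ followed by the evident map $\T H(X+A)\to Z$), and then verify the solution square for $e$. This verification is a diagram chase: one pastes the solution square for $\bar e$ together with the naturality squares for $\eta^H$, $\tau^H$, $\kappa^H$, the definition of the evaluation morphism $\aext$ (using $\aext\cdot\eta^H_A=\id_A$ and $\aext\cdot\kappa^H_A = a$), and the fact that $\aext$ is an $H$-algebra homomorphism out of the free cia. Unwinding $\T H[\sol e,\id_A]\cdot e$ and comparing with $\aext\cdot(\text{stuff})$ reduces, layer by layer through the tree structure, to the flat square for $\bar e$. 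For uniqueness, conversely, given any solution $g: X\to A$ of $e$, I would cook up from $g$ (and $e$) a morphism $Z\to A$ --- essentially $\aext$ applied to the terms with $g$ substituted for variables --- and check it solves $\bar e$; uniqueness of $\sol{\bar e}$ then forces it to equal $\sol{\bar e}$, hence $g=\sol e$.

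The main obstacle I anticipate is getting the reduction-to-flat construction exactly right: choosing the enlarged variable object $Z$ and the flat equation morphism $\bar e$ so that (a) $\bar e$ is genuinely flat (lands in $HZ + A$, with the guard supplying the single $H$-layer), (b) the bookkeeping between the two copies of ``parameters'' --- the formal $A$ inside $X+A$ versus the actual parameter object of $\bar e$ --- is coherent, and (c) solutions of $e$ and $\bar e$ are in natural bijection. This is the sort of step where the free completely iterative monad structure $(\T H,\eta^H,\mu^H)$ and the characterization of $\T H X$ as the terminal $H(-)+X$-coalgebra do the real work, and where one must be careful that guardedness is used precisely once, to produce the head $H$-symbol. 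Once the flat equation is set up correctly, both existence and uniqueness follow from the cia property of $(A,a)$ by routine (if lengthy) diagram chases; the construction, not the chase, is the crux. I would expect the argument to essentially recover the proof from \cite{m_cia}.
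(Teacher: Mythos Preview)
The paper does not prove this theorem; it is stated as a known result and attributed to~\cite{m_cia}. So there is no ``paper's own proof'' to compare against. Your proposal is along the lines of the argument in~\cite{m_cia}: one reduces a guarded equation morphism to a flat one over an enlarged object of variables (essentially $\T H(X+A)$, using its terminal-coalgebra structure for $H(-)+(X+A)$ and the guardedness factorization $f$ to supply the single $H$-layer), solves the flat equation via the cia property, and then shows that restricting along the inclusion of $X$ gives a solution of $e$, with uniqueness obtained by the converse construction you describe. Your identification of the crux---setting up $\bar e$ so that solutions of $e$ and $\bar e$ correspond bijectively---is accurate, and the sketch is correct in outline.
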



An even more general property of cias was proved in~\cite{mm};
one can solve recursive function definitions uniquely in a
cia. We recall the respective result.

\begin{defi}
  \label{dfn:algeq}
  Let $\Var$ be an endofunctor such that $H+\Var$ is iteratable. A
  \emph{recursive program scheme} (rps, for short) is a natural transformation $e: \Var
  \to \T{H+\Var}$. It is called \emph{guarded} if there exists a
  natural transformation $f: \Var \to H\T{H+\Var}$ such that 
  $$
  e = (
  \xymatrix{
  \Var
  \ar[r]^-{f}
  &
  H\T{H+\Var}
  \ar[rr]^-{\inl \T{H+\Var}}
  &&
  (H+\Var)\T{H+\Var}
  \ar[r]^-{\tau^{H+\Var}}
  &
  \T{H+\Var}
  }
  )\,\text{,}
  $$
  where $\inl: H \to H + \Var$ is the coproduct injection.

  Now let $(A,a)$ be a cia for $H$. An \emph{interpreted solution} of $e$ in
  $A$ is a $\Var$-algebra structure $\isol e A: \Var A \to A$ giving
  rise to an Eilenberg-Moore algebra structure $\beta: \T {H+\Var} A
  \to A$ with $\beta \cdot \kappa_A^{H+\Var} = [a, \isol e A]$ and such that we
  have 
  \begin{equation}
    \label{eq:intsol}
    \isol e A = (
    \xymatrix{
      \Var A
      \ar[r]^-{e_A}
      &
      \T{H+\Var} A
      \ar[r]^-{\beta}
      &
      A
    }
    ) \,\text{.}
  \end{equation}
\end{defi}

\begin{exa}
  As explained in~\cite{mm}, for polynomial set functors, recursive program schemes
  as defined above provide a categorical formulation of
  recursive function definitions such as~\refeq{eq:rps} from the
  introduction. For example, let $VX = X$ be the polynomial functor
  associated with the signature with one unary operation symbol $f$ and
  let $HX = X \times X + X$ be the polynomial functor for the signature
  of the givens $F$ and $G$. Then~\refeq{eq:rps} uniquely determines
  the natural transformation
  \[
  e: V \to \T{H+V} 
  \qquad
  \text{with}
  \qquad
  e_X(x) = 
  \vcenter{
    \xy
    \POS   (0,0) *+{F} = "w"
       ,   (-5,-10) *+{x} = "l"
       ,   (5,-10)   *+{G} = "r"
       ,   (5,-20)   *+{f} = "rr"
       ,   (5,-30)   *+{x} = "rrr"
    \ar @{-} "w";"l"
    \ar @{-} "w";"r"
    \ar @{-} "r";"rr"
    \ar @{-} "rr";"rrr"
    \endxy
  }
  \]
  Here guardedness corresponds to the syntactic restriction that the
  right-hand side of~\refeq{eq:rps} starts with a given operation symbol
  such as $F$. Any cia $A$ for $H$ provides interpretations of the given operation
  symbols $F$ and $G$ as actual operations $F_A: A \times A \to A$ and
  $G_A: A \to A$, and an interpreted solution in $A$ is precisely a new unary operation
  $f_A: A \to A$ such that for all $a \in A$, $f_A(a) = F_A(a,
  G_A(f_A(a))$. 
\end{exa}

\begin{thm} {\rm\cite{mm}}
  \label{thm:rps}
  In a cia, every guarded rps has a unique interpreted solution.
\end{thm}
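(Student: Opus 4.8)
The plan is to reduce the existence and uniqueness of an interpreted solution of a guarded rps $e\colon \Var \to \T{H+\Var}$ in a cia $(A,a)$ to the unique solvability of a suitable \emph{equation morphism} in a cia for $H+\Var$, and then transport that solution back along the evaluation morphism. The natural candidate cia is the free cia $\T{H+\Var}A$ on $A$ for the functor $H+\Var$; this exists because $H+\Var$ is assumed iteratable, and by Example~\ref{ex:cias}(1) its structure $\tau^{H+\Var}_A$ makes it the free cia on $A$. The first step is to use $e$ to build, at the object $A$, a single equation morphism whose unique solution will pick out the new operations $\isol e A$. Concretely, one instantiates the natural transformation $e$ at $A$ to get $e_A\colon \Var A \to \T{H+\Var}A$, and massages this into an equation morphism in the cia $\T{H+\Var}A$ with the set of recursion variables being (the carrier of) $\Var A$; guardedness of $e$ (via the factorization $f\colon \Var \to H\T{H+\Var}$) is exactly what guarantees the resulting equation morphism is guarded in the sense of Theorem~\ref{thm:sol}.

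Next I would apply Theorem~\ref{thm:sol} to obtain a \emph{unique} solution $s\colon \Var A \to \T{H+\Var}A$ of that guarded equation morphism in the free cia $\T{H+\Var}A$. The point of working in the free cia rather than directly in $A$ is that one first solves ``syntactically'' and only afterwards evaluates. So the third step is to compose with the evaluation morphism: let $\beta\colon \T{H+\Var}A \to A$ be the evaluation morphism associated with the cia structure $[a,\isol e A]$ once we know what $\isol e A$ is — but of course that is circular, so instead the cleaner route is to first take $\beta_0\colon \T{H+\Var}A \to A$ to be the evaluation morphism for the cia structure that $A$ \emph{already} carries, namely $[a, ?]$ — and here one sees that $A$ is only a cia for $H$, not for $H+\Var$, so one cannot directly form such a $\beta_0$. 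This is the crux of the argument and where the real work of~\cite{mm} lies: one defines $\isol e A := \beta \cdot e_A$ and $\beta$ together by a simultaneous fixed-point/uniqueness argument, showing that the assignment $\sigma \mapsto (\text{evaluation morphism for } [a,\sigma])\cdot e_A$ on $\Var$-algebra structures $\sigma\colon \Var A \to A$ has a unique fixed point, which is precisely $\isol e A$, and that the associated $\beta$ is an Eilenberg--Moore structure for $\T{H+\Var}$ satisfying $\beta\cdot\kappa^{H+\Var}_A = [a,\isol e A]$. Uniqueness of the interpreted solution then follows from uniqueness of the fixed point together with uniqueness in Theorem~\ref{thm:sol}.

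The main obstacle, as just indicated, is untangling the apparent circularity in Definition~\ref{dfn:algeq}: an interpreted solution $\isol e A$ is required to be compatible with an Eilenberg--Moore structure $\beta$ that \emph{itself} depends on $\isol e A$ (through $[a,\isol e A]$), and $\isol e A$ is recovered from $\beta$ by~\refeq{eq:intsol}. I expect the paper's proof to resolve this by the standard device of the \emph{parametrized} or \emph{second-order} solution theorem: view $e$ as a natural transformation and use naturality (an interpreted solution must be natural in a suitable sense, or one works with the free-monad/free-cia adjunction) to reduce to a flat equation morphism after all, the flatness coming precisely from guardedness. The genuinely non-routine checks will be (i) verifying that the equation morphism one extracts from a guarded $e$ is guarded in the sense of Theorem~\ref{thm:sol}, which amounts to chasing the guardedness factorization $f$ through the construction; (ii) verifying that the resulting $\beta$ satisfies the Eilenberg--Moore (multiplication) law, which uses that $\T{H+\Var}A$ is the \emph{free} cia and a uniqueness argument comparing two algebra homomorphisms into $A$; and (iii) establishing uniqueness of the interpreted solution, where one must show any other interpreted solution yields a solution of the same guarded equation morphism and then invoke Theorem~\ref{thm:sol}. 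I would organize the write-up as: construct the equation morphism, solve it via Theorem~\ref{thm:sol}, define $\isol e A$ and $\beta$ from the solution, check the Eilenberg--Moore law and~\refeq{eq:intsol}, and finally prove uniqueness — with step (ii) being the technically heaviest.
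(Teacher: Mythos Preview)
The paper does not give its own proof of this theorem; it is cited from~\cite{mm}. However, Remark~\ref{rem:techprelims} reveals the actual mechanism, and it differs from your plan in a crucial way.

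Your proposal runs into the circularity you yourself flag, and your suggested resolution---a fixed-point argument on the map $\sigma \mapsto (\text{evaluation morphism for } [a,\sigma])\cdot e_A$---does not work as stated. The evaluation morphism $\aext\colon \T{H+\Var}A \to A$ is only defined when $(A,[a,\sigma])$ is a \emph{cia} for $H+\Var$; for an arbitrary $\Var$-algebra structure $\sigma$ on $A$ there is no canonical $\T{H+\Var}$-algebra structure on $A$ (recall $\T{H+\Var}$ is the free \emph{completely iterative} monad, not the free monad, so not every $(H+\Var)$-algebra lifts). That $(A,[a,\isol e A])$ is a cia for $H+\Var$ is precisely the content of Theorem~\ref{thm:extcia}, which is proved \emph{after} Theorem~\ref{thm:rps} and uses it. So your fixed-point operator is not well-defined on its intended domain.

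The actual approach avoids this entirely by never needing a cia for $H+\Var$. From the guarded rps $e$ one constructs a natural transformation $\ol e\colon \T{H+\Var} \to H\T{H+\Var} + \Id$ (using the guardedness factorization $f\colon \Var \to H\T{H+\Var}$ together with the fact that $\T{H+\Var}X$ is terminal for $(H+\Var)(-)+X$). The component $\ol e_A\colon \T{H+\Var}A \to H\T{H+\Var}A + A$ is then a \emph{flat equation morphism for the functor $H$}, with object of variables $\T{H+\Var}A$ and parameters in $A$. Since $(A,a)$ is already a cia for $H$, this has a unique solution $\beta\colon \T{H+\Var}A \to A$ directly in $A$---no detour through the free cia, no circularity. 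One then checks that $\beta$ is an Eilenberg--Moore algebra for $\T{H+\Var}$ and recovers $\isol e A$ via $[a,\isol e A] = \beta\cdot\kappa^{H+\Var}_A$. The key insight you missed is that the object of recursion variables should be all of $\T{H+\Var}A$, not $\Var A$, and the equation morphism should be for $H$, not $H+\Var$.
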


We are now able to prove more. The next theorem implies modularity of taking solutions of recursive program schemes: 
operations obtained as solutions of recursive program schemes can be
used as givens in subsequent  definitions of other recursive program schemes.  These new schemes will still
have unique solutions. For the special case of interpreted rps
solutions in cias this strengthens the results in~\cite{mm_prop}.

\begin{thm}
  \label{thm:extcia}
  Let $e: \Var \to \T{H+\Var}$ be a guarded rps, and let $a: HA \to A$ be a
  cia. Then the interpreted solution $\isol e A: \Var A \to A$ extends the cia
  structure on $A$; more precisely, the algebra $[a, \isol e A]: (H+\Var)A \to
  A$ is a cia for $H+\Var$.
\end{thm}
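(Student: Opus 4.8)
The plan is to show that an arbitrary flat equation morphism $e: X \to (H+\Var)X + A$ has a unique solution in the algebra $[a,\isol e A]$. The key idea is to reduce such equations to guarded equation morphisms for the functor $H$ alone, and then invoke Theorem~\ref{thm:sol}. The bridge between the two settings is the evaluation morphism $\beta: \T{H+\Var}A \to A$ that comes with the interpreted solution $\isol e A$ (Definition~\ref{dfn:algeq}): it is an Eilenberg-Moore algebra for the monad $\T{H+\Var}$ with $\beta\cdot\kappa^{H+\Var}_A = [a,\isol e A]$. First I would observe that, since $H$ is a summand of $H+\Var$, the monad morphism $\T{\inl}: \T H \to \T{H+\Var}$ from \refeq{eq:Tmor} lets us view any $\T H$-structure as a $\T{H+\Var}$-structure. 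More importantly, the rps $e: \Var \to \T{H+\Var}$ and guardedness give a way to ``unfold'' every occurrence of a $\Var$-operation into a $(H+\Var)$-tree whose head is an $H$-operation; iterating conceptually (really: using the Eilenberg-Moore structure $\beta$ together with the defining equation~\refeq{eq:intsol}) one expresses the action of $\Var$ on $A$ entirely through $H$ and $\beta$.

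Concretely, the main step is this. Given a flat equation morphism $f: X \to (H+\Var)X + A$ for $H+\Var$, I would build from it an equation morphism $\hat f: X \to \T H(X+A)$ for $H$ as follows: postcompose with the evident map $(H+\Var)X + A \to \T{H+\Var}(X+A)$ (coproduct injections, $\kappa$, and the rps $e$ to rewrite the $\Var X$ summand as a $\T{H+\Var}X$-tree, then $\T{H+\Var}\inl$), obtaining $g: X \to \T{H+\Var}(X+A)$; then use naturality and the fact that $A$ carries a $\T{H+\Var}$-algebra to ``evaluate away'' the nested $\Var$-layers while keeping the $X$-variables and the $H$-structure, landing in $\T H(X+A)$. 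Guardedness of the rps $e$ is exactly what guarantees that after this rewriting every right-hand side has an $H$-operation (or a parameter from $A$) at its head, so $\hat f$ is a \emph{guarded} $H$-equation morphism. Theorem~\ref{thm:sol} then yields a unique solution $\hat f^\dagger: X \to A$ in the cia $(A,a)$, and one checks, using $\beta\cdot\kappa^{H+\Var}_A = [a,\isol e A]$ and the characterizing equation~\refeq{eq:intsol} for $\isol e A$, that $\hat f^\dagger$ is precisely a solution of the original $f$ with respect to $[a,\isol e A]$. For uniqueness, conversely, any solution of $f$ with respect to $[a,\isol e A]$ is seen to be a solution of the guarded $H$-equation $\hat f$ by the same diagram chase run backwards, so it must coincide with $\hat f^\dagger$.

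The step I expect to be the main obstacle is making the passage from $g: X \to \T{H+\Var}(X+A)$ to the guarded $H$-equation $\hat f: X \to \T H(X+A)$ fully rigorous, i.e.\ showing that the ``$\Var$-layers can be absorbed into $A$ while the $H$-skeleton and the free variables survive'' in a way that is (a) natural/well-defined and (b) genuinely guarded. The clean way to handle (a) is to avoid an explicit tree-surgery and instead work with the monad morphism $\T{\inl}$ and the two Eilenberg-Moore structures ($\aext: \T H A \to A$ and $\beta: \T{H+\Var}A \to A$), together with the compatibility $\beta \cdot \T{\inl}_A = \aext$ (which follows from uniqueness of algebra structures over the free monad and $\beta\cdot\kappa^{H+\Var}_A\cdot\inl = a$). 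Point (b) — guardedness — is where the hypothesis that $e$ is a \emph{guarded} rps is indispensable: it provides the factorization $f\colon \Var \to H\T{H+\Var}$ of $e$ through $H$, which is what forces an $H$-operation to the head after one rewriting step. I would package these observations as a lemma (``every flat $(H+\Var)$-equation morphism transforms to a guarded $H$-equation morphism with the same solutions relative to $[a,\isol e A]$''), and then Theorem~\ref{thm:extcia} follows immediately from Theorem~\ref{thm:sol}. A routine diagram chase, using naturality of $\eta^H,\tau^H,\kappa^{H+\Var}$ and the Eilenberg-Moore laws, completes the verification; none of it is conceptually deep once the reduction lemma is in place.
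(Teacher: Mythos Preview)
Your overall strategy---reduce a flat $(H+\Var)$-equation to an $H$-equation and then use that $(A,a)$ is a cia for $H$---is the right one, and it is also what the paper does. But the specific reduction you propose has a genuine gap.

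The problem is the passage from $g: X \to \T{H+\Var}(X+A)$ to a guarded $H$-equation $\hat f: X \to \T H(X+A)$. You want to ``evaluate away the $\Var$-layers using $\beta$ while keeping the $X$-variables and the $H$-structure''. But $\beta: \T{H+\Var}A \to A$ only evaluates trees whose leaves are constants from $A$; it cannot touch a tree in $\T{H+\Var}(X+A)$ as long as there are free variables from $X$ sitting below $\Var$-nodes. There is no natural map $\T{H+\Var}(X+A) \to \T H(X+A)$ obtainable from $\beta$ and the monad morphism $\T\inl$ alone: one step of unfolding via guardedness of $e$ produces an $H$-node at the root, but the subtrees are again in $\T{H+\Var}$, not in $\T H$, and $\beta$ is of no help because the $X$-variables are still present. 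So as described your $\hat f$ is not well-defined.

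The paper sidesteps this by \emph{not} trying to land in $\T H(X+A)$. Instead it enlarges the set of recursion variables to $X + T'X$ (where $T' = \T{H+\Var}$) and builds a \emph{flat} $H$-equation morphism
\[
g: X + T'X \longrightarrow H(X+T'X) + A.
\]
On the $X$-summand one uses the original equation $m$ and the guardedness factor $f:\Var \to HT'$ to turn the $\Var X$-part into $HT'X$; on the $T'X$-summand one uses the natural transformation $\bar e: T' \to HT' + \Id$ recalled in Remark~\ref{rem:techprelims}. The extra variables $T'X$ play exactly the role that your hoped-for infinite unfolding into $\T H$ would have played, but now everything is a single-layer flat $H$-equation, so one invokes the cia property of $(A,a)$ directly (Definition~\ref{def:cia}), not Theorem~\ref{thm:sol}. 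Existence of a solution is handled separately and more cheaply: $T'A$ is the terminal $(H+\Var)(-)+A$-coalgebra, so one gets $h: X \to T'A$ and then $s = \beta \cdot h$. For uniqueness one shows that any solution $s$ of $m$ yields $[s,\beta\cdot T's]$ as a solution of $g$ in the cia $(A,a)$, whence $s$ is determined by $\sol g$. The substantive technical ingredient you are missing is precisely $\bar e$: it is what converts ``guardedness of the rps $e$'' into the structural map that peels off one $H$-layer from an arbitrary $T'$-tree.
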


\begin{rem}\label{rem:techprelims}
  For the proof we need to recall some technical
  details. Recall that any guarded rps $e: \Var \to \T{H+\Var}$ as in 
  Definition~\ref{dfn:algeq} induces a natural transformation
  $$
  \ol{e}: \T{H+\Var} \to H \T{H+\Var} + \Id
  $$ 
  (see~\cite[Lemma~6.9]{mm}).
  The component $\ol{e}_A$ of this natural transformation at $A$ is a flat
  equation morphism with parameters in $A$. Its unique solution in the cia $(A,a)$ is the
  Eilenberg-Moore algebra structure $\beta: \T{H+\Var} A \to A$
  in~\refeq{eq:intsol} satisfying $[a, \isol e A] = \beta \cdot
  \kappa^{H+\Var}_A$ (this follows from~\cite{mm}, see Lemma~7.4 and
  the proof of Theorem~7.3).  
\end{rem}

\begin{proof}[Proof of Theorem~\ref{thm:extcia}]
  Let $m: X \to (H+\Var)X + A$ be a flat equation morphism. We need to prove
  that $m$ has a unique solution $s$. As shortcut notations we shall write $T'$
  for $\T{H+\Var}$, $\tau_X': (H+\Var) T'X \to T'X$ for the 
  corresponding structure of a free cia for $H+\Var$ as well as $\eta'$ and $\mu'$
  for the unit and multiplication of the monad $T'$ and $\kappa' = \tau' \cdot
  (H+\Var)\eta'$ (cf.~the introduction to Section~\ref{sec:solthms}). 

  (1)~Existence of a solution. Since $T'A$ is the terminal coalgebra for
  $(H+ \Var)(-) + A$ we have a unique  homomorphism $h: X \to T'
  A$. We show that 
  $$
  s = (\xymatrix@1{X \ar[r]^-{h} & T'A \ar[r]^\beta & A})
  $$
  is a solution of $m$ in the algebra $(A, [a,\isol e A])$. To see
  this, consider the following diagram:
  $$
  \xymatrix@C+2.5pc{
    X \ar[r]^h \ar[d]_m &
    T'A 
    \ar[r]^\beta &
    A
    \\
    (H+ \Var)X + A
    \ar[r]_-{(H+\Var) h + A}
    &
    (H+\Var)T'A + A
    \ar[u]_{[\tau_A',\eta_A']}
    \ar[r]_{(H+\Var)\beta + A}
    &
    (H+\Var)A + A
    \ar[u]_{[a,\isol e A, A]}
    }
  $$
  The left-hand square commutes since $h$ is a coalgebra homomorphism,
  and for the right-hand component of the right-hand square use the
  unit law $\beta \cdot \eta'_A = \id_A$ of the Eilenberg-Moore algebra
  $\beta$. It remains to prove the commutativity of the left-hand
  component. This is established by inspecting the diagram below:
  \begin{equation}
  \label{diag:42}
  \vcenter{
  \xymatrix@C+1pc{
    (H+\Var) T'A
    \ar[r]_-{\kappa_{T'A}'}
    \ar[d]_{(H+\Var)\beta}
    &
    T'T'A
    \ar[d]_{T'\beta}
    \ar[r]_-{\mu'_A}
    &
    T'A
    \ar[d]^{\beta}
    \ar@{<-} `u[l] `[ll]_{\tau'_A} [ll]
    \\
    (H + \Var)A
    \ar[r]_-{\kappa'_A}
    &
    T'A
    \ar[r]_-{\beta}
    &
    A
    \ar@{<-} `d[l] `[ll]^{[a,\isol e A]} [ll]
  }}
  \end{equation}
The commutativity of the upper part  is standard (see Corollary 3.17 in~\cite{mm}),
  for the lower one see Remark~\ref{rem:techprelims}, the left-hand
  inner square commutes due to naturality of $\kappa'$, and the
  right-hand inner square by one of the laws for the Eilenberg-Moore algebra
  $\beta$. 

  \takeout{
    The existence of a solution for $m$ is clear: the Eilenberg-Moore algebra
    $\beta: T'A \to A$ shows us that $A$ is a complete Elgot algebra for
    $H+\Var$, and so $A$ comes with a canonical choice of a solution for every
    flat equation morphism $m$, see~\cite{amv_elgot,mm}. It is our only task to show
    uniqueness of solutions. }

  \medskip
  (2)~Uniqueness of solutions. Since $e$ is a guarded rps, it factors through some
  $f: \Var \to HT'$ (cf.~Definition~\ref{dfn:algeq}). From $f$
  and $m$ we form a flat equation morphism 
  \[
  g: X + T'X \to H(X+T'X) + A
  \]
  w.\,r.\,t.~$H$ as follows. The left-hand component of $g$ is
  $$
  g \cdot \inl = (
  \xymatrix@1@C+.5pc{
    X 
    \ar[r]^-m
    &
    HX +\Var X +A  
    \ar[rr]^-{HX + {f}_X + A}
    &&
    HX + HT'X + A
    \ar[r]^-{\can + A}
    &
    H(X+T'X) +A
  }
  ) \,\text{,}
  $$
  and the right-hand component of $g$ is 
  $$
  g \cdot \inr = (
  \xymatrix@1@C+1pc{
    T'X 
    \ar[r]^-{\ol e_X} 
    &
    HT'X + X
    \ar[rr]^-{[\inl \cdot H\inr, g \cdot \inl]}
    &&
    H(X + T'X) + A
  }
  ) \,\text{.}
  $$
  Since $(A,a)$ is a cia for $H$ there exists a unique solution $\sol{g}:
  X + T'X \to A$.  
  Now let $s: X \to A$ be any solution of the flat equation morphism $m$ in the
  algebra $[a, \isol e A]: (H+\Var) A \to A$. We will show below that
  $[s,\beta \cdot T's]: X + T'X \to A$ is a solution of $g$ in the
  $H$-algebra $(A,a)$. So since $(A,a)$ is a cia we have the following equation:
  \begin{equation}\label{eq:solg}
    \sol{g} = [s, \beta \cdot T's]: X + T'X \to A \,\text{.}
  \end{equation}
  Then $s$ is uniquely determined by $\sol{g}$.

  In order to prove Equation~\refeq{eq:solg} we need to verify that the
  following square commutes:
  \begin{equation}\label{diag:solg}
    \vcenter{
    \xymatrix@C+.5pc{
      X + T'X 
      \ar[rr]^-{[s, \beta \cdot T's]}
      \ar[d]_g
      &&
      A
      \\
      H(X+T'X) + A
      \ar[rr]_-{H[s, \beta \cdot T's] +A}
      &&
      HA + A
      \ar[u]_{[a,A]}
      }}
  \end{equation}
  We shall verify the commutativity of the two coproduct components separately.
  For the left-hand component we consider the diagram below:
  \begin{equation}\label{eq:firstg}
    \vcenter{
  \xymatrix{
    &
    X
    \ar[rr]^-s
    \ar[d]_m
    \ar `l[ld] `[ddd]_{g \cdot \inl} [ddd]
    &&
    A
    &
    \\
    &
    HX + \Var X + A
    \ar[rr]_-{Hs + Vs + A}
    \ar[d]_{HX + {f}_X + A}
    &&
    HA + \Var A + A
    \ar[u]^{\labelstyle [a, \isol e A, A]}
    \ar[dd]_{\labelstyle HA +[\isol e A,  A]}
    \\
    &
    HX + HT'X + A
    \ar[rrd]|*+{\labelstyle [Hs, H(\beta \cdot T's)] + A}
    \ar[d]_{\can + A}
    &
    &
    &
    \\
    &
    H(X+T'X) + A
    \ar[rr]_-{H[s, \beta \cdot T's] + A}
    &&
    HA + A
    \ar `r[ru] `[uuu]_{[a, A]} [uuu]
    }}
  \end{equation}
  The left-hand part commutes by the definition of $g$, the right-hand part
  commutes trivially, the upper square commutes since $s$ is a solution of $m$
  and the lower triangle commutes trivially, again. It remains to verify that
  the middle part commutes. We check the commutativity of this part
  componentwise: the left-hand and right-hand components commute trivially.  We do not
  claim that the middle component commutes. However, 
  in order to prove that the overall outside of \refeq{eq:firstg} commutes,
  we need only show
  that this middle component commutes when extended by $[a, A]: HA + A \to
  A$. To see this consider the next diagram:
  $$
  \xymatrix@C+1pc{
    &
    \Var X 
    \ar[d]_{e_X}
    \ar `l[ld] `[dd]_{{f}_X} [dd]
    \ar[r]^-{\Var s}
    &
    \Var A
    \ar[r]^-{\isol e A}
    \ar[d]_{e_A}
    & 
    A
    \\
    &
    T'X 
    \ar[r]_-{T's}
    &
    T'A
    \ar[ru]_{\beta}
    \\
    &
    HT'X 
    \ar[u]^{\tau_X'\cdot \inl_{T'X}}
    \ar[r]_-{HT's}
    &
    HT'A
    \ar[u]_{\tau_A'\cdot\inl_{T'A}}
    \ar[r]_{H\beta}
    &
    HA
    \ar[uu]_a
    }
  $$
  This diagram commutes: the left-hand part commutes since $e$ is a guarded
  rps; the upper and lower squares in the middle commute due to the naturality of $e$
  and of $\tau': (H+\Var)T' \to T'$ and $\inl T': HT' \to (H+\Var)T'$,
  respectively; the upper right-hand triangle commutes since $\isol e A$ is an
  interpreted solution of the rps $e$. Finally, to see that the lower
  right-hand part commutes recall from Diagram~\refeq{diag:42} that
  $$[a, \isol e A] \cdot (H+\Var)\beta = \beta\cdot\tau_A' \,\text{.}$$ 
  Compose this last equation with the coproduct injection
  $\inl_{T'A}: HT'A \to (H+\Var)T'A$ to obtain the desired commutativity. 

  Finally, we verify that the right-hand component of~\refeq{diag:solg}
  commutes. Indeed, consider the diagram below:
  $$
  \xymatrix{
    &
    T'X
    \ar[r]^-{T's}
    \ar[d]_{\ol{e}_X}
    \ar `l[ld] `[dd]_{g \cdot \inr} [dd]
    &
    T'A
    \ar[r]^-{\beta}
    \ar[d]_{{\ol e}_A}
    &
    A
    \\
    &
    HT'X + X
    \ar[r]^-{HT's + s}
    \ar[d]_{[\inl \cdot H\inr, g \cdot \inl]}
    &
    HT'A + A
    \ar[rd]^{H\beta + A}
    \\
    &
    H(X + T'X) + A
    \ar[rr]_-{H[s, \beta \cdot T's] + A}
    &&
    HA + A
    \ar[uu]_{[a, A]}
    }
  $$
  The left-hand part commutes by the definition of $g$; the upper middle square commutes by the
  naturality of $\ol e$, the right-hand part 
  commutes since $\beta$ is the solution of $\ol{e}_A$ in the cia $(A,a)$ (see
  Remark~\ref{rem:techprelims}); and for the lower middle part we consider the components
  separately: the left-hand component clearly commutes by the functoriality of
  $H$, and for the right-hand component observe that it commutes when
  extended by $[a, A]: HA + A \to A$ (see
  Diagram~\refeq{eq:firstg}). Thus, the outside of the  diagram above
  commutes, and this completes the proof.
\end{proof}

Coming back to our setting in Section~\ref{sec:cias}, let $\ell:K(H\times\Id)\to HM$ be an abstract GSOS rule, 
where $M$ is the free monad on $K$  (or, more generally, let $\lambda$ be a distributive law of an arbitrary monad over the cofree copointed functor $H\times\Id$). Assume furthermore that the composite $HM$ is iteratable. By applying the
two Theorems~\ref{thm:sol} and~\ref{thm:rps} and also Theorem~\ref{thm:extcia} to the cia $k: HMC
\to C$ from Theorem~\ref{thm:distcia} we get two more solutions
theorems for free: 

\begin{cor}
  \label{cor:sol}
  Every guarded equation morphism $e: X \to \T{HM}(X+C)$ has a unique
  solution in the cia $(C,k)$.
\end{cor}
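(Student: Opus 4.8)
The plan is to obtain this as an immediate instance of Theorem~\ref{thm:sol}. Recall that Theorem~\ref{thm:sol} asserts that in any cia for an iteratable endofunctor every guarded equation morphism has a unique solution; it is phrased for a generic functor, written $H$ there. First I would invoke the standing assumption made just before the corollary, namely that the composite $HM$ is iteratable. This guarantees that $\T{HM}X$ exists for every object $X$, so that the notion of a guarded equation morphism $e: X \to \T{HM}(X+C)$ makes sense, together with its associated free cia structure $\tau^{HM}$, universal arrow $\eta^{HM}$, and, for the algebra $(C,k)$, its evaluation morphism $\overline{k}: \T{HM}C \to C$.

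Next I would recall from Theorem~\ref{thm:distcia} that the $HM$-algebra $k = c^{-1}\cdot H\extalg b: HMC \to C$, built from the $\ell$-interpretation $b: KC \to C$, turns $C$ into a cia for the functor $HM$. With this in hand, the proof is simply to apply Theorem~\ref{thm:sol} verbatim, reading $HM$ in place of $H$ and $(C,k)$ in place of the generic cia $(A,a)$: it yields that every guarded equation morphism $e: X \to \T{HM}(X+C)$ has a unique solution $\sol e: X \to C$ in $(C,k)$, which is exactly the assertion of the corollary.

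There is essentially no obstacle here: the corollary is a pure instantiation, and all the real work has already been carried out — Theorem~\ref{thm:distcia} supplies the new cia structure on $C$, and Theorem~\ref{thm:sol} is the pre-existing solution theorem for guarded equation morphisms. The only point deserving a moment's care is bookkeeping, i.e.\ checking that ``guarded'' and ``solution'' for equation morphisms over $\T{HM}$ are interpreted with $HM$ occupying the slot in which Definition preceding Theorem~\ref{thm:sol} writes $H$, and that the iteratability hypothesis of Theorem~\ref{thm:sol} is precisely the one we have assumed for $HM$. No further diagram chase is required.
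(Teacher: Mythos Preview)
Your proposal is correct and matches the paper's approach exactly: the paper does not give a separate proof but simply observes that the corollary follows by applying Theorem~\ref{thm:sol} (with $HM$ playing the role of $H$) to the cia $(C,k)$ supplied by Theorem~\ref{thm:distcia}, under the standing assumption that $HM$ is iteratable.
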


\begin{cor}
  \label{cor:rps}
  Every guarded rps $e: \Var \to \T{HM + \Var}$ has a unique interpreted solution
  in the cia $(C, k)$, and this solution extends the cia structure on $C$.
\end{cor}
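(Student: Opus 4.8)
Corollary~\ref{cor:rps} is an immediate consequence of the two earlier theorems applied to the new cia structure. First I recall the situation: by Theorem~\ref{thm:distcia}, given the abstract GSOS rule $\ell:K(H\times\Id)\to HM$ with $\ell$-interpretation $b:KC\to C$, the algebra $k = c^{-1}\cdot H\extalg b: HMC\to C$ makes $(C,k)$ a cia for the functor $HM$. So from the point of view of this corollary, $HM$ plays the role of the behavior functor ``$H$'' appearing in Theorems~\ref{thm:rps} and~\ref{thm:extcia}. The standing hypothesis that $HM$ is iteratable is exactly what is needed so that the free completely iterative monad $\T{HM}$ and, for a guarded rps, the monad $\T{HM+\Var}$ exist; thus the statement of a guarded rps $e:\Var\to\T{HM+\Var}$ makes sense.

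\medskip\noindent\textbf{Key steps.} The plan is: (1) Observe that $(C,k)$ is a cia for $HM$ by Theorem~\ref{thm:distcia}. (2) Apply Theorem~\ref{thm:rps} verbatim with the functor ``$H$'' instantiated to $HM$ and the cia $(A,a)$ instantiated to $(C,k)$: this yields that the guarded rps $e:\Var\to\T{HM+\Var}$ has a unique interpreted solution $\isol{e}{C}:\Var C\to C$, i.e.\ a $\Var$-algebra structure giving rise to an Eilenberg--Moore algebra $\beta:\T{HM+\Var}C\to C$ with $\beta\cdot\kappa_C^{HM+\Var} = [k,\isol{e}{C}]$ and satisfying the defining equation~\refeq{eq:intsol}. (3) Apply Theorem~\ref{thm:extcia}, again with ``$H$'' instantiated to $HM$ and $(A,a)$ to $(C,k)$, to conclude that $[k,\isol{e}{C}]:(HM+\Var)C\to C$ is a cia for $HM+\Var$; that is, the interpreted solution extends the cia structure on $C$. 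Combining (2) and (3) gives precisely the assertion of the corollary.

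\medskip\noindent\textbf{Main obstacle.} There is no real mathematical obstacle here — the content is entirely carried by Theorems~\ref{thm:distcia}, \ref{thm:rps}, and~\ref{thm:extcia}, which have already been established. The only point deserving care is the bookkeeping of hypotheses: one must check that all the iterability assumptions demanded by Theorems~\ref{thm:rps} and~\ref{thm:extcia} are met in the present instantiation. Specifically, Theorem~\ref{thm:extcia} (via Definition~\ref{dfn:algeq}) requires that $HM+\Var$ be iteratable; this is guaranteed by the assumption in the text that $HM$ is iteratable together with the hypothesis on $\Var$ built into the notion of a guarded rps for $HM$ (namely that $\T{HM+\Var}$ exists). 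Hence the corollary follows by direct application of the cited theorems, and no separate argument is needed.
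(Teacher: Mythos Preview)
Your proposal is correct and matches the paper's approach exactly: the corollary is obtained by instantiating Theorems~\ref{thm:rps} and~\ref{thm:extcia} with the functor $HM$ in place of $H$ and the cia $(C,k)$ from Theorem~\ref{thm:distcia} in place of $(A,a)$, under the standing iteratability assumption on $HM$. There is nothing to add.
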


Assuming that $MHM$ is iteratable two similar theorems hold for the cia $k':MHM C \to C$
obtained from Theorem~\ref{thm:sandwich}:

\begin{cor}
  \label{cor:sol_2}
  Every guarded equation morphism $e: X \to \T{MHM}(X+C)$ has a unique
  solution in the cia $(C,k')$.
\end{cor}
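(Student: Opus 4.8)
The plan is to derive Corollary~\ref{cor:sol_2} as a direct instance of Theorem~\ref{thm:sol}, exactly parallel to the way Corollary~\ref{cor:sol} is obtained from Theorem~\ref{thm:distcia}. The essential input is Theorem~\ref{thm:sandwich}, which already does all the real work: it establishes that $C$, equipped with $k' = \extalg b \cdot Mk : MHMC \to C$, is a cia \emph{for the functor $MHM$}. In other words, the passage from the sandwiched $\ell$-equation solution theorem (Theorem~\ref{thm:sandwich_ellsolution}) to the unique solvability of flat equation morphisms with parameters in $C$ has already been carried out, so nothing further needs to be proved about $C$ in this corollary; the genuine difficulty lives entirely in the chain Theorem~\ref{thm:sandwich_ellsolution} $\Rightarrow$ Theorem~\ref{thm:sandwich}.

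Next I would record that the standing assumption of the surrounding paragraph, namely that $MHM$ is iteratable, is exactly what is needed to make the statement meaningful: it guarantees that the terminal coalgebra $\T{MHM}(X)$ for $MHM(-) + X$ exists for every object $X$, hence (as recalled before Theorem~\ref{thm:sol}) that $\T{MHM}(X)$ carries a free-cia structure and that the notions of equation morphism $e : X \to \T{MHM}(X+C)$, of guardedness, and of a solution in a cia for $MHM$ all make sense with $H$ uniformly replaced by $MHM$.

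Finally I would apply Theorem~\ref{thm:sol} verbatim to the cia $(C, k')$ for the functor $MHM$, obtaining for every guarded equation morphism $e : X \to \T{MHM}(X+C)$ a unique solution $e^\dag : X \to C$ in $(C, k')$; this is precisely the assertion of the corollary. Since the argument is nothing more than this substitution into an already-proved theorem, I do not expect any genuine obstacle — the one point worth a moment's care is the purely formal check that the evaluation morphism used in the definition of ``solution'' is the one that Theorem~\ref{thm:sol} associates with the cia $(C, k')$, which is automatic from the uniqueness clause there.
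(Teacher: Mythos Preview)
Your proposal is correct and matches the paper's approach exactly: the corollary is stated as an immediate consequence of applying Theorem~\ref{thm:sol} to the cia $(C,k')$ for the functor $MHM$ provided by Theorem~\ref{thm:sandwich}, under the standing assumption that $MHM$ is iteratable. The paper gives no further argument beyond this, so there is nothing to add.
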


\begin{cor}
  \label{cor:rps_2}
  Every guarded rps $e: \Var \to \T{MHM + \Var}$ has a unique interpreted solution
  in the cia $(C, k')$, and this solution extends the cia structure on $C$.
\end{cor}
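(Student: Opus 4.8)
The plan is to deduce Corollary~\ref{cor:rps_2} by instantiating the general results on recursive program schemes, Theorems~\ref{thm:rps} and~\ref{thm:extcia}, at the cia $(C,k')$ produced by the Sandwich Theorem~\ref{thm:sandwich}. Those two theorems are formulated for an arbitrary iteratable endofunctor in the role of ``$H$'' together with an arbitrary cia for it; here the behavior functor is the composite $MHM$, which we have assumed to be iteratable, and the cia is $k': MHMC \to C$. Since $\Var$ is assumed to be such that $MHM + \Var$ is iteratable, all the standing hypotheses of Definition~\ref{dfn:algeq} and of Theorems~\ref{thm:rps} and~\ref{thm:extcia} are met with $MHM$ in place of $H$.

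\textbf{Existence and uniqueness.} By Theorem~\ref{thm:sandwich}, $(C,k')$ is genuinely a cia for $MHM$, so a guarded rps $e: \Var \to \T{MHM+\Var}$ is exactly an instance of Definition~\ref{dfn:algeq} for the functor $MHM$. Applying Theorem~\ref{thm:rps} verbatim then yields a unique interpreted solution: a $\Var$-algebra structure $\isol{e}{C}: \Var C \to C$ inducing an Eilenberg--Moore algebra $\beta: \T{MHM+\Var}C \to C$ with $\beta \cdot \kappa^{MHM+\Var}_C = [k', \isol{e}{C}]$ and $\isol{e}{C} = \beta \cdot e_C$. Uniqueness is inherited directly from Theorem~\ref{thm:rps}.

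\textbf{Modularity.} Invoking Theorem~\ref{thm:extcia}, again with $MHM$ for $H$ and $k'$ for $a$, shows that the interpreted solution extends the cia structure, i.e.\ that $[k', \isol{e}{C}]: (MHM+\Var)C \to C$ is a cia for $MHM+\Var$. This is the final claim of the statement, and it delivers modularity: the operations defined by $e$ may be added to $k'$ and used as givens in subsequent guarded rps's, which still have unique solutions.

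The one point needing care --- more bookkeeping than obstacle --- is to confirm that the proofs of Theorems~\ref{thm:rps} and~\ref{thm:extcia} use nothing about the symbol ``$H$'' beyond iteratability and the cia axioms; in particular that the natural transformation $\ol e: \T{H+\Var} \to H\T{H+\Var} + \Id$ of Remark~\ref{rem:techprelims} and the auxiliary flat equation morphism $g$ of the uniqueness argument go through unchanged when $H$ is replaced by $MHM$. As those arguments are purely formal and $MHM$ is simply another iteratable endofunctor of $\A$, this substitution is legitimate and no genuinely new argument is required.
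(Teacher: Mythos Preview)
Your proposal is correct and matches the paper's own reasoning: the corollary is stated without a separate proof, being an immediate instantiation of Theorems~\ref{thm:rps} and~\ref{thm:extcia} with the iteratable functor $MHM$ in the role of $H$ and the cia $(C,k')$ from Theorem~\ref{thm:sandwich} in the role of $(A,a)$. Your additional remarks about the purely formal nature of the substitution are apt but go beyond what the paper spells out.
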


\subsection{Summary: Equation Formats}

  \label{rem:eqns}
  (1)~Until now we have seen several formats of equation morphisms. These are related in the sense that some formats comprise others as a special case. In particular, guarded ``sandwiched'' equation morphisms comprise all other formats as shown in the following picture:
$$
\xymatrix@!C=6cm@R+0.5cm{
  \genfrac{}{}{0pt}{0}{\genfrac{}{}{0pt}{0}{\text{guarded equation
        m.}}{X\to T^{HM}(X+C)}}{\text{Corollary~\ref{cor:sol}}}
  \ar@{|->}[r]^{\T{\eta HM}_{X+C} \cdot e}	& \genfrac{}{}{0pt}{0}{\genfrac{}{}{0pt}{0}{\text{guarded equation m.}}{X \to T^{MHM}(X+C)}}{\text{Corollary~\ref{cor:sol_2}}}	\\
\genfrac{}{}{0pt}{0}{\genfrac{}{}{0pt}{0}{\text{flat equation m.}}{X\to HMX+C}}{\text{Theorem~\ref{thm:distcia}}} \ar@{|->}[r]^{(\eta_{HMX}+C) \cdot e} \ar@{|->}[u]|{\can \cdot (\kappa^{HM}_X+\eta^{HM}_C) \cdot e}	& \genfrac{}{}{0pt}{0}{\genfrac{}{}{0pt}{0}{\text{flat equation m.}}{X\to MHMX+C}}{\text{Theorem~\ref{thm:sandwich}}} \ar@{|->}[u]|{\can \cdot (\kappa^{MHM}_X+\eta^{MHM}_C) \cdot e}	\\
\genfrac{}{}{0pt}{0}{\genfrac{}{}{0pt}{0}{\text{$\ell$-equation}}{X\to HMX}}{\text{Theorem~\ref{thm:ellsolution}}} \ar@{|->}[r]^{\eta_{HMX} \cdot e} \ar@{|->}[u]|{\;\inl \cdot e}	& \genfrac{}{}{0pt}{0}{\genfrac{}{}{0pt}{0}{\text{sandwiched~$\ell$-equation}}{X\to MHMX}}{\text{Theorem~\ref{thm:sandwich_ellsolution}}} \ar@{|->}[u]|{\;\inl \cdot e}
}
$$
The arrows point to more general formats, and their labels indicate
how one forms an equation morphism of the more general format from a
given equation morphism $e$ of the simpler format. Natural
transformations without a superscript refer to the free monad $M$, the ones
with a superscript refer to $T$-monads. The monad morphism
$\T{\eta HM}: \T{HM} \to \T{MHM}$ arises from the natural
transformation $\eta HM: HM \to MHM$ as explained at the beginning of
this section (cf.~\refeq{eq:Tmor}). In all cases one readily proves
that the solutions are preserved along the arrows. For example,
solutions of $e: X \to HMX + C$ in the cia $(C, k)$ are in
one-to-one correspondence with solutions of $(\eta_{HFX} + C) \cdot e: X \to MHMX
+ C$ in the cia $(C, k')$, etc. 

\medskip\noindent
(2)~Similarly, guarded rps's in Corollary~\ref{cor:rps_2} subsume
those in Corollary~\ref{cor:rps} which in turn subsume the rps of the
form $V \to \T{H+V}$ from Theorem~\ref{thm:rps}. Pictorially, we have
\[
\xymatrix@!C=4cm{
\genfrac{}{}{0pt}{0}{\text{(guarded) rps}}{V \to \T{H + V}}
\ar@{|->}[r]^-{\T{H\eta + V} \cdot e}
&
\genfrac{}{}{0pt}{0}{\text{(guarded) rps}}{V \to \T{HM + V}}
\ar@{|->}[r]^-{\T{\eta HM + V} \cdot e}
&
\genfrac{}{}{0pt}{0}{\text{(guarded) rps}}{V \to \T{MHM + V}}
}
\]
Again, one readily proves that solutions are preserved along the
arrows. For example, solutions of $e: V \to \T{H+V}$ in the cia $(C,
c^{-1})$ are in one-to-one correspondence with solutions of $T^{H\eta + V} \cdot
e$ in the cia $(C, k)$. 

\medskip\noindent
(3)~Finally, guarded rps's subsume guarded equation morphisms. To see
this let $e: X \to \T H (X+C)$ be a guarded equation
morphism. Consider the constant functors $\conf X$ and $\conf C$ and
let $H' = H + \conf C$. Then
$e$ corresponds precisely to a natural transformation
$\ol e: \conf X \to \T H (\conf X + \conf C),$
and this gives rise to an rps as follows
\[
\xymatrix@1{
  \conf X \ar[r]^-{\ol e}
  &
  \T H (\conf X + \conf C)
  \ar[r]^-{\T H \kappa^{\conf X + \conf C}}
  &
  \T H \T{\conf X + \conf C}
  \ar[d]^-{\T{\inl} \T{[\inr,\inm]}}
  \\
  &&
  \T{H+\conf C + \conf X}\T{H+\conf C + \conf X}
  \ar[rr]^-{\mu^{H + \conf C + \conf X}}
  &&
  \T{H'+\conf X}.
}
\]
It is straightforward but rather tedious to check that this rps is
guarded and that its solutions are in one-to-one correspondence with
the solutions of $e$.

\section{Recursive Function Definitions over the Behavior}
\label{sec:lambdarps}

Even with 
all the results we have seen so far, we are still not able to obtain functions such as
the shuffle product $\otimes$ on streams (see~\refeq{eq:shuffle_product}) as a unique solution since
its definition refers to the behavior of the arguments of the function.
Notice also that the specification of $\otimes$ makes use of the stream addition $+$ operation, 
so this operation is assumed as given or previously specified and the
specification of $\otimes$ is built on top of the specification of
$+$. Our aim in this section is to prove results that yield unique solutions of such
specifications in terminal coalgebras.

\begin{notation}   
  From now on we shall also need to consider free monads of other
  functors than $K$ from Asumption~\ref{ass:1}. We follow the
  convention that whenever we write $\ext F$ for a functor $F$ we
  assume that a free monad $\ext F$ exists and is given objectwise by
  free algebras for $F$ (cf.~Remark~\ref{rem:free_algebras_monad}(2)).
 
  Finally, to shorten notation, we usually abbreviate $K + V$ by $F$.
\end{notation}

\begin{plan}\label{plan:lrps}
Here is the basic plan for the results in this section.
We begin, as before,  with a endofunctor $H$
and its terminal coalgebra $(C,c)$.   We also have a
separate functor $K$, and an 
abstract GSOS rule
$\ell:K(H\times\Id)\to H\ext K$ specifying a set of ``given'' operations.
When we say ``specify'' here, we refer to the algebra structure
 $b: KC\to C$, the unique morphism so that
$$c\cdot b = H\extalg{b}\cdot \ell_C \cdot K\pair{c,\id_C} $$
(see Theorem~\ref{thm:interp_gsos_rule}).
We also have another functor $V$ corresponding to 
the ``new'' operation symbols that we wish to interpret.
What our approach requires at this point is 
a natural transformation $$e: V(H\times \Id)
\to H \ext{K+V}.$$  We shall call $e$ a \emph{recursive program scheme
  w.\,r.\,t.~$\ell$} (or, an \emph{$\ell$-rps}, for short).
  From $e$ and $\ell$ we shall obtain a natural transformation
  $$n: (K+V)(H\times\Id) \to H\ext{K+V}
  $$
  in a canonical way.
Again,  $n$  is  an abstract GSOS rule---but notice
that the functor involved is $K+V$, not $K$ as it is for $\ell$.
This natural transformation $n$ has an interpretation in $C$,
call it
$$
a: (K+V)C\to C.
$$
Then it will turn out that $a\cdot \inl = b$, so that the algebra
structure $a$ is an extension of $b$.
  The interpretation of the new operation symbols
  in $C$ will correspond to the  $V$-algebra $(C, a\cdot \inr)$. We will
  prove in Theorem~\ref{thm:ellrps} that this algebra is uniquely determined as
  a \emph{solution} of the $\ell$-rps $e$. 
  
  The upshot is that we start and end with the same kind of data,
  but with a different functor.
  We start with $K$, $\ell$, and $b$,
  and we end with $K+V$, $n$, and $a$.
  The point we are trying to make here is that the situation
  repeats, and so we can apply the result successively.
 \end{plan}
  
 The rest of this section works out the details of this outline.  To
 see that this approach actually accounts for a large number of
 interesting operations on terminal coalgebras, we present many
 examples in Section~\ref{sec:app}.  Prior to this, we have two other
 contributions which extend the basic result in the same ways as the
 results we saw in Section~\ref{sec:bialg}.  We have a ``sandwiched''
 version of Outline~\ref{plan:lrps}, and we also have results about
 cia structures. And in our main result,
 Theorem~\ref{thm:sandwich_rps}, we prove that every sandwiched
 $\ell$-rps (see Definition~\ref{dfn:sandwiched_ell-rps} has a unique
 solution in $C$ which extends the cia structure for $\ext K H \ext K$
 on $C$ given by Theorem~\ref{thm:sandwich}.

 \takeout{
In this section we start with an abstract GSOS rule
$\ell:K(H\times\Id)\to H\ext K$ specifying all given operations (such
as stream addition) whose type is modeled by the functor
$K$. First, we introduce a special form of abstract operational rule $e: V(H\times \Id)
\to H \ext{K+V}$ called a \emph{recursive program scheme
  w.\,r.\,t.~$\ell$} (or, \emph{$\ell$-rps}, for short) recursively specifying new
operations (such as the shuffle product of streams) whose type is
given by $V$.  We prove in
Theorem~\ref{thm:ellrps} that every
$\ell$-rps has a unique solution in the terminal coalgebra $C$, and
this solution extends the cia structure for $H\ext K$ on $C$ given by
Theorem~\ref{thm:distcia}.  This is a modularity result similar to the
one given in Theorem~\ref{thm:extcia} for ordinary rps's.
}



\begin{ass}  We continue to work under Assumption~\ref{ass:1}, and in addition 
 we fix an abstract GSOS rule $\ell: K(H\times\Id) \to H\ext K$ and an endofunctor $V: \A\to\A$.
 \end{ass}

\begin{notation}
  \label{not:coinj}
  (1)~We overload the notation from
  Remark~\ref{rem:free_algebras_monad}(3) and write, for any functor $F$ on $\A$, $\varphi:F\ext F\to \ext F$ and $\eta:\Id\to \ext F$ for (the natural transformations given by) the structures and universal morphisms of the free $F$-algebras, as well as
  $\mu:\ext F\ext F\to \ext F$ and $\kappa:F\to \ext F$ for the multiplication and universal natural transformation of the free monad $\ext F$.

  \medskip\noindent
  (2)~Let $F$ and $G$ be endofunctors of $\A$. The coproduct
  injections $\inl: F \to F + G \leftarrow G: \inr$ lift to monad
  morphisms on the corresponding free monads, and we denote those
  monad morphisms by $\ext{\inl}: \ext{F} \to \ext{F+G} \leftarrow
  \ext{G}: \ext{\inr}$.
\end{notation}

\begin{rem}
  \label{rem:extinl}
  (1)~Notice that the monad morphism $\ext\inl: \ext F \to \ext{F +G}$ is
  uniquely determined by the commutativity of the following square of
  natural transformations:
  $$
  \xymatrix{
    F \ar[r]^-\kappa
    \ar[d]_\inl
    &
    \ext F 
    \ar[d]^{\ext\inl}
    \\
    F + G
    \ar[r]_-{\kappa}
    &
    \ext{F + G}
    }
  $$
  Similarly for $\ext\inr$.
  
  \medskip\noindent
  (2)~Recall from Notation~\ref{not:extalg} that for every $F$-algebra
  $(A,a)$ we have the corresponding Eilenberg-Moore algebra $\extalg a:
  \ext F A \to A$ and that $a = \extalg a \cdot \kappa_A$.
  \takeout{
  Notice that we have the following commutative triangle:
  \begin{equation}
    \label{diag:kappa}
    \vcenter{
      \xymatrix{
        FA 
        \ar[r]^-{\kappa_A}
        \ar[rd]_{a}
        &
        \ext F A
        \ar[d]^-{\extalg a}
        \\
        & 
        A
      }
    }
  \end{equation}
  }
  Moreover, the category of $F$-algebras is isomorphic to the 
  category of  Eilenberg-Moore
  algebras for $\ext F$.
More precisely, $a \mapsto
  \extalg a$ and precomposition with $\kappa_A$ extend to mutually
  inverse functors.

  \medskip\noindent
  (3)~Combining parts (1) and (2) of this remark, we see that for
  every algebra $a:(F+G)A\to A$ the equation
  $\extalg a \cdot \ext\inl_A = \extalg{a\cdot\inl_A}$ holds. Indeed,
  both sides are equal when precomposed with $\kappa_A$:
  $$\extalg a \cdot \ext\inl_A \cdot \kappa_A = \extalg a \cdot \kappa_A \cdot \inl_A = a \cdot \inl_A = \extalg{a\cdot\inl_A} \cdot \kappa_A \,\text{.}$$
  If we make the coproduct algebra structure explicit as in $a=[a_0,a_1]$, we obtain
  $$\extalg{[a_0,a_1]} \cdot \ext\inl_A = \extalg{a_0} \quad\qquad\text{and}\qquad\quad \extalg{[a_0,a_1]} \cdot \ext\inr_A = \extalg{a_1} \,\text{.}$$
\end{rem}

\takeout{ 
\begin{rem}
  \label{rem:lambda}
  Recall from Remark~\ref{rem:liftdist}(4) that the distributive law $\lambda: \extK H
  \to H\extK$ is obtained componentwise as the unique homomorphism
  of $K$-algebras making the following diagram commutative:
  $$
  \xymatrix{
    K \extK HX \ar[rr]^-{\varphi_{HX}} \ar[d]_{K\lambda_X} && \extK H X \ar[d]^{\lambda_X} & HX \ar[l]_-{u_{HX}}
    \ar[ld]^{Hu_X} \\
    K H \ext K X \ar[r]_-{\ell_{\extK X}} & H\ext K\extK X
    \ar[r]_-{H\mu_X} & H \extK X
    }
  $$
\end{rem}
}

\begin{defi}
  A \emph{recursive program scheme w.\,r.\,t.~$\ell$} (shortly,
  \emph{$\ell$-rps}) is a natural transformation
  $$e: \Var (H\times\Id) \to H\ext{F} \,\text{,}$$
  where (throughout this section) $F = K + V$.
\end{defi}

\begin{construction}
  \label{constr:s}  
  Let $e: \Var (H\times \Id) \to H \ext{F}$ be an $\ell$-rps. 
  This gives an abstract GSOS rule $n: F (H\times \Id) \to H\ext{F}$ defined on its coproduct components as displayed below:
  $$
  \xymatrix{
    \Var (H\times\Id)
    \ar[rd]^-e 
    \ar[d]_{\inr (H\times\Id)}
    \\
    F (H\times\Id)
    \ar[r]^-n
    &
    H \ext{F}
    \\
    K(H\times\Id)
    \ar[r]_-\ell
    \ar[u]^{\inl (H\times\Id)}
    &
    H\ext{K}
    \ar[u]_{H\ext{\inl}}
    }
  $$
  We write $a: FC \to C$ for the $n$-interpretation in $C$ (cf.~Definition~\ref{dfn:interp}). 
 \end{construction}
 
 \begin{rem}
  The  construction of the abstract GSOS rule $n$ above  is reminiscent
  of the composition of two distributive laws $\lambda: MD \to DM$ and
  $\lambda':M'D \to DM'$ of the monads $M$ and $M'$ over the same
  copointed functor $D$ using the coproduct of monads, see~\cite{pow03,lpw04}. For
  free monads $M = \ext G$ and $M' = \ext{G'}$ and for $D = H \times
  \Id$, $\lambda$ and $\lambda'$ are equivalently presented by
  abstract GSOS rules $\ell: G(H\times \Id) \to H\ext G$ and $\ell':
  G'(H\times \Id) \to H \ext{G'}$ and the construction of
  loc.~cit.~amounts to forming $\wt n$ as shown in the diagram below:
  \[
  \xymatrix{
    G(H\times \Id)
    \ar[r]^\ell
    \ar[d]_{\inl(H\times\Id)}
    &
    H\ext G
    \ar[d]^{H\ext\inl}
    \\
    (G+G')(H\times \Id)
    \ar[r]^-{\wt n}
    &
    H\ext{G+G'}
    \\
    G'(H\times \Id)
    \ar[r]_-{\ell'}
    \ar[u]^{\inr(H\times \Id)}
    &
    H\ext{G'}
    \ar[u]_{H\ext\inr}
    }
  \]
  Notice that there is no interplay between $\ell$ and $\ell'$. So, as
  explained in loc.~cit., in the case of operational rules of process
  combinators the formation of $\wt n$ corresponds
  precisely to forming the disjoint union of two transition system
  specifications with mutually independent operations. In contrast, in
  Construction~\ref{constr:s} the left-hand component $e$
  depends on $K$. So in the case of operational rules for process
  combinators our construction corresponds to combining a given
  transition specification (modeled by $\ell$) with another one
  (modeled by $e$) which makes use of the operators specified by the
  first specification.
\end{rem}

Our next result substantiates the details presented in Outline~\ref{plan:lrps}.
 
\begin{prop}\label{substant}
  Let $b: KC \to C$ be the interpretation of $\ell: K(H\times\Id) \to H\ext K$. Then for $a: FC \to C$
  from Construction~\ref{constr:s} we have
  $$
  b = (\xymatrix@1{
    KC \ar[r]^-{\inl_C}
    &
    FC
    \ar[r]^-{a}
    &
    C
  })\,\text{.}
  $$
\end{prop}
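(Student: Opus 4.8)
The statement to prove is that $a \cdot \inl_C = b$, where $b: KC \to C$ is the $\ell$-interpretation and $a: FC \to C$ is the $n$-interpretation, with $F = K+V$ and $n$ the abstract GSOS rule built in Construction~\ref{constr:s}. My plan is to use the uniqueness clause of Theorem~\ref{thm:interp_gsos_rule}: since $b$ is the \emph{unique} $K$-algebra structure on $C$ making Diagram~\refeq{diag:interp_gsos_rule} commute, it suffices to show that $a \cdot \inl_C$ is a $K$-algebra structure satisfying that same defining square. Concretely, I would aim to verify
$$
c \cdot (a \cdot \inl_C) = H\extalg{(a \cdot \inl_C)} \cdot \ell_C \cdot K\langle c, \id_C\rangle.
$$

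First I would record what we know about $a$. By definition $a: FC \to C$ is the $n$-interpretation, so $a$ satisfies the square of Theorem~\ref{thm:interp_gsos_rule} with $K$ replaced by $F$ and $\ell$ replaced by $n$, that is,
$$
c \cdot a = H\extalg{a} \cdot n_C \cdot F\langle c, \id_C\rangle.
$$
Now I precompose this equation with $\inl_C: KC \to FC$ and use the naturality square $F\langle c,\id_C\rangle \cdot \inl_{C} = \inl_{HC\times C} \cdot K\langle c,\id_C\rangle$ together with the defining property of $n$ on its left-hand coproduct component, namely $n_C \cdot \inl_{HC\times C} = H\extalg{\inl}_C \cdot \ell_C$ (this is the left triangle of the diagram in Construction~\ref{constr:s}, instantiated at the object $C$; here I write $\extalg{\inl}$ for the monad morphism $\ext{\inl}: \ext K \to \ext F$). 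Chaining these gives
$$
c \cdot a \cdot \inl_C = H\extalg{a} \cdot H\ext{\inl}_C \cdot \ell_C \cdot K\langle c, \id_C\rangle.
$$
The remaining piece is the identity $\extalg{a} \cdot \ext{\inl}_C = \extalg{(a \cdot \inl_C)}$, which is exactly Remark~\ref{rem:extinl}(3) (with $a = [a\cdot\inl_C,\ a\cdot\inr_C]$, the first displayed equation there reads $\extalg{[a_0,a_1]}\cdot\ext\inl_A = \extalg{a_0}$). Substituting this in yields $c \cdot (a\cdot\inl_C) = H\extalg{(a\cdot\inl_C)} \cdot \ell_C \cdot K\langle c,\id_C\rangle$, which is precisely the defining square for the $\ell$-interpretation. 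Since $a\cdot\inl_C$ is a $K$-algebra structure on $C$ and $b$ is the unique such structure satisfying this square, we conclude $a\cdot\inl_C = b$.

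The only mildly delicate point — and the place I would be most careful — is the bookkeeping around the monad morphism $\ext{\inl}$: one must make sure that the arrow $H\ext{\inl}_C$ appearing when we restrict $n$ to its $K(H\times\Id)$-component is literally the same natural transformation that appears in Remark~\ref{rem:extinl}(3), i.e.\ that ``the lifting of $\inl$ to free monads'' in Notation~\ref{not:coinj}(2) agrees with the map used in Construction~\ref{constr:s}. Both are pinned down by the square in Remark~\ref{rem:extinl}(1), so this is really just a matter of citing that characterization; no genuine computation is needed. Everything else is a short diagram chase using naturality of $\langle c,\id_C\rangle$ in the functor variable and the two cited remarks, so I expect the proof to be essentially the three-step chain displayed above.
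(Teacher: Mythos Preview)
Your proof is correct and follows essentially the same approach as the paper: both precompose the defining square for the $n$-interpretation $a$ with $\inl_C$, use naturality of $\inl$ and the definition of $n$ on its $K$-component to reduce to $H\extalg{a}\cdot H\ext{\inl}_C$, apply Remark~\ref{rem:extinl}(3) to rewrite this as $H\extalg{(a\cdot\inl_C)}$, and then invoke the uniqueness clause of Theorem~\ref{thm:interp_gsos_rule}. The paper packages these steps into one commutative diagram rather than an equational chain, but the content is identical.
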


\begin{proof}
  Consider the diagram below:
  $$
  \xymatrix@C+1pc{
    KC 
    \ar[d]_{\inl_C}
    \ar[r]^-{K\langle c,\id_C\rangle}
    &
    K(H\times\Id)C
    \ar[r]^-{\ell_C}
    \ar[d]_{\inl_{(H\times\Id)C}}
    &
    H\ext K C
    \ar[d]^{H\ext{\inl}_C}
    &
    \\
    FC
    \ar[r]^-{F\langle c,\id_C\rangle}
    \ar[d]_{a}
    &
    F(H\times\Id)C 
    \ar[r]^-{n_C}
    &
    H\ext F C
    \ar[d]^{H\extalg{a}}
    &
    \\
    C
    \ar[rr]^-c 
    &&
    HC
    \ar@{<-} `r[ru] `[uu]_{H(\extalg{a \cdot \inl_C})} [uu]
    &
  }
  $$
  The lower square commutes since $a$ is the $n$-interpretation in $C$ (cf.~Theorem~\ref{thm:interp_gsos_rule}) and the upper
  right-hand one by the definition of $n$
  (cf.~Construction~\ref{constr:s}). The upper left-hand square commutes
  by the naturality of $\inl:K\to F$, and for the right-hand part we remove $H$ and use Remark \ref{rem:extinl}(3). Thus the outside commutes.

  Now recall that $b$ is uniquely determined by the commutativity of
  the diagram in Theorem~\ref{thm:interp_gsos_rule}. Thus, $a \cdot \inl_C = b$ holds, as
  desired. 
\end{proof}

\begin{summary}\label{sum:1}
Let us summarize and review our work in 
this section so far. We say that the tuple
$$
(c: C\to HC, \quad \ell: K(H\times\Id) \to H\ext{K}, \quad b: KC \to C)
$$
is \emph{appropriate} if $c$ is a terminal coalgebra structure, 
$\ell$ is an abstract GSOS rule, and $b$ is its interpretation.
Given an appropriate tuple, the work in this section shows how to obtain 
another appropriate tuple: let $V$ be a functor, let $F = K + V$,
and let 
$e:V(H\times \Id)\to H\ext{F}$ be an $\ell$-rps.    We 
consider
$$n = [H\ext{\inl} \cdot \ell, e]:  F(H\times \Id) \to H\ext{F} \  .$$
This is an abstract GSOS rule involving $F$, so it has an interpretation $a: FC \to C$.
So 
\begin{equation}
  \label{eq:tuple}
  (c: C\to HC, \quad n: F(H\times\Id) \to H\ext{F}, \quad a: FC \to C)
\end{equation}
is again appropriate.   It extends the earlier appropriate tuple in the sense that 
$a\cdot \inr = b: VC \to C$. 
\end{summary}
\begin{exa}\label{expl}  We continue our exploration of
stream operations as defined by behavioral differential equations. 
We want to study
 the shuffle product on streams mentioned in  the introduction.  It is specified by
\begin{equation} \label{eq:shuffle_product_again}
(\sigma \otimes \tau)_0 = \sigma_0\cdot\tau_0 \qquad\qquad
(\sigma \otimes \tau)' = \sigma \otimes \tau' + \sigma'\otimes \tau \,\text{.}
\end{equation}
The behavior functor $H:\Set\to\Set$ here is $HX = \Real \times X$.
On the right side of \refeq{eq:shuffle_product_again}, we see the stream 
addition operation $+$. This is a binary operation and so corresponds to 
a $K$-algebra structure on $C$, where $KX = X\times X$.
Our work in this section shows how to define $\otimes$ ``on top of''
$(C,+)$. First observe that the given operation $+: KC \to C$ is
obtained as the interpretation of the abstract GSOS rule $\ell: K(H
\times \Id) \to H\ext K$ given by~\refeq{eq:stream_addition} in
Example~\ref{ex:dist}(2). Now we take $VX = X \times X$, the type functor of the binary
operation $\otimes$. For $F = K + V$ we may then identify $\ext{F}X$
with the set of all terms obtained by applying the operation symbols
$\otimes$ and $+$ to variables in $X$. %
%
%
(Note that $\otimes$ and $+$
are regarded as uninterpreted symbols at this point, and for the given
operation $+$ we have the interpretation on $C$ induced by $\ell$.)
To obtain an interpretation of $\otimes$ on $C$, we  use the $\ell$-rps $e$ whose components
$$
e_X:  (\Real \times X\times X)\times (\Real \times X\times X) \to \Real\times \ext{F}X
$$
are given by 
\[
e_X((r,x,x'),(s,y,y')) = (r\cdot s,(x\otimes y')+(x'\otimes y)) \,\text{.}
\]
The abstract GSOS rule $n:F(H \times \Id) \to H\ext F$ from
Construction~\ref{constr:s} now induces the algebra structure $a: FC
\to C$ whose left-hand component is, by Proposition~\ref{substant}, $a \cdot \inl = +: K C \to C$
and whose right-hand component is easily seen to be the desired
shuffle product $a \cdot\inr = \otimes: VC \to C$. So
Proposition~\ref{substant} is a kind of sanity check: the operation
induced by $n$ for the left-hand component $K$ of $F = K + V$ is the
same operation $+$ we started with as given.
\takeout{ 
Then the unique solution of $e$ in $C$ (cf.~Theorem~\ref{thm:interp_gsos_rule})
gives us $b: FC\to C$ as in \refeq{thm:interp_gsos_rule}.
Define $+: C\times C \to C$ by $\sigma + \tau = F(\sigma +\tau)$;
on the right we have the formal symbol $+$.  Define $\otimes$ similarly.
Then  \refeq{thm:interp_gsos_rule} is just the statement that for all streams
$\sigma$ and $\tau$,
$$\begin{array}{lcl} x \otimes y & = & 
(\sigma_0 \cdot \tau_0,  \sigma \otimes \tau' + \sigma'\otimes \tau)\ .
\end{array}
$$
This matches what we want in \refeq{eq:shuffle_product_again}.
}
\end{exa}

\begin{exa} \label{ex:sandlrps}
We present another 
example related to the zipping of streams; see Examples~\ref{ex:dist}(2)
and~\ref{ex:moredist}.   We have seen that there is an abstract GSOS
rule $\ell: K(H\times \Id) \to H\ext K$ with $KX = X \times X$ presenting
the $\zip$ operation on streams.  
Using an $\ell$-rps, we can show that there is a unique function $f: C\to C$ satisfying 
\[
f(\sigma) = \zip(\sigma, f(\sigma))\, .
\]
We might mention a subtlety: consider the alternative definition
\[
g(\sigma) = \zip(g(\sigma), \sigma)\, .
\]
The uniqueness assertion  fails for this.  The point is that each $(g(\sigma))_0$ is arbitrary.
As for $f$,  note that $(f(\sigma))_0 = \sigma_0$ for all streams $\sigma$.  In fact, this 
observation is central to obtaining $f$ via Theorem~\ref{thm:ellrps}.
The point is that we can write
\[
f(\sigma) = \zip(\sigma_0.\sigma', f(\sigma))
  =  \sigma_0.\zip(f(\sigma),  \sigma')\, .
\]
The last formulation makes it clear that we can obtain $f$ using the
method of this section: one takes $VX = X$, and then the last equation
gives rise to 
\[
e: V(H \times \Id) \to H\ext{F} 
\quad\text{with}\quad
e_X: (r, x', x) \mapsto (r, \zip (f(x), x')).
\]
Incidentally, functions like $f$ play a role in the theory of
paperfolding sequences; cf.~\cite{dmfp82}, Observation 1.3.
\end{exa}

\subsection{Interpreted solutions and cia structures}
\label{sec-iscias}

At this point, the reader might wish to revisit Outline~\ref{plan:lrps}.
The work we have done so far builds a $V$-algebra structure on $C$
by taking an $F$-algebra structure and (in effect) throwing away the 
interpretation of the ``givens'', since we know them to be the same
as in the original $K$-algebra structure.    Our next result, Theorem~\ref{thm:ellrps},
provides a different way to look at things.  It provides a direct notion of 
an \emph{interpreted solution}; this is a $V$-algebra.  The theorem shows that
$\ell$-rps's have unique interpreted solutions.    
In addition, Theorem~\ref{thm:ellrps} shows that cia structures are propagated 
according to our development.

\begin{defi}
Let $e: \Var (H\times\Id) \to H\ext{F}$ be an $\ell$-rps.
  An \emph{interpreted solution} of $e$ in $C$ is a $V$-algebra structure
  $s: \Var C \to C$ such that the triangle below commutes:
  \begin{equation}
    \label{eq:lambdainterp}
    \vcenter{
      \xymatrix@C+1pc{
        \Var C
        \ar[rr]^-s
        \ar[d]_{V\langle x, \id_C\rangle}
        &&
        C
        \\
        \Var(HC \times C)
        \ar[d]_{e_C}
        &
        HC
        \ar[ru]_-{c^{-1}}
        \\
        H \ext F C
        \ar[ru]_-{H\ext{[b,s]}}
      }}
    \end{equation}
    \takeout{
  \begin{equation}
    \label{eq:lambdainterp}
    s = (
    \xymatrix@C+1pc{
      \Var C
      \ar[r]^-{\Var \langle c,\id_C\rangle}
      &
      \Var (HC\times C)
      \ar[r]^-{e_C}
      &
      H\ext{F}C
      \ar[r]^-{H \extalg{[b,s]}}
      &
      HC
      \ar[r]^-{c^{-1}}
      &
      C
    }
    ) \,\text{,}
  \end{equation}} 
  where $b:KC\to C$ is the $\ell$-interpretation in $C$ (cf.~Definition~\ref{dfn:interp}).
\end{defi}

\begin{thm}
  \label{thm:ellrps}
  For every $\ell$-rps there exists a unique interpreted solution $s$ in $C$. In
  addition, $s$ extends the cia structure on $C$, i.\,e., the following
  is the structure of a cia for $H\ext{F}$ on $C$:
  \begin{equation}
    \label{eq:extcia}
  \xymatrix@1{
    H\ext{F} C
    \ar[rr]^-{H\extalg{[b,s]}}
    &&
    HC
    \ar[r]^-{c^{-1}}
    &
    C \,\text{.}
    }
  \end{equation}
\end{thm}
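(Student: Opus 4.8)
The plan is to reduce the statement entirely to results already established for the abstract GSOS rule $n: F(H\times\Id)\to H\ext F$ built in Construction~\ref{constr:s}, where $F = K+V$. By Theorem~\ref{thm:interp_gsos_rule} the rule $n$ has a unique interpretation $a: FC\to C$ in $C$, and by Proposition~\ref{substant} we already know $a\cdot\inl_C = b$. The first step is to put $s := a\cdot\inr_C: VC\to C$ and check that $s$ is an interpreted solution of $e$ in the sense of Diagram~\refeq{eq:lambdainterp}. This is a diagram chase: start from the defining square for $a$ in Theorem~\ref{thm:interp_gsos_rule} (with $F$, $n$, $a$ in place of $K$, $\ell$, $b$), precompose with $\inr: V\to F$, use naturality of $\inr$ and the definition of $n$ on its $V$-component (which is $e$ followed by $H\ext\inl$, wait—more precisely the $V$-component of $n$ is just $e$), and finally observe that $\extalg a\cdot\ext\inl_C = \extalg{a\cdot\inl_C} = \extalg b$ and $\extalg a\cdot\ext\inr_C = \extalg s$ by Remark~\ref{rem:extinl}(3). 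Since $[a\cdot\inl_C, a\cdot\inr_C] = [b,s]$, the Eilenberg--Moore algebra $\extalg a$ equals $\extalg{[b,s]}$, and the chase collapses exactly to Diagram~\refeq{eq:lambdainterp}.

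The second step is uniqueness of the interpreted solution. Suppose $s': VC\to C$ is any interpreted solution of $e$. Form $a' := [b,s']: FC\to C$. I claim $a'$ satisfies the defining property of the $n$-interpretation in Theorem~\ref{thm:interp_gsos_rule}, i.e. $c\cdot a' = H\extalg{a'}\cdot n_C\cdot F\langle c,\id_C\rangle$. Again this is checked componentwise on $F = K+V$: on the $K$-component it reduces, via the definition of $n$ on $K$ (namely $H\ext\inl\cdot\ell$) and $\extalg{a'}\cdot\ext\inl_C = \extalg b$, to the defining property of $b$ from Theorem~\ref{thm:interp_gsos_rule}; on the $V$-component it reduces, via the definition of $n$ on $V$ (namely $e$) and $\extalg{a'}\cdot\ext\inr_C = \extalg{s'}$, exactly to the fact that $s'$ makes Diagram~\refeq{eq:lambdainterp} commute, rewritten using $\extalg{[b,s']}$. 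By the uniqueness clause of Theorem~\ref{thm:interp_gsos_rule}, $a' = a$, hence $s' = a'\cdot\inr_C = a\cdot\inr_C = s$.

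The third step is the cia claim. By construction $a = [b,s]$ is the $n$-interpretation in $C$, so the composite $c^{-1}\cdot H\extalg{[b,s]}: H\ext F C\to C$ is precisely the algebra $k$ associated with the abstract GSOS rule $n$ via Theorem~\ref{thm:distcia} (applied with $F$, $n$, $a$ in place of $K$, $\ell$, $b$, noting $M = \ext F$ there). Theorem~\ref{thm:distcia} then immediately tells us $(C, c^{-1}\cdot H\extalg{[b,s]})$ is a cia for $H\ext F$, which is the displayed structure~\refeq{eq:extcia}. I expect the main obstacle to be purely bookkeeping: keeping the two coproduct injections $\inl,\inr$ straight through the interaction of the free-monad morphisms $\ext\inl,\ext\inr$ with the Eilenberg--Moore structure $\extalg{(-)}$, and making sure the componentwise verification of the defining square for $n$ genuinely splits as claimed rather than mixing the $K$- and $V$-parts. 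Everything else is a direct appeal to Theorems~\ref{thm:interp_gsos_rule} and~\ref{thm:distcia} together with Proposition~\ref{substant} and Remark~\ref{rem:extinl}(3).
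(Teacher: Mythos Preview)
Your proposal is correct and follows essentially the same approach as the paper: define $s = a\cdot\inr_C$ for the $n$-interpretation $a$, verify Diagram~\refeq{eq:lambdainterp} by precomposing the defining square for $a$ with $\inr$, prove uniqueness by showing that any interpreted solution $s'$ makes $[b,s']$ satisfy the defining property of the $n$-interpretation (checked componentwise on $F = K+V$), and derive the cia claim from Theorem~\ref{thm:distcia} applied to $n$. The only extraneous remarks are the observations $\extalg a\cdot\ext\inr_C = \extalg s$ and $\extalg{a'}\cdot\ext\inr_C = \extalg{s'}$, which are true but not actually used---what is needed is simply $\extalg a = \extalg{[b,s]}$ (respectively $\extalg{a'} = \extalg{[b,s']}$), which you also state.
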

\begin{proof} 
  Given the abstract GSOS rule $\ell$ and the $\ell$-rps $e$, form $n$
  as in Construction~\ref{constr:s} and define 
  $$
  s = (\xymatrix@1{
    VC
    \ar[r]^-\inr
    &
    FC 
    \ar[r]^-{a}
    &
    C
  })\,\text{,}
  $$
  where $a$ is the $n$-interpretation in $C$. Observe immediately
  that, by Proposition~\ref{substant},
  \[
  a = [b,s]:FC\to C,
  \]
  where $b: KC \to C$ is the
  $\ell$-interpretation in $C$, and so we have
\begin{equation}\label{cor:a}
  \extalg a = \extalg{[b,s]}: \ext F C \to C \,\text{.}
  \end{equation}
   
  \medskip\noindent
  (1)~We prove that $s$ is a solution
  of $e$ in $C$. Indeed, consider the commutative diagram
  \begin{equation}
    \label{diag:sols}
    \vcenter{
    \xymatrix@!C=2cm{
      VC
      \ar[d]^{\inr_C}
      \ar[r]^-{V\langle c,\id_C\rangle}
      &
      V(H\times\Id)C
      \ar[rd]^-{e_C}
      \ar[d]_{\inr_{(H\times\Id)C}}
      \\
      FC
      \ar[d]^{a}
      \ar[r]^-{F\langle c,\id_C\rangle}
      &
      F(H\times\Id)C
      \ar[r]^-{n_C}
      &
      H\ext F C
      \ar[d]^{H\extalg{a}}
      \\
      C
      \ar[rr]^-c 
      \ar@{<-} `l[u] `[uu]^s [uu] 
      && 
      HC\,\text{.}
    }}
  \end{equation}
  The lower square commutes since $a$ is the $n$-interpretation in $C$, and the upper right-hand triangle by the definition of $n$ (cf.~Construction~\ref{constr:s}). The upper left-hand square commtes by the naturality of $\inl:V\to F$, and the left-hand part is the definition of $s$. Thus the outside commutes, and we see that $s$ is a solution of $e$ since
  $\extalg{a} = \extalg{[b,s]}$ holds by \refeq{cor:a}. 

  \medskip\noindent
  (2)~We now prove that $s$ is unique. Suppose that $t$ is any solution
  of $e$. We will prove that 
  \begin{equation}
    \label{eq:t}
    [b,t] = a,
  \end{equation}
  which implies the desired equation $s = t$. 

  In order to prove~\refeq{eq:t} we have to verify the commutativity
  of the following diagram (cf.~Theorem~\ref{thm:interp_gsos_rule}):
  $$
  \xymatrix@C+1pc{
    FC 
    \ar[r]^-{F\langle c,\id_C\rangle}
    \ar[d]_{[b,t]}
    &
    F(H\times\Id)C
    \ar[r]^-{n_C}
    &
    H\ext F C
    \ar[d]^{H\extalg{[b,t]}}
    \\
    C
    \ar[rr]^-c
    &&
    HC
  }
  $$
  We verify this for the two coproduct components of $FC = KC + VC$
  separately.

  For the right-hand component we obtain   
  diagram~\refeq{diag:sols} above with $s$ replaced by $t$ and $a$ by
  $[b,t]$, which commutes since $t$ is a solution of $e$. For the
  left-hand component we obtain the diagram below:
  $$
  \xymatrix@!C=2.9cm{
    KC 
    \ar[r]^-{K\langle c,\id_C\rangle}
    \ar[dd]_{b}
    &
    K(H\times\Id)C
    \ar[r]^{\inl_{(H\times\Id)C}}
    \ar[dr]_{\ell_C}
    &
    F(H\times\Id)C
    \ar[r]^-{n_C}
    &
    H\ext F C
    \ar[dd]^{H\extalg{[b,t]}}
    \\
    &&
    H\ext{K}C
    \ar[ur]_{H\ext{\inl}_C}
    \ar[dr]_{H\extalg{b}}
    &
    \\
    C
    \ar[rrr]^-c
    &&&
    HC
  }
  $$
  The big left-hand part commutes since $b$ is the $\ell$-interpretation in $C$ (cf.~Theorem~\ref{thm:interp_gsos_rule}),
  the upper triangle commutes
  by the definition of $n$ (see Construction~\ref{constr:s}), and for
  the right-hand triangle remove $H$ and notice that
  $$
  \extalg{[b,t]} \cdot \ext{\inl}_C = \extalg{b}
  $$
  by Remark \ref{rem:extinl}(3).

  \medskip\noindent
  (3)~To complete the proof we will show that $c^{-1} \cdot H\extalg{[b,s]} : H\ext FC
  \to C$ is the structure of a cia for $H\ext F$. But this is a consequence
  of Theorem~\ref{thm:distcia}; indeed, recall that $[b,s]$
  is the interpretation of the abstract GSOS rule $n: F(H\times\Id) \to H
  \ext F$ in $C$, see~\refeq{cor:a}. 
\end{proof}

\begin{rem}
\label{rem:comprps}
Notice that the fact that the unique solution $s$ of an $\ell$-rps extends
the cia structure on $C$ means that the operations on $C$ defined in
this way may be part of recursive definitions according to the
Corollaries~\ref{cor:sol} and~\ref{cor:rps} (where $M = \ext{K +
  \Var}$).
\takeout{ 
  \medskip\noindent

(2)~In addition, we have a \emph{modularity principle} for
solutions of $\ell$-rps's---the operations provided by the unique solution $s:\Var C
\to C$ may occur as givens in subsequent $\ell$-rps's (and we will make use of this feature in our
applications in Section~\ref{sec:app}). More precisely, as explained
in Outline~\ref{plan:lrps} of this section the functors $K$ and $V$ in a given
$\ell$-rps $e: V(H\times \Id) \to H\ext{K+V}$ provide the type of given
and newly specified operations. Now $e$ gives rise to the abstract GSOS rule
$n: F(H \times \Id) \to H\ext F$, where $F = K + \Var$, with the
interpretation $[b,s]: FC \to C$. So every $n$-rps $f: W(H\times \Id) \to
H\ext{F+W}$ specifying new operations of type $W$ and using operations
of type $F = K +V$ as givens has a unique solution $t: WC \to C$.}%
\end{rem}

As we shall see in Section~\ref{sec:app}, in conrete instances many
operations are definable as unique solutions of $\ell$-rps's. But
there are operations that cannot be defined by any $\ell$-rps (or
abstract GSOS rule):
\begin{exa}
  \label{ex:tl}
  The tail function $\tl: \Real^\omega \to \Real^\omega$ on
  streams cannot be defined as (part of) an interpretation of any abstract GSOS
  rule $\ell: K(\Real \times \Id \times \Id) \to \Real \times \ext
  K$. For if it were possible to obtain the tail function in this way
  we would have the induced cia $c^{-1}\cdot H\ext b: H\ext K C \to C$
  according to Theorem~\ref{thm:distcia}. Then, for $X = \{x\}$ the
  $\ell$-equation (see Definition~\ref{dfn:elleq})
  $e: X \to \Real \times \ext K X$ given by $e(x) = (r,
  \tl(x))$ would have a unique solution $e^\dag: X \to C$ for every $r
  \in \Real$. But this is clearly not the case: every stream $\sigma$
  with $r$ at its head yields a solution $e$, since  $e^\dag(\sigma) = \sigma$. 
\end{exa}

The next proposition shows that for two $\ell$-rps's that do not
interact the order in which the algebra $b: KC \to C$ is extended by their
solutions does not matter. 

\begin{prop}
  \label{prop:order}
  Let $e_i: \Var_i (H\times\Id) \to H \ext{K + \Var_i}$, $i = 1, 2$, be two
  $\ell$-rps's. Then the cia structure on $C$ extended by the unique
  solutions $s_i: \Var_i C \to C$ of the $e_i$ is independent of the
  order of extension.
\end{prop}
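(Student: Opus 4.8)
The plan is to reduce the statement to the observation that both ways of extending $b$ yield the $n$-interpretation of one and the same abstract GSOS rule on the combined functor $K + V_1 + V_2$, and then invoke the uniqueness part of Theorem~\ref{thm:interp_gsos_rule}. First I would note that extending $b: KC \to C$ by the solution $s_1: V_1 C \to C$ of $e_1$ produces (by Proposition~\ref{substant} and the definition of $s_1$ in the proof of Theorem~\ref{thm:ellrps}) the $n_1$-interpretation $a_1 = [b, s_1]: (K + V_1)C \to C$, where $n_1: (K+V_1)(H\times\Id) \to H\ext{K+V_1}$ is the abstract GSOS rule built from $\ell$ and $e_1$ via Construction~\ref{constr:s}. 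Now regard $e_2$ as an $n_1$-rps: precompose $e_2: V_2(H\times\Id) \to H\ext{K+V_2}$ with the monad morphism $\ext{[\ext{\inl}, \ext{\inr}]}$ or, more simply, observe that $V_2(H\times\Id) \to H\ext{K+V_2} \to H\ext{(K+V_1)+V_2}$ is an $n_1$-rps since $K$ is a retract of $K + V_1$. Applying Theorem~\ref{thm:ellrps} to this $n_1$-rps yields a unique solution $s_2': V_2 C \to C$ and an extended interpretation $[a_1, s_2'] = [b, s_1, s_2']: (K+V_1+V_2)C \to C$, which is the $m$-interpretation for the abstract GSOS rule $m: (K+V_1+V_2)(H\times\Id) \to H\ext{K+V_1+V_2}$ obtained by iterating Construction~\ref{constr:s}.

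Next I would carry out the symmetric computation starting from $e_2$ first and then $e_1$, obtaining the interpretation $[b, s_1'', s_2]$ of an abstract GSOS rule $m'$ on $K + V_1 + V_2$ (the coproduct is of course commutative, so the underlying functor is literally the same up to the canonical iso, which I would suppress). The crucial point is then that $m$ and $m'$ are \emph{equal} as natural transformations: on the $K$-component both restrict to $H\ext{\inl}\cdot\ell$, on the $V_1$-component both restrict to ($H$ of the appropriate coproduct injection applied to) $e_1$, and on the $V_2$-component both restrict to $e_2$ --- this is immediate from the coproduct description of $n$ in Construction~\ref{constr:s}, since the construction simply pastes $\ell$, $e_1$, $e_2$ together along the three coproduct injections and the order in which we paste does not affect the resulting cotuple. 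Hence $m = m'$, and by the uniqueness clause of Theorem~\ref{thm:interp_gsos_rule} their interpretations in $C$ coincide: $[b, s_1, s_2'] = [b, s_1'', s_2]$. Comparing coproduct components gives $s_1 = s_1''$ and $s_2' = s_2$, so the two doubly-extended cia structures $c^{-1}\cdot H\extalg{[b,s_1,s_2']}$ and $c^{-1}\cdot H\extalg{[b,s_1'',s_2]}$ on $C$ (each a cia for $H\ext{K+V_1+V_2}$ by Theorem~\ref{thm:ellrps}) are identical.

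The step I expect to be the main obstacle is the bookkeeping in the first paragraph: one must check that re-reading $e_2$ as an $n_1$-rps is legitimate, i.e.\ that the natural transformation $V_2(H\times\Id)\to H\ext{K+V_2}$ really does land in $H\ext{(K+V_1)+V_2}$ in a way compatible with the monad morphism $\ext{\inl\!+\!V_2}: \ext{K+V_2} \to \ext{(K+V_1)+V_2}$, and that the iterated Construction~\ref{constr:s} then yields exactly the cotuple $m = [H\ext{\inl}\cdot\ell,\, H\ext{\inm}\cdot e_1,\, e_2]$ described above. This is routine but needs care because several coproduct reassociations and the naturality of $\ext{(-)}$ on the coproduct injections are in play; once it is set up correctly, the equality $m = m'$ and the appeal to Theorem~\ref{thm:interp_gsos_rule} are immediate.
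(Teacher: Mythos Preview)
Your proposal is correct and lands on the same target as the paper: both orders of extension produce the algebra $[b,s_1,s_2]:(K+V_1+V_2)C\to C$ and hence the same cia structure $c^{-1}\cdot H\extalg{[b,s_1,s_2]}$. The route differs slightly. The paper handles only one order (first $s_1$, then $s_2$) and argues \emph{directly} that $s_2$ itself solves the $n_1$-rps $H\ext{[\inl,\inr]}\cdot e_2$: this is a short diagram chase whose nontrivial cell is the identity $\extalg{[b,s_1,s_2]}\cdot\ext{[\inl,\inr]}_C=\extalg{[b,s_2]}$, an instance of Remark~\ref{rem:extinl}(3). Symmetry then finishes. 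You instead run both iteration orders, assemble the two resulting abstract GSOS rules $m,m'$ on $K+V_1+V_2$, and observe that they have identical $K$-, $V_1$-, and $V_2$-components, so $m=m'$ and Theorem~\ref{thm:interp_gsos_rule} forces $[b,s_1,s_2']=[b,s_1'',s_2]$. This is a perfectly good alternative; it makes the symmetry explicit up front and replaces the diagram chase by the componentwise check that $m=m'$, which is exactly the ``bookkeeping'' you flag as the main obstacle. That bookkeeping is indeed routine once one writes the iterated Construction~\ref{constr:s} out and uses that $\ext{(-)}$ takes compositions of coproduct injections to compositions of the induced monad morphisms.
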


\begin{rem}
More precisely, we may first take $s_1: \Var_1 C \to C$ to obtain an
extended cia structure as in~\refeq{eq:extcia}, and then take the
solution of $s_2: \Var_2 C \to C$ in the new cia, or vice versa. Either way, the
resulting extended cia structure is
\begin{equation} \label{eq:order}
\vcenter{
\xymatrix@1{
  H\ext{(K+\Var_1+\Var_2)} C \ar[rr]^-{H\extalg{[b, s_1, s_2]}} &&
  HC 
  \ar[r]^-{c^{-1}}
  &
  C\,\text{.}
}
}
\end{equation}
\end{rem}

\begin{proof}[Proof of Proposition~\ref{prop:order}]
It is sufficient to prove that the cia structure on $C$ obtained by
extending $k:H\ext KC \to C$ first by $s_1$ and then by $s_2$ is~\refeq{eq:order}.

So take $s_1$ and extend the cia structure $(C, k)$ to obtain the cia
\[c^{-1} \cdot H\extalg{[b,s_1]}: H\ext{K + \Var_1} (C) \to C\]
(cf.~\refeq{eq:extcia}). Recall from the proof of
Theorem~\ref{thm:ellrps} that this cia structure 
is obtained as follows: one first forms the abstract GSOS rule 
$$
n = [H \ext\inl \cdot \ell, e_1]: 
(K+\Var_1)(H\times\Id) \to H \extalg{K + \Var_1}
$$
whose interpretation is $b' = [b, s_1]: (K+\Var_1) C \to C$,
cf.~\refeq{cor:a}, and then one applies
Theorem~\ref{thm:distcia}.

Now we form the following $n$-rps
$$
\xymatrix@1{
  \Var_2 (H\times\Id) 
  \ar[r]^-{e_2}
  &
  H\ext{K+\Var_2}
  \ar[rr]^-{H\ext{[\inl,\inr]}}
  &&
  H\ext{(K+\Var_1+\Var_2)}\,\text{.}
}
$$
Its unique solution is easily seen to be $s_2$; indeed, consider the
following diagram (and notice that the right-hand arrow is $H\extalg{[b', s_2]}$): 
$$
\xymatrix@C+1pc{
  \Var_2 C
  \ar[r]^-{\Var_2 \langle c,\id_C\rangle}
  \ar[d]_{s_2}
  &
  \Var_2 (H\times\Id) C 
  \ar[r]^-{{(e_2)}_C}
  &
  H \ext{K+\Var_2} C
  \ar[rd]_{H\extalg{[b,s_2]}}
  \ar[r]^-{H\ext{[\inl,\inr]}_C}
  &
  H\ext{(K+\Var_1+\Var_2)}C
  \ar[d]^{H\extalg{[b,s_1,s_2]}}
  \\ 
  C
  &&&
  HC
  \ar[lll]^-{c^{-1}}
  }
$$
The left-hand part commutes since $s_2$ is the unique solution of
$e_2$ and for the right-hand part we remove $H$ and then precompose
with $\kappa_C$ to obtain
$$
\begin{array}{rcl@{\qquad}p{5cm}}
  \ext{[b,s_1,s_2]}\cdot\ext{[\inl,\inr]}_C \cdot \kappa_C & = & \ext{[b,
    s_1,s_2]} \cdot \kappa_C \cdot [\inl,\inr]_C & cf.~Remark~\ref{rem:extinl}(1) \\
  & = & [b,s_1,s_2] \cdot [\inl, \inr]_C & see Remark~\ref{rem:extinl}(2)
  \\
  & = & [b, s_2] \\
  & = & \ext{[b,s_2]} \cdot \kappa_C & see Remark~\ref{rem:extinl}(2)
\end{array}
$$
The desired equality now follows since precomposition with $\kappa$ yields
an isomorphism of categories, see Remark~\ref{rem:extinl}(2).
\end{proof}

\begin{rem} \label{rem:inverse_of_c}
Notice that we can always consider the algebraic operation provided by $c^{-1}:HC\to C$ as a given operation in any $\ell$-rps for an abstract GSOS rule $\ell:K(H\times\Id)\to H\ext K$. More precisely, we can assume that $K=K'+H$ and that the $\ell$-interpre\-ta\-tion $b:KC\to C$ has the form $b=[b',c^{-1}]$. Indeed, given any abstract GSOS rule $\ell':K'(H\times\Id)\to H\ext{K'}$ with $\ell'$-interpretation $b'$, we define the $\ell'$-rps
$$
e = (
\xymatrix@1{
  H(H\times\Id)
  \ar[r]^-{H\pi_0}
  &
  HH
  \ar[r]^-{H\inr}
  &
  H(K'+H)
  \ar[r]^{H\kappa}
  &
  H\ext{K'+H}
}
) \,\text{.}
$$
It is easy to verify that $c^{-1}$ is its solution; and, as we see from the proof of Theorem~\ref{thm:ellrps}, we obtain a new abstract GSOS rule $n:(K'+H)(H\times\Id)\to H\ext{K'+H}$ defined as in Construction~\ref{constr:s} and having the $n$-interpretation $[b',c^{-1}]$. According to Proposition~\ref{prop:order} we can do this construction at any step when defining operations, the result being always a GSOS rule $\ell:K(H\times\Id)\to H\ext K$ with $K=K'+H$ and an $\ell$-interpretation $[b',c^{-1}]$ containing the algebraic structure $c^{-1}$.
\end{rem}

\begin{summary}\label{sum:2}
  Again, we summarize and review our work in this section.  
  Given an appropriate tuple
  $$
  (c: C\to HC, \quad \ell: K(H\times\Id) \to H\ext{K}, \quad b: KC \to C)
  $$
  as in Summary~\ref{sum:1} we know from Theorem~\ref{thm:distcia}
  that $c^{-1} \cdot H\ext b$ is a cia for the functor $H\ext K$. The
  work in this section shows that the appropriate tuple
  in~\refeq{eq:tuple} is obtained as follows: 
  Let $V$ be a functor, let $F = K + V$, and let $e:V(H\times \Id)\to H\ext{F}$ be
  an $\ell$-rps.  We take the unique interpreted solution $s: VC\to C$ of
  $e$, and then we obtain the appropriate tuple
  $$
  (c: C\to HC, \quad n: F(H\times\Id) \to H\ext{F}, \quad a = [b,s]: FC \to C).
  $$  
  extending the earlier appropriate tuple. 
  \takeout{ 
    We say that the tuple
    $$
  (c: C\to HC, \quad \ell: K(H\times\Id) \to H\ext{K}, \quad b: KC \to C)
  $$
  is an \emph{appropriate cia structure} if $c$ is a terminal 
  coalgebra structure,  $\ell$ is an abstract GSOS rule,
  $b$ is its interpretation, and $c^{-1}\cdot H\ext{b}$ is a cia for
  the functor $H\ext{K}$.  Given an appropriate cia, the work in this
  section shows how to obtain another appropriate cia: Let $V$ be a
  functor, let $F = K + V$, and let $e:V(H\times \Id)\to H\ext{F}$ be
  an $\ell$-rps.  We take the interpreted solution $s: VC\to C$ of
  $e$, and then we consider
  $$
  (c: C\to HC, \quad n: F(H\times\Id) \to H\ext{F}, \quad [b,s]: FC \to C)
  $$  
  This is again an appropriate cia,  and it extends the earlier
  appropriate cia. 
  }

  This provides the desired \emph{modularity principle} for 
  solutions of $\ell$-rps's as discussed in the introduction and again
  (restricted to flat equations) in Remark~\ref{rem:comp}. On the
  level of concrete syntactic specifications, the operations provided
  by the unique solution $s:\Var C \to C$ of any $\ell$-rps $e$ may
  occur as givens in any subsequent $n$-rps $f: W(H \times \Id) \to
  H\ext {F+W}$, which in turn has a unique solution $t: WC \to C$ that
  yields another appropriate tuple etc. So our results iterate as
  desired. 
\end{summary}

\subsection{Sandwiched $\ell$-rps's}

We will now prove a version of the Sandwich Theorem~\ref{thm:sandwich}
for $\ell$-rps's. The goal is to be able to solve specifications from
a wider class uniquely. Let us again explain the idea using 
the example of streams of reals. Here we have $HX=\Real\times X$ on $\Set$. Suppose that $K$ and $V$ both are polynomial functors associated with a signature of givens and newly defined operations on the terminal coalgebra $C=\Real^\omega$. The inverse of the coalgebra structure $c=\langle\hd,\tl\rangle$ yields the family of prefix operations $r.(-)$ prepending the number $r$ to a stream. The format of an $\ell$-rps means that the new operations of type $V$ are always defined by an equation with a prefix operation as a guard at its head, see e.\,g.~the following specification of the shuffle product reformulated from the behavioral differential equation of~\refeq{eq:shuffle_product_again}:
$$r.x\otimes s.y = rs.((r.x\otimes y)+(x\otimes s.y)) \,\text{.}$$
This guard $rs,-$ is sufficient to ensure a unique solution $\otimes$. However, in
general it is not necessary that the guard appears at the head of the
term on the right-hand side of an equation. We shall now define
the more general format of \emph{sandwiched} recursive program schemes that allow the guard
to occur further inside the term.
A concrete example of this kind of specification is the ``parallel composition'' of
non-well founded sets in~\refeq{eq:function_non-well-founded}; here
the guarding operation is set-bracketing $(x_i)_{i \in I} \mapsto \{
x_i \mid i \in I \}$, and this occurs inside the union operation (we
shall come back to this example in Section~6.5).

\begin{defi}
\label{dfn:sandwiched_ell-rps}
  A \emph{sandwiched recursive program scheme w.\,r.\,t.~$\ell$} (shortly,
  \emph{$\ell$-srps}) is a natural transformation $e: \Var (H\times\Id) \to
  \ext{K}H\ext{F}$.

  An \emph{interpreted solution} of $e$ in $C$ is a $\Var$-algebra structure
  $s: \Var C \to C$ such that 
  $$
  s = (
  \xymatrix@C+0.5pc{
    \Var C
    \ar[r]^-{\Var \langle c,\id_C\rangle}
    &
    \Var (HC\times C)
    \ar[r]^-{e_C}
    &
    \ext{K}H\ext{F}C
    \ar[r]^-{\ext{K} H\extalg{[b,s]}}
    &
    \ext{K}HC
    \ar[r]^-{\ext{K} c^{-1}}
    &
    \ext KC
    \ar[r]^-{\extalg b}
    &
    C
  }
  )\,\text{.}
  $$
\end{defi}

\begin{rem}
  (1)~The notion of an $\ell$-srps subsumes the one of an
  $\ell$-rps. Indeed, for a given $\ell$-rps $e: V(H\times \Id) \to
  H\ext{F}$ one forms the $\ell$-srps
  \[
  \xymatrix@1{
    V(H \times \Id) \ar[r]^-{e} & H\ext{K+V} \ar[rr]^{\eta H\ext{F}} &&
    \ext K H\ext{F}}.
  \]
  And one readily proves that solutions of this $\ell$-srps in $C$ are in
  one-to-one correspondence with solutions of $e$ in $C$.

  \medskip\noindent
  (2)~The notion of an $\ell$-(s)rps is incomparable to the three
  formats of rps's we saw in Section~\ref{sec:solthms}
  (see Section~\ref{rem:eqns}(2)). Indeed, for polynomial functors on $\Set$, the
  latter rps's are systems of recursive function equations that allow
  infinite trees on the right-hand sides while $\ell$-(s)rps's
  correspond to recursive equations where right-hand sides are
  restricted to consist of terms (or finite trees) only.

  However, notice that classical recursive program schemes as
  in~\cite{guessarian} are restricted to allow only finite trees on
  right-hand sides of equations (as, e.\,g., in~\refeq{eq:rps}). Such
  a restricted rps corresponds to a natural transformation
  \[
  e = (
  \xymatrix@1{
    \Var \ar[r] & \ext{H+\Var} \ar[r]^-m & T^{H+\Var}
    }),
  \]
  where $m: \ext{H+\Var} \to T^{H+\Var}$ is the unique monad morphism from
  the free monad on $H+\Var$ to $T^{H+\Var}$ induced by $\kappa^{H+\Var}: H + \Var \to
  T^{H + \Var}$. This given recursive program scheme is guarded if we have $f: \Var \to
  H\ext{H+\Var}$ satisfying a similar property as $f$ in
  Definition~\ref{dfn:algeq}. Now this clearly yields an $\ell$-rps
  \[
  e' = (
  \xymatrix{
    \Var(H \times \Id) \ar[r]^-{\Var \pi_1} & \Var \ar[r]^-f & H
    \ext{H + \Var}
  }),
  \]
  where $K = H$ and 
  \[
  \ell = (
  \xymatrix{
    H(H\times \Id) \ar[r]^-{H\pi_1} & H \ar[r]^-{H\eta} & H \ext H
    }).
  \]
  Again one readily proves that solutions of $e'$ are in one-to-one
  correspondence with solutions of $e$, and so the restricted rps's
  are a special instance of $\ell$-rps's. 
\end{rem}

We are now ready to state and prove the main result of this paper; it
provides a unique solution theorem for $\ell$-rps's, the most general
specification format we consider in this paper. In order to simplify
notation we will write $(M,\eta,\mu)$ for the free monad $\ext{K+\Var}$ in
the rest of this section.
  
\begin{thm} {\rm (Sandwich Theorem for $\ell$-srps's)}
  \label{thm:sandwich_rps}
  For every $\ell$-srps there exists a unique interpreted solution $s$ in $C$.
  In addition, $s$ extends the cia structure on $C$;   more precisely, the following is the structure
  of a cia for $M H M$ on $C$:
  \begin{equation}
    \label{eq:sandwich_extcia}
  \xymatrix@1{
    M H M (C)
    \ar[rr]^-{M H\extalg{[b,s]}}
    &&
    MH(C)
    \ar[rr]^-{Mc^{-1}}
    &&
    M(C)
    \ar[r]^-{\extalg{[b,s]}}
    &
    C \,\text{.}
    }
  \end{equation}
\end{thm}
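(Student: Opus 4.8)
The plan is to mirror the proof of Theorem~\ref{thm:ellrps}, but routed through the sandwiched machinery (Theorems~\ref{thm:sandwich_ellsolution} and~\ref{thm:sandwich}) in place of Theorems~\ref{thm:ellsolution} and~\ref{thm:distcia}. Write $F = K+\Var$ and $M = \ext{F}$. First I would give a sandwiched analogue of Construction~\ref{constr:s}: from the $\ell$-srps $e\colon\Var(H\times\Id)\to\ext{K}H\ext{F}$ and the given abstract GSOS rule $\ell\colon K(H\times\Id)\to H\ext{K}$, assemble a natural transformation (a ``sandwiched abstract GSOS rule for $F$'')
\[
n\colon F(H\times\Id)\longrightarrow\ext{K}H\ext{F}
\]
on the coproduct summands of $F=K+\Var$: on $\Var(H\times\Id)$ let $n$ be $e$, and on $K(H\times\Id)$ let it be $\ell$ followed by $H\ext{K}\xrightarrow{\eta H\ext{K}}\ext{K}H\ext{K}\xrightarrow{\ext{K}H\ext{\inl}}\ext{K}H\ext{F}$, where $\eta$ is the unit of $\ext{K}$ and $\ext{\inl}\colon\ext{K}\to\ext{F}$ is the monad morphism of Notation~\ref{not:coinj}(2). (When $e$ comes from an ordinary $\ell$-rps, $n$ is the $n$ of Construction~\ref{constr:s} followed by $\eta H\ext{F}$, so the present construction extends the earlier one.)

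Next I would prove a sandwiched analogue of Theorem~\ref{thm:interp_gsos_rule}: $n$ has a unique \emph{interpretation} $a\colon FC\to C$, the unique morphism with
\[
a \;=\; \extalg{a\cdot\inl_C}\cdot\ext{K}c^{-1}\cdot\ext{K}H\extalg{a}\cdot n_C\cdot F\langle c,\id_C\rangle .
\]
Existence and uniqueness of such an $a$ would follow by rewriting the right-hand side as a genuine sandwiched $\ell$-equation (an $\ext{K}H\ext{K}$-coalgebra on a suitable object built from $F$, $H$ and $C$), collapsing the nested free-monad layers with the multiplications of $\ext{K}$ and $\ext{F}$ and feeding the behaviour through $\langle c,\id_C\rangle$ as in the passage $e\mapsto\overline{e}$ of the proof of Theorem~\ref{thm:sandwich_ellsolution}, then invoking Theorem~\ref{thm:sandwich_ellsolution}. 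Having $a$, the diagram chase of Proposition~\ref{substant} applies verbatim --- using $\extalg{a}\cdot\ext{\inl}_C=\extalg{a\cdot\inl_C}$ (Remark~\ref{rem:extinl}(3)) and the definition of $n$ on its $K$-summand --- to give $a\cdot\inl_C=b$; so $a=[b,s]$ with $s:=a\cdot\inr_C\colon\Var C\to C$. The $\Var$-summand of the displayed equation, together with $\extalg{a\cdot\inl_C}=\extalg{b}$ and $\extalg{a}=\extalg{[b,s]}$, is exactly the condition that $s$ be an interpreted solution of $e$; and conversely any interpreted solution $t$ of $e$ makes $[b,t]$ satisfy the same characterising equation as $a$, forcing $[b,t]=a$ and hence $t=s$. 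This establishes existence and uniqueness of the interpreted solution.

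For the cia statement, observe that~\refeq{eq:sandwich_extcia} is $\extalg{[b,s]}\cdot M(c^{-1}\cdot H\extalg{[b,s]})\colon MHMC\to C$, precisely the shape of the algebra $k'$ of Theorem~\ref{thm:sandwich} with $K$ replaced by $F$ and $\ell$ replaced by the sandwiched rule $n$. I would therefore adapt the proof of Theorem~\ref{thm:sandwich}: a flat equation morphism $m\colon X\to MHMX+C$ is turned into an equation $\overline{m}\colon X+C\to MHM(X+C)$ (built with the unit and multiplication of $M=\ext{F}$ as there), whose unique solvability is reduced --- via $n$ and the same $\mu$-insertion used in the proof of Theorem~\ref{thm:sandwich_ellsolution}, i.e.~the single-morphism version of the first two steps above --- to an ordinary sandwiched $\ell$-equation handled by Theorem~\ref{thm:sandwich_ellsolution}. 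Transporting the solution back along the (routine) bijection between solutions of $m$ and of $\overline{m}$ yields the unique solution of $m$, so~\refeq{eq:sandwich_extcia} is a cia for $MHM$.

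\textbf{Main obstacle.} The core difficulty is the second step: one must discharge the outer ``sandwiched'' $\ext{K}$ by the $\mu$-insertion trick of Theorem~\ref{thm:sandwich_ellsolution} \emph{and} simultaneously carry the $F=K+\Var$ coproduct bookkeeping of Theorem~\ref{thm:ellrps}, notably the Eilenberg--Moore identities $\extalg{[b_0,b_1]}\cdot\ext{\inl}=\extalg{b_0}$ of Remark~\ref{rem:extinl}(3), all while keeping the large naturality diagrams commuting. Composing these two reductions --- and, for the cia part, a third one modelled on the proof of Theorem~\ref{thm:sandwich} --- is where the real work lies; everything else is routine adaptation of the cited proofs.
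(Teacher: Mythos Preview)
Your proposal is a plausible strategy, but the paper takes a substantially different and more economical route. Rather than developing sandwiched analogues of Construction~\ref{constr:s}, Theorem~\ref{thm:interp_gsos_rule}, and the cia arguments, the paper \emph{reduces the $\ell$-srps $e$ to an ordinary $\ell$-rps $\overline{e}$} and then simply invokes Theorem~\ref{thm:ellrps}. The reduction uses two ingredients: first, the observation of Remark~\ref{rem:inverse_of_c} that one may assume $K=K'+H$ with $c^{-1}$ already part of the $\ell$-interpretation; second, the distributive law $\lambda\colon\ext{K}(H\times\Id)\to(H\times\Id)\ext{K}$ associated with $\ell$ (Remark~\ref{rem:ell_lambda}(2)). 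With these in hand, one pushes the outer $\ext{K}$ in $e\colon V(H\times\Id)\to\ext{K}HM$ past $(H\times\Id)$ via $\lambda$, projects out the $H$-component, and folds the resulting $\ext{K}M$ into $M$ via $\ext{\inl}$ and the multiplication of $M$, yielding $\overline{e}\colon V(H\times\Id)\to HM$. A single large diagram chase (whose cells are naturality squares, Eilenberg--Moore laws, and Diagram~\refeq{diag:interp_gsos_lambda}) shows that solutions of $e$ and $\overline{e}$ coincide. The cia claim then follows immediately from Theorem~\ref{thm:sandwich} applied to the abstract GSOS rule $n$ built from $\overline{e}$ via Construction~\ref{constr:s} --- no adaptation of that proof is needed.

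The advantage of the paper's route is that it sidesteps exactly the obstacle you flag: rather than carrying the sandwiched $\ext{K}$-layer through a parallel theory (sandwiched interpretations, a sandwiched Proposition~\ref{substant}, etc.), it eliminates that layer up front using $\lambda$, after which everything is already proved. Your approach could probably be made to work, but the ``second step'' you identify as the core difficulty is genuinely delicate --- one must manufacture a sandwiched $\ell$-equation on some auxiliary object from a fixed-point equation that mixes $\ext{K}$- and $\ext{F}$-layers --- and the paper avoids this entirely. The trade-off is that the paper's reduction relies on the somewhat artificial device of Remark~\ref{rem:inverse_of_c} (enlarging $K$ so that $c^{-1}$ is among the given operations), which your approach would not need.
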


\begin{proof}
Recall from Remark~\ref{rem:inverse_of_c} that we can assume $K=K'+H$ and that the $\ell$-inter\-pre\-ta\-tion has the form $[b',c^{-1}]$. Furthermore recall from Remark~\ref{rem:ell_lambda}(2) that $\ell$ gives rise to a distributive law $\lambda:\ext K(H\times\Id)\to (H\times\Id)\ext K$ of the free monad $\ext K$ over the cofree copointed functor $H\times\Id$.

  Given an $\ell$-srps $e: \Var (H\times\Id) \to
  \ext{K}HM$, we form the (ordinary) $\ell$-rps $\ol e:\Var(H\times\Id) \to HM$ by defining
  $$
  \xymatrix{
    \ol e = (\; \Var (H\times\Id)
    \ar[r]^-{e}
    &
    \ext{K}HM
    \ar[rr]^-{\ext K\langle HM,\inm M\rangle}
    &&
    \ext K(HM\times (K'+H+V)M)
    \ar `d[l] `l[dlll]_(0.4){\ext K(HM\times\varphi)} [dlll]
    \\
    \genfrac{}{}{0pt}{}{\ext K(HM\times M)}{= \ext K(H\times \Id)M}
    \ar[r]^-{\lambda M} 
    &
    (H\times\Id)\ext KM
    \ar[r]^-{\pi_0\ext KM}
    &
    H\ext{K}M
    \ar[r]^-{H\ext{\inl}M}
    &
    HMM \xrightarrow{H\mu} HM \; ) \,\text{.}
  }
  $$
  Then we verify that solutions of $e$ and $\ol e$ are in one-to-one correspondence. Consider the following diagram (we drop the indices denoting components of natural transformations):
  \begin{equation}
    \label{diag:solution_ell-srps}
  \vcenter{
  \xymatrix{
    & \Var C \ar[rrr]^{s} \ar[d]_{\Var \langle c,\id\rangle}	& \ar@{}[dd]|{\text{(i)}}	&	& C \ar[dddd]^{\langle c,\id\rangle} \ar@{}[dddddddr]|{\text{(v)}}	&	\\
    & \Var (HC\times C) \ar[d]_{e} \ar `l[dl] `[dddddd]_{\ol e} [dddddd] \ar@{}[ddddddl]_(0.435){\text{(ii)}}	&	&	&	\\
    & \ext KHMC \ar[d]|{\ext K\langle \id,\phi\cdot\inm M\rangle} \ar[r]^{\ext KH\extalg{[b,s]}}	& \ext KHC \ar[r]^{\ext Kc^{-1}} \ar@{}[d]|{\text{(iii)}}	& \ext KC \ar[uur]^{\extalg b} \ar[d]^{\ext K\langle c,\id\rangle} \ar@{}[r]|{\text{(iv)}}	&	&	\\
    & \ext K(H\times\Id)MC \ar[d]_{\lambda M} \ar[rr]^{\ext K(H\times\Id)\extalg{[b,s]}}	& \ar@{}[d]|{\text{(vi)}}	& \ext K(H\times\Id)C \ar[d]^{\lambda}	&	&	\\
    & (H\times\Id)\ext KMC \ar[d]_{\pi_0\ext KM}
    \ar[rr]^{(H\times\Id)\ext K\extalg{[b,s]}}	&
    \ar@{}[d]|{\text{(vii)}}	& (H\times\Id)\ext KC \ar[d]^{\pi_0
      \ext K} \ar[r]^(.4){\begin{turn}{-45}$\labelstyle (H\times\Id)\extalg b$\end{turn}}	& (H\times\Id)C \ar[ddd]^{\pi_0}	&	\\
    & H\ext{K}MC \ar[d]_{H\ext\inl M} \ar[rr]^{H\ext K\extalg{[b,s]}}	& \ar@{}[d]|{\text{(ix)}}	& H\ext KC \ar[d]^(0.55){H\ext\inl} \ar[ddr]^{H\extalg b} \ar@{}[r]|{\text{(viii)}}	&	&	\\
    & HMMC \ar[d]_{H\mu} \ar[rr]^{HM\extalg{[b,s]}}	& \ar@{}[d]|{\text{(xi)}}	& HMC \ar[dr]_{H\extalg{[b,s]}} \ar@{}[r]|(0.35){\text{(x)}}	&	&	\\
    & HMC \ar[rrr]^{H\extalg{[b,s]}}	&	&	& HC \ar@{<-} `r[ur] `[uuuuuuu]_{c} [uuuuuuu]	&
  }
  }
  \end{equation}
  We first show that all inner parts except part (i) commute: for part (ii) use the definition of $\ol e$, part (iv) is the commutative diagram in~\refeq{diag:interp_gsos_lambda}, part (v) trivially commutes, the parts (vi), (vii), (viii) and (ix) commute by the naturality of $\lambda$, $\pi_0:H\times\Id \to H$ and $\ext\inl:\ext K\to M$, respectively, for part (x) use Remark~\ref{rem:extinl}(3), and part (xi) commutes since $\extalg{[b,s]}$ is the structure of an Eilenberg-Moore algebra for $M$. It remains to verify the commutativity of part (iii); we remove $\ext K$ and consider the product components separately: the left-hand component commutes using $c \cdot c^{-1} = \id_C$ and for the right-hand one we have the diagram
  $$
  \xymatrix@C+4pc{
    HMC \ar[rr]^{H\extalg{[b,s]}} \ar[d]_{\inm M}	&	& HC \ar[dd]^{c^{-1}} \ar[dl]_{\inm}	\\
    (K'+H+\Var)MC \ar[r]^{(K'+H+\Var)\extalg{[b,s]}} \ar[d]_{\varphi}	& (K'+H+\Var)C \ar[dr]_{[b,s]}	&	\\
    MC \ar[rr]_{\extalg{[b,s]}}	&	& C \,\text{.}
  }
  $$
  Its upper part commutes by the naturality of $\inm:H\to K'+H+\Var$, for the lower part recall from Notation~\ref{not:extalg} the definition of $\extalg{[b,s]}$ as a homomorphism of algebras for $K'+H+\Var$, and the right-hand triangle commutes since $b=[b',c^{-1}]$, see Remark~\ref{rem:inverse_of_c}.

  Now if $s$ is a solution of $\ol e$, then the outside of Diagram~\refeq{diag:solution_ell-srps} commutes.
  Therefore part (i) commutes, and this 
   proves that $s$ is a solution of $e$. Conversely, if $s$ is a solution of $e$ part (i) commutes and then $s$ is also a solution of $\ol e$. By Theorem~\ref{thm:ellrps}, $\ol e$ has a unique solution; thus, $e$ has a unique solution, too.

  We still need to prove that \refeq{eq:sandwich_extcia} is the
  structure of a cia for $MHM$. %
  \takeout{
  Let $e':X\to \ext{K}HMX + C$ be a flat equation morphism and consider the following diagram:
  $$
  \xymatrix{
    X \ar[ddd]_{e'} \ar[rr]^{e^{\dagger}}	&	& C	\\
    	& \ext{K}C + C \ar[ur]^{[\extalg b,\id_C]} \ar[r]^{\ext{\inl}_C+\id_C}	& MC + C \ar[u]_{[\extalg{[b,s]},\id_C]}	\\
    	& \ext{K}HC + C \ar[u]_{\ext{K}c^{-1}+\id_C} \ar[dr]^{\ext{\inl}_{HC}+\id_C}	&	\\
    \ext{K}HMX + C \ar[r]^{\ext{K}HMe^{\dagger}+\id_C} \ar[d]_{\ext{\inl}_{HMX}+\id_C}	& \ext{K}HMC + C \ar[u]^{\ext{K}H\extalg{[b,s]}+\id_C} \ar[dr]_(0.3){\ext{\inl}_{HMC}+\id_C}	& MHC + C \ar[uu]_{Mc^{-1}+\id_C}	\\
    MHMX + C \ar[rr]^{MHMe^{\dagger}+\id_C}	&	& MHMC + C \ar[u]_{MH\extalg{[b,s]}+\id_C}
  }
  $$
  All small inner parts commute by naturality of $\ext{\inl}$ or by $\extalg{[b,s]} \cdot \ext{\inl}_C = \extalg b$ (see Remark \ref{rem:extinl}(3)). So we see that $e^{\dagger}$ is a solution of $e'$ (i.\,e., the big inner part commutes) if and only if it is a solution of $(\ext{\inl}_{HMX}+\id_C) \cdot e'$ (i.\,e., the outside commutes). But for the latter we know
  }
  But this follows from Theorem \ref{thm:sandwich}. Indeed, from the $\ell$-rps $\ol e$ we form the abstract GSOS rule $n:(K+\Var)(H\times\Id)\to HM$ analogously as in Construction~\ref{constr:s}, and the $n$-interpretation is $[b,s]$ for the unique solution $s$ of $\ol e$ (or, equivalently, of $e$).
   Now
   the morphism in \refeq{eq:sandwich_extcia} is the structure $k'$ from the statement of Theorem~\ref{thm:sandwich}.
\end{proof}

\begin{rem}
  Notice that every operation on the terminal coalgebra $C$ definable 
  by a sandwiched $\ell$-rps is also definable by an
  ordinary $\ell$-rps. In fact, the proof of the above theorem gives a
  reduction of a given sandwiched $\ell$-rps to an ordinary one with
  the same solution. However, sandwiched $\ell$-rps extend the
  syntactic format of recursive specifications uniquely specifying 
  operations on $C$. 
\end{rem}

\subsection{Modularity, again}
\label{sec:modagain}

We formulated  modularity principles in the Summaries~\ref{sum:1}
and~\ref{sum:2}. The same principles apply in the ``sandwiched'' case, mutatis mutandis.

Furthermore, we have modularity at the level of cia's:
\begin{enumerate}[(1)]
\item
  For $\ell$-srps's the same modularity principle as discussed in
  Remark~\ref{rem:comprps}(2) applies. Given an $\ell$-srps $e: V(H
  \times \Id) \to \ext K H \ext{K+V}$, then its unique solution $s: VC
  \to C$ arises as the unique solution of the $\ell$-rps $\ol e: V(H
  \times \Id) \to H \ext{K +V}$. Thus, as before one can form the abstract
  GSOS rule $n: F(H\times \Id) \to H \ext F$ for $F = K + V$ and use
  operations of type $F$ as givens in subsequent $n$-(s)rps's. 

\item In addition, one can readily formulate and prove a version of
  Proposition~\ref{prop:order} for sandwiched $\ell$-rps's. We leave this
  straightforward task as an exercise for the reader. 
\end{enumerate}

\section{Applications}
\label{sec:app}

In this section we present five applications illustrating how to use our results from Section \ref{sec:cias}--\ref{sec:lambdarps} to obtain unique solutions of recursive definitions in five different areas of theoretical computer science.

\medskip\noindent
{\bf 6.1.\ \ Process Algebras.} 
Recall Example~\ref{ex:dist}(3) where $HX = \Pow_\kappa (A \times
X)$. We shall first explain more in detail how the abstract GSOS rule
$\ell$ is obtained. Recall that $K$ is the polynomial
functor corresponding to the types of the CCS combinators, i.\,e., $KX$
is a coproduct of the following components (we shall denote elements in each component
by the corresponding flat terms):

\begin{enumerate}[(1)]
\item $A \times X$ with elements $a.x$, $a\in A$, for prefixing,
\item $\coprod_{n < \kappa} X^n$ with the $n$-the component consisting
  of elements $\sum_{i=1}^n x_i$, for summation,\footnote{The empty
    sum (for $n=0$) is denoted by $0$ as usual.}
\item $X \times X$ with elements $x_1|x_2$, for parallel composition,
\item $\coprod_f X$, where $f$ ranges over functions on
the action set $A$ with $\overline{f(a)}=f(\bar a)$ and $f(\tau) =
\tau$, with elements $x[f]$, for relabeling, and
\item $\coprod_{L \subseteq A\setminus\{\,\tau\,\}} X$ with elements
  $x \backslash L$, for restriction.
\end{enumerate}
The abstract GSOS rule $\ell: K(H\times\Id) \to H\ext K$ is given by
the sos rules in Example~\ref{ex:dist}(3) in 
terms of the components of the coproduct $K(H\times\Id)$, i.\,e., for each
combinator separately (in the same order as above):
\begin{enumerate}[(1)]
\item $\ell_X(a, S, x) = \{(a, x)\}$ where $S \subseteq A \times X$,
\item $\ell_X((S_i, x_i)_{i < n}) = \bigcup_{i < n} S_i$ for every $n < \kappa$, where $(S_i)_{i < n}$ is
  an $n$-tuple of sets $S_i \subseteq A \times X$,
\item $\ell_X(S_1, x_1, S_2, x_2)$ is given by the union of the three sets
$$\{(a,x|x_2) \mid (a,x)\in S_1\}\,\text{,}\quad \{(a,x_1|x) \mid (a,x)\in S_2\}\quad \text{and}$$
$$\{(\tau,x|y) \mid (a,x)\in S_1, (\bar a,y)\in S_2 \text{ for some } a \in A \setminus \{\tau\}\},$$
where $S_1, S_2 \subseteq A\times X$,
\item $\ell_X(S, x) = \{(f(a), y[f]) \mid (a,y) \in S\}$, and
\item $\ell_X(S, x) = \{(a,y\backslash L) \mid (a,y) \in S, a,\bar a \not\in L\}$.
\end{enumerate}
The form of these definitions is very similar to the ones given by
Aczel~\cite{aczel} in the setting of non-well-founded set theory. We already
mentioned the $\ell$-interpretation $b: K C \to C$ giving the desired
operations on CCS agents, and this gives the two new cia structures for
$H\ext K$ and $\ext K H \ext K$ as in Theorems~\ref{thm:distcia}
and~\ref{thm:sandwich}. 

\takeout{ 
\begin{rem}
  \smnote{I'd delete this remark.}\dsnote{Ok with me}
  If we replaced the second component $\coprod_{n < \kappa} X^n$ of $K$ by
  $\Pow_\kappa X$ we still have an abstract GSOS rule.
  Furthermore, in both cases the induced (binary) operation of summation is automatically 
  commutative, associative and idempotent: these three laws that have to be
  proved in process theory come ``for free'' by
  encoding them in the 
  abstract GSOS rule using the union operation.
\end{rem}} 

Now let us recall Milner's solution theorem for CCS agents
from~\cite{milner}. Suppose that $E_i$, $i \in I$, are agent
expressions with the free variables $x_i$, $i \in I$. Suppose further
that each variable $x_j$ in each $E_i$, $i,j\in I$ is \emph{weakly guarded}, i.\,e.,
it only occurs within the scope of some prefix combinator $a.-$. Then there
is a unique solution of the system 
\[
x_i = E_i, \qquad i \in I,
\]
of mutually recursive equations. More precisely, let $\sim$ denote strong
bisimilarity, and let $E_i[\vec{P}/\vec{x}]$ denote simultaneous
substitution of $P_j$ for $x_j$ for every $j$. Then we have
\begin{thm} {\rm\cite{milner}}
  \label{thm:proc}
  There exist, up to $\sim$, unique CCS agents $P_i$ such that $P_i \sim E_i[\vec{P}
  / \vec{x} ]$ holds for each $i \in I$. 
\end{thm}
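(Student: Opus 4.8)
The plan is to recognise the weakly guarded system $x_i = E_i$ as a guarded equation morphism for one of the cia structures built in Section~\ref{sec:cias}, and then read off Theorem~\ref{thm:proc} from the existence and uniqueness of its solution. Throughout write $X=\{x_i : i\in I\}$ and $M=\ext K$. A recursion-free CCS agent expression with free variables among the $x_i$ is exactly an element of $MX$; behaviours of agents modulo $\sim$ are the elements of the terminal coalgebra $C$ (for $HX=\Pow_\kappa(A\times X)$, as described in Example~\ref{ex:dist}(3)); and $\extalg b: MC\to C$ is the ``substitute--and--evaluate'' map, so that for $\vec P: X\to C$ the behaviour of $E_i[\vec P/\vec x]$ is $\extalg b\cdot M\vec P\,(E_i)\in C$. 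Since $C$ is the terminal $H$-coalgebra, two agents are strongly bisimilar iff they denote the same element of $C$, so ``$P_i\sim E_i[\vec P/\vec x]$'' is literally the equation $P_i=\extalg b(E_i[\vec P/\vec x])$ in $C$.

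The first step is to encode the system. Weak guardedness says that in each $E_i$ every occurrence of a variable lies within the scope of some prefix combinator $a.-$. Using this I would rewrite each $E_i$ by ``capturing at the outermost prefixes'': $E_i$ is obtained from a $K$-term $u_i$ (with possibly some closed leaves) by substituting, at certain holes, subterms of the form $\sum_j a_j.v_j$ with $v_j\in MX$; and a sum of prefixed terms $\sum_j a_j.v_j$ is precisely the element $\{(a_j,v_j)\}_j$ of $HMX=\Pow_\kappa(A\times MX)$ read through $c^{-1}$ (the empty sum $0$ being $\emptyset\in HMX$). Any maximal subterm of $E_i$ containing no variables at all is closed, and I would pre-evaluate it to its denotation in $C$ so that it may be used as a parameter. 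This yields a guarded equation morphism $e: X\to \T{MHM}(X+C)$, guardedness being exactly the fact that no $E_i$ is a bare variable (a bare variable is not weakly guarded). By Corollary~\ref{cor:sol_2} --- equivalently, in the parameter-free case, directly by Theorem~\ref{thm:sandwich} and Theorem~\ref{thm:sandwich_ellsolution} --- the morphism $e$ has a unique solution $\sol e: X\to C$ in the cia $(C,k')$ for $MHM$.

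The second step is to check that solutions of $e$ in this cia are exactly the tuples $\vec P$ with $P_i\sim E_i[\vec P/\vec x]$. Unravelling $k'=\extalg b\cdot Mc^{-1}\cdot MH\extalg b$ on the decomposition of $E_i$: the factor $MH\extalg b$ evaluates each inner $M$-term $v_j[\vec P]$ to an agent, $Mc^{-1}$ turns each captured behaviour $\{(a_j,Q_j)\}_j$ back into the agent $\sum_j a_j.Q_j$, and the outer $\extalg b$ evaluates the surrounding $K$-term $u_i$; since the decomposition was only a regrouping of $E_i$, the net result is $\extalg b(E_i[\vec P/\vec x])$. Hence the cia-solution equation $\vec P=[k',C]\cdot(MHM\vec P+C)\cdot e$ reads componentwise $P_i=\extalg b(E_i[\vec P/\vec x])$, i.e. $P_i\sim E_i[\vec P/\vec x]$. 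Taking $P_i:=\sol e(x_i)$ gives existence, and any other tuple $\vec{P'}$ with $P_i'\sim E_i[\vec{P'}/\vec x]$ is again a solution of $e$, hence equals $\vec P$ by the uniqueness clause of Corollary~\ref{cor:sol_2}; this is Theorem~\ref{thm:proc}.

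I expect the main obstacle to be the encoding in the second paragraph: making the ``capture the guards / pre-evaluate closed subterms'' decomposition of a weakly guarded term fully precise (handling nested prefixes, and closed subterms occurring in non-guarding positions, which is why parameters and the $\T{MHM}(X+C)$ format rather than a plain flat morphism are convenient), together with the verification in the third paragraph that this purely syntactic regrouping is a bona fide guarded equation morphism whose categorical solutions coincide with Milner's bisimilarity equations. Everything after that is routine diagram bookkeeping with the defining equations of $b$, $\extalg b$ and $k'$ from Theorems~\ref{thm:interp_gsos_rule}, \ref{thm:distcia} and~\ref{thm:sandwich}.
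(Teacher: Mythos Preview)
Your proposal is correct, and you even identify the shortcut the paper actually takes: the paper dispatches Theorem~\ref{thm:proc} in a single sentence, observing that a weakly guarded system $x_i=E_i$ is ``essentially the same as'' a map $X\to \ext K H\ext K X = MHMX$, whence Theorem~\ref{thm:sandwich_ellsolution} gives the unique solution directly. This is precisely the route you mention parenthetically (``equivalently, in the parameter-free case, directly by Theorem~\ref{thm:sandwich} and Theorem~\ref{thm:sandwich_ellsolution}''), and it is all the paper does.

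The extra machinery you layer on top---pre-evaluating closed subterms to parameters in $C$, lifting to $\T{MHM}(X+C)$, and invoking Corollary~\ref{cor:sol_2}---is sound but unnecessary here. Closed CCS subexpressions (such as $0$) are already elements of $MX$ via the signature constants, so nothing needs to be pushed into $C$; and the $E_i$ are finite terms, so the infinitary $\T{MHM}$ adds nothing. The hierarchy in Section~\ref{rem:eqns} guarantees your detour lands on the same solution, so nothing is lost, but nothing is gained either. One small correction: your guardedness justification (``no $E_i$ is a bare variable'') is not the right criterion---e.g.\ $x_j + a.x_k$ is not a bare variable yet fails weak guardedness. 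What actually makes your $e$ guarded in the $\T{MHM}$ sense is that each $E_i$, being weakly guarded, already lands in $MHMX$ and hence in the image of $\kappa^{MHM}$ rather than $\eta^{MHM}$; the encoding step itself is where weak guardedness is consumed.
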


It is easy to see that this theorem is a consequence of our
Theorem~\ref{thm:sandwich_ellsolution} because a system $x_i = E_i$ where each
variable is weakly guarded is essentially the same as  a map $X \to \ext K H \ext K
X$, where $X = \{x_i\mid i \in I\}$. 

Furthermore, our Theorem~\ref{thm:sandwich}
allows us to obtain unique solutions of flat equation morphisms $X \to
\ext K H \ext K X + C$. The extra summand $C$ allows us to use
constant agents in recursive specifications. 

\begin{exa}
  \label{ex:comp}
  Consider the recursive equation
  \[
  x = a.(x|c) + b.0
  \]
  from the introduction, where $0$ denotes the empty sum (i.\,e.~the
  ``inactive agent''). For any set $X$, we identify elements of 
  $
  H\ext K X = \Pow_\kappa(A \times \ext K X)
  $
  with terms of the form
  \[
  \sum\limits_{i < \kappa} a_i.t_i
  \]
  (modulo associativity, commutativity, idempotency and the unit laws
  for the sum), where $a_i\in A$ and $t_i$ is a process term on variables from $X$
  for every $i < \kappa$. 

  So the above equation clearly gives an $\ell$-equation $f_0: \{x\}
  \to H \ext K\{x\}$, and so we have its unique solution $\sol f_0(x) =
  P$ in $C$ from Theorem~\ref{thm:ellsolution}. Now consider the
  following system
  \begin{equation}
    \label{eq:yP}
    y = b.(y+z)|a.z 
    \qquad
    z = P.
  \end{equation}  
  This gives a flat equation morphism $\{y,z\} \to \ext K H \ext
  K\{y,z\} + C$, which has a unique solution in $C$ by
  Theorem~\ref{thm:sandwich}. Next recall Remark~\ref{rem:comp} and
  notice that the latter equation morphism has the form $\sol f \after
  e$, where $e: \{y,z\} \to \ext K H \ext K \{y,z\} + \{x\}$ is given
  by two equations: the first equation in~\refeq{eq:yP} and $z = x$, and $f: \{x\} \to
  \ext K H \ext K \{x\} + C$ arises
  from $f_0$ by forming 
  \[
  f = \inl \cdot (\eta H \ext K)_{\{x\}}
  \]
  as explained in Section~\ref{rem:eqns}(1). So we see that we obtain
  the same solution when we use the constant $P \in C$ in the
  recursive equation~\refeq{eq:yP} in lieu of forming the composed
  system
  \[
  x = a.(x|c) + b.0 \qquad y = b.(y+z)| a.z \qquad z = a.(x|c) + b.0
  \]
  that corresponds to the equation morphism $f \plus e$. 
  \takeout{ 
  So, for example, we can obtain the agent $P$ as the
  unique solution of
  $$x = a.(x|c) + b.0,$$
  where $0$ denotes the empty sum (the so-called ``inactive agent'')
  in the introduction and then use it in a system like
  $$x = b.(x + y) \qquad y = P$$
  which has a unique solution by Theorem~\ref{thm:sandwich}.}
\end{exa}
We now turn to an applications of Theorem~\ref{thm:sandwich_rps} that
shows how to uniquely define new process combinators.  Suppose we want
to define the binary combinator ``$\alt$'' which performs alternation
of two processes. For its definition we shall first need another
binary combinator, sequential composition of two processes (denoted by
the infix ``$;$''). This is defined by the following operations rules
in GSOS format:
\[
\frac{E \stackrel{a}{\to} E'}{E;F \stackrel{a}{\to} E'; F}
\qquad
\frac{F \stackrel{a}{\to} F'}{0;F \stackrel{a}{\to} F'}\,.
\]
Here we suppose that the latter combinator is
already included in our basic calculus---more precisely, we add a
sixth coproduct component $X\times X$ to $K$ for sequential
composition in the above definition and complete $\ell$ above by
$$\ell_X(S_1,x_1,S_2,x_2) = \begin{cases}\{(a,x;x_2) \mid (a,x)\in S_1\} & \text{if }S_1\neq\emptyset \\ S_2 & \text{if }S_1=\emptyset\end{cases}$$
for this coproduct component. The $\ell$-interpretation then gives indeed the desired combinator
for sequential composition. Moreover, as a consequence of
Theorem~\ref{thm:ellsolution} (now applied to the extended $K$), we
see that that Theorem~\ref{thm:proc} still holds for the calculus including this sixth combinator. Now for this extended $\ell$ we give a sandwiched $\ell$-rps $e:\Var(H\times\Id)\to \ext KH\ext{K+\Var}$, where $\Var X=X\times X$, in order to define the combinator $\alt$:
$$e_X(S_1,x_1,S_2,x_2) = \begin{cases}S_1;\{(a,x;\alt(x_1,x_2)) \mid (a,x)\in S_2\} & \text{if }S_2\neq\emptyset \\ \{(a,x;\alt(x_2,x_1)) \mid (a,x)\in S_1\} & \text{if } S_1\neq\emptyset, S_2=\emptyset \\ \emptyset & \text{if }S_1=S_2=\emptyset \,\text{.}\end{cases}$$
Notice that the term in the first line of this definition does not lie
in $H\ext{K+\Var}(X)$, so Theorem~\ref{thm:ellrps} cannot be
applied. Notice also that the above definition of $e$ cannot be
directly translated into an operational rule in GSOS format as for the
sequential composition before. However, $e$ could first be transformed into
an ordinary $\ell$-rps using the construction from the proof of
Theorem~\ref{thm:sandwich_rps} and then translated to a set of
operational rules in GSOS format. More directly, observe that the term
in the first line does lie in $\ext KH\ext{K+\Var}$(X), so
Theorem~\ref{thm:sandwich_rps} tells us that $e$ has a unique solution
$s:C\times C\to C$. It is not difficult to see that $s = \mathsf{alt}$
is the desired alternation combinator. Furthermore,
Theorem~\ref{thm:sandwich_rps} tells us that $s$ extends the cia
structure on $C$ from Theorem~\ref{thm:sandwich} applied to the
abstract GSOS rule $\ell$. And, as before, this means that Theorem~\ref{thm:proc} remains true for the calculus extended by sequential composition and alternation of processes, without further work.


\takeout{ 
Finally, suppose we want to define two unary combinators $\op_1$ and
$\op_2$ by the rule
$$
\frac{E \stackrel{a}{\to} F}
{\op_1 (E) \stackrel{a}{\to} F | \op_2(F + E)\quad \op_2 (E) \stackrel{a}{\to} F + \op_1(F | E) }\,\text{.}
$$
Then Theorem~\ref{thm:ellrps} tells us that this rule uniquely
determines the two combinators. Indeed, we translate the rule into an
$\ell$-rps: let $\Var = \Id + \Id$ (two unary combinators are
specified) and let $e: \Var (H\times\Id) \to H\ext{K +\Var}$ be given by 
$$e(S, x) = \{(a, y | \op_2(y + x)) \mid (a,y) \in S\}$$
on the first component and 
$$e(S, x) = \{(a, y + \op_1(y | x)) \mid (a,y) \in S\}$$
on the second one. The unique solution of $e$ gives us two 
new unary combinators on $C$ extending its cia structure. Again this means
that Theorem~\ref{thm:proc} remains true for CCS extended by the two
combinators $\op_1$ and $\op_2$ without further work.
} 

\medskip
\noindent
{\bf 6.2.\ \ Streams.} Recall from Example~\ref{ex:dist}(2) that here we take $HX = \Real
\times X$ and we have $C = \Real^\omega$ with the structure given by
$\< \hd, \tl\>: C \to \Real \times C$. %
\takeout{
Further recall that stream operations are defined by behavioral differential equations. For example, the componentwise addition of two streams $\sigma$ and $\tau$ is specified by
\begin{equation} \label{eq:stream_addition}
(\sigma + \tau)_0 = \sigma_0 + \tau_0 \qquad\qquad
(\sigma + \tau)' = (\sigma' + \tau') \,\text{,}
\end{equation}
and the shuffle product from the introduction is specified by
\begin{equation} \label{eq:shuffle_product_again}
(\sigma \otimes \tau)_0 = \sigma_0\cdot\tau_0 \qquad\qquad
(\sigma \otimes \tau)' = (\sigma \otimes \tau' + \sigma'\otimes \tau) \,\text{.}
\end{equation}
}%
Rutten gives in \cite{rutten_stream} a general theorem for the existence of the solution of systems of behavioral differential equations. We will now recall this result and show that it is a special instance of our Theorem~\ref{thm:ellrps}. For a system of behavioral differential equations one starts with the signature $\Sigma$ of all the operations to be specified. One uses an infinite supply of variables, and for each variable $x$ there is also a variable $x'$ and a variable $x(0)$ (also written as $x_0$). For each operation symbol $f$ from $\Sigma$ one specifies
\begin{equation} \label{eq:general_beh_diff_eq}
f(x_1,\dots,x_n)_0 = h_f(x_1(0),\dots,x_n(0)) \qquad\qquad
f(x_1,\dots,x_n)' = t_f \,\text{,}
\end{equation}
where $h_f$ denotes a function from $\Real^n$ to $\Real$ and $t_f$ is a term built from operation symbols from $\Sigma$ on variables $x_i$, $x_i'$ and $x_i(0)$, $i=1,\dots,n$. Theorem A.1 of \cite{rutten_stream} asserts that for every $f$ from $\Sigma$ there exists a unique function $(\Real^\omega)^n\to\Real^\omega$ satisfying the equation \refeq{eq:general_beh_diff_eq} above.

We shall now show that every system of behavioral differential equations~\refeq{eq:general_beh_diff_eq} gives rise to an $\ell$-rps for a suitable abstract GSOS rule $\ell$. To this end let $KX=\Real$ be the constant functor and let
$$
  \ell = (
  \xymatrix@1{
    K(H\times\Id)
    \ar[r]^-{\ell'}
    &
    HK
    \ar[r]^-{H\kappa}
    &
    H\ext K
  }
  )
$$
with $\ell'$ given by $\ell_X'(r) = (r,0)$. Then the $\ell$-interpretation $b:\Real\to C$ assigns to every $r\in\Real$ the stream $b(r) = (r,0,0,\dots)$.

Now given the system \refeq{eq:general_beh_diff_eq} let $\Var$ be the polynomial functor associated with $\Sigma$, see Example~\ref{ex:sigma_algebras}. Notice that $K+\Var$ is the polynomial functor of the signature $\Sigma$ extended with a constant symbol $r$ for every real number $r$. We translate the system \refeq{eq:general_beh_diff_eq} into an $\ell$-rps $e:V(H\times\Id)\to H\ext{K+\Var}$ as follows. For every $f$ from $\Sigma$ the corresponding component of $e_X$ is defined by
$$
e_X((r_1,x_1',x_1),\dots,(r_n,x_n',x_n)) = (h_f(r_1,\dots,r_n),\ol{t_f})\,\text{,}
$$
where the term $\ol{t_f}\in \ext{K+\Var}(X)$ is obtained by replacing in $t_f$ all variables $x_i(0)$ by the constant $r_i$. Notice also, that here $h_f(r_1,\dots,r_n)$ is a real number (the value of $h_f$ at $(r_1,\dots,r_n)$) whereas in \refeq{eq:general_beh_diff_eq} we have formal application of $h_f$ to the variables $x_i(0)$.

It is now straightforward to verify that a solution of $e$ in $C$ corresponds precisely to a solution of the system \refeq{eq:general_beh_diff_eq}. Thus, we obtain from Theorem~\ref{thm:ellrps} the

\begin{thm}[\cite{rutten_stream}] \label{thm:beh_diff_eq_streams}
Every system of behavioral differential equations has a unique solution.
\end{thm}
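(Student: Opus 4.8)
The plan is to exhibit the system~\refeq{eq:general_beh_diff_eq} as a concrete instance of an $\ell$-rps for the abstract GSOS rule $\ell$ described just above, and then to invoke Theorem~\ref{thm:ellrps} directly. Recall the data: $HX = \Real\times X$ with terminal coalgebra $C = \Real^\omega$; $K$ the constant functor $KX = \Real$ with $\ell$ given by $\ell'_X(r) = (r,0)$, whose $\ell$-interpretation is $b: \Real\to C$, $b(r) = (r,0,0,\dots)$; and $V$ the polynomial functor associated with $\Sigma$ (Example~\ref{ex:sigma_algebras}), so that $K+V$ is the polynomial functor of $\Sigma$ enriched by a constant symbol $r$ for every real $r$. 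First I would check that the components indexed by the operation symbols of $\Sigma$, sending $((r_1,x_1',x_1),\dots,(r_n,x_n',x_n))$ to $(h_f(r_1,\dots,r_n),\ \ol{t_f})$ — where $\ol{t_f}\in\ext{K+V}X$ is $t_f$ with each leaf $x_i(0)$ replaced by the constant symbol $r_i$ — assemble into a natural transformation $e: V(H\times\Id)\to H\ext{K+V}$. This is routine, since renaming the leaves $x_i,x_i'$ of $\ol{t_f}$ along a map $X\to Y$ is precisely the action of $\ext{K+V}$ on that map, while $h_f(r_1,\dots,r_n)$ and the constants $r_i$ do not depend on $X$.

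The heart of the argument is to unwind Diagram~\refeq{eq:lambdainterp}, which defines an interpreted solution $s: VC\to C$ of $e$, in the case at hand. Such an $s$ is the same as a family of operations $s_f: C^n\to C$, one for every operation symbol $f$ of $\Sigma$ of arity $n$. Chasing a tuple of streams $(\sigma_1,\dots,\sigma_n)$ around~\refeq{eq:lambdainterp}: the map $V\langle c,\id_C\rangle$ produces the tuple of triples $((\sigma_1(0),\sigma_1',\sigma_1),\dots,(\sigma_n(0),\sigma_n',\sigma_n))$; then $e_C$ sends this to $(h_f(\sigma_1(0),\dots,\sigma_n(0)),\ \ol{t_f})$ with the $\sigma_i,\sigma_i'$ sitting at the corresponding leaves and the former $x_i(0)$-leaves replaced by the constant $\sigma_i(0)$; then $H\extalg{[b,s]}$ leaves the head component untouched and evaluates the term $\ol{t_f}$ in $C$, interpreting every constant $r$ as $b(r)=(r,0,0,\dots)$ — which is exactly the identification of a real with a ``constant'' stream used in~\cite{rutten_stream} — and every operation symbol of $\Sigma$ by the corresponding component of $s$; finally $c^{-1}$ prepends the head to the tail. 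Hence $s$ is an interpreted solution of $e$ if and only if, for every $f\in\Sigma$ and all streams $\sigma_1,\dots,\sigma_n$,
\[
s_f(\sigma_1,\dots,\sigma_n)_0 = h_f(\sigma_1(0),\dots,\sigma_n(0)),\qquad
s_f(\sigma_1,\dots,\sigma_n)' = t_f ,
\]
the second right-hand side being $t_f$ with its operation symbols interpreted by $s$, its variables $x_i,x_i'$ instantiated by $\sigma_i,\sigma_i'$, and its leaves $x_i(0)$ by $\sigma_i(0)$. This is precisely~\refeq{eq:general_beh_diff_eq}, so interpreted solutions of $e$ in $C$ coincide with solutions of the system.

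With this correspondence in place, Theorem~\ref{thm:ellrps} — which applies to every $\ell$-rps with no guardedness or other side-condition — yields a unique interpreted solution $s$ of $e$ in $C$, and therefore the system~\refeq{eq:general_beh_diff_eq} has a unique solution. I expect the only genuine obstacle to be the bookkeeping in the middle step: lining up the categorical evaluation $H\extalg{[b,s]}$ with the informal reading of the right-hand terms $t_f$, and in particular making explicit that a real number occurring as a leaf of $t_f$ is to be read as the stream $(r,0,0,\dots)$, which is forced here by our choice of $\ell$. The remaining ingredients — naturality of $e$ and the appeal to Theorem~\ref{thm:ellrps} — are immediate.
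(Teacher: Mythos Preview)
Your proposal is correct and follows essentially the same approach as the paper: translate the system~\refeq{eq:general_beh_diff_eq} into an $\ell$-rps $e$ for the constant functor $K$ and the polynomial functor $V$ associated with $\Sigma$, observe that interpreted solutions of $e$ coincide with solutions of the system, and invoke Theorem~\ref{thm:ellrps}. The paper leaves the correspondence between solutions of $e$ and solutions of the system as ``straightforward to verify'', whereas you spell out the chase around Diagram~\refeq{eq:lambdainterp}; this is a welcome elaboration but not a different argument.
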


\begin{exa}
For the system given by \refeq{eq:stream_addition} and \refeq{eq:shuffle_product_again} we have $\Var X = X\times X + X\times X$ and $e$ given componentwise as follows: for the $+$ component we have
\begin{equation} \label{eq:stream_addition_ell-rps}
e_X((r,x',x),(s,y',y)) = (r+s,x'+y')
\end{equation}
and for the $\otimes$ component we have
\begin{equation} \label{eq:shuffle_product_ell-rps}
e_X((r,x',x),(s,y',y)) = (r\cdot s,(x\otimes y')+(x'\otimes y)) \,\text{.}
\end{equation}
\end{exa}

Observe that the systems of behavioral differential equations do not
distinguish between given operations and newly defined ones. However,
our result in Theorem~\ref{thm:ellrps} allows us to make this
distinction, and the modularity principle for solutions of
$\ell$-rps's (cf.~Summary~\ref{sum:2}) means that operations specified by
behavioural differential equations may be used in subsequent
behavioral differential equations as given operations in the terms
$t_f$ from \refeq{eq:general_beh_diff_eq}. We believe this
modularity of unique solutions for behavioral differential equations is a new
result.

\begin{exa}
Take $\Var X = X\times X$ and the $\ell$-rps $e: V(H\times \Id) \to
H\ext{K+V}$ given by \refeq{eq:stream_addition_ell-rps} whose solution
is the operation of stream addition $+: VC \cdot C$. As shown in
Construction~\ref{constr:s}, $\ell$ and $e$ yield the abstract GSOS
rule $n:F(H\times\Id)\to H\ext F$ for $F=K+\Var$. Now let $\Var_1X =
X\times X$. Then \refeq{eq:shuffle_product_ell-rps} yields an $n$-rps
$e_1: V_1(H\times \Id) \to H\ext{F+V_1}$ whose solution is the shuffle
product $\otimes: V_1 C \to C$.
\end{exa}

Next, we present an example illustrating Proposition~\ref{prop:order}.
\begin{exa}
Continuing the previous example, consider the convolution product of streams specified by
$$
(\sigma \times \tau)_0 = \sigma_0\cdot\tau_0 \qquad\qquad
(\sigma \times \tau)' = (\sigma' \times \tau + \sigma_0 \times \tau') \,\text{,}
$$
see \cite{rutten_stream}. Let $\Var_2X = X\times X$ and let the
$n$-rps $e_2: V_2(H \times \Id) \to H\ext{F+V_2}$ be given by
\begin{equation} \label{eq:convolution_product_ell-rps}
(e_2)_X((r,x',x),(s,y',y)) = (r\cdot s,(x'\times y)+(r\times y')) \,\text{.}
\end{equation}
Notice that this illustrates why we introduced the constants $r$; in this way we are able to deal
 with $\sigma_0$ in the equation for $(\sigma\times\tau)'$. Then the unique solution of $e_2$ is the convolution product as expected. Proposition~\ref{prop:order} asserts that the extended cia structure for $H\ext F$, $F=K+\Var+\Var_1+\Var_2$ on $C$ does not depend on the order of taking the solution of \refeq{eq:shuffle_product_ell-rps} and \refeq{eq:convolution_product_ell-rps}---either way this is given by the constants coming from $b$ and the operations of stream addition as well as convolution and shuffle product.
\end{exa}

\takeout{ 

  Our results also allow us to obtain unique solutions of
  specifications that go beyond behavioral differential equations, and
  we now provide one example.
\begin{exa}\label{ex:unzip}
We give a sandwiched rps w.\,r.\,t.~$\ell$ defining the binary operation $\sigma\circ\tau$ assigning to two streams $\sigma$ and $\tau$ the ``under-sampled'' zipped stream
$$
(\sigma_0, \sigma_1, \ldots) \circ (\tau_0, \tau_1, \ldots) = 
(0,\sigma_0,0,\tau_0,0,\sigma_1,0,\tau_1,\dots) \,\text{.}
$$
Indeed, let $KX=1+\Real+X\times X+X\times X$ be the polynomial functor
for the signature having a constant symbol $\mathds{X}$, constant
symbols for every $r\in\Real$, and two binary symbols $+$ and
$\times$. The abstract GSOS rule $\ell:K(H\times\Id)\to H\ext K$ is
given by the following system of behavioral differential equations:
\[
\begin{array}{rcl@{\qquad}rcl}
  \mathds{X}_0 & = & 0 
  & 
  \mathds{X}' & = & 1 
  \\
  r_0 & = & r 
  & 
  r' & = & 0 
  \\
  (\sigma + \tau)_0 & = & \sigma_0 + \tau_0 
  & 
  (\sigma + \tau)' & = & \sigma' + \tau ' 
  \\
  (\sigma \times \tau)_0 & = & \sigma_0 \cdot \tau_0 
  & 
  (\sigma \times \tau)' & = & (\sigma' \times \tau) + (\sigma_0 \times \tau')
\end{array}
\]
\takeout{ 
for each of the coproduct components separately by
\[
\ell_X(\mathds{X})=(0,1)
\qquad\text{and}\qquad
\ell_X(r)=(r,0)
\]
for the first component and for the $\Real$ component, respectively,
and similarly as in~\refeq{eq:stream_addition_ell-rps}
and~\refeq{eq:convolution_product_ell-rps} for $+$ and $\times$:
\begin{eqnarray*}
  \ell_X((r,x',x),(s,y',y)) & = & (r+s, x'+y'), \\
  \ell_X((r,x',x),(s,y',y)) & = & (r\cdot s, (x'\times y) + (r \times y')).
\end{eqnarray*}
}
Now let $\Var X=X\times X$ (i.\,e., $\Var$ corresponds to one binary
operation $\circ$). For any set $X$, the elements of $\ext K H
\ext{K+V}(X)$ can be identified with terms of the given operations
(i.\,e.~$\mathds{X}$, $r$, $+$ and $\times$) on the set of terms of
the form $r.t$, where $r \in \Real$ and $t$ is a term on $X$ using givens and $\circ$. So let 
\[
e:\Var(H\times\Id)\to \ext KH\ext{K+\Var}
\]
be the $\ell$-srps given by
\[
e_X((r,x',x),(s,y',y)) = \mathds{X} \times r.(y\circ x') \,\text{.}
\]
It has a unique interpreted solution $C\times C\to C$ by
Theorem~\ref{thm:sandwich_rps}, and one easily verifies that this
solution is the desired operation of ``undersampled'' zipping
$(\sigma, \tau) \mapsto \sigma \circ \tau$.
\end{exa}
}


\paragraph{\bf Stream circuits.}
We now turn to another method to define operations on streams---stream circuits \cite{rutten_tut}, which are also called (signal) flow graphs in the literature. We shall demonstrate that specification of operations by stream circuits arises as a special case of our results. Stream circuits are
usually defined as pictorial compositions of the following basic
stream circuits
$$
\begin{array}{c@{\quad}p{3cm}@{\qquad}c@{\quad}p{3cm}}
\xy
\POS (000,000) = "a"
   , (010,000) = "b"
\ar@{-}^r "a";"b"
\endxy
&
$r$-multiplier
&
\xy
\POS (000,000) = "o"
   , (-05,000) *+[F-]{+} = "+"
   , (-10,-03) = "i1"
   , (-10,003) = "i2"
\ar@{-} "o";"+"
\ar@{-} "i1";"+"
\ar@{-} "i2";"+"
\endxy
&
adder
\\
\xy
\POS (000,000) = "i"
   , (005,000) *+[F-]{C} = "C"
   , (010,-03) = "o1"
   , (010,003) = "o2"
\ar@{-} "i";"C"
\ar@{-} "o1";"C"
\ar@{-} "o2";"C"
\endxy
&
copier
&
\xy
\POS (000,000) = "i"
   , (005,000) *+[F-]{r} = "r"
   , (010,000) = "o"
\ar@{-} "i";"r"
\ar@{-} "o";"r"
\endxy
&
register
\end{array}
$$
The $r$-multiplier multiplies all elements in a stream by $r \in
\Real$, the adder performs componentwise addition, the copier yields
two copies of a stream, and the register prepends $r \in \Real$ to a
stream $\sigma$ to yield $r.\sigma$. The stream circuits are then
built from the basic circuits by plugging wires together, and there
may also be feedback (loops). For example the following picture shows
a simple stream circuit: 
\begin{equation}
  \label{eq:circuit}
  \xy
  \POS (000,000) *+{\sigma} = "i"
     , (010,000) *+[F-]{+} = "+"
     , (020,-05) *+[F-]{1} = "r"
     , (030,000) *+[F-]{C} = "C"
     , (040,000) *+{f(\sigma)}  = "o"
     , (030,-05) = "aux1"
     , (010,-05) = "aux2"
  \ar "i";"+"
  \ar "+";"C"
  \ar "C";"o"
  \ar@{-} "C";"aux1"
  \ar@{-} "r";"aux2"
  \ar "aux1";"r"
  \ar "aux2";"+"
  \endxy
\end{equation}
It defines the following unary operation on stream:
\[
f(\sigma)=(1+\sigma_0,1+\sigma_0+\sigma_1,1+\sigma_0+\sigma_1+\sigma_2,\dots).
\]

For our treatment we shall consider the operations presented by
$r$-multipliers, adders and registers as givens. So let $K$ be the polynomial
functor associated with the signature $\Sigma$ given by these operations
(copying will be implicit via variable sharing). In
symbols, $KX = \Real \times X + X \times X + \Real \times X$. Our given
operations are defined by the behavioral differential equations:
$$
\begin{array}{rcl@{\qquad}rcl}
  (r\sigma)_0 & = & r\sigma_0 & (r\sigma)' & = & r\sigma' \\
  (\sigma + \tau)_0 & = & \sigma_0 + \tau_0 & 
  (\sigma + \tau)' & = & \sigma' + \tau' \\
  (r.\sigma)_0 & = & r & (r.\sigma)' & = & \sigma
\end{array}
$$
\sloppypar As explained above these definitions easily give rise to an
abstract GSOS rule $\ell: K(H\times\Id) \to H\ext{K}$ (see also
\cite[Section~3.5.1]{bartels_thesis}). We then get the
$\ell$-interpretation in $C$ and the corresponding extended cia
structures by Theorems~\ref{thm:distcia} and~\ref{thm:sandwich}. A
stream circuit is called \emph{valid} if every loop passes through at
least one register. It is well-known that every finite valid stream
circuit with one input and one output defines a unique stream
function (see~\cite{rutten_tut}). Of course, a similar result holds for
more than one input and output, and we present here a new proof of
this result based on our Theorem~\ref{thm:ellrps}.

\begin{thm}
  \label{thm:stream}
  Every finite valid stream circuit defines a unique stream function at every output.
\end{thm}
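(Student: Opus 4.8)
The plan is to realise a finite valid stream circuit as an $\ell$-rps and then invoke Theorem~\ref{thm:ellrps}. Here $HX=\Real\times X$, $C=\Real^\omega$, and $KX=\Real\times X+X\times X+\Real\times X$ is the functor of the givens (an $r$-multiplier, an adder, a register), with the abstract GSOS rule $\ell:K(H\times\Id)\to H\ext K$ and its interpretation $b:KC\to C$ as just described. First I would fix the combinatorial data: a stream circuit is a finite directed graph built from the four basic circuits with designated input and output ports, and it is \emph{valid} when every directed cycle passes through at least one register. Treating copiers as variable sharing, the signal-bearing wires are the inputs $\sigma_1,\dots,\sigma_m$ together with the output wires of the multipliers, adders and registers. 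The key preliminary observation is that \emph{cutting} every register (turning its output wire into a fresh source and its input wire into a fresh sink) yields an \emph{acyclic} circuit, precisely because no cycle of the original survives; consequently every wire $w$ of this acyclic circuit carries a \emph{finite} term $t_w$ over the signature of the givens applied to the inputs $\sigma_i$ and the register outputs $\rho_1,\dots,\rho_k$. This is where finiteness of the circuit and validity are both consumed.

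Next I would build the $\ell$-rps. Let $\Var$ have one $m$-ary operation symbol $g_j$ for each register $R_j$ (which prepends the value $r_j$), and define $e:\Var(H\times\Id)\to H\ext{K+\Var}$ so that the $g_j$-component of $e_X\bigl((r_1,x_1',x_1),\dots,(r_m,x_m',x_m)\bigr)$ is $\bigl(r_j,\ \widehat{t_{\rho^{\mathrm{in}}_j}}\bigr)$, where $\widehat{t_{\rho^{\mathrm{in}}_j}}\in\ext{K+\Var}(X)$ is obtained from the acyclic-circuit term computing the input wire of $R_j$ by substituting the variable $x_i$ for each input $\sigma_i$ and $g_{j'}(x_1,\dots,x_m)$ for each register output $\rho_{j'}$. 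The head component is the constant $r_j$ because the head of a register's output is always $r_j$; the tail component is the term $t_{\rho^{\mathrm{in}}_j}$ itself rather than its stream-derivative, since $(r.\rho)'=\rho$, so only the variables $x_i$ (never the $x_i'$) occur. This is a bona fide $\ell$-rps, and there is no guardedness side condition in Theorem~\ref{thm:ellrps} to check.

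Then I would translate back. By Theorem~\ref{thm:ellrps}, $e$ has a unique interpreted solution $s:\Var C\to C$, i.e.\ unique operations $g_j\colon(\Real^\omega)^m\to\Real^\omega$; unwinding the solution triangle~\refeq{eq:lambdainterp} (the given part being $b$) shows that this asserts exactly that $s(g_j)(\vec\sigma)$ is the stream obtained by prepending $r_j$ to the stream computed by the acyclic sub-circuit at the input wire of $R_j$ from the inputs $\sigma_1,\dots,\sigma_m$ and the register outputs $s(g_1)(\vec\sigma),\dots,s(g_k)(\vec\sigma)$ --- that is, $(s(g_j))_j$ is precisely a family of stream functions making the register equations hold. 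Every other wire, and in particular each output wire $\omega_l$, is then determined by applying the interpreted givens along the acyclic term $t_{\omega_l}$ evaluated at the $\sigma_i$ and the $s(g_j)(\vec\sigma)$; this gives the stream function at $\omega_l$. Conversely, any family of wire functions satisfying all the gate and register equations restricts (on the registers) to a solution of $e$, hence coincides with $(s(g_j))_j$ there and therefore with the above function on every output. Uniqueness is thus inherited from Theorem~\ref{thm:ellrps}.

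All the fixed-point content is supplied by Theorem~\ref{thm:ellrps}; the actual work, and the main obstacle, is the faithful bookkeeping of the reduction: proving that cutting the registers produces an acyclic graph (the step that genuinely uses validity), that the terms $\widehat{t_{\rho^{\mathrm{in}}_j}}$ and $t_{\omega_l}$ are well-defined finite elements of $\ext{K+\Var}(X)$, and that interpreted solutions of $e$ correspond bijectively --- in the sense above --- to the stream functions the circuit is meant to compute. None of this is deep, but it must be laid out carefully; once it is, the theorem follows immediately.
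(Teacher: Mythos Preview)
Your proposal is correct and, like the paper, reduces the claim to Theorem~\ref{thm:ellrps} by encoding the circuit as an $\ell$-rps; the encodings differ in their bookkeeping. The paper places an operation symbol in $\Var$ for every register \emph{and} every output (each with arity the number of backwards-reachable inputs), factors the rps through $e':\Var H\to H\ext{K+\Var}$, and therefore reconstructs each input stream as $r_i.x_i$ via the prefix operation; for outputs it must then compute the head value explicitly by evaluating the acyclic term at the input heads and register initial values. You instead place only the registers into $\Var$ (all with the uniform arity $m$), exploit the $\Id$-component of $H\times\Id$ to refer to each input by the variable $x_i$ directly, and recover the output functions post hoc by interpreting the acyclic residual term at $\vec\sigma$ and the $s(g_j)(\vec\sigma)$. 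Your route is leaner---no head computation for outputs, no prefix-reconstruction of inputs---and it isolates the role of validity (acyclicity after cutting the registers) more cleanly. The paper's route has the side benefit that the output functions are themselves part of the $\ell$-rps solution and hence, by Theorem~\ref{thm:ellrps}, extend the cia structure on $C$; this is what underlies the modularity remark immediately following the theorem, though it is not needed for the theorem as stated.
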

\begin{proof}
Let a finite valid stream circuit be given. We explain how to construct
an $\ell$-rps from the circuit. Notice first that the wires in a circuit can
be regarded as directed edges (cf.~\refeq{eq:circuit}). We take for every register $R$ in our circuit an
operation symbol $g_R$ and define its arity as the number of inputs
that can be reached by following all possible paths from $R$ backwards
through the circuit. Similarly, we take for every output edge $O$ of
the circuit an operation symbol $f_O$ with the arity obtained in the
same way. Let $\Sigma$ be the signature of all $f_O$ and $g_R$, and let
$\Var$ be the corresponding polynomial functor for $\Sigma$. To give an
$\ell$-rps $e: \Var (H\times\Id) \to H \ext{K + \Var}$ it suffices to
give a natural transformation $e':\Var H \to H \ext{K + \Var}$ and to
define $e = e' \cdot \Var\pi_0$, where $\pi_0:H\times\Id\to H$ is the projection. To obtain $e'$, we
give for each $n$-ary symbol $s$ from $\Sigma$ an assignment
$$
s(r_1.x_1, \ldots, r_n.x_n) \mapsto (r,t)
$$
where $r_1, \ldots, r_n, r\in \Real$ and $t$ is a term built from symbols of $\Sigma$
and of the signature $\Gamma$ of basic circuit operations using the
variables $x_1, \ldots, x_n$. Notice that the arguments of $s$ stand
for generic elements $(r_i,x_i)$ from $HX$ for some set $X$ and that
$r$ may depend on $r_i$ and $t$ may contain operation symbols $r_i.-$.
We now show how to define the above assignment for each operation
symbol $g_R$. Suppose that the register $R$ has the initial value $r$. Then
$$
g_R(r_1.x_1, \ldots, r_n.x_n) \mapsto (r, t_R),
$$
and we now explain how to obtain $t_R$: in order to construct the term
$t_R$ one starts in $R$ and traverses from there every possible path in
the circuit backwards (i.\,e., one follows edges from inputs to
outputs of basic circuits) adding for every basic stream circuit the
corresponding operation symbol to $t_R$ until 
\begin{enumerate}[(1)]
\item an input edge corresponding to some argument $r_i.x_i$ is met, or
\item some register is met.
\end{enumerate}
More precisely, we construct $t_R$ as a $(\Sigma+\Gamma)$-tree: we follow the
input edge of $R$ backwards until we reach either the output wire of
an $r$-multiplier, the output wire of an adder, an input wire of the whole circuit or the
output wire of a register. For an $r$-multiplier or an adder we add a
node to $t_R$ labeled by the corresponding operation symbol and
continue this process for each input node of the $r$-multiplier or
adder constructing the corresponding subtrees of $t_R$. For an input
wire corresponding to $r_i.x_i$ add a node labeled by the prefix
operation $r_i.-$ and below that a leaf labeled by $x_i$; for a
register $S$ add the tree (of height 2) given by $g_S (r_{i_1}.x_{i_1},
\ldots, r_{i_k}.x_{i_k})$, where the $r_{i_j}.x_{i_j}$ correspond to those
input wires of the circuit backwards reachable from the register $S$. (Notice
that these arguments of $g_S$ form a subset of the arguments $\{\,r_1.x_1, \ldots,
r_n.x_n\,\}$ of $g_R$ since every input that is backwards reachable
from $S$ is also backwards reachable from $R$. Also notice that copiers are
ignored while forming $t_R$.) Since the given circuit $C$ is valid we
have indeed constructed only a finite tree, whence a term $t_R$.

We still need to define the assignment corresponding to $e'$ for output
symbols $f_O$:
$$
f_O(r_1.x_1, \ldots, r_n.x_n) \mapsto (r, t_O) \,\text{.}
$$
We first form the tree $t_O'$ in essentially the same way as $t_R$ for a register $R$
with the difference that for every input wire and for every register
we just insert an unlabeled leaf for the moment. To obtain $r$, label
every leaf of $t'_O$ corresponding to the input $r_i.x_i$ by $r_i$ and every
leaf corresponding to a register by its initial value; now evaluate
the corresponding term to get $r$. In order to get $t_O$ one replaces
leaves of $t_O'$ corresponding to inputs $r_i.x_i$ by $x_i$, and
register leaves are replaced by the second components $t_S$ from the
right-hand sides of the equations for $g_S (r_{i_1}.x_{i_1}, \ldots,
r_{i_k}.x_{i_k})$.

Finally, the unique solution of $e$ yields a unique operation $f_O$ on
streams for every output $O$. By construction this is the operation
computing the stream circuit. 
\end{proof}

\takeout{
\begin{rem}
  \smnote{Delete this remark? Seems not clear enough.}\dsnote{Ok with me}
  Suppose we are given a finite valid stream circuit where, in
  addition, every path from an input to an output passes through a
  register. Then the construction in the  proof above would not need to
  refer to the behavior of the input $r_i.x_i$. That means that we could
  assume ``structureless'' inputs $x_1, \ldots, x_n$, and the 
  construction above then even gives a guarded rps $\Var \to
  \T{H\ext{K + \Var}}$. Corollary~\ref{cor:rps} allows us to obtain a
  unique solution of this rps, and this result even allows for
  the unique solution of infinite valid stream circuits where every
  path from an input to an output passes through a register. 
\end{rem}
}

The modularity principle for the unique solution of an
$\ell$-rps which we discussed in Summary~\ref{sum:2} yields \emph{modularity
of stream circuits}: every stream circuit can be used as
a building block as if it were a basic operation in subsequent stream
circuits. And Theorem~\ref{thm:stream} remains valid for the extended
circuits.

\begin{exa}
The proof of Theorem~\ref{thm:stream} 
essentially gives a translation of an arbitrary finite valid stream circuit
into an $\ell$-rps. We demonstrate this on the circuit given in~\refeq{eq:circuit}
above. First we introduce for the output a function symbol $f$ and for
the register output the function symbol $g$. To determine their arity we
count the number of input wires which have a (directed) path to the
register and the output, respectively. In both cases the arity is
one. Now we must give a definition of $f(r.x)$ and $g(r.x)$ for an
abstract input stream with head $r \in \Real$. These definitions
are each given by a pair $(s, t)$ where $s \in \Real$ and $t$ is a
term in the one variable $x$ over operations corresponding to the
basic circuits and $f$, $g$.  We define
$$
g(r.x) = (1, r.x + g(r.x))
\qquad
f(r.x) = (r+1, x + (r.x + g(r.x))) \,\text{.}
$$
For $g(r.x)$ we take the value $1$ of the register as first component,
and the right-hand term is obtained as follows: we follow all paths
from the register backwards until we find an input or a
register. So we get a finite tree or,
equivalently, the desired term. For $f(r.x)$ we first follow all paths
to inputs and registers backwards to get the term $t'=x_I+x_R$, where
$x_I$ represents the input and $x_R$ the register. For the first
component of $f(r.x)$ we evaluate $t'$ with the head $r$ of the input
and the initial value $1$ of the register, i.\,e., one evaluates
$t'[r/x_I,1/x_R]$. And for the second
component we replace in $t'$ the input by $x$ and the register by the
second component of the right-hand side of the definition above of
$g(r.x)$, i.\,e., one forms $t'[x/x_I, r.x + g(r.x) / x_R]$. The two equations above
are easily seen to yield an $\ell$-rps $e: \Var (H\times\Id) \to H\ext{K+\Var}$,
where $\Var = \Id + \Id$ is the polynomial functor for the signature
with two unary symbols $f$ and $g$.  The unique solution of $e$ gives
two unary operations $f_C, g_C: C \to C$, and $f_C$ is precisely the function computed by the
circuit~\refeq{eq:circuit} and $g_C$ is the stream function output by
the register in~\refeq{eq:circuit}. By the modularity of stream circuits
explained above, we can use $f$ (and also $g$) as
``black-boxes'' in subsequent stream circuits.
\end{exa}

\medskip
\noindent
{\bf 6.3.\ \ Infinite Trees.}
Rutten and Silva \cite{rs10} developed behavioral differential equations for infinite trees and proved a unique solution theorem for them. Here we shall show that we obtain their theorem as a special instance of our Theorem~\ref{thm:ellrps}.
The work is 
similar to what we saw in Theorem~\ref{thm:beh_diff_eq_streams} for streams.

Let $HX = X\times \Real \times X$. The terminal coalgebra $C$ for $H$ consists of all infinite binary trees with nodes labeled in $\Real$, and the terminal coalgebra structure $c:C\to C\times\Real\times C$ assigns to a tree the triple $(t_L,r,t_R)$ where $r$ is the node label of the root of $t$ and $t_L$ and $t_R$ are the trees rooted at the left-hand right-hand child nodes of the root of $t$. Single trees (constants) and operations on trees can be specified by behavioral differential equations. For example,
$$
\mathsf{pi}(\varepsilon) = \pi \qquad\qquad \genfrac{}{}{0pt}{0}{\mathsf{pi}_L = \mathsf{pi}}{\mathsf{pi}_R = \mathsf{pi}}
$$
specifies the tree with every node labeled by the number $\pi$. For every real number $r$ we have the constant $[r]$ specified by
$$
[r](\varepsilon) = r \qquad\qquad \genfrac{}{}{0pt}{0}{[r]_L = [0]\phantom{\,\text{.}}}{[r]_R = [0]\,\text{.}}
$$
The nodewise addition of numbers stored in the nodes of the trees $t$ and $s$ is defined by
$$
(t+s)(\varepsilon) = t(\varepsilon)+s(\varepsilon) \qquad\qquad \genfrac{}{}{0pt}{0}{(t+s)_L = t_L+s_L\phantom{\,\text{.}}}{(t+s)_R = t_R+s_R\,\text{.}}
$$
See~\cite{rs10} for further and more exciting examples.


In general a system of behavioral differential equations is specified
as follows. Again, we assume an infinite supply of syntactic
variables. For every variable $x$ we have the notational variants $x_L$, $x_R$ and also $x(\varepsilon)$. Furthermore, let $\Sigma$ be a signature of operations to be specified. For each operation symbol $f$ from $\Sigma$ of arity $n$ we provide equations of the form
\begin{equation} \label{eq:general_beh_diff_eq_trees}
\begin{tabular}{c|c}
\text{initial value}	& \text{differential equations}	\\
\hline
\rule{0mm}{20pt}
$(f(x_1,\dots,x_n))(\varepsilon) = c_f(x_1(\varepsilon),\dots,x_n(\varepsilon))$	& $\genfrac{}{}{0pt}{0}{f(x_1,\dots,x_n)_L = t_1}{f(x_1,\dots,x_n)_R = t_2}$
\end{tabular}
\end{equation}
where $c_f$ denotes a function $\Real^n\to\Real$ and $t_1$ and $t_2$ are $\Sigma$-terms on the variables $x_1,\dots,x_n$ and their three notational variants.

\begin{thm} {\rm (\cite{rs10}, Theorem 2)} \label{thm:trees}
Every system \refeq{eq:general_beh_diff_eq_trees} of behavioral differential equations has a unique solution, i.\,e., for every $f$ from $\Sigma$ there exists a unique function $f:C^n\to C$ satisfying \refeq{eq:general_beh_diff_eq_trees} (denoted by the same symbol).
\end{thm}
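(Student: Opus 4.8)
The plan is to mirror exactly the reduction already carried out for streams in Theorem~\ref{thm:beh_diff_eq_streams}, replacing the stream behavior functor $HX = \Real\times X$ by the tree behavior functor $HX = X\times\Real\times X$ and checking that all the bookkeeping goes through unchanged. Concretely, first I would set up the abstract GSOS rule $\ell$ that accounts for the constants $[r]$: let $KX = \Real$ be the constant functor and define $\ell = (K(H\times\Id)\xrightarrow{\ell'} HK \xrightarrow{H\kappa} H\ext K)$ with $\ell'_X(r) = ([0], r, [0])$, i.e.\ encoding exactly the differential equation $[r]_L = [0]$, $[r](\varepsilon) = r$, $[r]_R = [0]$. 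The $\ell$-interpretation $b\colon\Real\to C$ then sends $r$ to the tree $[r]$, as required. (Here $[0]$ should be read as the element $0\in KC = \Real$ under the interpretation $\extalg b$; strictly one writes $\ell'_X(r) = (\kappa_X(0), r, \kappa_X(0))$.)

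Next, given a system~\refeq{eq:general_beh_diff_eq_trees} over a signature $\Sigma$, let $\Var$ be the polynomial functor for $\Sigma$, so that $F = K+\Var$ is the polynomial functor of $\Sigma$ extended by a constant symbol for each real number. I would translate the system into an $\ell$-rps $e\colon \Var(H\times\Id)\to H\ext F$ componentwise: for an $n$-ary $f$ from $\Sigma$,
\[
e_X\big((x_{1,L},r_1,x_{1,R},x_1),\dots,(x_{n,L},r_n,x_{n,R},x_n)\big) = \big(\ol{t_1},\, c_f(r_1,\dots,r_n),\, \ol{t_2}\big),
\]
where $\ol{t_i}\in\ext F(X)$ is obtained from the $\Sigma$-term $t_i$ by replacing each notational variant $x_j(\varepsilon)$ with the constant symbol $r_j$ (and leaving $x_j$, $x_{j,L}$, $x_{j,R}$ as the corresponding variables). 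One checks this $e_X$ is natural in $X$. The key verification is then the routine-but-necessary unraveling of Diagram~\refeq{eq:lambdainterp}: an interpreted solution $s\colon\Var C\to C$ of $e$ in the sense of Theorem~\ref{thm:ellrps} is, component by component, exactly an assignment of a function $f\colon C^n\to C$ to each $f\in\Sigma$ satisfying~\refeq{eq:general_beh_diff_eq_trees}, because $c^{-1}\colon H C\to C$ is the tree-tupling isomorphism $(t_L,r,t_R)\mapsto t$ and $\extalg{[b,s]}$ evaluates $F$-terms using the given constants $[r]$ (via $b$) and the operations $s$. Applying Theorem~\ref{thm:ellrps} yields existence and uniqueness of such $s$, which is precisely the assertion of Theorem~\ref{thm:trees}.

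The only genuinely delicate point — and the one I would spend the most care on — is the bijection between solutions of the $\ell$-rps and solutions of the behavioral differential system. This is not automatic: one must confirm that the functor $H\times\Id$ correctly exposes, in a triple $(x_L,r,x_R,x)\in HX\times X$, the label $r = x(\varepsilon)$ and the two subtree variables $x_L,x_R$ of a tree denoted by $x$, matching Rutten and Silva's convention; and one must confirm that the substitution of constants for the $x_j(\varepsilon)$ in $\ol{t_i}$ is interpreted, under $\extalg{[b,s]}$, as plugging in the actual real head-labels. Both are direct consequences of how $b$ was defined and of the universal property of $\ext F$ (Remark~\ref{rem:extinl}(2)), but they are exactly where a sign error or an off-by-one in the variant bookkeeping would hide, so I would write out at least one component of Diagram~\refeq{eq:lambdainterp} in full before declaring the general case ``straightforward,'' just as was done in the stream case.
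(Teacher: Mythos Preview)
Your proposal is correct and follows essentially the same route as the paper: set $KX=\Real$ with $\ell'_X(r)=(0,r,0)$ so that the $\ell$-interpretation is $b(r)=[r]$, translate the system~\refeq{eq:general_beh_diff_eq_trees} into an $\ell$-rps $e\colon V(H\times\Id)\to H\ext{K+V}$ by replacing each $x_j(\varepsilon)$ in $t_1,t_2$ with the constant $r_j$, and invoke Theorem~\ref{thm:ellrps}. The paper is terser about the solution correspondence you flag as the delicate point, but the argument is the same.
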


We present a new, short proof of this result based on Theorem~\ref{thm:ellrps}. Let $KX = \Real$ be the constant functor, and let the abstract GSOS rule $\ell$ be
$$
\ell = (
\xymatrix{
  K(H\times\Id)
  \ar[r]^-{\ell'}
  &
  HK
  \ar[r]^-{H\kappa}
  &
  H\ext K
}
)
$$
where the natural transformation $\ell'$ is given by $\ell_X'(r) = (0,r,0)$. Then the $\ell$-interpretation is $b:\Real\to C$ with $b(r)=[r]$. 
Let $\Var$ be the polynomial functor associated with $\Sigma$.
Every system \refeq{eq:general_beh_diff_eq_trees} gives an $\ell$-rps $e:\Var (H\times\Id)\to H\ext{K+\Var}$ as follows: 
let $e$ be given on each component corresponding to $f$ from $\Sigma$ by
$$
e_X(((x_1)_L,r_1,(x_1)_R,x_1),\dots,((x_n)_L,r_n,(x_n)_R,x_n)) = (\ol{t_1},c_f(r_1,\dots,r_n),\ol{t_2}) \,\text{,}
$$
where $\ol{t_i}$ is obtained from $t_i$ by replacing each $x_i(\varepsilon)$ by the corresponding constant $r_i$. The solutions of $e$ in $C$ correspond precisely to solutions of \refeq{eq:general_beh_diff_eq_trees}; thus, Theorem~\ref{thm:trees} follows from Theorem~\ref{thm:ellrps}.

In addition, we have again a modularity principle
(cf.~Summary~\ref{sum:2}): operations specified by behavioral
differential equations may be used as givens in subsequent behavioral
differential equations.

\medskip\noindent
{\bf 6.4.\ \ Formal Languages.} Recall Example~\ref{ex:dist}(3); here we have $HX = X^A \times 2$ on
$\Set$. A coalgebra $x : X \to X^A \times 2$ for $H$ is precisely a
deterministic automaton with the (possibly infinite) state set $X$.  Here $C = \Pow(A^*)$, and the
unique homomorphism $h: (X,x) \to (C,c)$ assigns to each state the
language it accepts. We shall now show how various standard operations on
formal languages can be defined in a modular way using
Theorem~\ref{thm:ellrps}.   Working in a bialgebraic setting, Jacobs~\cite{jacobs_kleene}
shows that these   operations can
be defined as interpretations of one abstract GSOS rule (or distributive law) in $C$.
 However, this account
does not explain why one may define these operations in a step-by-step
fashion by successive recursive definitions. This is the added value
of Theorem~\ref{thm:ellrps}. 

We start with the functor $K_0 = C_\emptyset$ (that means, we start
from scratch with no given operations) and with $\ell_0: C_\emptyset (H\times\Id)
\to H \ext{C_\emptyset} = H$ given by the empty maps. The corresponding
interpretation is the empty map $b:\emptyset\to C$, and $\extalg b$ is
then the identity on $C$. Thus, the cia structure for
$H\ext{K_0}$ on $C$ given by Theorem~\ref{thm:distcia} is simply the
initial cia $(C, c^{-1})$ for $H$.  At each subsequent step we are
given a functor $K_i$ and an abstract GSOS rule 
\[
\ell_i: K_i(H\times\Id) \to H \ext{K_i}
\quad\text{with its interpretation}
\quad b_i: K_i C \to C \,\text{.} 
\]
We then give an $\ell_i$-rps 
\[e_i: \Var_i(H\times\Id) \to H\ext{K_i + \Var_i} \,\text{,}
\] 
and its unique solution $s_i: \Var_i C \to C$ extends the cia structure as follows:
let $K_{i+1} = K_i + \Var_i$ and let 
\[ 
\ell_{i+1} = [H\ext\inl \cdot \ell_i, e_i]: K_{i+1}(H\times\Id) \to
H\ext{K_{i+1}} \,\text{,}
\]
where $\ext\inl: \ext{K_i} \to
\ext{K_{i+1}}$ is the monad morphism induced by $\inl: K_i \to
K_{i+1}$ (cf.~Notation~\ref{not:coinj}(1)); so the formation of
$\ell_{i+1}$ from $\ell_i$ corresponds precisely to forming $n$ from
$\ell$ in Construction~\ref{constr:s}.  
By induction it is easy to see that the $\ell_{i+1}$-interpretation is 
\[
b_{i+1} = [s_j]_{j=0,\ldots,i}: K_{i+1} C \to C \,\text{.} 
\]
And this gives an extended cia 
\[
c^{-1}\cdot H\extalg b_{i+1}: H\ext{K_{i+1}} (C) \to C
\] 
by Theorem~\ref{thm:distcia}.

As a first step we define constants in $C$ for $\emptyset$,
$\{\eps\}$, and $\{a\}$, for each $a \in A$, 
as solutions of an $\ell_0$-rps. 
We express this as
an $\ell_0$-rps as follows: take the functor $\Var_0 X = 1 + 1 + A$ corresponding to these languages. We
define $e_0: \Var_0 (H\times\Id) \to H\ext{K_0 + \Var_0} = H\ext{\Var_0}$
componentwise. We write for every set $X$, $\emptyset$ for $\inj_1(*)
\in \Var_0 X$ and $\eps$ for $\inj_2(*) \in \Var_0 X$. Then $(e_0)_X$ is given by the assignments
\begin{align} \label{eq:constant_languages}
\emptyset	&\mapsto ((\emptyset)_{a \in A}, 0)	&& \nonumber \\
\eps		&\mapsto ((\emptyset)_{a \in A}, 1)	&& \\[-10pt]
a		&\mapsto ((t_b)_{b\in A}, 0), \qquad \text{where } t_b = \begin{cases}\eps & \text{if } b=a \\ \emptyset & \text{else}\,\text{.}\end{cases} \nonumber 
\end{align}
It is now straightforward to check that the unique
solution $s_0$ of $e_0$ yields the desired constants in $C$ extending
the cia structure.

Next we add the operations of union, intersection and language
complement to the cia structure. Let $K_1 = K_0 + \Var_0$ and let
$\ell_1 = [H \wh\inl \cdot \ell_0, e_0]$ as above with interpretation
$b_1 = s_0$. Let 
\[
\Var_1 X = X \times X + X \times X + X
\] 
be the polynomial functor corresponding to two binary symbols $\cup$ and
$\cap$ and one unary one $\ol{(-)}$. We give the $\ell_1$-rps 
\[
e_1: \Var_1 (H\times\Id) \to H\ext{K_1 + \Var_1}
\]
componentwise in the form of the three assignments in
\refeq{eq:union_intersection_complement} below. We write $((x_a), j, x)$ for
elements of $HX\times X$, where $(x_a)$ is an $|A|$-tuple, i.\,e., an
element of $X^A$. We also write elements of $\Var_2Z$, $Z=HX\times X$,
as flat terms $z_1\cup z_2$, $z_1\cap z_2$ and $\ol z$ on the
left-hand side for the three components of $(e_1)_Z$:
\begin{equation} \label{eq:union_intersection_complement}
\begin{array}{rcl@{\qquad}rcl}
  ((x_a), j, x) \cup ((y_a), k, y) & \mapsto & ((x_a \cup y_a), j \vee k) &
  \ol{((x_a), j, x)} & \mapsto & ((\ol{x_a}), \neg j) \\
  ((x_a), j, x) \cap ((y_a), k, y) & \mapsto & ((x_a \cap y_a), j \wedge k)
\end{array}
\end{equation}
where $\vee$, $\wedge$ and $\neg$ are the evident operations on $2 =
\{0,1\}$. The corresponding unique solution $s_1: \Var_1 C \to C$ is
easily checked to provide the desired operations extending the cia
structure on $C$.

The next step adds (language) concatenation to the cia structure on $C$. For this
let 
\[
\Var_2 X = X \times X
\qquad
\text{and let}
\qquad
e_2: V_2(H \times \Id) \to H\ext{K_2 + V_2}
\] 
be given by the assignment (where elements of $V_2 Z$ are written as $z_1 \cdot z_2$ on the left-hand side)
\begin{equation} \label{eq:concatenation}
((x_a), j, x) \cdot ((y_a), k, y) \mapsto ((t_a), j \wedge k) \qquad \text{ where } t_a = \begin{cases}(x_a \cdot y) \cup y_a & \text{if }j = 1 \\ x_a \cdot y & \text{else}\,\text{.}\end{cases}
\end{equation}

As the final step we add the Kleene star operation by taking 
\[
\Var_3 X = X
\qquad\text{and}\qquad
e_3: V_3(H \times \Id) \to H\ext{K_3 + V_3}
\]
to be given by $$e_3((x_a), j, x) = ((x_a\cdot x^*),1)$$
with the unique solution $s_3= (-)^*: C \to C$.
Notice that this definition makes use of concatenation which was
a solution at the previous stage, and concatenation makes use of union
which was a solution at stage $1$. This corresponds to the fact that 
for all languages $L$, $(L^*)^a = L^a \cdot L^*$.
Its unique solution $s_2: C \times C \to C$ is
the concatenation operation.

\begin{rem} \label{rem:definability}
There are many further operations on formal languages that are definable by $\ell$-rps's, including the following ones:
\begin{iteMize}{$\bullet$}
\item prefixing $a.L = \{aw \mid w\in L\}$ for any $a\in A$;
\item the operation given by $c^{-1}:C^A\times 2\to C$ (see Remark~\ref{rem:inverse_of_c})
$$((L_a),j) \mapsto \begin{cases}\bigcup_{a\in A}a.L_a & \text{if }j=0 \\ \bigcup_{a\in A}a.L_a \cup \{\varepsilon\} & \text{else}\,\text{;}\end{cases}$$
\item $\shf(L_1,L_2) = \bigcup_{w_1\in L_1, w_2\in L_2}\shf(w_1,w_2)$ where $\shf(w_1,w_2)$ is the usual operation merging the words $w_1$ and $w_2$.
\end{iteMize}
\end{rem}

\noindent We leave it to the reader to work out the details. An example of an
operation on languages not definable by any $\ell$-rps (or abstract
GSOS rule) is the language derivative $L \mapsto L^a = \{w \mid aw \in
L\}$ for every $a \in A$; the argument is similar as for the
non-definablity of the tail operations on streams in
Example~\ref{ex:tl}, and so we leave those details to the reader, too.

\takeout{ 
Notice, however,
that there exist operations that cannot be defined by $\ell$-rps's (or
abstract GSOS rules):
\begin{exa}
  The language derivative $L^a = \{w \mid aw \in L\}$ for some $a\in A$
  cannot be defined by $\ell$-rps's. Indeed, if this was definable by
  an $\ell$-rps, Theorem~\ref{thm:ellrps} would yield a cia structure on
  $C$ for the functor $H\ext{K+V}$, where $VX=X$ corresponds to the
  unary operation $(-)^a$. For a term $t$ in $\ext{K+V}(X)$, we shall
  use the notation $a.t$ as a shortcut for $((t_b), 0) \in
  H\ext{K+V}(X)$ for the $|A|$-tuple $(t_b)$ with $t_a = t$ and $t_b =
  \emptyset$ for every $b\in A\setminus\{a\}$. Thus, for $X=\{x\}$ the flat equation
  morphism $e:X\to H\ext{K+V}(X)+C$ given by $e(x) = a.x^a$ would have a
  unique solution. 
  But this is clearly not the case: every
  formal language $L'$ whose words all start with $a$ gives a solution
  $e^\dag(x)=L'$ of the flat equation morphism $e$.
\end{exa}
} 

\paragraph{\bf Contex-free grammars.}
Next we show how context-free grammars in Greibach Normal Form and
their generated languages are special instances of flat equations of the form
$e:X\to \ext KH\ext K$ and their unique solutions in $C$ for a
suitable functor $K$. 
(Note that these flat equations do not involve elements of $C$;
that is, we do not need equations of the form $e:X\to \ext KH\ext K + C$.)
A coalgebraic description of context-free
grammars in Greibach normal form has recently been given by Bonsangue,
Rutten and Winter~\cite{wbr11}, and previously, a coalgebraic
approach to context-free grammars was given by Hasuo and Jacobs~\cite{hj05}. Our
approach here is completely different.

 Recall (e.\,g.~from \cite{hmu}) that a \emph{context-free grammar} is a four-tuple $G=(A,N,P,S)$ where $A$ is a non-empty finite set of \emph{terminal} symbols, $N$ a finite set of \emph{non-terminal} symbols, $P\subseteq N\times (A+N)^*$ is a finite relation with elements called \emph{production rules} of $G$, and $S\in N$ is the starting symbol. As usual we write $n\to w$ for $(n,w)\in P$. 
A context-free grammar $G$ is in \emph{Greibach Normal Form} (\emph{GNF}, for short) if all its production rules are of the form $n\to aw$ with $a\in A$ and $w\in N^*$. 
The language \emph{generated} by a context-free grammar $G$ is the set of all words over $A$ that arise by starting with the string $S$ and repeatedly substituting substrings according to the production rules of the grammar, and eliminating $\varepsilon$ from the string whenever it occurs.

To see that context-free grammars in GNF yield flat equation morphisms,
 we consider the constant $\emptyset$ and the operations of union and
 concatenation as given operations. More precisely, let 
\[
KX=1+X\times X+X\times X
\] 
be the polynomial functor corresponding to $\emptyset$, $\cup$ and $\cdot$, and let $\ell:K(H\times\Id)\to H\ext K$ be the abstract GSOS rule given by the corresponding assignments in \refeq{eq:constant_languages}, \refeq{eq:union_intersection_complement} and \refeq{eq:concatenation} with the interpretation $b:KC\to C$ given as desired. By Theorem~\ref{thm:sandwich} we obtain the cia structure $k':\ext KH\ext KC\to C$. Now observe that a flat equation morphism assigns to each $x\in X$ either an element $e(x)\in C$ or $e(x)$ corresponds to a term of given operations in $H\ext KX$.  
 
We now show that for every 
context-free grammar in GNF there is a flat equation morphism $e: X\to H\ext{K}X$ with $X$ finite
whose solution $e^\dag: X \to X$ has the property that $e^\dag(S)$ is the language of the grammar.
(Once again, note that each $e(x)$ will belong to $H\ext{K}X$, not just to the larger set $H\ext{K}X +C$.)

\begin{construction}
  Let $G = (A, N, P, S)$ be a context-free grammar in GNF. We define the flat
  equation morphism $e_G:X\to \ext KH\ext KX$. The set $X$ is simply
  the set $N$ of non-terminals of $G$, and $e_G$ is the following map:
  for $n\in N$ for which there is no production rule $n\to aw$ in $P$
  we take $e_G(n) = \emptyset$, the constant term in $\ext K(H\ext
  KX)$. Otherwise for each production rule $r = n\to aw$ in $P$ with
  $n \in N$ on the left-hand side we define the term $t_r\in H\ext KX$ as
  $$
  t_r = \left\{
  \begin{array}{ll}
    a.(n_1\cdot n_2\cdot \cdots \cdot n_k) & \text{if $w = n_1n_2
      \cdots n_k$ and $n_1, \ldots, n_k \in N$,} \\
    a.\emptyset & \text{if}\ w = \eps
  \end{array}
  \right.
  $$
  using the concatenation operation. Recall from
  Remark~\ref{rem:definability} the notation $a.t$ 
  to see that
  $t_r\in H\ext KX$. We define $e_G(n)$ as (the term in $\ext K(H\ext KX)$
  representing) the ``union'' of all right-hand sides of production
  rules $r_i = n \to aw_i$, $i=1,\dots,l$, in $P$; so in symbols we have
  $$e_G(n) = t_{r_1}\cup t_{r_2}\cup\cdots\cup t_{r_l} \,\text{.}$$

The point of using grammars in GNF is that each $t_r$ really belongs to $H\ext{K} X$,
and so each $e_G(n)$ belongs to $K(H\ext{K} X)$, hence to $\ext{K}(H\ext{K} X)$.
\end{construction}

It is not difficult to see that the language generated by the grammar $G$ is precisely the language $e_G^\dag(S)$, where $S$ is the starting symbol of $G$. So as a consequence of Theorem~\ref{thm:sandwich} we see that the language generated by $G$ arises as the unique solution of the flat equation morphism $e_G$.

Analogously, it is also possible to translate right-linear grammars (which are a special case of context-free grammars generating regular languages) into flat equation morphisms using the constant empty and empty-word languages as well as union as the given operations. Again Theorem~\ref{thm:sandwich} implies that there is a unique solution which yields the language generated by the given grammar by the translation.

\begin{rem}
  We have seen that by defining operations via $\ell$-rps's (or $\ell$-srps's) we
  obtain cia structures for $H\ext{K}$ (or $\ext KH\ext K$) on $C$. It is
  interesting to ask what formal languages can arise as solutions of $\ell$-equations
  $e: X \to H\ext{K} X$ (or sandwiched ones $e:X \to \ext K H\ext K
  X$) according to Theorems~\ref{thm:ellsolution} and~\ref{thm:sandwich_ellsolution} if
  $X$ is \emph{finite}. Not surprisingly one obtains
  precisely the regular languages for stages $i = 0, 1, 2$ in our 
  definition process above, i.\,e., when we add the constant languages $\emptyset$,
  $\{\varepsilon\}$ and $\{a\}$ for every $a\in A$, and the operations $\cup$,
  $\cap$ and $\ol{(-)}$. %
  But adding concatenation one obtains non-regular languages: if for $i = 3$ one
  restricts to using union and concatenation in the terms in
  $\ext{K_3}H\ext{K_3}$, the flat equation morphisms essentially correspond to
  context-free grammars in GNF.  
  However, using intersection and/or complement allows one to
  obtain non-context-free languages as solutions. Precisely what class of
  languages can be defined by (sandwiched) $\ell$-equations using different
  combinations of operations remains the subject of further work.
\end{rem}

\medskip
\noindent
{\bf 6.5.\ \ Non-well-founded Sets.} 
Finally, we come to an application not directly related to
computation. The theory of non-well-founded sets originated as a
framework for providing the semantics of general circular definition.  
For background on non-well-founded sets, the antifoundation axiom (\AFA), and
classes, please see the books~\cite{aczel,bm}. We work here on the
category $\A = \Class$ of classes. The results of Section~\ref{sec:lambdarps} hold
true for $\Class$ since every endofunctor of $\Class$ has terminal
coalgebras and free algebras (see~\cite{amv_classes}).

Consider $\Pow: \Class \to \Class$
taking a class $X$ to the class $\Pow X$ of sub\emph{sets} of
$X$. \AFA~is equivalent to the assertion that $(C,c)$ is a terminal
coalgebra, where $C$ is the class of all sets (usually written $V$), and $c: C\to \pow C$
takes a set and considers it a set of sets.  (That is, $c(s) = s$ for
all $s$.)  Let us note some natural transformations:
$$\begin{array}{l@{\qquad}l@{\qquad}l}
p: \Pow \to \Pow\Pow 
&  
op: \Id\times \Id \to \Pow\pow 
&
cp:  \Pow \times \Pow \to \Pow(\Id \times \Id)
\\
p_X(x) = \Pow(x) 
& 
op_X (x, y) =\set{\set{x},\set{x,y}}   
& 
cp_X(x,y) = x \times y
\end{array}
$$    
Also note that $c^{-1}$ is the operation on $C$ taking a family $x
\subseteq C$ of sets to the set $\{\,y \mid y \in x\,\}$. 

We will now define three additional operations on $C$: 

\begin{tabbing}
$\bullet$ the powerset operation \hspace{1em}\= $b_1: x \mapsto \{\,y \mid y
\subseteq x\,\}$, \\
$\bullet$ the Kuratowski pair \> $b_2: (x,y) \mapsto \{\{\,x\,\}, \{\,x,y\}\}$, and \\
$\bullet$ the cartesian product \> $b_3: (x,y) \mapsto x \times y$. 
\end{tabbing}
So let $K$ be the functor given by 
\[
KX = X + (X\times X) + (X\times X)+ \pow X + \Pow\Pow X;
\] 
its first three components represent (the type of) our three desired
operations, the fourth component $\pow$ represents $c^{-1}$ and the
fifth one represents $c^{-1}\cdot \pow c^{-1}$---the latter two are needed
for the definition of the former three.  We write the coproduct
injections of $K$ as $\inj_1, \ldots, \inj_5$.  We define a natural
transformation $\ell': K\pow \to \pow K$ componentwise, using
$$
\begin{array}{r@{\ }c@{\ }l}
  \ell' \cdot \inj_1\Pow & = & (\xymatrix@1@C+2.3pc{
  \Pow 
  \ar[r]^-{p}
  &
  \Pow\Pow
  \ar[r]^-{\Pow\inj_4} 
  &
  \Pow K
  })
  \\
  \ell'\cdot\inj_2\Pow & = &(\xymatrix@1@C+1pc{
    \Pow \times \Pow  \ar[r]^-{op\Pow}
  &    
  \Pow\Pow\pow \ar[r]^-{\Pow \inj_5} 
  & 
  \Pow K
  })
  \\
  \ell'\cdot\inj_3\Pow & = & (\xymatrix@1{
    \Pow \times \Pow  \ar[r]^-{cp}
    &    
    \Pow(\Id\times\Id) \ar[r]^-{\Pow \inj_2} 
    & 
    \pow K
  })
\end{array}
\qquad
\begin{array}{r@{\ }c@{\ }l}
  \ell'\cdot\inj_4\Pow & = & (\xymatrix@1@C+0.75pc{
    \Pow\Pow
    \ar[r]^-{\Pow\inj_4} 
    &
    \Pow K
  })
  \\
  \ell'\cdot\inj_5\Pow & = & (\xymatrix@1{
    \Pow\Pow\Pow
    \ar[r]^-{\Pow\inj_5} 
    &
    \Pow K
  })
\end{array}
$$
Then $\ell'$ yields an abstract GSOS rule
$$
\ell = (
\xymatrix@1{
  K(\Pow\times\Id)
  \ar[r]^-{K\pi_0}
  &
  K\Pow
  \ar[r]^-{\ell'}
  &
  \Pow K
  \ar[r]^-{\Pow\kappa}
  &
  \Pow\ext{K}
}
) \,\text{.}
$$
Let $b: K C \to C$ be the $\ell$-interpretation in $C$. %
\takeout{
Let us write $b: KC \to C$ for the evident map,  so that
    $b$ is the unique morphism making
the diagram below commute:
$$
      \xymatrix@C+2pc{
        KC  
        \ar[r]^-{Kc}
        \ar[d]_b
        &
      K  \Pow C 
        \ar[r]^-{\lambda_C}   
        &
        \Pow K C  
                \ar[d]^{\Pow b}
        \\
        C \ar[rr]_-c & & \Pow C
      }
$$ 
}%
Let us write $b_1, \ldots, b_5$ for the components of $b$, so
$b_i = b \cdot (\inj_i)_C$.
To obtain explicit formulas for these, we use
Diagram~\refeq{diag:interp_gsos_rule} and the above definitions to write:
$$\begin{array}{l}
c \cdot b_1 = \Pow b_4 \cdot p_C \cdot c  \\
c \cdot b_2 =   \pow b_5 \cdot  op_{\pow C} \cdot (c\times c)
\\
c \cdot b_3 = 
\pow b_2 \cdot cp_C \cdot (c\times c)  \\
\end{array}
\qquad
\begin{array}{l}  
c \cdot b_4 = \pow b_4 \cdot \pow c\\
c\cdot b_5 =   \pow b_5  \cdot \pow^2 c\\ 
\\ 
\end{array}
$$
We check easily that  $b_4 = c^{-1}$ and $b_5 = c^{-1} \cdot \pow c^{-1}$
satisfy the last two equations.
From these  we see that 
\begin{align*}
  b_1 &=  c^{-1} \cdot \Pow c^{-1} \cdot p_C \cdot c,
  \\
  b_2 &=  c^{-1} \cdot \pow b_5 \cdot  op_{\pow C} \cdot (c\times c),
  \quad\text{and}\quad
  \\
  b_3 &= c^{-1}\cdot \pow b_2 \cdot cp_C \cdot (c\times c).
\end{align*}
In words, $b_4$ and $b_5$ are the identity, and $b_1$, $b_2$ and $b_3$
are as desired. 

By Theorem~\ref{thm:distcia}, we have a cia structure $(C, c^{-1}\cdot \pow \extalg b)$
for the composite $\pow \ext K$.
\begin{rem}
  We could have obtained the various operations on $C$ in a step-by-step 
  fashion starting with $b_4$ and $b_5$ and then defining $b_1,
  b_2, b_3$ by successive applications of
  Theorem~\ref{thm:ellrps} as in the previous section on formal
  languages. We decided against this, to keep the
  presentation short. 
\end{rem}
Continuing our discussion of non-well-founded sets, we may solve
systems of equations which go beyond what one finds in the standard
literature on non-well-founded sets~\cite{aczel,bm}.  For example, one
may solve the system
$$x = \set{\pow(y)} \qquad y = \set{y\times y, z} \qquad z = \emptyset \,\text{,}$$
which gives rise to a flat equation morphism $e: X\to \pow \ext KX+C$
where $X=\{x,y,z\}$. The unique solution of $e$ satisfies $\sol e(z) =
\emptyset$ and assigns to $y$ the
non-well-founded set $\sol e(y)$ containing two elements: $\emptyset$
and the cartesian product of $\sol e (y)$ with itself. And $\sol e(x)$ is the singleton non-well-founded set
containing the powerset of $\sol e(y)$ as its only element. 

Finally, let us show how to obtain the operation specified by~\refeq{eq:function_non-well-founded}
in the introduction as a unique solution of a sandwiched $\ell$-rps according
to Theorem~\ref{thm:sandwich_rps}. Here we have the union operation
$\cup$ as a given operation. So let $KX = X \times X$ be the
corresponding polynomial functor, and let us consider the natural
transformation
\begin{align*}
  u_X: K\pow X \to \pow X \,\text{,} \\
  u_X(x,y) = x \cup y \,\text{.}
\end{align*}
We have the abstract GSOS rule
\[
\ell = ( \xymatrix@1{
  K(\pow \times \Id) \ar[r]^-{K\pi_0}
  &
  K\pow \ar[r]^-u
  &
  \pow
  \ar[r]^-{\pow \eta}
  &
  \pow \ext K
})
\]
whose interpretation is the union operation $\cup: C\times C \to C$. Now let $WX =
X\times X$. Then the equation~\refeq{eq:function_non-well-founded}
corresponds to a natural transformation
$e_0: W(\pow \times \Id) \to \ext K \pow W$, where $(e_0)_X$ maps a
pair 
\[
\big((\{x_i \mid i \in I\}, x), (\{y_j \mid j \in J\}, y)\big)
\in W(\Pow X \times X)
\]
to
\[
\{(x, y_j) \mid j \in J\} 
\cup \{(x_i,y) \mid i \in I\} 
\cup \{(x_i,y_j) \mid i \in I, j \in J\}
\in \ext K \Pow W X.
\]
From this we obtain an $\ell$-srps 
\[
e = (\xymatrix@1@C+2pc{
  W(\pow \times \Id) \ar[r]^-{e_0}
  &
  \ext K \pow W
  \ar[r]^-{\ext K \pow \inr}
  &
  \ext K \pow(K + W)
  \ar[r]^-{\ext K \pow \kappa^{K+W}}
  &
  \ext K \pow \ext{K + W}
}).
\]
Its unique solution in $C$ is the desired ``parallel composition''
$\|: C \times C \to C$ of non-well-founded sets.

\takeout{ 
Further, one may solve recursive function definitions such as
$$g(x) = \{g(\pow(x)) \times x, x\}$$
from the introduction uniquely. Indeed, for
$W = \Id$ this equation yields an $\ell$-rps $e: W(\pow\times\Id) \to \pow\ext{K +
  W}$ whose unique solution given by Theorem~\ref{thm:ellrps} is a
function $g_C: C \to C$ behaving as specified.}
 
\takeout{
We have $c\cdot b_5 =   \pow b_5  \cdot \pow\inj_5  \cdot \pow^2 c:
\pow^3 C \to C$.
Thus 
$b_5 = c^{-1} \cdot \pow c^{-1}$.

We also have $c \cdot b_4 = \pow b_4 \cdot \pow c$, and thus $b_4 = c^{-1}$.

We also have $c \cdot b_1 = \Pow b_4 \cdot p_C \cdot c$, and so $b_1 = c^{-1} \cdot \Pow c^{-1} \cdot p_C \cdot c$.

And 
$c \cdot b_3 = 
\pow b_5 \cdot \pow\, op_C \cdot cp_C \cdot (c\times c)$.

So $b_3 = c^{-1}\cdot \pow b_5 \cdot \pow\, op_C \cdot cp_C \cdot (c\times c)$.

    Using that commutative diagram one 
    easily verifies that the left-hand component of $b$ is the inverse of the
    terminal coalgebra structure $c$ on $C$, in symbols: $b \cdot \inl = c^{-1}$. The
    right-hand component is, of course, the identity by the unit law of the
    algebra structure $b$, in symbols: $b \cdot \inr = \id_C$. Finally, the middle
    component of $b$ turns out to assign to any set in $C$ its power set.

 Unordered pair. Take $KX = X \times X + \Pow X$. We will define a
    distributive law $\ell: K\Pow \to \Pow K$ of $K$ over $\Pow$. Let the
    natural transformation  $u: \Id\times \Id \to \Pow$ be given by $u_X (x, y)
    = \{\,x,y\,\}$. Now let
    $$
    \ell \equiv
    \xymatrix@1{K \Pow = \Pow \times \Pow + \Pow \Pow X \ar[rr]^-{[u\Pow, \id]}
      &&  \Pow\Pow \ar[r]^-{\Pow \inr} & \Pow(\Id \times \Id + \Pow)
      = \Pow K \,\text{.}}
    $$

    This lifts to a distributive law $\lambda$ of the pointed endofunctor $M =
    K + \Id$ over $\Pow$ (see Definition~\ref{def:dist}, Remark (1)).
    We get $b: MC\to C$ making the diagram below commute:
      \begin{equation}
    \vcenter{
      \xymatrix@C+2pc{
         C^2 + \Pow C + C
        \ar[r]^-{Mc}
        \ar[d]_b
        &
       \Pow C^2 + \Pow \Pow C + \Pow C
        \ar[r]^-{\lambda_C}   
        &
        \Pow (C^2 + \Pow C + C)
        \ar[d]^{\Pow b}
        \\
        C \ar[rr]_-c & & \Pow C
      }}
    \label{eq:interp:M:C:power}
  \end{equation}
    It is not difficult to verify that the middle component of  $b$ is the inverse of the terminal coalgebra
    structure $c: C \to\Pow C$. It follows that the left-hand component of $b$
    is the unordered pair operation on $C$ with $(a,b) \mapsto \{\,a,b\,\}$.
By Theorem~\ref{thm:distcia}, $(C,k)$ is a cia for $\Pow M$, where $k: \Pow M C \to C$ is
 $c^{-1}\cdot \Pow b$.
As a concrete example of a flat equation morphism in $C$, let $X = \set{x,y}$, and
$e: X\to  \pow( X^2 + \Pow X +X ) + C$ given by
$x\mapsto \set{(y,y)}$ in $\pow (X^2)$ and $y\mapsto \set{\set{x}} \in \pow^2 X$.
Its solution $\sol{e}$ corresponds to sets $a$
and $b$ satisfying $a = \set{b}$ and $b = $
 } 

\section{Conclusions}
In many areas of theoretical computer science, one is interested in
recursive definitions of functions on terminal coalgebras $C$ for
various functors $H$.  This paper provides a more comprehensive
foundation for recursive definitions than had been presented up until
now.  The overall idea is to present operations in terms of an
abstract GSOS rule $\ell:K(H\times\Id)\to H\ext K$. We
proved that $\ell$ induces new completely iterative algebra
structures for $H\ext K$ and $\ext KH\ext K$ on $C$.  As a result, we
are able to apply the existing body of solution theorems for cias to
obtain new unique solution theorems for recursive equations of more
general formats ``for free''. 

Next we introduced the notion of an $\ell$-rps and showed how to
solve recursive function definitions uniquely in $C$ which are given by
an $\ell$-rps.  Our results explain why taking unique solutions of
such equations is a modular process.  And we have seen that our
results can be applied to provide the semantics of recursive
specifications in a number of different areas of theoretical computer
science.

We also generalized this point to \emph{sandwiched} $\ell$-rps's.  The reason
for doing this was not to gain expressive power: every operation on
$C$ definable by a sandwiched $\ell$-rps is also definable by an
ordinary $\ell$-rps.  The reason for the generalization was to have
more usable syntactic specification formats. 

In concrete applications $\ell$-rps's (or abstract GSOS rules) are
mostly given by finite sets
of rules or equations. But the conversion from an abstract GSOS rule
$\ell: K(H \times \Id) \to H\ext K$ (for set functors) to a coalgebra
for $H$ involves a finite-to-infinite blow-up, i.\,e., one forms 
a $H$-coalgebra on $\ext KX$, which is typically an infinite set. We
leave as an open question the question to investigate exactly which
operations are definable by \emph{finite} $\ell$-(s)rps's.

Another related question concerns defining operations on the rational
fixpoint of a functor $H$~\mbox{\cite{amv_atwork,m_linexp}}; for a set
functor $H$ this is the subcoalgebra of the terminal $H$-coalgebra
given by the behavior of all finite coalgebras (e.\,g., regular
processes, rational trees or regular languages). The question is: when
does an abstract GSOS rule induce operations on the rational
fixpoint for $H$? The answer to this is discussed in the recent
paper~\cite{bmr12}, which is inspired by the work of
Aceto~\cite{aceto94} who studied specification formats for operations
on regular processes (see also~\cite{afv01}). 
 
There remain a number of other topics for further work. Firstly, it
should be interesting to identify in the various applications concrete
syntactic formats of operational rules that correspond to
$\ell$-(s)rps's.  For example, in the case of process algebras as
discussed in Section~6.1, Turi and Plotkin~\cite{tp} proved that
abstract GSOS rules correspond precisely to transition system
specifications with operational rules in the GSOS format
of~\cite{bim}. Bartels gave in his thesis~\cite{bartels_thesis}
concrete syntactic rule formats for abstract GSOS rules in several
other concrete cases and Klin~\cite{klin09} studied the instantiation
of abstract GSOS rules for weighted transition systems. But concrete
syntactic formats that are equivalently characterized by $\ell$-srps's
have not been studied yet.

Secondly, there remains the question whether our results can be
generalized to other algebras than the initial cia (alias terminal
coalgebra) along the lines of Capretta, Uustalu and Vene~\cite{cuv}
who generalized Bartels' results from~\cite{bartels,bartels_thesis}. A
different kind of generalization to be investigated concerns the step
from using free monads in the definition of $\ell$-(s)rps's to
arbitrary monads.  Thirdly, it should be interesting to work out
further applications. For example, it is clear that for weighted
transition system specifications as considered by Klin~\cite{klin09},
our results can be used to obtain unique solutions of recursive
specifications. It should also be interesting to investigate whether
our results yield unique solution theorems for name and value passing
process calculi as considered by Fiore and Turi~\cite{ft01}. Finally,
it would be of interest to extend the results of~\cite{mm_prop} on
properties of recursive program scheme solutions to the richer
settings of this paper.

\subsection*{Acknowledgements} We thank the anonymous referees for
their comments which helped us to improve the presentation of this
paper.

%
%
\bibliographystyle{abbrv}
\bibliography{ourpapers,coalgebra}

\ifappendix
%
%
\clearpage
\appendix

\section{Results for Pointed Functors}
\label{app:A}

We mentioned in Remark~\ref{rem:arbitrary_monad} that
Theorems~\ref{thm:distcia} and \ref{thm:sandwich} hold more generally
for pointed endofunctors $M$ in lieu of a free monad $M=\ext
K$. However, in this case we need our base category to be
cocomplete. In this appendix we provide the details.

\begin{ass}
  We assume that $\A$ is a cocomplete category, that $H:\A\to\A$ is a functor and that $(M,\eta)$ is a \emph{pointed functor} on $\A$, i.\,e., $M:\A\to\A$ is a functor and $\eta:\Id\to M$ is a natural transformation. As before $c:C\to HC$ is a terminal coalgebra for $H$.
\end{ass}

\begin{defi}\label{def:dist}
  (1)~An \emph{algebra for $(M,\eta)$} is a pair $(A, a)$
  where $A$ is an object of $\A$ and $a: MA \to A$ is a morphism satisfying the
  \emph{unit law} $a \cdot \eta_A = \id_A$. 

  \medskip\noindent
  (2)~A \emph{distributive law of $M$ over $H$} is a
  natural transformation $\lambda: MH \to HM$ 
  such that the diagram 
  \begin{equation} \label{diag:dl_pointed}
  \vcenter{
  \xymatrix@1{
    	& H \ar[dl]_{\eta H} \ar[dr]^{H\eta}	&	\\
    MH \ar[rr]^{\lambda} &	& HM
    }
    }
  \end{equation}
  commutes.

  \medskip\noindent
  (3)~Let $(D,\varepsilon)$ be a copointed endofunctor on $\A$. A \emph{distributive law of $(M,\eta)$ over $(D,\varepsilon)$} is a distributive law $\lambda:MD\to DM$ of $(M,\eta)$ over the functor $D$ that makes, in addition to \refeq{diag:dl_pointed} with $H$ replaced by $D$, the diagram
  $$
  \xymatrix@1{
    MD \ar[rr]^{\lambda} \ar[dr]_{M\varepsilon} &	& DM \ar[dl]^{\varepsilon M}	\\
    	& M	&
    }
  $$
  commute.
\end{defi}

\begin{rem}
\takeout{
  (1)~Recall from Remark~\ref{rem:ell_lambda}(2) that every abstract GSOS rule $\ell:K(H\times\Id)\to H\ext K$ gives a distributive law $\lambda$ of $\ext K$ considered now as a pointed functor $(\ext K,\eta)$ over the cofree copointed functor $H\times\Id$.

  \medskip\noindent}
  (1)~Every distributive law $\lambda:MH\to HM$ gives a distributive law of the cofree copointed functor $H\times\Id$ via
  \begin{equation} \label{diag:dl_lift_cofree_copointed}
    \vcenter{
    \xymatrix{
      MH \ar[r]^{\lambda}	& HM	\\
      M(H\times\Id) \ar@{-->}[r] \ar[u]^{M\pi_0} \ar[dr]_{M\pi_1}	& (H\times\Id)M \ar[u]_{\pi_0} \ar[d]^{\pi_1}	\\
      	& M
    }
    }
  \end{equation}
  but not conversely.

  \medskip\noindent
  (2)~Analogously to Theorem~\ref{thm:interp_gsos_rule}, we have for any distributive law $\lambda$ of $M$ over the cofree copointed functor $H\times \Id$ 
  a unique \emph{$\lambda$-interpretation}, i.\,e., a unique morphism $b:MC\to C$ such that the diagram below commutes
  $$
  \xymatrix@C+1pc{
    MC \ar[r]^-{M\langle c,\id_C\rangle} \ar[d]_{b}	& M(HC\times C) \ar[r]^-{\lambda_C}	& HMC\times MC \ar[r]^-{\pi_0}	& HMC \ar[d]^{Hb}	\\
    C \ar[rrr]^c	&	&	& HC
  }
  $$
  and $(C,b)$ is an algebra for the pointed functor $M$, see
  \cite{bartels_thesis}. Notice that $b$ here corresponds to $\extalg
  b:\ext KC\to C$ in Theorem~\ref{thm:interp_gsos_rule}. If we have a
  distributive law $\lambda:MH\to HM$, then we obtain one of $M$ over
  the copointed functor $H \times \Id$ as in
  \refeq{diag:dl_lift_cofree_copointed}. We again call the resulting
  morphism $b:MC\to C$ the $\lambda$-interpretation in $C$.  In this
  case, the diagram above simplifies to
  \begin{equation} \label{eq:interp:M:c}
  \vcenter{
  \xymatrix{
    MC \ar[r]^-{Mc} \ar[d]_{b}	& MHC \ar[r]^-{\lambda_C}	& HMC \ar[d]^{Hb}	\\
    C \ar[rr]^c	&	& HC \,\text{.}
  }
  }
  \end{equation}
\end{rem}

Next we shall need a version of Theorem~\ref{thm:ellsolution} for a given distributive law $\lambda$ of $M$ over $H$ (or over the cofree copointed functor $H\times\Id$). This is a variation of Theorem~4.2.2 of Bartels \cite{bartels_thesis} (see also Lemma~4.3.2 in loc.\,cit.) using the cocompleteness of $\A$. Since one part (the uniqueness part) of the proof in \cite{bartels_thesis} is only presented for $\Set$ we give a full proof here for the convenience of the reader.

\begin{thm} \label{thm:lambdasolution}
  Let $\lambda:MH\to HM$ be a distributive law of the pointed functor $M$ over the functor $H$. Then for every \emph{$\lambda$-equation} $e:X\to HMX$ there exists a unique solution, i.\,e., a unique morphism $e^\dag:X\to C$ such that the diagram below commutes:
  \begin{equation} \label{diag:sol_lambda-equation}
  \vcenter{
  \xymatrix{
    X \ar[rr]^e \ar[d]_{e^\dag}	&	& HMX \ar[d]^{HMe^\dag}	\\
    C \ar[r]^c	& HC	& HMC \ar[l]_{Hb}
  }
  }
  \end{equation}
\end{thm}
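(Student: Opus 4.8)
The plan is to reduce this to a coinduction argument, exactly as in the proof of Theorem~\ref{thm:ellsolution} in \cite{bartels_thesis}, but being careful to use cocompleteness of $\A$ in place of the $\Set$-specific reasoning. First I would handle \emph{existence}. Given a $\lambda$-equation $e: X \to HMX$, the idea is to build a single $H$-coalgebra out of the data and invoke finality of $(C,c)$. The natural candidate carrier is $MX$ (the free algebra $\ext KX$ in the concrete case): one forms the composite
$$
d = (\xymatrix@1{MX \ar[r]^-{Me} & MHMX \ar[r]^-{\lambda_{MX}} & HMMX \ar[r]^-{H\mu_X} & HMX}),
$$
using the multiplication $\mu$ of the monad. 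But here $M$ is only a \emph{pointed} functor, so $\mu$ need not exist; this is precisely where cocompleteness enters. Instead of $\mu$, one constructs the ``free $M$-algebra-like'' object generated by $X$ relative to $e$ as a suitable colimit — concretely, the terminal coalgebra $C$ itself is used as the target, and one defines, by the cocompleteness of $\A$, an object $Z$ together with a morphism $MX \to HZ$ or similar, so that finality produces the solution. Following Bartels, the cleanest route is: take the coalgebra
$$
MX \xrightarrow{\ Me\ } MHMX \xrightarrow{\ \lambda_{MX}\ } HMMX \xrightarrow{\ H(\text{canonical})\ } HMX
$$
where the canonical map $MMX \to MX$ is obtained from the universal property / colimit presentation available under cocompleteness (this is the substitute for $\mu$). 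Let $h: MX \to C$ be the unique $H$-coalgebra homomorphism. Then I would set $e^\dag = h \cdot \eta_X : X \to C$ and verify that Diagram~\refeq{diag:sol_lambda-equation} commutes, using the coalgebra-homomorphism property of $h$, naturality of $\lambda$ and $\eta$, the unit law of the algebra $(C,b)$, and the defining equation \refeq{eq:interp:M:c} of the $\lambda$-interpretation $b$. This is a diagram chase of the same shape as the one in Theorem~\ref{thm:interp_gsos_rule}.

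For \emph{uniqueness}, suppose $e^\dag$ and $\tilde e^{\,\dag}$ both satisfy \refeq{diag:sol_lambda-equation}. The standard trick is to show that any solution $f: X \to C$ lifts canonically to an $H$-coalgebra homomorphism $\bar f: MX \to C$ out of the coalgebra $d$ constructed above, and that this lifting is determined by $f$; then finality of $C$ forces $\bar f = \bar{\tilde f}$, and precomposing with $\eta_X$ (a split mono after composing with $b$, since $b\cdot \eta_C$-type relations hold) gives $f = \tilde f$. Concretely: given a solution $f$, one defines $\bar f$ as $b' \cdot Mf$ where $b'$ is an appropriate $M$-algebra structure on $C$ (the $\lambda$-interpretation $b: MC \to C$), i.e.\ $\bar f = (MX \xrightarrow{Mf} MC \xrightarrow{b} C)$, and one checks that $\bar f$ is a coalgebra homomorphism from $d$ to $(C,c)$ using the solution equation for $f$, naturality, and \refeq{eq:interp:M:c}. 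Then $\bar f \cdot \eta_X = b \cdot M f \cdot \eta_X = b \cdot \eta_C \cdot f = f$ by the unit law of the algebra $(C,b)$ and naturality of $\eta$, so $f$ is recovered from $\bar f$; since $\bar f$ is unique by finality, $f$ is unique.

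The main obstacle I expect is the existence half, specifically constructing the replacement for the monad multiplication $\mu$ under the mere assumption that $M$ is a \emph{pointed} functor and $\A$ is cocomplete. With a genuine free monad one just uses $\mu$ and naturality; here one must instead exhibit the relevant colimit (a coequalizer or a transfinite colimit building the ``expression object'' over $X$) and show the coalgebra structure $d$ is well-defined and that $\lambda$ interacts correctly with the canonical quotient $MMX \to MX$ coming from that colimit. Bartels carries this out in his Theorem~4.2.2, so I would cite and adapt that construction, checking carefully that every use of $\mu$ in the free-monad proof can be replaced by the cocompleteness-based canonical morphism. Once that piece is in place, both the existence diagram chase and the uniqueness argument go through exactly as in the free-monad case, and the statement follows. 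Finally, the version over the cofree copointed functor $H\times\Id$ follows from the version over $H$ via the passage \refeq{diag:dl_lift_cofree_copointed}, since the $\lambda$-interpretation is the same in both cases and a $\lambda$-equation $e: X \to HMX$ is literally the same datum.
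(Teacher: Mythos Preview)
Your overall strategy---reduce to a single $H$-coalgebra and invoke finality of $(C,c)$---is exactly the right one, and your uniqueness idea of lifting a solution $f$ to $b\cdot Mf$ is the germ of the correct argument. But the concrete realization you describe does not go through, and the gap is precisely where you sense it.

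There is \emph{no} canonical map $MMX\to MX$ for a mere pointed functor, and cocompleteness does not produce one: the only natural maps around are $M\eta_X,\eta_{MX}:MX\to MMX$, going the wrong way. So the coalgebra $d$ on $MX$ cannot be built as you propose. What the paper (following Bartels) actually does is enlarge the carrier: one forms the colimit $S=\mathop{\mathrm{colim}}_n M^n$ over the diagram generated by all $M^i\eta M^j$, and then the desired ``multiplication substitute'' is a natural transformation $\chi:SM\to S$ (induced by $M^nM=M^{n+1}\xrightarrow{\inj^{n+1}}S$), together with a lifted distributive law $\lambda^*:SH\to HS$. The $H$-coalgebra one unfolds is
\[
\bar e = H\chi_X\cdot\lambda^*_{MX}\cdot Se:\ SX\longrightarrow HSX,
\]
and the solution is $e^\dag = h\cdot\inj^0_X$ for the unique homomorphism $h:SX\to C$. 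Several compatibility lemmas relating $\chi$, $\lambda^*$, and an auxiliary $\eps:SM\to MS$ are needed to push the verification through.

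Your uniqueness argument must likewise be lifted to $SX$: the single map $\bar f=b\cdot Mf:MX\to C$ is only the first step $h_1$ of an inductively defined cocone $h_0=f$, $h_{n+1}=b\cdot Mh_n:M^{n+1}X\to C$, which induces $h:SX\to C$ by the colimit property; one then checks \emph{this} $h$ is an $H$-coalgebra homomorphism $(SX,\bar e)\to(C,c)$. Finality then forces $h$, hence $f=h\cdot\inj^0_X$, to be unique. Your recovery equation $b\cdot Mf\cdot\eta_X=f$ is correct and is exactly the base case of the cocone verification.
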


Before we proceed to the proof of the statement we need some auxiliary
constructions and lemmas. We begin by defining an endofunctor $S$ on our
cocomplete category $\A$ as a colimit. We denote by $M^n$, $n \in \Nat$, the
$n$-fold composition of $M$ with itself. Now we consider the diagram $D$ in
the category of endofunctors on $\A$ given by the natural transformations in
the picture below:
$$
\xymatrix@1@+1pc{
  \Id \ar[r]^-{\eta} 
  & 
  M 
  \ar@/^.5pc/[r]^-{\eta M}
  \ar@/_.5pc/[r]_-{M\eta}
  &
  MM
  \ar@/^1.2pc/[r]^-{\eta MM}
  \ar[r]^-{M\eta M}
  \ar@/_1.2pc/[r]_-{MM\eta}
  &
  \cdots
  }
$$
More formally, the diagram $D$ is formed by all natural transformations
$$
\xymatrix@1@+1pc{M^{i+j} \ar[r]^-{M^i\eta M^j} & M^{i+1+j}}
\qquad 
i, j \in \Nat \,\text{.}
$$
Let $S$ be a colimit of this diagram $D$:
$$
S = \colim D \qquad \textrm{with injections $\inj^i: M^i \to S$.}
$$
Then $S$ is a pointed endofunctor with the point $\inj^0: \Id = M^0 \to S$. 

Recall that colimits in the category of endofunctors of $\A$ are formed
objectwise. So for any object $X$, $SX$ is a colimit of the diagram $D$ at that
object $X$ with colimit injections $\inj^n_X: M^n X \to SX$, $n \in
\Nat$. This implies that for any endofunctor $F$ of $\A$ the functor $SF$ 
is a colimit with injections $\inj^n F: M^n F \to SF$. 

The above definition of $S$ appears in Bartels~\cite{bartels_thesis}. Next we define
additional data using the universal property of the colimits $SM$ and $SH$: 
\begin{enumerate}
\item a natural transformation $\chi: SM \to S$ uniquely determined by the
  commutativity of the triangles below:
  $$
  \vcenter{
  \xymatrix{
    M^{n+1}
    \ar[d]_{\inj^n M}
    \ar[dr]^{\inj^{n+1}}
    \\
    SM \ar[r]_-{\chi} & S
    }}
  \qquad
  \textrm{for all $n \in \Nat$.}
  $$
\item a natural transformation $\eps: SM \to MS$ uniquely determined by the
  commutativity of the triangles below:
  $$
  \vcenter{
  \xymatrix{
    M^{n+1} 
    \ar[d]_{\inj^n M}
    \ar[rd]^{M \inj^n}
    \\
    SM \ar[r]_{\eps} & MS
    }}
  \qquad
  \textrm{for all $n \in \Nat$.}
  $$
\item a natural transformation $\lambda^*: SH \to HS$; indeed, define
  first $\lambda^n: M^n H \to HM^n$ recursively as follows:
  \begin{eqnarray*}    
    \lambda^0 & = & \id_H: H \to H; \\
    \lambda^{n+1} & = &
    \xymatrix@1{
      M^{n+1} H = MM^n H
      \ar[r]^-{M \lambda^n} 
      &
      MHM^n \ar[r]^-{\lambda M^n} 
      & HMM^n = HM^{n+1} \,\text{.}
      }
  \end{eqnarray*}
  Then $\lambda^*$ is uniquely determined by the commutativity of the squares
  below: 
  $$
  \vcenter{
  \xymatrix{
    M^n H
    \ar[d]_{\inj^n H}
    \ar[r]^-{\lambda^n}
    &
    HM^n
    \ar[d]^{H \inj^n}
    \\
    SH \ar[r]_-{\lambda^*} 
    &
    HS
    }}
  \qquad\textrm{for all $n \in \Nat$.}
  $$
  Observe that $\lambda^*$ is a distributive law of the pointed endofunctor $S$
  over $H$; the unit law is the above square for the case $n = 0$. 
\end{enumerate}

We now need to verify that the three natural transformations above are
well-defined. More precisely, we need to prove that those natural
transformations are induced by appropriate cocones. For
$\chi: SM \to S$ and $\lambda^*: SH \to HS$, this follows from Lemma~4.3.2 in
Bartels' thesis~\cite{bartels_thesis}. Hence, we make the explicit verification only
for $\eps$ and leave the details for the other two natural transformations for
the reader. To verify that the natural transformations $M\inj^n:
M^{n+1} \to MS$ form a cocone for the appropriate diagram with colimit
$SM$ consider the triangles below:
$$
\vcenter{
\xymatrix{
  M^{1+i+j}
  \ar[rd]_{M \inj^n}
  \ar[rr]^-{MM^i\eta M^j}
  & & 
  M^{1 + i + 1 + j}
  \ar[ld]^{M \inj^{n+1}}
  \\
  &
  MS
  }}
\qquad\textrm{for all $n \in \Nat$, $n = i +j$.}
$$
These triangles commute since $\inj^n: M^n \to S$ form a cocone. 

Next, notice that in the definition of $\lambda^*$ above there are two possible
canonical choices for $\lambda^{n+1}$. We now show that these two choices are
equal: 

\begin{lem}\label{lem:lamsucc}
  For all natural numbers $n$ we have the commutative square below:
  $$
  \xymatrix@C+1pc{
    M^{n+1} H 
    \ar[r]^-{M^n \lambda} 
    \ar[d]_{M \lambda^n} 
    &
    M^n H M 
    \ar[d]^{\lambda^n M}
    \\
    MHM^n 
    \ar[r]_-{\lambda M^n} 
    &
    HM^{n+1} \,\text{.}
    }
  $$
\end{lem}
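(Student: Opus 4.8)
The plan is to prove Lemma~\ref{lem:lamsucc} by a straightforward induction on $n$, unfolding the recursive definition of $\lambda^{n}$ and using only the functoriality of whiskering a natural transformation by the functor $M$ (on either side); naturality of $\lambda$ itself will not be needed.

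For the base case $n=0$ we have $\lambda^{0}=\id_{H}$, so both legs of the square reduce to $\lambda$ and commutativity is immediate.

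For the inductive step, assume the square commutes at $n$, that is,
\[
\lambda^{n} M \cdot M^{n}\lambda \;=\; \lambda M^{n} \cdot M\lambda^{n},
\]
and note that by the very definition of $\lambda^{n+1}$ the right-hand side of this equation is exactly $\lambda^{n+1}$. I want to show $\lambda^{n+1} M \cdot M^{n+1}\lambda = \lambda M^{n+1}\cdot M\lambda^{n+1}$. Starting from the left-hand side, apply $\lambda^{n+1}=\lambda M^{n}\cdot M\lambda^{n}$ and the functoriality of right-whiskering by $M$ to get $\lambda^{n+1}M=\lambda M^{n+1}\cdot M\lambda^{n}M$; then, using the functoriality of left-whiskering by $M$,
\[
\lambda^{n+1} M \cdot M^{n+1}\lambda
= \lambda M^{n+1}\cdot (M\lambda^{n}M)\cdot(MM^{n}\lambda)
= \lambda M^{n+1}\cdot M\bigl(\lambda^{n}M\cdot M^{n}\lambda\bigr).
\]
Applying the induction hypothesis inside the outermost $M(-)$ replaces $\lambda^{n}M\cdot M^{n}\lambda$ by $\lambda^{n+1}$, giving $\lambda M^{n+1}\cdot M\lambda^{n+1}$, which is the desired right-hand side. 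This closes the induction.

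The only point that needs genuine care is the bookkeeping of the various whiskerings — keeping track of which factor of the composite defining $\lambda^{n}$ is pre- or post-composed with which power of $M$, and of the domains and codomains at each stage — but there is no conceptual obstacle. I would therefore write out the intermediate composites explicitly, with their (co)domains indicated, rather than leaning on 2-categorical shorthand, so that the whiskering manipulations are fully transparent.
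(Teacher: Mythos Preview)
Your proof is correct and follows essentially the same route as the paper's: induction on $n$, with the inductive step handled by unfolding the recursive definition $\lambda^{n+1}=\lambda M^{n}\cdot M\lambda^{n}$, using functoriality of whiskering, and invoking the induction hypothesis with $M$ applied on the left. The paper presents the inductive step diagrammatically (splitting the outer square into two, the upper one being $M$ applied to the IH and the lower one trivially commuting), whereas you compute equationally and make the slightly slicker observation that the IH already reads $\lambda^{n}M\cdot M^{n}\lambda=\lambda^{n+1}$; but the content is the same.
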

\begin{proof}
  We prove the result by induction on $n$. The base case $n=0$ is clear: both 
  composites in the desired square are simply $\lambda: MH \to HM$. For the
  induction step we need to verify that the diagram below commutes:
  $$
  \xymatrix@C+3pc{
    M^{n+1}M H
    \ar[d]_{MM \lambda^n}
    \ar[r]^-{M^{n+1} \lambda = MM^n \lambda}
    &
    M^{n+1} H M
    \ar[d]^{M\lambda^n M}
    \\
    MMHM^n
    \ar[r]^-{M\lambda M^n}
    \ar[d]_{M\lambda M^n}
    &
    MHM^nM
    \ar[d]^{\lambda M^n M}
    \\
    MH M^{n+1}
    \ar[r]_{\lambda M^{n+1}}
    \ar@{<-} `l[u] `[uu]^{M \lambda^{n+1}} [uu]
    &
    HM^{n+1}M
    \ar@{<-} `r[u] `[uu]_{\lambda^{n+1} M} [uu]
    }
  $$
  The left-hand and right-hand parts both commute due to the definition of
  $\lambda^{n+1}$. The lower square obviously commutes, and for the
  commutativity of the upper one apply the functor $M$ to the induction
  hypothesis. Thus the desired outside square commutes. 
\end{proof}

Next we need to establish a couple of properties connecting the three natural
transformations $\chi$, $\eps$ and $\lambda^*$.

\begin{lem}\label{lem:chieps}
  The following diagram of natural transformations commutes:
  $$
  \xymatrix{
    SMM
    \ar[r]^-{\chi M}
    \ar[d]_{\eps M}
    &
    SM
    \ar[d]^\eps
    \\
    MSM
    \ar[r]_-{M \chi}
    &
    MS \,\text{.}
    }
  $$
\end{lem}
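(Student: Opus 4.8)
The plan is to exploit the fact that $SMM$ is itself a colimit. First I would recall that, since colimits of endofunctors of $\A$ are computed pointwise and whiskering a colimit cocone on the right by an endofunctor again yields a colimit cocone, the endofunctor $SMM$ is a colimit of the diagram obtained from $D$ by postcomposing with $MM$, with colimit injections $\inj^n MM : M^{n+2} \to SMM$ for $n \in \Nat$. (Here I use $\inj^n MM = (\inj^n M)M$.) Consequently, two natural transformations $SMM \to MS$ coincide as soon as they agree after precomposition with every $\inj^n MM$, so it suffices to check the square componentwise in $n$.

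The componentwise check is then a short calculation using only the defining triangles of $\chi$ and $\eps$. Fix $n \in \Nat$. Along the top and right of the square one has $\chi M \cdot (\inj^n MM) = (\chi \cdot \inj^n M)M = \inj^{n+1} M$ by the defining property of $\chi$, and then $\eps \cdot (\inj^{n+1} M) = M\inj^{n+1}$ by the defining property of $\eps$. Along the left and bottom one has $\eps M \cdot (\inj^n MM) = (\eps \cdot \inj^n M)M = (M\inj^n)M = M\inj^n M$ by the defining property of $\eps$, and then $M\chi \cdot (M\inj^n M) = M(\chi \cdot \inj^n M) = M\inj^{n+1}$ by functoriality of $M$ together with the defining property of $\chi$. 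In one line,
$$\eps \cdot \chi M \cdot \inj^n MM = \eps \cdot \inj^{n+1} M = M\inj^{n+1} = M\chi \cdot M\inj^n M = M\chi \cdot \eps M \cdot \inj^n MM.$$
Since the two composites agree on all colimit injections of $SMM$, they are equal, which is exactly the commutativity of the square.

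The only thing requiring any care is the bookkeeping of the whiskerings — keeping straight which functor is whiskered on which side, and noting that the defining triangles for $\chi$ and $\eps$ remain valid after whiskering by $M$ on the right. There is no genuine obstacle: the lemma is a purely formal consequence of the universal properties defining $S$, $\chi$ and $\eps$, and I expect the write-up to be essentially the display above plus the remark that colimit injections are jointly epic.
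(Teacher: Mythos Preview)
Your proof is correct and follows essentially the same approach as the paper: both arguments precompose with the colimit injections $\inj^n MM$ of $SMM$ and reduce each path around the square to $M\inj^{n+1}$ using the defining triangles of $\chi$ and $\eps$. The paper presents this as a single large diagram with the desired square in the middle and four surrounding cells, whereas you write it out equationally, but the content is identical.
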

\begin{proof}
  To verify that the square in the statement commutes we extend that square by
 precomposing with the injections into the colimit $SMM$. This yields the following diagram:
  $$
  \xymatrix{
    M^nMM
    \ar@{=}[rrr]
    \ar@{=}[ddd]
    \ar[rd]^{\inj^n MM}
    &
    &
    &
    M^{n+1} M
    \ar[ld]^{\inj^{n+1}M}
    \ar@{=}[ddd]
    \\
    &
    SMM 
    \ar[r]^-{\chi M}
    \ar[d]_{\eps M}
    &
    SM
    \ar[d]^\eps
    \\
    &
    MSM 
    \ar[r]_-{M \chi}
    & 
    MS
    \\
    MM^n M
    \ar[ru]^{M\inj^n M}
    \ar@{=}[rrr]
    &
    &
    &
    MM^{n+1}
    \ar[lu]_{M \inj^{n+1}}
    }
  $$
  The left-hand and right-hand inner squares commute by the definition of
  $\eps$,   and the upper and lower inner square commute by the definition of
  $\chi$. Since the outside commutes obviously, so does the desired middle
  square when precomposed by any injection $\inj^n MM$ of the colimit $SMM$. Thus,
  the desired middle square commutes. 
\end{proof}

\begin{lem}\label{lem:lamchi}
  The following square of natural transformations commutes:
  $$
  \xymatrix{
    SMH 
    \ar[r]^-{S\lambda}
    \ar[d]_{\chi H}
    &
    SHM
    \ar[r]^-{\lambda^* M}
    &
    HSM
    \ar[d]^{H\chi}
    \\
    SH
    \ar[rr]_-{\lambda^*}
    &&
    HS \,\text{.}
    }
  $$
\end{lem}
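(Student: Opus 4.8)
The plan is to establish the identity $\lambda^* \cdot \chi H = H\chi \cdot \lambda^* M \cdot S\lambda$ by precomposing both sides with the colimit injections $\inj^n MH \colon M^{n+1}H \to SMH$ and checking the equality separately for each $n \in \Nat$. This is legitimate because, as already observed for $SF$ with an arbitrary endofunctor $F$, whiskering the colimit cocone defining $S$ on the right by the fixed functor $MH$ again yields a colimit cocone, so two natural transformations out of $SMH$ coincide as soon as they coincide after precomposition with every $\inj^n MH$.

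First I would compute the left-hand composite. The defining property $\chi \cdot \inj^n M = \inj^{n+1}$ of $\chi$ gives $\chi H \cdot \inj^n MH = \inj^{n+1} H$, and then the defining property $\lambda^* \cdot \inj^m H = H\inj^m \cdot \lambda^m$ of $\lambda^*$ yields
\[
\lambda^* \cdot \chi H \cdot \inj^n MH \;=\; H\inj^{n+1} \cdot \lambda^{n+1}.
\]
Next I would compute the right-hand composite. Functoriality of $S$ applied to $\lambda\colon MH \to HM$ gives $S\lambda \cdot \inj^n MH = \inj^n HM \cdot M^n\lambda$; whiskering the defining property of $\lambda^*$ by $M$ gives $\lambda^* M \cdot \inj^n HM = H\inj^n M \cdot \lambda^n M$; and $H\chi \cdot H\inj^n M = H(\chi \cdot \inj^n M) = H\inj^{n+1}$. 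Chaining these three facts,
\[
H\chi \cdot \lambda^* M \cdot S\lambda \cdot \inj^n MH \;=\; H\inj^{n+1} \cdot \lambda^n M \cdot M^n\lambda.
\]
Thus the two precomposed sides agree precisely when $\lambda^{n+1} = \lambda^n M \cdot M^n\lambda$. But unfolding the recursive definition $\lambda^{n+1} = \lambda M^n \cdot M\lambda^n$, this is exactly the commutativity square of Lemma~\ref{lem:lamsucc}, so the identity holds componentwise, hence globally.

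I do not expect a genuine obstacle here; the argument is the same ``check it on the colimit injections and unwind the definitions'' pattern used for Lemmas~\ref{lem:chieps} and (by Bartels) for $\chi$ and $\lambda^*$. The only points requiring a little care are the bookkeeping of which functor each occurrence of $\inj^n$ is whiskered by (so that $\inj^n HM$, $(\inj^n H)M$, etc.\ are correctly identified), and the observation that after unfolding $\lambda^{n+1}$ the remaining obligation is literally the statement of Lemma~\ref{lem:lamsucc}; both are routine. If desired, the whole computation can also be packaged as a single large diagram built from the two defining diagrams of $\chi$ and $\lambda^*$ together with the square of Lemma~\ref{lem:lamsucc}, precomposed with $\inj^n MH$, exactly in the style of the proof of Lemma~\ref{lem:chieps}.
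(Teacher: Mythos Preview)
Your proposal is correct and follows essentially the same approach as the paper: precompose with the colimit injections $\inj^n MH$, unwind the defining properties of $\chi$ and $\lambda^*$ (together with naturality of $\inj^n$ for the $S\lambda$ step), and reduce to the identity $\lambda^{n+1} = \lambda^n M \cdot M^n\lambda$, which is exactly Lemma~\ref{lem:lamsucc}. The paper packages this as a single large diagram (precisely as you anticipate in your final remark), while you carry it out equationally; the content is identical.
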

\begin{proof}
  It suffices to verify that the desired square commutes when we extend it by
  precomposition with
  an arbitrary  colimit injection $\inj^n MH$ of 
 $SMH$. To this end we consider the diagram below:
  $$
  \xymatrix{
    M^n MH
    \ar[rd]^{\inj^n MH}
    \ar[rr]^-{M^n \lambda}
    \ar@{=}[ddd]
    &
    &
    M^nHM
    \ar[d]^{\inj^n HM}
    \ar[rr]^-{\lambda^n M}
    &
    &
    HM^nM
    \ar[ld]_{H\inj^n M}
    \ar@{=}[ddd]
    \\
    &
    SMH
    \ar[r]^-{S\lambda}
    \ar[d]_{\chi H}
    &
    SHM
    \ar[r]^-{\lambda^* M}
    &
    HSM
    \ar[d]^{H \chi}
    \\
    &
    SH
    \ar[rr]_-{\lambda^*}
    &&
    HS
    \\
    M^{n+1}H
    \ar[ru]^{\inj^{n+1} H}
    \ar[rrrr]_-{\lambda^{n+1}}
    &&&&
    HM^{n+1}
    \ar[lu]_{H \inj^{n+1}}
    }
  $$
  The left-hand and right-hand parts commute by the definition of $\chi$, and
  the lower and the upper right-hand parts commute by the definition of
  $\lambda^*$. The upper left-hand part commutes by the naturality of
  $\inj^n$. Finally, the outside commutes by the definition of
  $\lambda^{n+1}$ together with Lemma~\ref{lem:lamsucc}. Thus, the desired middle square commutes when extended by
  any colimit injection $\inj^n MH$ of the colimit $SMH$.  
\end{proof}

\begin{lem}\label{lem:lameps}
  The following diagram of natural transformations commutes:
  $$
  \xymatrix{
    SMH 
    \ar[r]^-{S\lambda}
    \ar[d]_{\eps H}
    & 
    SHM
    \ar[r]^-{\lambda^* M}
    &
    HSM
    \ar[d]^{H \eps}
    \\
    MSH
    \ar[r]_-{M \lambda^*} 
    &
    MHS
    \ar[r]_-{\lambda S}
    &
    HMS
    }
  $$
\end{lem}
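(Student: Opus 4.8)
The plan is to use that colimits of endofunctors are computed objectwise, so that the family of injections $\inj^n MH\colon M^{n}MH = M^{n+1}H \to SMH$ (for $n\in\Nat$) is jointly epimorphic. Hence it suffices to show that the boundary of the square in the statement commutes after precomposition with each $\inj^n MH$. This is exactly the strategy used in the proof of Lemma~\ref{lem:lamchi}, only with $\eps$ in place of $\chi$.

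First I would trace the top leg $H\eps\cdot\lambda^* M\cdot S\lambda$ composed with $\inj^n MH$. Naturality of $\inj^n$ gives $S\lambda\cdot \inj^n MH = \inj^n HM\cdot M^n\lambda$; the defining property of $\lambda^*$ (namely $\lambda^*\cdot \inj^n H = H\inj^n\cdot \lambda^n$) gives $\lambda^* M\cdot \inj^n HM = H\inj^n M\cdot \lambda^n M$; and the defining property of $\eps$ (namely $\eps\cdot \inj^n M = M\inj^n$) gives $H\eps\cdot H\inj^n M = HM\inj^n$. Putting these together, the top leg precomposed with $\inj^n MH$ equals $HM\inj^n\cdot \lambda^n M\cdot M^n\lambda$.

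Next I would trace the bottom leg $\lambda S\cdot M\lambda^*\cdot \eps H$ composed with $\inj^n MH$. The defining property of $\eps$ gives $\eps H\cdot \inj^n MH = M\inj^n H$; applying the functor $M$ to the defining property of $\lambda^*$ gives $M\lambda^*\cdot M\inj^n H = MH\inj^n\cdot M\lambda^n$; and naturality of $\lambda$ (with respect to the morphism $\inj^n$) gives $\lambda S\cdot MH\inj^n = HM\inj^n\cdot \lambda M^n$. Hence the bottom leg precomposed with $\inj^n MH$ equals $HM\inj^n\cdot \lambda M^n\cdot M\lambda^n$. Comparing the two expressions, it remains only to observe that $\lambda^n M\cdot M^n\lambda = \lambda M^n\cdot M\lambda^n$, which is precisely Lemma~\ref{lem:lamsucc}. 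Therefore the two legs agree after precomposition with every $\inj^n MH$, and since these injections are jointly epic the square commutes.

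I do not expect any genuine difficulty here: the categorical content is entirely supplied by Lemma~\ref{lem:lamsucc} (for the inductive bookkeeping on $\lambda^n$) together with the universal properties defining $\eps$ and $\lambda^*$. The only thing requiring care is keeping the whiskerings straight --- in particular remembering that $M^{n}MH$ and $M^{n+1}H$ denote the same functor, and tracking on which side of each natural transformation the whiskering occurs --- but this is routine. The argument is a near-verbatim variant of the proof of Lemma~\ref{lem:lamchi}.
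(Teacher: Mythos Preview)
Your proof is correct and follows essentially the same approach as the paper: precompose with the jointly epic colimit injections $\inj^n MH$, then use the defining properties of $\eps$ and $\lambda^*$, naturality of $\inj^n$ and $\lambda$, and finally Lemma~\ref{lem:lamsucc} to compare the two legs. The paper presents the same argument as one large pasting diagram rather than tracing the two composites linearly, but the content is identical.
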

\begin{proof}
  Once more it is sufficient to verify that the desired square commutes when
  extended by any injection of the colimit $SMH$. So consider the diagram
  below:
  $$
  \xymatrix{
    M^n M H
    \ar[rd]^{\inj^n MH}
    \ar@{=}[ddd]
    \ar[rr]^-{M^n \lambda}
    &&
    M^nHM
    \ar[d]^{\inj^n HM}
    \ar[rr]^-{\lambda^n M}
    &&
    HM^n M
    \ar[ld]_{H\inj^n M}
    \ar@{=}[ddd]
    \\
    &
    SMH 
    \ar[r]^-{S\lambda}
    \ar[d]_{\eps H}
    &
    SHM
    \ar[r]^-{\lambda^* M}
    &
    HSM
    \ar[d]^{H \eps}
    \\
    &
    MSH
    \ar[r]_-{M\lambda^*} 
    &
    MHS
    \ar[r]_-{\lambda S}
    &
    HMS
    \\
    MM^nH
    \ar[ru]^{M\inj^n H}
    \ar[rr]_-{M \lambda^n}
    &&
    MHM^n
    \ar[u]^{MH\inj^n}
    \ar[rr]_-{\lambda M^n}
    &&
    HMM^n
    \ar[lu]_{HM\inj^n}
    }
  $$
  The left-hand and right-hand parts commute by the definition of
  $\eps$, and the lower left-hand and upper right-hand parts commute
  by the definition of $\lambda^*$. The upper left-hand and the lower
  right-hand parts both commute due to the naturality of $\inj^n$ and
  $\lambda$, respectively. Finally, the outside commutes by
  Lemma~\ref{lem:lamsucc}. Thus, the desired inner square commutes
  when extended by any colimit injection $\inj^n MH: M^n MH \to SMH$.
\end{proof}

We are now prepared to prove the statement of
Theorem~\ref{thm:lambdasolution}.

\begin{proof}[Proof of Theorem~\ref{thm:lambdasolution}]
Let $e: X \to HMX$ be any $\lambda$-equation. We form the following $H$-coalgebra:
\begin{equation} \label{eq:lambda-equation_H-alg}
\ol{e} \equiv
\xymatrix@1{
  SX \ar[r]^-{Se}
  &
  SHMX \ar[r]^-{\lambda^*_{MX}}
  &
  HSMX
  \ar[r]^-{H\chi_X}
  &
  HSX \,\text{.}
  }
\end{equation}
Since $c: C \to HC$ is a terminal $H$-coalgebra there exists a unique
$H$-coalgebra homomorphism $h$ from $(SX, \ol{e})$ to $(C, c)$. We shall prove
that the morphism
\begin{equation}
\label{eq:sollambdaeq}
\sol{e} \equiv
\xymatrix@1{
  X \ar[r]^-{\inj^0_X} & SX \ar[r]^-{h} & C
  }
\end{equation}
is the desired unique solution of the $\lambda$-equation $e$. 

(1)~$\sol{e}$ is a solution of $e$. It is our task to establish that the outside
of the diagram below commutes (cf.~Diagram~\ref{diag:sol_lambda-equation}):
$$
\xymatrix@C+1pc{
  X 
  \ar[dddd]_e
  \ar[r]^-{\inj^0_X}
  &
  SX
  \ar[d]_{S e}
  \ar[r]^-h
  &
  C 
  \ar[ddd]^c 
  \ar@{<-} `u[l] `[ll]_{\sol{e}} [ll]
  \\
  & 
  SHMX
  \ar[d]^{\lambda^*_{MX}}
  \\
  &
  HSMX
  \ar[d]^{H \chi_X}
  \\
  &
  HSX
  \ar[r]_-{Hh}
  &
  HC
  \\
  HMX
  \ar[ru]_-{H\inj^1_X}
  \ar@/^.5pc/[ruuu]^(0.65){\inj^0_{HMX}}
  \ar[ruu]|*+{\labelstyle H\inj^0_{MX}}
  \ar[rr]_-{HM\sol{e}}
  &&
  HMC \ar[u]_{Hb}
  }
$$
The upper part commutes by the definition of $\sol{e}$, and the upper
right-hand square commutes since $h$ is a coalgebra homomorphism. The upper
left-hand part commutes due to the naturality of $\inj^0$, the triangle below
that commutes by the definition of $\lambda^*$, and the lowest triangle
commutes by the definition of $\chi$. It remains to verify that the lowest part
commutes. To this end we will now establish the following equation
\begin{equation}\label{eq:todo}
  b \cdot M \sol{e} = h \cdot \inj^1_X \,\text{.}
\end{equation}
Consider the diagram below:
$$
\xymatrix{
  MX
  \ar[rr]^-{M\sol{e}}
  \ar[rd]^{M \inj^0_X}
  \ar[rdd]_(0.65){\inj^0_{MX}}
  \ar[ddd]_{\inj^1_X}
  &&
  MC 
  \ar[ddd]^b
  \\
  &
  MSX 
  \ar[ru]^-{Mh}
  \\
  &
  SMX 
  \ar[u]_{\eps_X}
  \ar[ld]^{\chi_X}
  \\
  SX
  \ar[rr]_h
  &&
  C
  }
$$

The upper triangle commutes by the definition of $\sol{e}$, the left-hand
triangle commutes by the definition of $\chi$ and the inner triangle commutes
by the definition of $\eps$. In order to establish that the right-hand part
commutes we will use that $C$ is a terminal $H$-coalgebra. Thus, we 
shall exhibit $H$-coalgebra structures on the five objects and then  
 show that
all edges of the right-hand part of the diagram are $H$-coalgebra homomorphisms. 
Then by the uniqueness of coalgebra homomorphisms into the terminal
coalgebra $(C,c)$, we conclude that the desired part of the above diagram commutes. 

For $C$, we use $c: C\to HC$, and for $MC$ we use $\lambda_C\cdot Mc$.
We already know that $b: MC \to C$ is a coalgebra homomorphism
(see~\refeq{eq:interp:M:c}).
For $SX$, we use $\ol{e}$ from \refeq{eq:lambda-equation_H-alg}; again, 
we already know that $h: SX \to C$ is a coalgebra homomorphism. 
For $MSX$ we use $\lambda_{SX}\cdot M\ol{e}$.  The verification that 
$Mh$ is a coalgebra morphism comes from 
 the diagram below:
$$
\xymatrix@C+1pc{
  MSX 
  \ar[r]^-{Mh}
  \ar[d]_{M\ol{e}}
  & 
  MC
  \ar[d]^{Mc}
  \\
  MHSX
  \ar[r]^-{MHh}
  \ar[d]_{\lambda_{SX}}
  &
  MHC
  \ar[d]^{\lambda_C}
  \\
  HMSX
  \ar[r]_-{HMh}
  &
  HMC
  }
$$
To see that the upper square commutes, remove $M$ and recall that $h$ is a
coalgebra homomorphism from $(SX, \ol{e})$ to $(C,c)$. The lower square
commutes by the naturality of $\lambda$. 

Now we show that $\eps_X: SMX \to MSX$ is a coalgebra homomorphism, where
the structure on $SMX$ is the composite on the left below:
$$
\xymatrix@C+1pc{
  SMX 
  \ar[rr]^-{\eps_X}
  \ar[d]_{SMe}
  &&
  MSX
  \ar[d]^{MSe}
  \ar `r[rd] `[ddd]^{M\ol{e}} [ddd]
  \\
  SMHMX
  \ar[rr]^-{\eps_{HMX}}
  \ar[d]_{S\lambda_{MX}}
  &&
  MSHMX
  \ar[d]^{M\lambda^*_{MX}}
  &
  \\
  SHMMX
  \ar[d]_{\lambda^*_{MMX}}
  &&
  MHSMX
  \ar[d]^{MH\chi_X}
  \ar[ld]_{\lambda_{SMX}}
  &
  \\
  HSMMX
  \ar[r]^-{H \eps_{MX}}
  \ar[d]_{H\chi_{MX}}
  &
  HMSMX
  \ar[rd]_{HM\chi_X}
  &
  MHSX
  \ar[d]^{\lambda_{SX}}
  &
  \\
  HSMX
  \ar[rr]_-{H\eps_X}
  &&
  HMSX
  }
$$
The upper square commutes by the naturality of $\eps$, and the inner triangle
commutes by the naturality of $\lambda$. To see that the right-hand part
commutes, remove $M$ and consider the definition of $\ol{e}$. The lowest part
commutes due to Lemma~\ref{lem:chieps}, and the middle part commutes by
Lemma~\ref{lem:lameps}. 

Finally, we show that $\chi_X: SMX \to SX$ is a coalgebra homomorphism.
To do this we consider the following diagram:
$$
\xymatrix@C+1pc{
  SMX
  \ar[r]^-{\chi_X}
  \ar[d]_{SMe}
  &
  SX
  \ar[d]^{Se}
  \\
  SMHMX
  \ar[r]^-{\chi_{HMX}}
  \ar[d]_{S\lambda_{MX}}
  &
  SHMX
  \ar[dd]^{\lambda^*_{MX}}
  \\
  SHMMX
  \ar[d]_{\lambda^*_{MMX}}
  \\
  HSMMX
  \ar[r]^-{H\chi_{MX}}
  \ar[d]_{H\chi_{MX}}
  &
  HSMX 
  \ar[d]^{H\chi_X}
  \\
  HSMX
  \ar[r]_-{H\chi_X}
  &
  HSX
  }
$$
The upper square commutes by the naturality of $\chi$, the middle square
commutes by Lemma~\ref{lem:lamchi}, and the lower square commutes obviously. 
This concludes the proof that $\sol{e}$ is a solution of $e$. 

(2)~$\sol{e}$ in \refeq{eq:sollambdaeq} is the unique solution of $e$. Suppose now that $\sol{e}$ is any
   solution of the $\lambda$-equation $e$. Recall that the object $SX$ is a
   colimit of the diagram $D$ at object $X$ with the colimit injections
   $\inj^n_X: M^n X \to SX$, $n \in \Nat$. We will use the universal property
   of that colimit to define a morphism $h: SX \to C$. To this end we need to
   give a cocone $h_n: M^n X \to C$, $n \in \Nat$, for 
   the appropriate diagram. We define this cocone inductively as follows:
   \begin{eqnarray*}
     h_0 & = & \sol{e}: M^0 X = X \to C;\\
     h_{n+1} & = & 
     \xymatrix@1{
       M^{n+1} X = MM^n X \ar[r]^-{Mh_n} & MC \ar[r]^-b & C
       },
     \qquad n \in \Nat.
   \end{eqnarray*}
   We now verify by induction on $n$ that the morphisms $h_n$, $n \in \Nat$ do
   indeed form a cocone. For the base case consider the diagram below:
   $$
   \xymatrix{
     M^0 X = X 
     \ar[rr]^-{\eta_X} 
     \ar[rd]_{h_0}
     \ar `d[ddr]_{h_0} [ddrr]
     && 
     MX= M^1 X
     \ar[d]^{M h_0}
     \\
     & 
     C
     \ar[r]^-{\eta_C}
     \ar@{=}[rd]
     &
     MC
     \ar[d]^b
     \\
     &&C
     }
   $$
   The upper part commutes by the naturality of $\eta$, the lower triangle
   commutes since $b: MC \to C$ is an algebra for the pointed endofunctor $M$,
   and the left-hand part is trivial. 
   For the induction step consider for any natural number $n = i+j$ the
   following diagram: 
   $$
   \xymatrix{
     M^{n+1} X = MM^{i+j} X
     \ar[rr]^-{MM^i\eta_{M^j X}}
     \ar[rd]_{Mh_n}
     \ar `d[ddr]_{h_{n+1}} [ddr]
     &&
     MM^iMM^j = M^{n+2} X
     \ar[ld]^{M h_{n+1}}
     \ar `d[ddl]^{h_{n+2}} [ddl]
     \\
     &
     MC
     \ar[d]^b
     \\
     &
     C
     }
   $$
   This diagram commutes: for the upper triangle remove $M$ and use the
   induction hypothesis, and the remaining two inner parts commute by the
   definition of $h_{n+1}$ and $h_{n+2}$, respectively. 
     
   Now we obtain a unique morphism $h: SX \to C$ such that for any natural
   number $n$ the triangle below commutes:
   \begin{equation}\label{eq:hdef}
     \vcenter{
       \xymatrix{
         M^n X
         \ar[d]_{\inj^n X}
         \ar[rd]^{h_n}
         \\
         SX
         \ar[r]_-h
         &
         C \,\text{.}
         }
       }
   \end{equation}
   Next we show that $h: SX \to C$ is a coalgebra
   homomorphism from $(SX, \ol{e})$ to the terminal coalgebra $(C,c)$. 
   To this end we will now verify that the lower part in the diagram below
   commutes:
   $$
   \xymatrix{
     M^n X
     \ar[d]_{\inj^n_X}
     \ar[r]^-{M^n e}
     \ar `l[d] `[dd]_{h_n} [dd]
     &
     M^n HMX
     \ar[d]^{\inj^n_{HMX}}
     \ar[r]^-{\lambda^n_{MX}}
     &
     HM^nMX
     \ar[d]^{H\inj^n_{MX}}
     \ar@{=}[r]
     &
     HM^{n+1}X
     \ar[d]^{H\inj^{n+1}_X}
     \ar `r[d] `[dd]^{Hh_{n+1}} [dd]
     \\
     SX
     \ar[r]^-{Se}
     \ar[d]_h
     &
     SHMX
     \ar[r]^-{\lambda^*_{MX}}
     &
     HSMX
     \ar[r]^-{H\chi_X}
     &
     HSX
     \ar[d]^{Hh}
     \\
     C
     \ar[rrr]^-c
     &&&
     HC
     }
   $$
   It suffices to show that the desired lower part commutes when extended by 
   any co\-limit injection $\inj^n_X$. Indeed, the left-hand part of the 
   diagram above commutes by Diagram~\refeq{eq:hdef}, and for the commutativity of
   the right-hand part, remove $H$ and use Diagram~\refeq{eq:hdef} again. The
   upper left-hand square commutes by the naturality of $\inj^n$, the upper
   middle square commutes by the definition of $\lambda^*$, and for the
   commutativity of the upper right-hand part remove $H$ and use the definition
   of $\chi$. It remains to verify that the outside of the diagram commutes. 
   We will now prove this by induction on $n$. For the base case $n=0$ we
   obtain the following diagram
   $$
   \xymatrix@C+1pc{
     X 
     \ar[dd]_{h_0}
     \ar[r]^-e
     &
     HMX
     \ar[d]_{HMh_0}
     \\
     &
     HMC
     \ar[d]_{Hb}
     \\
     C 
     \ar[r]_-c
     &
     HC
     \ar@{<-} `r[u] `[uu]_{Hh_1} [uu]
     }
   $$
   This diagram commutes: for the commutativity of the right-hand part remove
   $H$ and use the definition of $h_1$, and the left-hand part commutes since
   $h_0 = \sol{e}$ is a solution of the $\lambda$-equation $e$. 

   Finally, for the induction step we consider the diagram below:
   $$
   \xymatrix@C+1pc{
     M^{n+1} X 
     \ar[d]_{Mh_n}
     \ar[r]^-{M^{n+1} e}
     \ar `l[d] `[dd]_{h_{n+1}} [dd]
     &
     M^{n+1} HMX
     \ar[r]^-{M\lambda^n_{MX}}
     \ar `u[r] `[rr]^{\lambda^{n+1}_{MX}} [rr]
     &
     MHM^nMX
     \ar[d]^{MHh_{n+1}}
     \ar[r]^-{\lambda_{M^{n+1} X}}
     &
     HM^{n+2} X 
     \ar[d]_{HMh_{n+1}}
     \ar `r[d] `[dd]^{Hh_{n+2}} [dd]
     \\
     MC
     \ar[rr]^-{Mc}
     \ar[d]_b
     &&
     MHC
     \ar[r]^-{\lambda_C}
     &
     HMC 
     \ar[d]^{Hb}
     \\
     C
     \ar[rrr]_-c
     &&&
     HC
     }
   $$
   We see that this diagram commutes as follows: the lower part commutes by the
   definition of $b$ (see~\refeq{eq:interp:M:c}), the left-hand part
   commutes by the definition of $h_{n+1}$, and for the commutativity of the right-hand
   part remove $H$ and use the definition of $h_{n+2}$. The small upper part commutes by the definition of $\lambda^{n+1}$, the upper right-hand
   square commutes by the naturality of $\lambda$, and finally, to see the
   commutativity of the upper left-hand square remove $M$ and use the induction
   hypothesis. 
   
   We have finished the proof that $h: SX \to C$ is a coalgebra
   homomorphism from $(SX, \ol{e})$ to the terminal coalgebra $(C,c)$. Since $h$
   is uniquely determined,   it follows that the solution $\sol{e} = h \cdot
   \inj^0_X$ is uniquely determined, too. This completes our proof.
\end{proof}

\begin{rem}
  As explained by Bartels in \cite{bartels_thesis}, Theorem~\ref{thm:lambdasolution} extends to the case where a distributive law $\lambda$ of $M$ over the cofree copointed functor $H\times\Id$ is given. We briefly explain the ideas.

  Let $D=H\times \Id$ and $\varepsilon=\pi_1:D\to\Id$. \\
  (1)~A coalgebra for the copointed functor $(D,\varepsilon)$ is a pair $(X,x)$ where $x:X\to DX$ is such that $\varepsilon_X \cdot x = \id_X$. Homomorphisms of coalgebras for $(D,\varepsilon)$ are the usual $D$-coalgebra homomorphisms. It is trivial to prove that
  $$
  \xymatrix@C+1pc{
  C
  \ar[r]^-{\langle c,\id_C\rangle}
  &
  HC\times C
  }
  $$
  is a terminal coalgebra for $(D,\varepsilon)$.

  \medskip\noindent
  (2)~One verifies that $\lambda$-equations $e:X\to HMX$ are in bijective correspondence with morphisms $f:X\to DMX$ such that
  $$
  \xymatrix{
    X \ar[r]^-{f} \ar[dr]_{\eta_X}	& DMX \ar[d]^{\varepsilon_{MX}}	\\
    	& MX
  }
  $$
  commutes, and also
  that solutions of $e$ correspond bijectively to solutions of $f$, i.\,e., morphisms $f^\dag:X\to C$ such that Diagram \refeq{diag:sol_lambda-equation} commutes with $H$ replaced by $D$ and $c$ replaced by $\langle c,\id_C\rangle$:
  $$
  \xymatrix{
    X \ar[rr]^f \ar[d]_{f^\dag}	&	& DMX \ar[d]^{DMf^\dag}	\\
    C \ar[r]^-{\langle c,\id_C\rangle}	& DC	& DMC \ar[l]_{Db}
  }
  $$
  See \cite{bartels_thesis}, Lemma 4.3.9.

  \medskip\noindent
  (3)~The same proof as the one for Theorem~\ref{thm:lambdasolution} shows that for every $f:X\to DMX$ as in (2) above there exists a unique solution $f^\dag$. One only replaces $H$ by $D$, $c$ by $\langle c,\id_C\rangle$, and one has to verify that the coalgebra $\ol e:SX\to DSX$ from \refeq{eq:lambda-equation_H-alg} is a coalgebra for the copointed endofunctor, see \cite{bartels_thesis}, Lemma 4.3.7.
\end{rem}

To sum up, we obtain the following

\begin{cor}
  Let $\lambda$ be a distributive law of the pointed functor $M$ over the copointed one $H\times \Id$. Then for every $e:X\to HMX$ there exists a unique solution, i.\,e., a unique $e^\dag:X\to C$ such that \refeq{diag:sol_lambda-equation} commutes.
\end{cor}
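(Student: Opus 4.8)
The plan is to reduce the statement to Theorem~\ref{thm:lambdasolution} by means of the correspondence sketched in the Remark immediately preceding this corollary. Write $D = H\times\Id$ and $\varepsilon = \pi_1\colon D\to\Id$ for the cofree copointed functor on $H$ with its point, and let $b\colon MC\to C$ be the $\lambda$-interpretation in $C$. First I would record that $\langle c,\id_C\rangle\colon C\to HC\times C = DC$ is a terminal coalgebra for the copointed functor $(D,\varepsilon)$: the counit law $\varepsilon_C\cdot\langle c,\id_C\rangle = \id_C$ is immediate, and since a homomorphism of copointed $D$-coalgebras is nothing but an ordinary $D$-coalgebra homomorphism, terminality among copointed coalgebras follows from terminality of $(C,c)$ among $D$-coalgebras.

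Next I would set up, following part~(2) of that Remark, the bijection between $\lambda$-equations $e\colon X\to HMX$ and morphisms $f\colon X\to DMX$ satisfying $\varepsilon_{MX}\cdot f = \eta_X$, given by $f = \langle e,\eta_X\rangle$ in one direction and $e = \pi_0\cdot f$ in the other. Under this bijection I would check that a morphism $e^\dag\colon X\to C$ makes Diagram~\refeq{diag:sol_lambda-equation} commute if and only if it solves $f$, i.e.\ makes the analogous square with $H$ replaced by $D$, $c$ replaced by $\langle c,\id_C\rangle$, and $Hb$ replaced by $Db$, commute. This is a routine chase using the counit law for $\langle c,\id_C\rangle$ together with the unit law $b\cdot\eta_C = \id_C$ for the pointed functor $M$.

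Finally comes the substantive step: I would re-run the construction and the proof of Theorem~\ref{thm:lambdasolution} verbatim with $H$ replaced by $D$ throughout. That is, one forms the pointed functor $S = \colim D'$ over the diagram of iterated copies of $M$, the natural transformations $\chi\colon SM\to S$ and $\eps\colon SM\to MS$ exactly as before, and $\lambda^*\colon SD\to DS$ built from the iterates $\lambda^n$ of $\lambda$; then for $e\colon X\to DMX$ one sets $\bar e = D\chi_X\cdot\lambda^*_{MX}\cdot Se\colon SX\to DSX$, takes the unique coalgebra homomorphism $h\colon (SX,\bar e)\to (C,\langle c,\id_C\rangle)$, and puts $e^\dag = h\cdot\inj^0_X$. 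Lemmas~\ref{lem:lamsucc}, \ref{lem:chieps}, \ref{lem:lamchi} and~\ref{lem:lameps} carry over with $H$ replaced by $D$ without change, and the existence and uniqueness arguments then go through as written. Transporting the resulting unique $e^\dag$ back along the bijection of the previous paragraph yields the unique $e^\dag\colon X\to C$ satisfying~\refeq{diag:sol_lambda-equation}, as required.

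\textbf{Main obstacle.} The only genuinely new point is verifying that $\bar e\colon SX\to DSX$ is a coalgebra for the copointed functor $(D,\varepsilon)$, i.e.\ that $\varepsilon_{SX}\cdot\bar e = \id_{SX}$ — this is needed so that $\bar e$ admits a (unique) homomorphism into the terminal copointed coalgebra $\langle c,\id_C\rangle$. Establishing it requires unwinding the definitions of $\chi$ and $\lambda^*$ and, crucially, using the extra triangle $\varepsilon M\cdot\lambda = M\varepsilon$ valid for a distributive law over a copointed functor (which is exactly the datum absent in the bare Theorem~\ref{thm:lambdasolution}); one checks inductively that $\varepsilon M^n\cdot\lambda^n = M^n\varepsilon$, passes to the colimit to get $\varepsilon S\cdot\lambda^* = S\varepsilon$, and combines this with $\varepsilon_{SMX}\cdot D\chi_X = \chi_X\cdot\varepsilon_{SMX}$ (naturality of $\varepsilon$) and $\varepsilon_{MX}\cdot f = \eta_X$ to conclude. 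All other verifications are the ones already carried out in the proof of Theorem~\ref{thm:lambdasolution}.
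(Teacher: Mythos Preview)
Your proposal is correct and follows essentially the same route as the paper: the Corollary is stated as a summary of the preceding Remark, whose parts (1)--(3) are exactly your three steps (terminal copointed coalgebra, bijection on equations and solutions, and re-running the proof of Theorem~\ref{thm:lambdasolution} with $D$ in place of $H$), and the paper likewise singles out the verification that $\ol e$ is a coalgebra for the copointed functor as the one new ingredient. Your sketch of that verification via $\varepsilon M^n\cdot\lambda^n = M^n\varepsilon$ and $\chi_X\cdot S\eta_X = \id_{SX}$ is fine; the paper simply refers to Bartels' thesis for it.
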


\begin{thm} \label{thm:distcia_pointed}
  Let $\lambda$ be a distributive law of the pointed functor $M$ over the copointed one $H\times\Id$, and let $b:MC\to C$ be its $\lambda$-interpretation. Consider the algebra
  $$
  k = (
  \xymatrix{
    HMC
    \ar[r]^{Hb}
    &
    HC
    \ar[r]^{c^{-1}}
    &
    C
  }
  ) \,\text{.}
  $$
  Then $(C,k)$ is a cia for $HM$.
\end{thm}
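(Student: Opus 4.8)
The plan is to repeat the argument of Theorem~\ref{thm:distcia} essentially verbatim, with the free monad $M=\ext K$ replaced by an arbitrary pointed functor $M$ and with Theorem~\ref{thm:ellsolution} replaced by the solution theorem just established (the Corollary preceding this statement, i.e. the version of Theorem~\ref{thm:lambdasolution} for distributive laws over the copointed functor $H\times\Id$). The key observation is that the proof of Theorem~\ref{thm:distcia} uses only the point $\eta:\Id\to M$ and never the multiplication of $M$ (this was already noted in Remark~\ref{rem:arbitrary_monad}(2)), so nothing essential needs to change.

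Concretely, I would proceed as follows. Let $e:X\to HMX+C$ be a flat equation morphism; we must produce a unique $\sol e:X\to C$ with $[k,C]\cdot(HM\sol e+C)\cdot e=\sol e$. Form the $\lambda$-equation
$$
\ol e=\bigl(\xymatrix@1{X+C\ar[r]^-{[e,\inr]}&HMX+C\ar[rr]^-{HMX+H\eta_C\cdot c}&&HMX+HMC\ar[r]^-{\can}&HM(X+C)}\bigr),
$$
and invoke the Corollary to get the unique $s:X+C\to C$ with $c\cdot s=Hb\cdot HMs\cdot\ol e$ (this is Diagram~\refeq{diag:distsol} with $\extalg b$ replaced by $b$ and $k=c^{-1}\cdot Hb$). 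Then I set $\sol e=s\cdot\inl$ and carry over the three sub-steps of the original proof. First, $s\cdot\inr=\id_C$: the diagram~\refeq{diag:id} commutes word for word, the only change being that the unit law $\extalg b\cdot\eta_C=\id_C$ is replaced by the unit law $b\cdot\eta_C=\id_C$ for the algebra $(C,b)$ of the pointed functor $M$; hence $s\cdot\inr$ is a coalgebra endomorphism of the terminal coalgebra, so it is $\id_C$. Second, $\sol e$ is a solution: diagram~\refeq{diag:sol} commutes as before, using $s\cdot\inr=\id_C$, the definition of $\ol e$, and the computation $k\cdot H\eta_C\cdot c=c^{-1}\cdot Hb\cdot H\eta_C\cdot c=c^{-1}\cdot c=\id_C$ (again just the unit law for $(C,b)$). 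Third, uniqueness: given any solution $\sol e$, put $s=[\sol e,\id_C]$ and check that $s$ satisfies the defining property of the solution of $\ol e$ by verifying the two coproduct components separately, exactly as in the original argument, so $s$ is forced, hence so is $\sol e=s\cdot\inl$.

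I do not expect a genuine obstacle here; the real content has already been packaged into Theorem~\ref{thm:lambdasolution} and its copointed-functor refinement (whose full proof occupies the bulk of this appendix). The one point that deserves a line of care is to make sure that every step of the proof of Theorem~\ref{thm:distcia} that invoked an Eilenberg--Moore law of $\extalg b$ in fact only used the unit law $b\cdot\eta_C=\id_C$ of the pointed-functor algebra $(C,b)$ and the equation $k=c^{-1}\cdot Hb$; a quick scan confirms this, since the multiplication of $M$ nowhere appears. Thus the cleanest write-up is simply: ``The proof is identical to that of Theorem~\ref{thm:distcia}, using the Corollary above in place of Theorem~\ref{thm:ellsolution}, the algebra $b:MC\to C$ in place of $\extalg b$, and noting that only the unit law $b\cdot\eta_C=\id_C$ of the pointed functor $M$ is needed.'' If a fuller exposition is wanted, one reproduces diagrams~\refeq{diag:distsol}, \refeq{diag:id} and~\refeq{diag:sol} with those cosmetic substitutions.
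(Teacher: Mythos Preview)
Your proposal is correct and matches the paper's own proof, which consists of the single sentence ``Indeed, to prove this result copy the proof of Theorem~\ref{thm:distcia} replacing $\extalg b:\ext KC\to C$ by $b:MC\to C$.'' Your write-up is in fact more explicit than the paper's, since you verify that only the unit law is used and spell out which solution theorem is invoked.
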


Indeed, to prove this result copy the proof of Theorem~\ref{thm:distcia} replacing $\extalg b:\ext KC\to C$ by $b:MC\to C$.

However, for our version of Theorem~\ref{thm:sandwich} in the current setting we need a different proof. We start with an auxiliary lemma.

\begin{lem} \label{lem:pointed_MM}
Let $\lambda:MH\to HM$ be a distributive law of the pointed functor $M$ over the functor $H$, and let $b:MC\to C$ be its interpretation. Then the natural transformation $\lambda'=\lambda M \cdot M\lambda:MMH\to HMM$ is a distributive law of the pointed functor $MM$ over $H$, and $Mb \cdot b$ is the $\lambda'$-interpretation in $C$.
\end{lem}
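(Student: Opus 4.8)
The plan is to reduce everything to the characterization of $\lambda$-interpretations as (essentially) unique coalgebra homomorphisms into the terminal coalgebra, together with naturality of $\eta$ and $\lambda$ and the single unit law \refeq{diag:dl_pointed}.

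First I would check that $\lambda' = \lambda M\cdot M\lambda$ is a distributive law of the pointed functor $MM$, equipped with its canonical point $\eta' = \eta M\cdot\eta = M\eta\cdot\eta$, over the functor $H$. By Definition~\ref{def:dist}(2) this is the single equation $\lambda'\cdot(\eta' H) = H\eta'$; no further condition is required since a distributive law over a mere functor carries only the unit law. To verify it I expand $\eta' H = (\eta MH)\cdot(\eta H)$, use naturality of $\eta\colon\Id\to M$ at the natural transformation $\lambda$ to rewrite $(M\lambda)\cdot(\eta MH) = (\eta HM)\cdot\lambda$, then observe that $(\lambda M)\cdot(\eta HM)$ is the unit law $\lambda\cdot(\eta H) = H\eta$ whiskered on the right by $M$, and finally apply the unit law once more to the remaining $\lambda\cdot(\eta H)$. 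This collapses the composite to $H\bigl((\eta M)\cdot\eta\bigr) = H\eta'$, as needed.

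Second I would recall from the discussion around \refeq{eq:interp:M:c} that the $\lambda$-interpretation $b\colon MC\to C$ is exactly the unique $H$-coalgebra homomorphism from $(MC,\ \lambda_C\cdot Mc)$ to $(C,c)$, and likewise that the $\lambda'$-interpretation $b'\colon MMC\to C$ is the unique $H$-coalgebra homomorphism from $(MMC,\ \lambda'_C\cdot MMc)$ to $(C,c)$. Thus it suffices to exhibit the composite $MMC\xrightarrow{Mb}MC\xrightarrow{b}C$ (the morphism written $Mb\cdot b$ in the statement) as an $H$-coalgebra homomorphism of this kind. Since $b$ is already a homomorphism $(MC,\lambda_C\cdot Mc)\to(C,c)$ by \refeq{eq:interp:M:c}, the only remaining point is that $Mb$ is a homomorphism $(MMC,\lambda'_C\cdot MMc)\to(MC,\lambda_C\cdot Mc)$; composing the two and invoking terminality of $(C,c)$ then forces $b\cdot Mb = b'$.

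The computation for $Mb$ is the technical heart, but it is short: from $(\lambda_C\cdot Mc)\cdot Mb = \lambda_C\cdot M(c\cdot b)$ I substitute the defining equation $c\cdot b = Hb\cdot\lambda_C\cdot Mc$, apply $M$, use naturality of $\lambda$ at the morphism $b\colon MC\to C$, i.e.\ $\lambda_C\cdot MHb = HMb\cdot\lambda_{MC}$, to pull $HMb$ to the front, and recognise $\lambda_{MC}\cdot M\lambda_C = \lambda'_C$. This yields precisely $(\lambda_C\cdot Mc)\cdot Mb = HMb\cdot\lambda'_C\cdot MMc$. The main obstacle is not conceptual; it is just bookkeeping — keeping the whiskerings and the two uses of the unit law straight in the first part, and getting the direction of the naturality square for $\lambda$ right in the last part. (The unit law $b\cdot Mb\cdot\eta'_C = \id_C$, which says $b\cdot Mb$ is an algebra for the pointed functor $MM$, either follows directly as in the single-$M$ case or drops out for free once $b\cdot Mb$ is known to be a coalgebra homomorphism into $(C,c)$.)
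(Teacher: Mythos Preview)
Your proposal is correct and follows essentially the same route as the paper. For the first part the paper draws the diagram with the two unit-law triangles for $\lambda$ and the naturality square for $\eta$, exactly the decomposition you describe equationally; for the second part the paper verifies the interpretation square \refeq{eq:interp:M:c} for $\lambda'$ by pasting the $M$-image of the defining square for $b$ (your ``$Mb$ is a coalgebra homomorphism'' step) against a naturality square for $\lambda$ and the defining square for $b$ itself, which is precisely your composition-of-homomorphisms argument. The paper also verifies the unit law $b\cdot Mb\cdot\eta'_C=\id_C$ explicitly, but as you note this is immediate from the unit law for $b$ and naturality of $\eta$, and in any case is not needed to identify $b\cdot Mb$ with the $\lambda'$-interpretation once terminality has been invoked.
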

\begin{proof}
Clearly $(MM,\eta M \cdot \eta = M\eta \cdot \eta:\Id\to MM)$ is a pointed endofunctor.
The following commutative diagram
$$
\begin{xy}
\xymatrix{
	&	&	H \ar[dl]_{\eta H} \ar[dr]^{H\eta}	&	&	\\
	&	MH \ar[dl]_{M\eta H} \ar[dr]_{MH\eta}	&	&	HM \ar[dl]^{\eta HM} \ar[dr]^{H\eta M}	&	\\
MMH \ar[rr]_{M\lambda}	&	&	MHM \ar[rr]_{\lambda M}	&	&	HMM
}
\end{xy}
$$
shows that $\lambda'=\lambda M \cdot M\lambda$ is a distributive law of
the pointed functor $MM$ over $H$. In fact, the triangles commute by
the assumption on $\lambda$, and the remaining upper square commutes
by naturality of $\eta$; thus the outside triangle commutes.

To see that $b \cdot Mb$ is the $\lambda'$-interpretation in $C$, consider the following diagram:
$$
\begin{xy}
\xymatrix{
MMC \ar[r]^-{MMc} \ar[d]_{Mb}	&	MMHC \ar[r]^{M\lambda_C}	&	MHMC \ar[r]^{\lambda_{MC}} \ar[d]_{MHb}	&	HMMC \ar[d]^{HMb}	\\
MC \ar[rr]^{Mc} \ar[d]_{b}	&	&	MHC \ar[r]^{\lambda_C}	&	HMC \ar[d]^{Hb}	\\
C \ar[rrr]_{c}	&	&	&	HC
}
\end{xy}
$$
It commutes since $b$ is the $\lambda$-interpretation in
$C$ and by the naturality of $\lambda$. In addition, $b \cdot Mb$ is easily seen to be an algebra for the pointed functor $(MM,\eta M \cdot \eta)$ since $b$ is one for $(M,\eta)$ and $\eta$ is a natural transformation:
$$
\begin{xy}
\xymatrix{
C \ar[d]_{\eta_C} \ar@{=}[dr]	&	&	\\
MC \ar[d]_{\eta_{MC}} \ar[r]_{b}	&	C \ar[d]_{\eta_C} \ar@{=}[dr]	&	\\
MMC \ar[r]_{Mb}	&	MC \ar[r]_b	&	C
}
\end{xy}
$$
This concludes the proof.
\end{proof}

\begin{thm} \label{sandwich_pointed}
  Let $\lambda:MH\to HM$ be a distributive law of the pointed functor $M$ over the functor $H$, and let $b:MC\to C$ be its $\lambda$-interpretation. Consider the algebra
  \iffull
  $$k' =(\xymatrix@1{MHMC \ar[r]^-{Mk} & MC \ar[r]^-{b} & C}) \,\text{,}$$
  \else
  $k' = MHMC \xrightarrow{Mk} MC \xrightarrow{b} C \,\text{,}$
  \fi
  where $k = c^{-1} \cdot Hb$ as in Theorem~\ref{thm:distcia_pointed}. 
  Then $(C, k')$ is a cia for $MHM$.
\end{thm}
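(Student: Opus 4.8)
The plan is to reduce this to Theorem~\ref{thm:distcia_pointed} applied to the twofold composite $MM$, mirroring the way Theorem~\ref{thm:sandwich} was reduced to Theorem~\ref{thm:distcia}. The one place in the free--monad argument that used the multiplication $\mu$ (via the ``sandwich'' trick of Theorem~\ref{thm:sandwich_ellsolution}) is exactly the place where we now invoke Lemma~\ref{lem:pointed_MM} instead; this reduction to $MHM$-solvability via $HMM$-solvability, rather than an attempt to imitate the $\mu$-based proof, is the main point to get right. First I would collect the ingredients: by Lemma~\ref{lem:pointed_MM}, $(MM,\eta M\cdot\eta)$ is a pointed functor carrying the distributive law $\lambda'=\lambda M\cdot M\lambda:MMH\to HMM$ over $H$, and its interpretation in $C$ is $\beta=b\cdot Mb:MMC\to C$. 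Lifting $\lambda'$ to a distributive law of $MM$ over the cofree copointed functor $H\times\Id$ as in~\refeq{diag:dl_lift_cofree_copointed} and applying Theorem~\ref{thm:distcia_pointed} then yields that $(C,k'')$ is a cia for $HMM$, where $k''=c^{-1}\cdot H\beta=c^{-1}\cdot Hb\cdot HMb:HMMC\to C$.

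The key calculation is the identity
\begin{equation}\label{eq:sandwich_pointed_key}
  b\cdot Mc^{-1}\cdot MHb \;=\; c^{-1}\cdot Hb\cdot HMb\cdot\lambda_{MC}
  \qquad\text{as morphisms }MHMC\to C\,\text{.}
\end{equation}
Its left-hand side is precisely $k'=b\cdot Mk=b\cdot Mc^{-1}\cdot MHb$, and its right-hand side is $k''\cdot\lambda_{MC}$. To prove~\refeq{eq:sandwich_pointed_key} I would precompose both sides with the isomorphism $c$. On the left, the defining square~\refeq{eq:interp:M:c} of the $\lambda$-interpretation gives $c\cdot b=Hb\cdot\lambda_C\cdot Mc$, and since $Mc\cdot Mc^{-1}=\id_{MHC}$ the left-hand side becomes $Hb\cdot\lambda_C\cdot MHb$; on the right it becomes $Hb\cdot HMb\cdot\lambda_{MC}$. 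These two agree because the naturality square of $\lambda:MH\to HM$ at the morphism $b:MC\to C$ reads $\lambda_C\cdot MHb=HMb\cdot\lambda_{MC}$. As $c$ is monic, \refeq{eq:sandwich_pointed_key} follows.

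Finally I would transfer unique solvability. Given a flat equation morphism $e:X\to MHMX+C$, set $\hat e=(\lambda_{MX}+\id_C)\cdot e:X\to HMMX+C$, a flat equation morphism for the functor $HMM$; by the cia $(C,k'')$ it has a unique solution $\hat e^{\dagger}:X\to C$. It then suffices to check that a morphism $e^{\dagger}:X\to C$ solves $e$ in $(C,k')$ if and only if it solves $\hat e$ in $(C,k'')$, for then $e$ has $\hat e^{\dagger}$ as its unique solution. Both solution equations have the same right-hand side: on the $C$-summand each restricts to $\id_C$, and on the $MHMX$-summand one must compare $k'\cdot MHMe^{\dagger}$ with $k''\cdot HMMe^{\dagger}\cdot\lambda_{MX}$; by naturality of $\lambda$ at $Me^{\dagger}:MX\to MC$ the latter equals $k''\cdot\lambda_{MC}\cdot MHMe^{\dagger}$, and now~\refeq{eq:sandwich_pointed_key} finishes the comparison. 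I expect no conceptual obstacle beyond this; the only genuine input over and above Theorem~\ref{thm:distcia_pointed} and Lemma~\ref{lem:pointed_MM} is~\refeq{eq:sandwich_pointed_key}, and the main care is simply in keeping the many instances of $M$, $H$ and $\lambda$ in the right order.
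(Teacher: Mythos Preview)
Your proof is correct and follows essentially the same route as the paper's: reduce to the cia for $HMM$ obtained from Lemma~\ref{lem:pointed_MM} and Theorem~\ref{thm:distcia_pointed}, transform the given flat equation morphism along $\lambda_{MX}$, and check that solutions correspond. The paper packages your key identity~\refeq{eq:sandwich_pointed_key} and the naturality-of-$\lambda$ step into a single commutative diagram rather than isolating the equation $k'=k''\cdot\lambda_{MC}$, but the content is identical. (One terminological quibble: where you write ``precompose both sides with $c$'' you are in fact composing on the left, i.e.\ postcomposing; the calculation itself is right.)
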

\begin{proof}
We have to prove 
that for every flat equation morphism $e:X\to MHMX+C$ for $MHM$
there is a unique solution $e^{\dagger}:X\to C$ in $k' = b \cdot
Mc^{-1} \cdot MHb: MHMC \to C$, i.\,e., a unique morphism $\sol e$ such that the
outside of the diagram 
$$
\begin{xy}
\xymatrix{
X \ar[dddd]_{e} \ar[ddr]^{\bar e} \ar[rrr]^{e^{\dagger}}	&
&	&	C  
\ar@{<-} `r[rd] `[dddd]^{[k',C]} [dddd] & \\
	&	&	HC+C \ar[ur]^{[c^{-1},C]}	&	MC+C
        \ar[u]_{[b,C]} \ar@<-3pt>[d]_{Mc + C} &	\\
	& HMMX+C \ar[dr]_(0.4){HMMe^{\dagger}+C}	&	HMC+C \ar[u]^{Hb+C}	&	MHC+C
        \ar[l]_{\lambda_C+C} \ar@<-3pt>[u]_{Mc^{-1}+C}	& \\
	&	&	HMMC+C
        \ar[u]^{HMb+C}	& &	\\
MHMX+C \ar[uur]^{\lambda_{MX}+C} \ar[rrr]_{MHMe^{\dagger}+C}	&	&	&	MHMC+C \ar[ul]^{\lambda_{MC}+C} \ar[uu]_{MHb+C}
}
\end{xy}
$$
commutes. To this end, we define the flat equation morphism 
$$
\ol e =
(\xymatrix@1{
  X \ar[r]^-e & MHMX + C \ar[rr]^-{\lambda_{MX}+C} && HMMX+C})
$$ 
for $HMM$ (this is the left-hand triangle). According to Lemma \ref{lem:pointed_MM} and Theorem
\ref{thm:distcia_pointed}, 
$$
\xymatrix@1{
  HMMC \ar[r]^-{HMb}
  &
  HMC \ar[r]^-{Hb} 
  &
  HC
  \ar[r]^-{c^{-1}}
  &
  C
}
$$
is a cia for $HMM$. So $\ol e$ has a unique solution $\sol e$ in this
cia, i.\,e., the big inner part of the diagram commutes.
In the upper right-hand part, $b$ is the $\lambda$-interpretation
in $C$. Since that part
and the two remaining squares also commute (due to naturality
of $\lambda$), the desired outside commutes.
Thus, $\sol e$ also is a solution of $e$ in the algebra $k': MHMC \to C$.

This solution is unique, since any other solution $e^{\ddagger}$ of
$e$ in $k'$ (i.\,e., the outside of the above diagram with $e^{\ddagger}$ in lieu of
$e^{\dagger}$ commutes) is a solution of $\ol e$ in the cia
$c^{-1} \cdot Hb \cdot HMb: HMMC \to C$ (i.\,e., the inner part commutes with
$e^{\ddagger}$ in lieu of $e^{\dagger}$), thus
$e^{\ddagger}=e^{\dagger}$.
\end{proof}
\fi 

\end{document}
